\documentclass[11pt]{article}

\usepackage[T1]{fontenc}
\usepackage[margin=1in]{geometry}
\usepackage{newunicodechar}
\newunicodechar{–}{--}
\newunicodechar{’}{'}
\newunicodechar{“}{``}
\newunicodechar{”}{''}

\usepackage{amsmath,amssymb,amsfonts,amsbsy,amsthm,mathtools,latexsym}
\usepackage{stmaryrd}

\usepackage{booktabs}
\usepackage{tabularx,makecell,multirow}
\usepackage{colortbl}
\usepackage{paralist}
\usepackage{enumitem}
\usepackage{float}
\usepackage{graphicx}
\usepackage{pdfpages}
\usepackage{mdframed}
\mdfsetup{nobreak=true}
\usepackage{boxedminipage}
\usepackage{pifont}
\usepackage{xcolor}
\usepackage{soul}
\usepackage[normalem]{ulem}
\usepackage{xspace}

\usepackage{algpseudocode}
\usepackage[
  lambda,
  advantage,
  operators,
  sets,
  adversary,
  landau,
  probability,
  notions,
  logic,
  ff,
  mm,
  primitives,
  events,
  complexity,
  asymptotics,
  keys
]{cryptocode}

\usepackage{tikz}
\usepackage{pgfplots}
\pgfplotsset{compat=newest}
\usepgfplotslibrary{fillbetween}
\usetikzlibrary{fit,shapes,positioning,patterns,arrows,shadows}
\usetikzlibrary{decorations.pathreplacing,calligraphy}
\tikzset{
  master/.style={
    execute at end picture={
      \coordinate (lower right) at (current bounding box.south east);
      \coordinate (upper left) at (current bounding box.north west);
    }
  },
  slave/.style={
    execute at end picture={
      \pgfresetboundingbox
      \path (upper left) rectangle (lower right);
    }
  }
}

\usepackage{cite}
\usepackage{hyperref}
\usepackage[capitalize,noabbrev]{cleveref}

\sethlcolor{lightgray}

\theoremstyle{plain}
\newtheorem{theorem}{Theorem}[section]
\newtheorem{lemma}[theorem]{Lemma}
\newtheorem{corollary}[theorem]{Corollary}

\newtheorem{claim}[theorem]{Claim}
\newtheorem{fact}[theorem]{Fact}

\newtheorem{mechanism}[theorem]{Mechanism}
\theoremstyle{definition}
\newtheorem{definition}[theorem]{Definition}

\crefname{claim}{claim}{claims}
\Crefname{claim}{Claim}{Claims}
\crefname{fact}{fact}{facts}
\Crefname{fact}{Fact}{Facts}
\crefname{assumption}{assumption}{assumptions}
\Crefname{assumption}{Assumption}{Assumptions}
\crefname{algorithm}{algorithm}{algorithms}
\Crefname{algorithm}{Algorithm}{Algorithms}
\crefname{mechanism}{mechanism}{mechanisms}
\Crefname{mechanism}{Mechanism}{Mechanisms}
\crefname{protocol}{protocol}{protocols}
\Crefname{protocol}{Protocol}{Protocols}

\newcommand{\mcal}[1]{\ensuremath{\mathcal{#1}}}

\newcommand{\M}{\mathcal M}

\newcommand{\eps}{\epsilon}
\newcommand{\R}{\mathbb{R}}

\newcommand{\E}{\mathbb{E}}
\newcommand{\C}{\mathcal{C}}

\definecolor{darkgreen}{rgb}{0.6,0.6,0.25}
\definecolor{lightblue}{RGB}{196,212,224}
\definecolor{darkblue}{RGB}{74,135,161}
\definecolor{lightpurple}{RGB}{191,181,201}
\definecolor{mygreen}{RGB}{194,207,162}
\definecolor{myred}{HTML}{C86733}
\definecolor{myblue}{HTML}{5B68FF}
\definecolor{myshadow}{HTML}{E6C5B4}
\definecolor{darkred}{rgb}{0.5,0,0}
\definecolor{maingreen}{RGB}{094,166,156}

\hypersetup{colorlinks=true,allcolors=darkred}

\newif\ifrevision
\revisionfalse

\ifrevision
\newcommand{\revision}[1]{{\color{red}#1}}
\newcommand{\delete}[1]{}
\else
\newcommand{\revision}[1]{#1}
\newcommand{\delete}[1]{}
\fi

\newcounter{task}

\newcommand{\util}{\ensuremath{{\sf util}}}
\newcommand{\mutil}{\ensuremath{{\sf mutil}}}
\newcommand{\bid}{\ensuremath{{\bf b}}}
\newcommand{\val}{\ensuremath{{\bf v}}}
\newcommand{\rminf}{\ensuremath{{\sf rmInf}}}

\newcommand{\tfm}{\ensuremath{({\bf I},{\bf C},{\bf P},{\bf R})}}
\newcommand{\res}{\ensuremath{{\sf res}}}

\newcommand{\papertitle}{Beyond Incentive Compatibility: Rational Harm-Proof Transaction Fee Mechanisms\thanks{Author order is randomized.}}

\title{\papertitle}
\author{Forest Zhang \qquad Elain Park \qquad Ke Wu\\[0.5em]
\small University of Michigan, Ann Arbor, MI, USA\\
\small\texttt{\{forestz,elainpk,kewucse\}@umich.edu}}
\date{}

\begin{document}
\raggedbottom
\setlength{\emergencystretch}{3em}
\maketitle
\begin{abstract}
On a blockchain, users compete for scarce block space in an auction run by the miner to get their transactions confirmed in the block. This auction is called \emph{transaction fee mechanism} (TFM). Recent work~\cite{roughgarden2021transaction,foundation-tfm,crypto-tfm} has been focused on \emph{incentive compatibility} (IC), requiring that honest behavior maximizes the payoff for each type of strategic player: users, the miner, or miner–user coalitions. 
In this work, we introduce \emph{rational-harm proofness (RHP)}, which rules out deviations that harm honest parties without also reducing the deviator’s own utility relative to the honest baseline.
RHP closes a gap left by IC: IC does not forbid utility neutral yet externally harmful deviations. For example, in a second-price auction, the second-highest bidder can increase the winner’s payment without affecting their own payoff. Such deviation is eliminated by RHP.

We characterize TFMs satisfying RHP alongside incentive compatibility for users (UIC) and miners (MIC). For finite block size, we develop a \emph{complete characterization} in two models:
\begin{itemize}[leftmargin=6mm]
\item In the \textit{plain model}---where a single miner unilaterally implements the auction---we prove a \emph{tetrilemma} (3-out-of-4 impossibility): among the four desired properties \{positive miner revenue, UIC, MIC, RHP against miner–user coalitions\}, no mechanism achieves all four simultaneously.
This trade-off is tight: any three are jointly achievable in the plain model.
\item In the \textit{MPC-assisted model}---where a committee of miners jointly implement the auction via multi-party computation (MPC)---we construct a randomized TFM with a positive miner revenue that achieves UIC, MIC, and RHP against all three types of strategic players. We further show that randomness is necessary: any deterministic TFM satisfying UIC and RHP in this model must confirm no transactions when the number of users exceeds the block size.
\end{itemize}
Finally, we show that IC and RHP are \emph{incomparable}: for each strategic role, there are mechanisms satisfying one but not the other in both models. 
Our results broaden the design objectives for TFMs: beyond incentive compatibility, mechanisms should also preclude costless harm to honest participants.

\end{abstract}
\clearpage
\tableofcontents
\clearpage

\section{Introduction}
Block space on a blockchain is a scarce resource, so users compete with each other to get their transactions confirmed in a block.
This process can essentially be viewed as an auction where the block producer, also called the miner, sells the limited $k$ number of block slots. Each user bids to get one slot for its transaction, and a \emph{transaction fee mechanism} (TFM) decides which transactions are confirmed, how much each confirmd transaction pays, and how much revenue the miner receives. 

Recent progress~\cite{zoharfeemech,yaofeemech,functional-fee-market,eip1559,roughgarden2021transaction,dynamicpostedprice} observed that, due to the decentralized environment of blockchain, TFM design departs significantly from classical auction design.
For example, classical auctions typically assume a trusted auctioneer and focus on the strategies of individual users. 
In contrast, on a blockchain, the auctioneer is a strategic miner who may deviate from the prescribed protocol to gain more revenue. 
Additionally, smart contracts make it easy for miners and users to enter binding side contracts and split their joint gains off-chain. 
A growing body of work therefore explores \emph{incentive compatibility} (IC) properties of TFM so that the honest behavior is provably best for all participants.
Ideally, a ``dream'' TFM should satisfy 1.) \emph{user incentive compatibility} (UIC): bidding one's true valuation is optimal for each individual user; 2.) \emph{miner incentive compatibility} (MIC): the miner is incentivized to execute the mechanism honestly; and 3.) \emph{side-contract proofness} (SCP): no miner-user coalition can deviate to jointly profit.

However, incentive compatibility alone does not protect honest participants. 
There are deviations that leave a strategic player's utility unchanged compared to honest behavior, yet strictly harm others. 
Standard incentive compatibility notions do not rule out such utility-neutral but externally harmful actions.
For example, the gold-standard second-price auction~\cite{vickrey1961counterspeculation} is known to be incentive compatible for individual users: the highest bidder wins and pays the second-highest price. However, the second-highest bidder can raise their bid slightly without changing their own utility (the bidder still loses and pays nothing) but increase the winner's payment.
Such manipulations are costless for the deviator but impose real externalities on others. 

\revision{Concern about such harm is well established in blockchain protocol design.
For example, Herlihy's atomic-swap~\cite{herlihy2018atomic} formulation separately requires that coalition deviations not leave conforming parties worse off in addition to IC. Subsequent work~\cite{xue2021hedging} studies attacks that impose losses on honest participants in cross-chain transactions.
More broadly, griefing-style attacks~\cite{buterin2018griefing}, in which a participant can impose disproportionate harm on others at relatively low cost, have been studied in blockchain protocols including hashed timelock contracts~\cite{tsabary2021madhtlc} and payment channels~\cite{aumayr2023domino}.
These works motivate treating harm to other participants as a design concern separate from whether a deviation is profitable under the mechanism's modeled utility.
In particular, when several strategies yield the same utility as honest behavior, protecting honest participants should not rely on the deviator choosing a benign strategy.}
This motivates us to ask
\begin{center}
\emph{Can we design TFMs that achieve desired incentive compatibilities while also protecting honest participants from being harmed by rational deviations?}
\end{center}
Related concepts have been studied in classical mechanism design under non-bossiness and robustness to secondary goals~\cite{satterthwaite1981strategy, thomson2016non, leme2023nonbossy, duque2024local,carroll2019robustness, jehiel2005allocative, lopomo2021uncertainty}.
Thus, the underlying concern is not unique to TFMs.
However, those notions apply to individual agents and do not capture the decentralized setting of TFMs with strategic miners and miner-user coalitions. 
In this work, we initiate a systematic study of the above question in the context of TFMs where the mechanism implementer is strategic and can collude with the users.
We formalize the requirement that no strategic players (whether a user, miner, or miner-user coalition) can harm any honest participants unless they also harm the deviator themselves as a property called {\bf rational-harm proofness} (RHP). 
We then characterize when TFMs can achieve RHP alongside the standard incentive guarantees.
For practical viability, we also target \emph{positive miner revenue} so that participation is profitable for miners.

Our results apply to two settings:  
(1) the \textbf{plain model}, where a single miner unilaterally determines the block contents, capturing today’s mainstream blockchain architecture; and  
(2) the \textbf{MPC-assisted model}~\cite{crypto-tfm}, where a committee of $M$ miners jointly execute a cryptographic primitive, called the multi-party computation (MPC), to implement the TFM. We assume $M\ge3$ in this model. 
The detailed results are summarized below.

\subsection{Our Results}
\label{sec:contribution}
\paragraph{Model at a Glance.}
We begin with a high-level model of TFM to contextualize our results. Formal definitions appear in \Cref{sec:model}. 
We focus on a single block of finite size $k$. 
One can view the mechanism as selling $k$ identical block slots.
Each user $i$ has a private true value $v_i$ indicating the maximum they are willing to pay for a block slot. Users submit bids for inclusion, and a miner proposes a block of up to $k$ bids. 
A TFM specifies, upon a received bid vector, 1.) up to $k$ bids to include from the input bids; 2.) the set of confirmed bids and the payment charge to each confirmed bid; 3.) the miner's revenue.
Once the set of up to $k$ bids to include are fixed\footnote{There is a subtle distinction between a bid being included in a block versus being confirmed. A bid must be included to have a chance of confirmation, but depending on the mechanism, not all included bids are necessarily confirmed; see \Cref{sec:model} for detailed explanation.}, the confirmation, payment, and miner revenue rules are honestly executed on-chain. 
Importantly, a TFM must work for \emph{any} number of input bids due to the open and permissionless environment of blockchain.
Moreover, on blockchain, not all payments need to go to the miner: some or even all payments can be \emph{burned}.

In the \emph{plain model}, a single miner unilaterally determines which bids to include. 
In the \emph{MPC-assisted model}, a committee of $M$ miners jointly execute the mechanism via an MPC, as if a trusted party were running the mechanism honestly on all bids delivered to the committee.
\revision{We assume that less than half of the miners are corrupted, and the MPC provides guaranteed output delivery of the outcome prescribed by the protocol}\footnote{When instantiating with real-world cryptography, there is a negligible probability of breaking cryptography, but our results are stated in the ideal world assuming an ideal functionality.}. 

We require the mechanism to satisfy three basic properties: 1.) \textit{Individual rationality}: no bid pays more than its value; 2.) \textit{Budget feasibility}: the miner’s revenue does not exceed the total payments paid by users, i.e., the mechanism does not create money.\footnote{The miner gets a fixed block reward independent from the transaction fees, so we do not model the block reward.}; 3.) \textit{Weak symmetry}: metadata such as the identity or timestamp of a bid is used only for tie-breaking; otherwise, outcomes depend only on bid values. 
For later reference, we note that all of our impossibility results hold assuming only weak symmetry, but our posted-price feasibility constructions are strongly symmetric, meaning that all bids with the same value have identical outcome distributions, without a tie-breaking based on metadata.

\smallskip{\it Strategy Space.}
We focus on direct-revelation mechanisms, in which honest users submit a single bid equal to their true values. 
An honest miner includes up to $k$ bids as prescribed in the mechanism, without censoring honest bids or injecting fake ones.

Strategic players may deviate to increase their utility.
We consider three types of strategic players: 1.) A single user. A strategic user can bid an arbitrary value for its transaction, refuse to bid, and/or inject any number of \emph{fake bids} since pseudonyms are easy to register on-chain. 
2.) Strategic miner(s). In the plain model, the single miner can inject fake bids, drop honest users' bids, and select any set of up to $k$ bids for the block. In the MPC-assisted model, a coalition of $m<M/2$ miners can inject fake bids, but they cannot selectively censor honest users' bids or change the outcome, and the MPC provides guaranteed output delivery. 3.) A miner-user coalition, containing the miner and some users in the plain model, or $m<M/2$ miners together with some users. A coalition's strategy is a combination of its members' strategies.

\smallskip{\it Utility.} A user with true value $v$ for a transaction gets utility $v - p$ if that transaction is confirmed and pays $p$, and $0$ if not confirmed. A miner’s utility is its revenue minus any cost it pays for its own fake bids. A coalition’s utility is the sum of its members’ utilities.

\paragraph{Conceptual Contribution: Defining RHP} 
Our first contribution is conceptual: we elevate the principle of ``no costless harm'' to a first-class requirement for TFM design.  
As illustrated by the earlier second-price auction example, a user may raise the winner’s payment and thereby reduce a rival’s utility without affecting its own payoff. 

We formalize \emph{rational-harm proofness (RHP)} as the requirement that rational players cannot make any honest participant strictly worse off unless the deviators also make themselves worse off, compared to the honest behavior. We define RHP as a property of the mechanism and take the honest profile as its baseline rather than tying the notion to equilibrium outcomes of the induced TFM game. 
An equilibrium-based formulation would depend on equilibrium existence and selection choices, whereas our goal is a notion that is well-defined for \emph{every} TFM instance independently of such choices.

We adopt \emph{ex-post} notions because they capture the \emph{public-mempool} reality of many blockchains:
bids are broadcast and observable by strategic participants, in particular by the miner before deciding what to include, and often by other users who can react via last move. Ex-post definitions ensure robustness in such environments where a strategic players can make their decisions based on the honest users' bids, rather than relying on bid secrecy.


Formally, for any strategic player $\C$, whether an individual user, miner, or miner-user coalition, let $H_{\mcal{C}}$ denote $\C$'s honest strategy.
For any true value profile $\val = (v_1,...,v_n)$ of all the users, let $\util_i(\val;S_{\C})$ and $\util_{\C}(\val;S_{\C})$ denote the expected utility of a player $i$ and the coalition $\C$, respectively, in the randomized experiment where players in $\C$ adopts some (possibly randomized) strategy $S_{\C}$ and players outside $\C$ behaves honestly.

\begin{definition}[Rational-harm proofness]
\label{defn:intro-rhp}
Fix a coalition of strategic players $\C$ and a protected set of honest players $\mcal H$ such that $\C\cap \mcal H=\emptyset$. We say that a TFM satisfies rational-harm proofness (RHP) against $\C$ for protecting $\mcal H$ if, for any true value vector $\val$ of users and any strategy $S_{\C}$ for coalition $\C$ such that
\[
    \util_{\C}(\val;S_{\C})\geq \util_{\C}(\val;H_{\C}),
\]
it holds that
\[
    \util_i(\val;S_{\C})\geq \util_i(\val;H_{\C})
\]
for every protected honest player $i\in\mcal H$. Here $\util_a$ and $\util_{\C}$ denote expected utilities in the randomized experiment where the users' true value vector is $\val$, players in $\C$ adopt $S_{\C}$, players outside $\C$ behave honestly, and $H_{\C}$ denotes the honest strategy of coalition $\C$.
\end{definition}
\revision{To help understand RHP, let us take the first-price auction in the plain model as an example.
Here, the $k$ highest bidder gets confirmed and pays their own bid.
Under honest behavior, every confirmed user bids its true value and therefore obtains utility zero, while every unconfirmed user also obtains utility zero.
Thus, first-price auction satisfies RHP in the sense that no user can be harmed compared to honest case.
Intuitively, although strategic players may change the allocation or payments, they cannot make an honest user worse off than its zero-utility honest baseline.

This example also illustrates what RHP does \emph{not} guarantee.
The first-price auction is not UIC: a winning user may lower its bid to make it barely enough to be the top $k$ bids.
This strictly increases its utility.
Thus, RHP protects honest participants relative to the prescribed honest baseline, but does not by itself ensure that the honest baselline is incentive compatible.
Conversely, as illustrated by the earlier second-price example, IC does not prevent utility-neutral deviations that harm others.

The first-price and second-price examples together demonstrate necessity of requiring both RHP and IC: 
They are complementary requirements.
IC makes the prescribed honest behavior optimal for the strategic player, while RHP protects honest participants against the external effects of rational deviations.}

\paragraph{RHP for different strategic roles.}
\revision{
TFMs have three natural classes of strategic players already reflected in the IC notions defined in prior works:
an individual user, the miner, and a miner-user coalition.
We take the same distinction for RHP and instantiate \Cref{defn:intro-rhp} for these three strategic roles.
These are not separate underlying notions, but three instantiations of the same RHP definition with different deviating strategy spaces and protected participants.
Distinguishing them lets us separate precisely which sources of rational harm a mechanism protects against, and whether these protections can be achieved simultaneously:}
\begin{itemize}
    \item \emph{User rational-harm proofness (URHP)} iff \Cref{defn:intro-rhp} holds for any $\C=\{i\}$ containing an individual user $i$ and $\mcal H$ contains all users except $i$. URHP protects honest users from an individual deviating user.

    \item \emph{Miner rational-harm proofness (MRHP)} iff \Cref{defn:intro-rhp} holds for the single miner in the plain model, with $\mcal H$ equal to all users, and for any nonempty coalition of $m<M/2$ miners in the MPC-assisted model, with $\mcal H$ equal to all users and all miners outside the coalition.

    \item $d$-\emph{coalition rational-harm proofness ($d$-CRHP)} iff \Cref{defn:intro-rhp} holds for coalitions containing the miner and at least one but at most $d$ users in the plain model, or a nonempty coalition of $m<M/2$ miners and at least one but at most $d$ users in the MPC-assisted model, protecting all players outside the coalition.
\end{itemize}
Our URHP notion focuses only on protecting honest users from rival-user: we do not require an individual user deviation to preserve miner revenue.
A user may reduce miner revenue simply by choosing not to participate, which we view as ordinary demand response rather than harm to be protected against.
In contrast, MRHP and CRHP concern deviations involving miners and therefore also protect non-colluding honest miners in the MPC-assisted model.

\paragraph{Our design target.}
Ideally, a TFM would satisfy all of the standard incentive guarantees (UIC, MIC, and SCP) together with all RHP guarantees (URHP, MRHP, CRHP), as well as positive miner revenue.
However, for finite block size, prior work~\cite{foundation-tfm,crypto-tfm} implies that UIC and $d$-SCP, for $d\ge2$, can coexist in either model only if every user and miner has zero expected utility under honest behavior on every valuation profile.
Since UIC is tightly coupled to user experience, 
we therefore prioritize UIC and aim to achieve UIC, MIC, all three RHP guarantees, and positive miner revenue.

In this case, UIC and MIC ensures that the honest protocol forms an equilibrium, and RHP then provides protections of honest participants compared to this honest baseline.
Now CRHP becomes especially useful when SCP is impossible:
Even when we cannot prevent a coalition from benefiting, CRHP ensures that the coalition not obtain such a benefit at the expense of honest participants.

\revision{This full collection of desired properties---{\bf UIC, MIC, URHP, MRHP, CRHP, and positive miner revenue}---is our common design target in both the plain model and the MPC-assisted model.
Our impossibility results are stated under smaller subsets of these properties because they identify stronger obstructions. 
We state only the assumptions needed for each impossibility.}

We also define a weaker notion called \emph{weak RHP}, which only requires that any \emph{strictly profitable} deviation for $\C$ does not harm honest participants. This is weaker than RHP and weaker than IC, since IC rule out the existence of any strictly profitable deviations. 

\paragraph{Landscape in the Plain Model.}
We first consider the plain model, where a single miner determines which bids to include. This setting reflects the architecture of most mainstream blockchains.
The game proceeds as follows: honest users submit their bids truthfully; any strategic players then choose their strategies after observing the honest bids; the miner selects up to $k$ bids for the block. The blockchain protocol then honestly confirms some of these bids and determines payments and miner revenue according to the mechanism’s rules.

Unfortunately, it is impossible to achieve all desired properties in the plain model.
In fact, the impossibility already follows from only four of them: UIC, MIC, $1$-CRHP, and positive miner revenue. 
Any TFM satisfying all four must be \emph{degenerate}, meaning that it confirms bids only when there are no more than $k$ bids.
Formally,
\begin{theorem}[Impossibility in the plain model]
\label{thm:intro-impossibility-plain}
    Let $k$ be the finite block size. In the plain model, any UIC, MIC, and $1$-CRHP TFM must be degenerate: it confirms no bids when there are more than $k$ bids.
    Moreover, miner revenue must always be zero.
\end{theorem}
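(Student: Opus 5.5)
The plan is to first pin down the structure that UIC imposes, then use MIC together with $1$-CRHP to rule out every confirmed bid once there are more than $k$ bids, and finally use the degenerate structure to show that miner revenue is zero.

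\textbf{Step 1 (UIC structure).} By the standard Myerson-style argument, UIC forces, for each bid $b$ in a fixed context, a confirmation probability $x(b)$ that is monotone in $b$. The payment is then
\[
p(b)=b\,x(b)-\int_0^b x(t)\,dt .
\]
An honest user with value $v$ therefore has utility $\int_0^v x(t)\,dt$. This is strictly positive exactly when $x(t)>0$ for some $t<v$ on a set of positive measure.

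\textbf{Step 2 (main step: coalition harm when there are more than $k$ bids).} Fix a bid vector with $n>k$ bids, and suppose for contradiction that some bid is confirmed with positive probability. I would pick an honest user $j$ whose expected utility $\int_0^{v_j} x\,dt$ is strictly positive. Such a $j$ should exist after perturbing values upward: if a bid is confirmed with positive probability, raising its value above its threshold yields positive utility, and UIC keeps the allocation stable in the relevant range.

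Since $n>k$, some bid must be left out. Consider the coalition of the miner and one user $i$ whose bid is not $j$'s. The deviation I would try has the coalition rearrange the block so that $j$ is excluded, or so that $j$'s allocation is reduced, for instance by having user $i$ raise its bid or by injecting a fake bid. This must be done while the coalition's joint utility does not decrease.

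MIC says the miner's honest inclusion already maximizes revenue. The strategy is to exploit this with a configuration in which the miner is indifferent between two blocks, only one of which excludes $j$. Such indifference should be obtainable by exchanging the roles of bids via weak symmetry: swapping $i$'s value with $j$'s value. Swapping two bids leaves the multiset of included values, and hence the revenue, unchanged. The coalition member $i$ gains nothing, but it also loses nothing when $i$ reports $j$'s value and $j$ is dropped, provided $i$'s own utility is unchanged. Checking that $i$'s utility is preserved requires $i$'s true value to be chosen suitably, for example equal to $v_j$.

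The outcome is that coalition utility is weakly preserved while $j$ loses its positive utility, which contradicts $1$-CRHP. I expect this step to be the main obstacle. The difficulty is ensuring that the coalition's utility does not drop, which means balancing MIC (miner revenue as a function of the included set) against UIC payments. My first attempt will be the symmetric-swap construction with equal values, falling back on injecting a fake bid at $i$'s own value to displace $j$.

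\textbf{Step 3 (zero miner revenue).} Step 2 shows the mechanism is degenerate: nothing is confirmed when there are more than $k$ bids. For at most $k$ bids, the miner can inject fake bids to push the count above $k$, or the coalition can compare the honest outcome against such configurations. Combining this with MIC and $1$-CRHP, I would argue that positive revenue on a small bid profile lets a miner–user coalition manipulate the outcome: a user's bid can be varied while revenue is preserved, and the miner's burn-free revenue transfer then harms an honest user. This forces zero revenue. Alternatively, I would use budget feasibility together with the UIC payment identity applied as the number of bids crosses the $k$ threshold. Since the confirmation probability becomes $0$ once one extra bid is present, a user's utility relies on exact counts. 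A strategic user can add a fake bid to trigger zero allocation for all others at no cost to itself if it is unconfirmed, which makes payments, and hence revenue, collapse to zero.
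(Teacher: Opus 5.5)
Your Step 2 has a genuine gap: a single swap (censor-then-replace) deviation cannot by itself produce a contradiction. Suppose the miner censors the winner $j$ and a losing member $i$ takes over $j$'s value, injecting a fake bid at $v_i$ so that the multiset of bids is unchanged and, for unique values, weak symmetry transfers the outcomes. For an honestly unconfirmed $i$, the coalition's utility is preserved only when $v_i\,x_j(\bid)\ge p_j(\bid)$. So $1$-CRHP gives you only a local constraint: every unique bid above the winner's ratio $p_j(\bid)/x_j(\bid)$ must itself have positive confirmation probability (the paper's \cref{lem:rand-include-above}). A profile in which one bid wins and all others lie strictly below its ratio satisfies this constraint, so no contradiction arises at a single profile. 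The ``configuration in which the miner is indifferent'' is asserted, not constructed. Setting $v_i=v_j$ creates a tie, and under ties weak symmetry lets metadata decide who receives which outcome. Finally, ``since $n>k$, some bid must be left out'' holds only per realization. A randomized TFM can give all $n>k$ bids positive confirmation probability, so you also need an argument that controls the inclusion randomness.

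You are missing the two main ingredients of the paper's proof.

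First, a $k$-winners lemma (\cref{lem:limited-space}). MIC forces $\mu(\bid\mid r)=\mu(\bid)$ almost surely. So if $k+1$ users had positive expected utility, the miner could fix a seed that favors one of them and excludes another, contradicting (weak) $1$-CRHP. Hence $|W(\bid)|\le k$.

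Second, an induction that manufactures $k+1$ simultaneous winners, which is needed even for deterministic TFMs. At each step, append a fresh bid strictly between the current winner's ratio and its value. If this bid were unconfirmed, a miner plus a zero-utility user could inject it as a fake bid without lowering their joint utility, by $\mu(\bid')\ge\mu(\bid)$ for $\bid\subseteq\bid'$ (\cref{lem:subvec-mr}) and individual rationality. Full $1$-CRHP then keeps the old winner's utility positive, and censor-then-replace forces the new bid to be confirmed. Myerson's lemma upgrades this to positive utility. For randomized TFMs the induction needs two-point value sets and a uniform ratio bound, so that raising one bid does not destroy earlier winners (\cref{lem:W-size-set}).

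Step 3 also misses the key tool, which is that same monotonicity. Pad a profile with $\mu(\bid)>0$ by $k+1$ zero bids; revenue stays positive, so budget feasibility forces a confirmed bid among more than $k$ bids, contradicting degeneracy. Your alternative, a user's fake bid collapsing revenue, describes a deviation that none of UIC, MIC, or $1$-CRHP forbids. It therefore says nothing about the honest revenue.
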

In fact, even if we relax $1$-CRHP to weak $1$-CRHP, the miner revenue still must always be zero. 
\begin{theorem}[Impossibility under weak-CRHP]
    \label{intro:impossibility-weak-cRHP}
    The miner revenue in any UIC, MIC, and weak $1$-CRHP TFM in the plain model must always be zero. 
\end{theorem}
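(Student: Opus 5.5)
We prove that miner revenue is identically zero for every bid vector. We argue by contradiction, using a miner together with a single user as the coalition.

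\emph{Step 1: structure from UIC.} By standard Myerson-style reasoning, UIC implies that each user's confirmation probability $x(b)$, holding the other bids fixed, is monotone in its own bid. It also implies the payment identity $p(b)=b\,x(b)-\int_0^b x(t)\,dt$. So each user's honest utility is $\int_0^{v} x(t)\,dt$.

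\emph{Step 2: a profitable coalition deviation.} Suppose some bid vector $\bid$ gives positive expected miner revenue. Fix a user $i$ with bid $b_i$ in $\bid$, and consider the coalition of the miner and $i$. The coalition can let $i$ change its bid, or have the miner inject a fake bid or drop a bid.

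The key structure we want is a form of revenue "loss" under MIC. Adding a fake bid at value $b$ costs the coalition the fake bid's payment. That payment is at least the miner revenue attributable to it, by budget feasibility. The same ideas as in the known result "UIC+MIC+$1$-SCP implies zero miner revenue" from prior work~\cite{foundation-tfm} show that when revenue is positive, there is a deviation where $i$ underbids or the miner manipulates the included set. Such a deviation is weakly profitable for the coalition, because the lost miner revenue is offset by $i$'s reduced payment.

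In the weak version we need a \emph{strictly} profitable deviation. Concretely, I would take $\bid$ with positive revenue and let user $i$'s true value be raised to $v_i'$ slightly above $b_i$. Then I compare the coalition's joint utility when $i$ bids truthfully with its utility when $i$ bids $b_i$ and the miner keeps the same block. Using MIC to pin down the revenue function's dependence on $b_i$, together with a small perturbation, should yield strict gain whenever the revenue is positive.

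\emph{Step 3: harm to an honest user.} Weak $1$-CRHP then requires that this strictly profitable deviation hurts no honest user. Lowering $b_i$ can only weakly raise the other users' allocations, by a monotonicity-type argument. So we must choose the deviation so that it strictly hurts someone. The natural choice is an honest bid $j$ whose payment rises, or whose confirmation probability falls, when $i$ changes its bid.

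Alternatively, we can use the freedom over the value vector. We place an honest user whose utility depends on $b_i$, for example a user right at the confirmation threshold with value strictly above its bid. The contradiction then shows that revenue must vanish on all bid vectors, and we extend to arbitrary $\bid$ by weak symmetry and induction on the number of bids.

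The main obstacle is Step 2's strictness combined with Step 3. We must find one deviation that is simultaneously strictly profitable and strictly harmful. I would first try exploiting that, with revenue $\mu(\bid)>0$, the miner can inject a fake bid that displaces an honest user's slot (harming it). The injection is paid for by transferring value inside the coalition, and MIC's marginal condition forces this to break even exactly. A small adjustment of the colluding user's bid then makes it strictly profitable.
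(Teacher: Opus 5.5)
Your proposal has a genuine gap: it never constructs a deviation that is both \emph{strictly} profitable and harmful.

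\textbf{The source of strictness is unsupported.} You suggest that MIC forces a fake-bid injection to ``break even exactly'' and that a small perturbation of the colluding user's bid then gives strict gain. MIC only rules out gains for the miner acting alone. It yields no break-even identity for a miner--user coalition. Moreover, a fake bid that actually takes an honest user's slot must be confirmed and pay, and part of that payment may be burned.

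\textbf{The prior-work argument does not transfer.} The UIC+MIC+$1$-SCP argument of prior work uses the fact that \emph{every} coalition deviation is unprofitable to pin down $\mu$ as a function of the colluding user's bid. Weak $1$-CRHP constrains a deviation only when it is strictly profitable \emph{and} harms someone, so no such identity is available. Separately, your Step~3 claim that lowering $b_i$ weakly raises other users' allocations is not implied by UIC: Myerson monotonicity concerns a user's own bid only.

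\textbf{Nothing can be derived at the given vector alone.} In the plain model, weak $1$-CRHP protects only users. Positive revenue is compatible with every user having zero honest utility; for example, posted price with random selection (UIC and MIC) with all bids equal to $\res$ has revenue $\epsilon>0$. There nobody can be harmed, so the hypothesis is vacuous. You must move to other, carefully constructed bid vectors, and your plan gives no way to find one where a strictly profitable deviation and a harmed honest user coexist.

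\textbf{How the paper closes this gap: a global count of winners.}
\begin{enumerate}
\item MIC and weak $1$-CRHP give $|W(\bid)|\le k$ for every $\bid$ (\cref{lem:limited-space}). By MIC, revenue is a.s.\ independent of the inclusion randomness. So if $k+1$ users had positive utility, the miner could fix a realization that strictly favours one colluding winner while excluding another winner.
\item From $\mu(\val)>0$, the paper builds a vector with $k+1$ positive-utility users (\cref{lem:rand-CRHP-ind-sets}):
\begin{itemize}
\item by \cref{lem:subvec-mr}, revenue stays positive as fresh bids are appended, so budget feasibility forces some bid to be confirmed;
\item the lowest such bid is shown to lie among the fresh bids; otherwise one gets $k+1$ winners via \cref{lem:W-size-set}, or a strictly profitable censor-then-replace deviation that harms it;
\item Erd\H{o}s's dense-product lemma makes this location stable over a product of candidate sets;
\item Myerson raising plus the above-ratio guarantee (\cref{lem:rand-include-above}) propagates positive utility to all $k+1$ added bids.
\end{itemize}
\end{enumerate}
In every deviation used, strictness comes from a strict inequality, never from perturbing a break-even deviation. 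The inequality is either a colluding value exceeding the victim's payment-to-allocation ratio, or a randomness realization beating the mean. None of these ingredients---the $k$-winner bound, the growing bid vectors, or this source of strictness---appears in your plan.
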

\begin{corollary}
    Only trivial TFMs, where no bids are ever confirmed, satisfy UIC, MIC, MRHP, and (weak) $1$-CRHP in the plain model.
\end{corollary}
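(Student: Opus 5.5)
The plan is to combine the two preceding impossibility results with MRHP and derive that no bid is ever confirmed. Since RHP implies weak RHP, \Cref{intro:impossibility-weak-cRHP} applies in either case: in any UIC, MIC, and (weak) $1$-CRHP TFM in the plain model, the miner revenue is always zero. If the hypothesis is full $1$-CRHP, \Cref{thm:intro-impossibility-plain} also gives degeneracy, meaning no bid is confirmed when there are more than $k$ bids. It therefore suffices to handle bid vectors with at most $k$ bids, and to show that MRHP together with zero miner revenue forces no confirmations there as well. For the weak-CRHP variant, degeneracy is not given by those theorems. I will instead argue directly that every bid vector yields zero user utility, which handles both variants uniformly.

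The key step uses a miner deviation that costs the miner nothing. Since the miner revenue is identically zero, the miner's honest utility is $0$ on every input. Consider the deviation in which the miner drops all honest bids and includes an empty block, or alternatively a block consisting only of fake bids that are not confirmed. Under zero revenue and no fake-bid payments, this deviation gives the miner utility $0 \ge 0$. MRHP then requires that every honest user's utility not decrease. Dropping user $i$'s bid gives $i$ utility $0$, so the honest utility of every user must be $\le 0$. Individual rationality gives the reverse inequality, so every user's honest expected utility is exactly $0$ on every value vector.

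It remains to turn ``zero utility for every user'' into ``no bid is confirmed,'' and this is where UIC is needed. By UIC together with zero utility at all true values, the standard Myerson-type characterization should apply to each user's confirmation probability $x_i(b)$, with the others' bids fixed. The utility is $u_i(v) = \int_0^{v} x_i(t)\,dt$, up to the utility at value $0$, which is $0$ by individual rationality. If $u_i \equiv 0$, then $x_i(t)=0$ for almost every $t$, and by monotonicity for all $t$ below any given value. Hence no honest bid is ever confirmed. If the mechanism could confirm some bid, the user holding it would gain positive utility by bidding slightly above its payment threshold, which contradicts the fact that its utility is $0$.

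The main obstacle I expect is handling the payment identity rigorously when confirmation is randomized and ties are broken by metadata under weak symmetry. I would first derive monotonicity of $x_i$ and the envelope formula directly from ex-post UIC, viewed as a single-parameter problem for each user with the other bids fixed. I would then check that the miner's empty-block deviation is legal in the plain model, where the miner can drop honest bids, so that no subtle cost arises from fake bids.
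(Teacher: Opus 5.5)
Your proposal is correct and is essentially the paper's argument. It gets zero miner revenue from \cref{thm:rand-CRHP-imp-k+1}, which covers both variants since RHP implies weak RHP, and then combines the costless censor-everything miner deviation with MRHP, using the Myerson-based \cref{cor:increasing} to tie positive confirmation probability to positive utility. The paper phrases this as a contradiction: a confirmed bid is raised to strictly positive utility, which the miner then destroys at no cost. You instead first derive zero utility on every profile and then conclude $x_i \equiv 0$. The randomization and tie-breaking concern you raise is already handled by the paper's Myerson lemma, so no extra work is needed.
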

\revision{On the other hand, \cref{thm:intro-impossibility-plain} is tight: any three of the four desired properties can be achieved simultaneously.}
We present these witness mechanisms in \cref{sec:plain-ub}.
\begin{theorem} 
\label{thm:intro-ub}
    Any three of the four properties of UIC, MIC, $1$-CRHP, and positive miner revenue can simultaneously be achieved.
\end{theorem}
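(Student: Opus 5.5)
We will exhibit four explicit mechanisms, one for each choice of three properties, and for each check the three claimed properties directly against the definitions (UIC and MIC from the standard literature, $1$-CRHP as an instance of \Cref{defn:intro-rhp}). The plan is to use off-the-shelf mechanisms wherever possible and posted-price variants elsewhere, since posted prices are strongly symmetric and make deviations easy to analyze.

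\textbf{Dropping $1$-CRHP (UIC, MIC, positive revenue).} We take a posted-price mechanism with a fixed reserve $r>0$ and a proportional miner share. Any bid $\ge r$ that is included is confirmed and pays $r$. Ties among more than $k$ such bids are broken arbitrarily by the miner. A fraction $\alpha\in(0,1]$ of the total payment goes to the miner.
\begin{itemize}
\item UIC follows because a user's payment does not depend on its bid, and any fake bids can only displace its own transaction.
\item MIC is argued by showing the miner's revenue $\alpha r\cdot|\text{confirmed}|$ is maximized by honestly filling the block. A fake bid costs $r$ but returns only $\alpha r\le r$, so fake bids never help.
\end{itemize}
This mechanism will fail $1$-CRHP, and the degenerate-case analysis in \cref{thm:intro-impossibility-plain} predicts this: with more than $k$ bids, the miner can swap in a colluding user at no cost to the coalition.

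\textbf{Dropping positive revenue (UIC, MIC, $1$-CRHP).} We use the same posted price with all payments burned and a deterministic, strongly symmetric inclusion rule. For example, include only when there are at most $k$ bids $\ge r$, and otherwise confirm nothing. Equivalently, we take the degenerate mechanism permitted by \cref{thm:intro-impossibility-plain}.
\begin{itemize}
\item MIC is immediate because miner revenue is zero.
\item UIC is checked as follows. When the number of bids is at most $k$, the payment is independent of the bid. A user whose fake bids push the count above $k$ only loses.
\item $1$-CRHP requires the most care. The coalition has zero revenue, so weakly profitable coalition deviations must leave the colluding user's utility at least as large as under honest play. The only remaining levers are censoring honest bids or injecting fake bids. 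Both either reduce others' confirmation only by switching to the no-confirmation regime, which also zeroes the colluding user and is therefore strictly costly, or leave honest users' outcomes unchanged.
\end{itemize}
We must handle the case where the colluder's honest utility is already $0$ (value $\le r$). In that case the coalition could harm others at no cost. The fix we try first is a threshold rule counting only bids, so that honest users with value exactly $r$ also get $0$. Failing that, we switch to the trivial "never confirm" mechanism, which satisfies all three properties vacuously.

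\textbf{Dropping MIC (UIC, $1$-CRHP, positive revenue).} We try the burning-free posted price with miner share $\alpha=1$, where the honest miner must include all bids $\ge r$ when at most $k$ exist and nothing otherwise. MIC is violated because the miner profits from dropping bids to reach the at-most-$k$ regime. We check that $1$-CRHP still holds, arguing that any coalition deviation harming an honest user (censoring it) changes revenue by $-r$ per censored user. To make such harm strictly costly to the coalition, we may need a per-bid bonus structure.

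\textbf{Dropping UIC (MIC, $1$-CRHP, positive revenue).} We use the first-price auction with all payments going to the miner. The miner includes the top $k$ bids, and each confirmed bid pays its bid.
\begin{itemize}
\item $1$-CRHP holds because, as noted after \Cref{defn:intro-rhp}, every honest user has zero utility, so no honest user can be harmed. The miner outside the coalition is not a separate party in the plain model, so no other party needs protection.
\item MIC holds because the top-$k$ bids maximize revenue, and fake bids pay the miner back exactly what they cost.
\item Revenue is positive.
\end{itemize}

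The main obstacle is the "drop MIC" case, since miner-side deviations interact with $1$-CRHP. We expect to iterate on the inclusion rule there.
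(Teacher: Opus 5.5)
\textbf{Main gap: the UIC + $1$-CRHP + positive-revenue case.} You leave this case open, and the candidate you describe actually violates $1$-CRHP. Pure censorship is not the problem: as you note, it costs the coalition $r$. The problem is mixing. Your rule counts only bids $\ge r$, so a colluding user with value below $r$ does not block the honest outcome. That colluder also has a strictly profitable pure deviation, namely overbidding to $r$, because its payment flows back to the miner.

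Concretely, take $k=2$, an honest user $j$ with value $3r$, and a colluder with value $r/2$.
\begin{itemize}
\item Honest play: the coalition gets $r$ and $j$ gets $2r$.
\item Colluder bids $r$: both bids are confirmed and the coalition gets $3r/2$.
\item Colluder bids $r$ and the miner also censors $j$: the coalition gets $r/2$ and $j$ gets $0$.
\end{itemize}
Mixing the last two deviations with probability $1/2$ each leaves the coalition at exactly $r$, while $j$'s expected utility falls to $r$. Strategies may be randomized and RHP compares expected utilities, so this breaks $1$-CRHP. Changing the per-bid revenue share $\beta\in(0,r]$ does not help, since a colluder with value in $(r-\beta,r)$ still has a strictly profitable overbid.

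The missing idea is the paper's all-or-nothing posted price. It confirms anything only if there are at most $k$ bids and \emph{every} submitted bid is strictly above $\res$; each confirmed bid pays $\res$, and the miner gets $\eps\le\res$ per confirmed bid. Now whenever a protected user has positive honest utility, the colluder is itself strictly above reserve and confirmed. Then every pure coalition deviation weakly lowers the coalition's utility:
\begin{itemize}
\item omitting an honest bid loses $\eps$;
\item a confirmed fake bid nets $\eps-\res\le 0$;
\item omitting the colluder loses $v_c-\res+\eps>0$;
\item any block containing a bid at or below reserve confirms nothing.
\end{itemize}
The loss is strict whenever a protected user is harmed. Hence no mixture can both preserve the coalition's utility and harm anyone.

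\textbf{The UIC + MIC + $1$-CRHP case.} Your primary candidate fails here for a simpler reason than the zero-utility colluder you worry about. With at most $k$ eligible bids, the miner can simply leave an honest eligible user $j$ out of the block. Revenue is identically zero and the colluder's own outcome is unchanged, so this pure deviation costs the coalition nothing, while $j$ loses $v_j-r>0$ whatever the colluder's value. Your claim that censorship bites only by switching to the no-confirmation regime overlooks that the plain-model miner chooses the block directly.

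Your fallback, the never-confirm mechanism, is a valid though vacuous witness, so this case survives. The paper instead uses the singleton posted price: confirm only when exactly one bid is submitted and it is at least $\res$, with zero miner revenue. This mechanism is non-trivial yet $1$-CRHP. Under honest play, the colluder's bid and any protected user's bid are both submitted, so no protected user ever has positive honest utility.

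\textbf{The other two cases.} Your drop-$1$-CRHP mechanism (a posted price with selection among eligible bids independent of bid amounts) and your drop-UIC mechanism (the first-price auction) essentially match the paper's witnesses.
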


\newcommand{\cmark}{\textcolor{darkgreen}{\ding{51}}}
\newcommand{\xmark}{\textcolor{darkred}{\ding{55}}}
\newcommand{\unknown}{\textcolor{black!50}{--}} 

\renewcommand{\arraystretch}{1.2}
\renewcommand{\arraystretch}{1.15}
\paragraph{Landscape in the MPC-assisted Model.}
In the MPC-assisted model, $M$ miners jointly execute the TFM and share the total revenue. Strategic miners can still inject fake bids, but the MPC prevents them from altering the computed outcome once all bids are submitted. 
It is convenient to think of an ideal functionality $\mathcal{F}$ that honestly implements the mechanism as follows:

Fix an arbitrary strategic player $\C$, which in the MPC-assisted model could be an individual user, a coalition of \revision{$m<M/2$ miners}, or a miner-user coalition of \revision{$m<M/2$ miners} and some users. 
Honest users submit their bids to $\mathcal{F}$. After seeing these honest bids, the strategic coalition $\C$ decides what bids to submit. Then $\mathcal{F}$ computes the TFM outcome (which bids are confirmed, payments, miner revenue) based on all submitted bids \revision{and sends the outcome to all players}. 
This ideal functionality can be instantiated by real-world protocols \cite{gmw87,uc} \revision{with guaranteed output delivery when $m<M/2$ miners are colluding}. 

In the MPC-assisted model, we can simultaneously achieve all the desire properties, UIC, MIC, all three RHP variants, as well as positive miner revenue.
\begin{mdframed}
\begin{mechanism}[{Posted-price with random selection}]
\label{intro-mec:posted-w-random}
    Let ${\sf res}>0$ be the reserve and $0<\eps\leq \res$ be the revenue parameter. Let $E$ be the set of bids at least ${\sf res}$. Randomly choose $\min\{k,|E|\}$ bids from $E$ to include and confirm. Each confirmed bid pays the reserve price ${\sf res}$. The miner gets total revenue $\epsilon$ if at least one bid is confirmed, and zero otherwise.
\end{mechanism} 
\end{mdframed}

\revision{
\begin{theorem}[Feasibility in the MPC-assisted model]
\label{thm:intro-mpc}
    \Cref{intro-mec:posted-w-random} (posted-price with random selection) achieves UIC, MIC, URHP, MRHP, $d$-CRHP for any $d\leq k$, and positive miner revenue in the MPC-assisted model.
\end{theorem}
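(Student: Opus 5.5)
We check each property of \Cref{intro-mec:posted-w-random} directly in the ideal-functionality view of the MPC-assisted model. Honest users bid truthfully; the coalition $\C$ chooses its bids after seeing the honest bids; then $\mathcal F$ runs the mechanism on all submitted bids. Every property turns on one observation. Only $v\ge\res$ matters for an honest user: such a user is confirmed with probability $\min\{k,|E|\}/|E|$ and has utility $v-\res\ge0$, and the payment does not depend on the bid. Hence the only lever any deviator has over others is the size of the eligible set $E$. Positive miner revenue, individual rationality and budget feasibility are immediate, since $\eps\le\res$ and at least one confirmed bid pays $\res$ whenever revenue is collected.

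\textbf{UIC and URHP.} A user $i$ with $v_i\ge\res$ has a per-confirmation utility of $v_i-\res$ whatever it bids above $\res$. Extra fake bids $\ge\res$ only increase $|E|$, which lowers $i$'s selection probability, and each confirmed fake bid costs $\res$ for zero value. Bidding below $\res$ or abstaining gives $0$. Truthful bidding is therefore optimal. A user with $v_i<\res$ gets $0$ honestly and weakly negative utility from any bid $\ge\res$. For URHP, the only way to hurt another honest user $j$ is to increase $|E|$ beyond $k$, either by i's own eligible bid or by fakes. I will show that any such deviation strictly lowers $i$'s utility unless it leaves $|E|$ effectively unchanged. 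Fakes cost in expectation, and raising a bid from below $\res$ to $\ge\res$ yields strictly negative expected utility when $v_i<\res$.

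The boundary case is \emph{exactly zero} surplus: $v_i=\res$, or $v_i<\res$ while the fake bids are never confirmed. I need this to be impossible whenever $|E|$ grows past $k$. If $|E|>k$ after the deviation, then every bid in $E$, fakes included, is confirmed with positive probability, so fakes strictly cost. The one delicate case is $v_i=\res$ and the user simply bids truthfully; that is the honest strategy, so it is not a deviation. Any other way to enlarge $E$ (fakes, or a bid $\ge\res$ when $v_i<\res$) incurs strictly positive expected loss. This is what I will verify carefully.

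\textbf{MIC, MRHP.} A coalition of $m<M/2$ miners cannot censor or alter the outcome; it can only inject fake bids. Its revenue share from $\eps$ depends only on whether at least one bid is confirmed. Injecting fakes can at most turn an empty outcome into a nonempty one, gaining at most $\frac{m}{M}\eps$ while paying $\res\ge\eps$ for a surely confirmed fake. This yields a strict loss. When honest confirmations already exist, fakes bring no gain and a strictly positive expected cost. Hence MIC holds, and any fake injection strictly hurts the coalition, which gives MRHP. Honest miners' revenue can only increase, or stay the same, as a result.

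\textbf{$d$-CRHP for $d\le k$.} This is the main obstacle. The coalition's utility is the sum of its users' utilities, its miners' share $\frac mM\eps\cdot\mathbf 1[\text{confirm}]$, and minus $\res$ per confirmed fake. To harm an honest outside user, the coalition must raise $|E|$ above $k$. I would compare the honest outcome with the deviating one via the expected number of confirmed coalition bids, and use linearity: each confirmed coalition bid contributes $v-\res$ (which is $-\res$ for a fake). The miner term is unchanged unless the honest outcome confirmed nothing, and in that case no honest user is harmed anyway.

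The key point is that the coalition's real users with $v\ge\res$ are already in $E$ honestly. Increasing $|E|$ requires adding bids whose value is $<\res$, either fakes or low-value users bidding up. Each such bid is confirmed with positive probability once $|E|>k$, and contributes strictly negative expected utility. Adding these bids also lowers the selection probability of the coalition's genuine users, which is a further nonpositive effect.

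So whenever an honest user is harmed, the coalition's utility strictly drops. The exception would be a miner gain, but the miner term can only change in the nonharmful case; I will use $\eps\le\res$ there. The bound $d\le k$ enters to ensure that this case analysis (for example, a coalition whose users fill the block) is consistent. I would double-check whether it is really needed or is only used to match the impossibility statements.
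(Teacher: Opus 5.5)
Your arguments for UIC, URHP, MIC and MRHP are essentially the paper's. You omit the extension to mixed deviations, but it is routine there: by UIC/MIC every realization is weakly worse for the deviator, so a weakly profitable mixture is almost surely utility-neutral.

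The genuine gap is in $d$-CRHP. In the MPC-assisted model it also protects honest \emph{miners} outside the coalition, and you never check them. Your claim that ``the miner term is unchanged unless the honest outcome confirmed nothing'' is false. If no honest outside bid is eligible ($h=0$) but some colluding users are ($t>0$), the colluders can withdraw their eligible bids and empty the block. This cuts each honest miner's revenue from $\eps/M$ to $0$. A deterministic withdrawal is unprofitable. However, the mechanism is not $1$-SCP, so the IC trick above is unavailable, and the coalition can mix a strictly profitable realization with one that empties the block.

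This is exactly where $d\le k$ is needed, so your doubt about it resolves the other way. Take $k=1$, $d=2$, colluding users of values $\res$ and $\res+1$, no other eligible bids, and $a:=m\eps/M$. The honest coalition utility is $a+\tfrac12$, and dropping the value-$\res$ user yields $a+1$. Mix this with emptying the block with probability $\rho=\tfrac{1}{2(a+1)}$. The coalition stays at $a+\tfrac12$, while every honest miner loses a $\rho$ fraction of its revenue.

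Against pure deviations $d$-CRHP actually holds for every $d$, which is why your argument never touches $d\le k$. The paper's argument for $h=0<t$ goes as follows. Since $t\le d\le k$, all eligible colluders are confirmed honestly, so the honest coalition utility is $a+s$, where $s:=\sum_i\max\{v_i-\res,0\}$ over colluding users. Every nonempty realization gives the coalition at most $a+s$, and an empty one gives $0$. Hence $\util_{\C}\le(1-\Pr[N=0])(a+s)$, and $a>0$ forces $\Pr[N=0]=0$.

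The same mixed-strategy issue affects your protection of honest users when $h>0$. Your per-realization statement is correct: raising $|E|$ beyond $\max\{k,h+t\}$ forces some below-reserve eligible bid that is confirmed with positive probability, hence a strict loss. But a mixture can offset that loss with realizations in which colluders drop low-surplus bids and strictly gain. So ``harm implies strict loss'' must be proved in expectation, not realization by realization.

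The paper writes the coalition's conditional utility as $a\mathbf 1_{\{N>0\}}+q(N)z$. Here $N$ is the realized number of eligible bids, $q(N)=\min\{1,k/N\}$, and $z\le s$ is the colluding bids' total surplus net of fake payments. For $h>0$ this gives $\util_{\C}\le a+s\,\E[q(N)]$, while an eligible honest user's utility is $(v_j-\res)\E[q(N)]$. Harm therefore means $\E[q(N)]<q(h+t)$, which is strictly unprofitable when $s>0$. When $s=0$, a weakly profitable deviation needs $z=0$ almost surely, i.e.\ no eligible fake bids and no eligible below-reserve colluders. Then $N\le h+t$ almost surely, so no honest user is harmed.
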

}
We stress that practical realization of \Cref{intro-mec:posted-w-random} remains simple.
It requires a censorship resistance inclusion list and an independent non-biasable randomness beacon. We discuss this further in \cref{sec:pratical}. 

A natural question is whether the use of randomness is necessary for the above result. Can we achieve all properties with a deterministic mechanism? Interestingly, there is a deterministic degenerated mechanism that achieves all properties in the MPC-assisted model. 
In fact, this is unavoidable for any deterministic mechanism:
\begin{theorem}[Characterization of deterministic mechanism in the MPC-assisted model]
\label{thm:intro-impossibility-mpc}
    Let $k$ denote the block size. 
    Any deterministic UIC and URHP TFM in the MPC-assisted model must be degenerate: no bid can be confirmed whenever the number of bids exceeds $k$.
\end{theorem}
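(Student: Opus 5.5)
The plan is to combine Myerson's characterization, which UIC gives for deterministic mechanisms, with the fact that URHP forbids \emph{zero-cost} deviations by losing users. The argument is by contradiction. Suppose that for some bid vector $\bid$ with $n>k$ bids, some bid $b_i$ is confirmed. Since at most $k$ bids are included, some user $j$ is unconfirmed.

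\textbf{Step 1 (UIC).} Use the standard single-parameter consequence of UIC for deterministic mechanisms. Fixing the other bids $\bid_{-i}$, user $i$'s confirmation is a monotone step function of its own bid, with some threshold $t_i(\bid_{-i})$. Its payment when confirmed equals that threshold.

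\textbf{Step 2 (confirmed user with positive utility).} Raise $i$'s bid to a value $v_i>t_i(\bid_{-i})$. By Step 1, $i$ stays confirmed and obtains strictly positive utility $v_i-t_i(\bid_{-i})$. Take this as the true value profile.

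\textbf{Step 3 (losing users are irrelevant to $i$).} An unconfirmed user $j$ has honest utility $0$. Any deviation of $j$ that keeps $j$ unconfirmed is utility-neutral for $j$. Examples are changing its bid within its losing region, withdrawing, or injecting fake bids that end up unconfirmed (these cost nothing). By URHP, such a deviation cannot lower $u_i$. Applying this in both directions (deviating from the new profile back to the old one), I conclude that $i$'s confirmation status and its threshold are invariant under any neutral deviation of $j$.

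\textbf{Step 4 (displacing $i$).} Here $j$ injects many fake bids, all placed at the level that keeps them losing. By weak symmetry, identical bids receive identical treatment up to metadata tie-breaking. The goal is a profile in which $j$ together with its fakes occupies more than $k$ bids at or above $i$'s threshold region, while each of them remains unconfirmed. Then either
\begin{itemize}
\item some of these bids are confirmed, which contradicts the assumption that they are losing under UIC monotonicity combined with the invariance from Step 3; or
\item $i$'s threshold is pushed up, or $i$ is displaced entirely, which is harm to $i$ at zero cost to $j$ and contradicts URHP.
\end{itemize}
A cleaner variant of this step uses two users with equal true value $v$, one confirmed and one unconfirmed. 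Let the unconfirmed user move its bid up through the losing region. Combining Step 3 with the tie-breaking symmetry should force the winner's payment to equal $v$. Since $v$ is arbitrary, this forces every confirmed user to pay its full bid, i.e.\ to obtain zero utility.

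\textbf{Step 5 (zero utility is incompatible with confirmation).} A mechanism in which confirmed users always pay their own bid cannot satisfy UIC unless nothing is confirmed. Indeed, by Step 1 the payment is a threshold independent of the user's own bid, so a confirmed user could strictly gain by raising its true value above the bid it would have paid. Hence no bid is confirmed whenever $n>k$.

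The main obstacle is Step 4: constructing a zero-cost deviation by the losing user that provably changes the winner's outcome. The difficulty is that fake bids must stay unconfirmed while still affecting $i$. I would first try the equal-value two-user configuration with tie-breaking, since weak symmetry gives the most leverage there.
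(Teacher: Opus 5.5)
Your Steps 1--3 are sound in the direction you actually need. A deviation by an unconfirmed user that keeps it unconfirmed is utility-neutral, and this includes injecting a fake bid that stays unconfirmed. So URHP forbids such a deviation from lowering a winner's utility. (The reverse direction fails for fake bids, since in the MPC-assisted model a user cannot remove other bids, but you do not need it.)

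The genuine gap is Step 4, which carries the entire proof and is not yet an argument.
\begin{itemize}
\item Its first branch is not a contradiction. Monotonicity holds the other bids fixed, and Step 3 concerns $i$, not the injected bids, so neither says anything about whether newly added bids are confirmed. A confirmed fake bid merely makes the deviation costly.
\item Its second branch assumes what must be proved. URHP only says $\util_i$ cannot drop. Nothing in Steps 1--3 shows that a losing bid above $i$'s payment must raise that payment or displace $i$; if it does not, there is no contradiction. Likewise, more than $k$ \emph{unconfirmed} bids do not clash with block capacity.
\item The tied-user variant only asserts that the payment ``should'' be forced to $v$. With ties broken by metadata, weak symmetry gives no grip there. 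Since Step 2 always produces a positive-utility winner, ``every confirmed user pays its bid'' is just the theorem restated, so Step 5 adds nothing.
\end{itemize}

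What is missing are two lemmas and an induction that runs opposite to your plan. One shows that bids above a winner's payment \emph{must} be confirmed, and the contradiction comes from $k+1$ confirmed bids.

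First lemma: suppose $\util_i(\bid)>0$ and a unique $b_j>b_i$ is unconfirmed. Let $j$ underbid to $b_i$; it stays unconfirmed by monotonicity, so this is neutral. URHP keeps $i$ confirmed, and monotonicity keeps it confirmed when it then raises to $b_j$. This yields the value-swapped copy of $\bid$, where weak symmetry says the unique bid $b_j$ is unconfirmed, a contradiction.

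Second lemma: this extends to every unique $b_j>p_i(\bid)$.
\begin{itemize}
\item Let $j$ harmlessly lower to a fresh $b_j'\in(p_i(\bid),b_j)$, giving $\bid'=(\bid_{-j},b_j')$. URHP yields $p_i(\bid')\le p_i(\bid)$.
\item For $v\in(p_i(\bid),b_j')$, the first lemma makes $j$ confirmed in $(\bid'_{-i},v)$. Raising $j$ back to $b_j$, it is a positive-utility winner in $(\bid_{-i},v)$.
\item Now the \emph{winner} $i$, with true value $v$, deviates to $b_i$ at no cost: both bids exceed its threshold $p_i(\bid)$ and pay it. This restores $\bid$, where $j$ gets zero, contradicting URHP.
\end{itemize}
This neutral deviation by a winner is an idea your Step 3 never considers.

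Induction: take $b_1=\max(\val)+1$ in place of a confirmed bid and $b_m=\frac12(b_{m-1}+\max(\val))$. Suppose some $b'\in(\max(\val),b_{m-1})$ were unconfirmed in $(\val^{(m-1)},b')$. An unconfirmed user (one exists since there are more than $k$ bids) injects it at zero cost. The second lemma then forces $b_{m-1}$ to lose or to pay at least $b'>\max(\val)$, contradicting URHP. So each $b_m$ is confirmed with payment at most $\max(\val)$. At $m=k+1$, the first lemma makes $b_1,\dots,b_{k+1}$ all confirmed, exceeding capacity $k$.
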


\paragraph{IC and RHP are Incomparable.}
The previous first-price and second-price examples already show that UIC and URHP are incomparable.
More generally, this incomparability holds for each of the three strategic roles:
\begin{theorem}[IC and RHP are incomparable]
\label{thm:intro-incomparable}
In both the plain and the MPC-assisted model, IC and RHP are incomparable.
For each strategic role $X$ (user, miner, or miner--user coalition), there exist mechanisms that are $X$-IC but not $X$-RHP, and vice versa.
\end{theorem}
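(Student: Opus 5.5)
The proof is by explicit examples: for each role $X\in\{\text{user},\text{miner},\text{coalition}\}$ and each model we give one mechanism that is $X$-IC but not $X$-RHP, and one that is $X$-RHP but not $X$-IC. We reuse mechanisms wherever possible, since a counterexample in the plain model usually transfers to the MPC-assisted model once strategies are restricted to injecting fake bids. The recurring principle is that RHP is automatic whenever every honest participant has zero utility in the honest baseline. Conversely, a mechanism can violate IC without violating RHP whenever the profitable deviation is a pure self-gain that leaves others' allocations and payments unchanged.

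\emph{User role.} For UIC without URHP we take the $(k+1)$-th price auction with all payments burned, so the miner gets nothing. Confirm the top $k$ bids and charge each the $(k+1)$-th highest bid. UIC is standard. For harm, take the $(k+1)$-th bidder, who is unconfirmed: it raises its bid toward the $k$-th bid while staying unconfirmed. Its utility stays at $0$, while every winner pays more. This works verbatim in the MPC model. For URHP without UIC we take the first-price auction. As argued after \Cref{defn:intro-rhp}, every honest user has utility $0$, so no one can be harmed. Shading the bid is a strictly profitable deviation, so UIC fails. In the MPC model all bids are still included, so the same argument applies.

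\emph{Miner role.} For MIC without MRHP in the plain model, use a mechanism with zero miner revenue, such as the burning second-price auction above, so every miner strategy yields utility $0$ and MIC holds trivially. The miner can drop the highest bid, or inject a fake bid that stays unconfirmed and raises the price, at no cost to itself. Either harms an honest user. In the MPC model dropping is unavailable, so we use the fake-bid version. We must check that the fake bid is never confirmed and so costs nothing, e.g.\ by placing it exactly at the $(k+1)$-th position with a tie-break against it. For MRHP without MIC, use a first-price auction whose payments all go to the miner. Users' honest utility is $0$, and in the MPC model the other miners' revenue share only changes if total revenue falls, which a non-losing deviation cannot cause. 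MIC fails because injecting a fake bid or reordering can raise revenue. In the plain model the miner can include a fake bid equal to the top value when $|E|<k$; this costs it payment but is returned as revenue, so its net is unchanged. We therefore need a case where the deviation strictly profits, for instance when a burn portion makes fake bids pay off. I expect this pairing, making one mechanism both clearly non-MIC and genuinely MRHP, to be the main obstacle. A cleaner choice is to let the miner receive all payments and have users pay their bids, but cap inclusion to the first $k$ arriving bids. The honest miner must include the $k$ earliest bids, while a strategic miner profits by choosing the highest ones; users' utility is $0$ regardless, so MRHP holds.

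\emph{Coalition role.} For CRHP without SCP we take posted-price with random selection (\Cref{intro-mec:posted-w-random}). \Cref{thm:intro-mpc} gives CRHP in the MPC model. In the plain model the same mechanism (or the first-price variant) fails SCP, because the miner can include a colluding user's bid deterministically and split the gain. We then verify it remains CRHP there, or replace it by first-price auction with miner revenue, which is CRHP because all honest utilities are $0$ and not SCP because miner plus winner can shade. For SCP without CRHP we use the burning $(k+1)$-th price auction restricted to a setting where it is $1$-SCP. A coalition of the miner and an unconfirmed user raises the price at zero joint cost, and if needed the miner also drops an honest low bid. Honest winners are hurt while the coalition's utility is unchanged, so CRHP fails. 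For MPC, the same inflation via a colluding user's bid applies. Finally, we verify each SCP claim using the prior-work characterizations cited from \cite{foundation-tfm,crypto-tfm}.
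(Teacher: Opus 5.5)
Your strategy of giving an explicit witness for each role, direction and model is the same as the paper's. Several of your witnesses work:
\begin{itemize}
\item zero-revenue burning second-price for MIC without MRHP;
\item first-price for URHP without UIC;
\item posted-price with random selection for CRHP without SCP in the MPC model;
\item burning second-price for SCP without CRHP in the MPC model, where miners can only inject bids, so $1$-SCP reduces to UIC with fake bids.
\end{itemize}
One small repair: a rule that confirms $k$ bids at the $(k+1)$-th price is not a TFM here, because payments are computed from a block of at most $k$ bids. Use ``include the top $k$, confirm the top $k-1$, charge the $k$-th bid'' instead.

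\textbf{Gap 1: MRHP without MIC.} A first-price auction whose payments go to the miner(s) \emph{is} MIC in both models. In the plain model this is the paper's first-price lemma: a fake bid returns exactly what it costs and displaces a real payment. In the MPC model a confirmed fake bid costs its full amount while the coalition recovers only an $m/M<1/2$ share. So your claim that fake bids or reordering raise the miner's utility is false. Your fallback ``include the first $k$ arriving bids'' uses timestamps as the primary selection criterion, so it violates weak symmetry, which every TFM in the paper must satisfy. In the MPC model the miners do not control inclusion at all, so you have no MPC witness.

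The missing idea for MPC is a revenue rule that a \emph{costless}, unconfirmed fake bid can raise, while honest users always have zero utility. The paper uses a revenue-capped first-price auction: confirm the top bid at its own bid, and set revenue equal to the highest unconfirmed included bid. There $\frac{m}{M}R'-F\ge\frac{m}{M}R_H$ forces $R'\ge R_H$, which protects honest miners. In the plain model, a weakly symmetric repair of your idea is uniformly random inclusion of $k$ bids with first-price payments to the miner; the paper's all-or-nothing posted price also works.

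\textbf{Gap 2: both plain-model coalition witnesses fail.}
\begin{itemize}
\item Posted-price with random selection is not $1$-CRHP in the plain model. It is UIC, MIC and non-degenerate, so the paper's tetrilemma excludes it. Concretely, the miner can bias the selection toward a colluding eligible user and away from an honest one at no revenue cost.
\item Your fallback, first-price with revenue to the miner, is $d$-SCP: shading by a colluding winner merely moves money from the miner to the user. Burning first-price works instead. Honest utilities are always zero, and a shading winner gains strictly while miner revenue stays zero.
\item Conversely, the burning second-price auction is not $1$-SCP in the plain model. Take bids $3$ and $5$ with block size $2$. The miner and the bidder of $5$ gain $3$ by censoring the bid $3$: the winner's price drops to $0$ and the burned revenue is unaffected.
\end{itemize}
SCP is quantified over all valuation vectors, so the burning second-price witness cannot be rescued by ``restricting to a setting where it is $1$-SCP''. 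What you need is a mechanism where the miner's revenue equals the common price charged to winners, so side contracts cannot create value. Yet censoring an honest winner should still be able to shift value between the miner and its colluding user without loss. The paper's 2-winner second-price auction is such a mechanism.
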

The mechanisms demonstrating this incomparability are given in \Cref{sec:compare}, where we summarize the results in \Cref{tab:compare-plain} for the plain model and \Cref{tab:compare-mpc} for the MPC-assisted model.

\paragraph{Organization.}
We give an overview of the techniques used in our impossibility results in \Cref{sec:overview}.
The detailed model and preliminaries are given in \Cref{sec:model}.
Characterizations in the plain model and the MPC-assisted model are given in \Cref{sec:plain} and \Cref{sec:mpc}, respectively.
We present the comparison between IC and RHP in \Cref{sec:compare}. Our design rationale and additional related work appear in \Cref{sec:pratical,sec:related-work}.

\subsection{Design Implications and Rationale}
\label{sec:pratical}

\smallskip\noindent\textbf{Implications for protocol design.}
Our results identify a concrete design tradeoffs for TFMs.
In today's mainstream plain model, if congestion can occur (i.e., the number of pending transactions exceed the block capacity $k$),
then no mechanism can simultaneously satisfy all four desiderata---UIC, MIC, CRHP, and positive miner revenue.
Thus, designers must either provision the system so that the capacity constraint is effectively non-binding
(e.g., operating in an ``always non-congested'' regime), or accept an explicit compromise among these properties during congestion.

Technically, the main driving force behind our plain-model impossibilities is the miner's ability to \emph{censor} honest bids without reducing its own payoff.
This reveals a practical route around the barrier in the MPC-assisted model.
\revision{
Because bids are public, realizing \Cref{intro-mec:posted-w-random} does not require private computation.
Under the honest-majority assumption, it suffices to have 1.) a censorship-resistant inclusion mechanism that establishes the set of submitted bids, and 2.) an independent non-biasable randomness beacon sampled after the submitted bids are fixed.
Once the bid set and randomness are fixed, the prescribed random selection is publicly computable and verifiable, while the honest-majority committee ensures that the resulting outcome is executed and delivered.
Thus, the mechanism remains lightweight and does not inherently require a generic MPC implementation.
Committee-based inclusion proposals such as FOCIL~\cite{eip7805} and public-verfiable randomness~\cite{ethereum_randao_docs} already provide natural building blocks toward such a realization.
}

\smallskip\noindent{\bf Prioritize UIC over coalition-IC.}
For finite block size, prior work~\cite{foundation-tfm,crypto-tfm} implies that UIC and $d$-SCP (incentive compatibility against miner-user coalitions with up to $d$ users), for $d\ge2$, can coexist in either the plain or the MPC-assisted model only if every user and miner has zero expected utility under honest behavior on every valuation profile.
Moreover, other collusion-resistance goals such as \emph{off-chain agreement proofness} (OCA-proofness~\cite{roughgarden2021transaction}) remain impossible with UIC and MIC during congestion~\cite{chung2024collusion}.
Given this tension between UIC and coalition resistance, we prioritize UIC and ask what protection can still be provided against miner--user coalitions.
From a practical perspective, UIC avoids the bid guessing in mechanisms like first-price auctions, which leads to poor user experience~\cite{lopp2017feeestimation}; Ethereum's move to the posted-price structure in EIP-1559 was motivated in part by simplifying fee estimation for users~\cite{eip1559}.
CRHP provides a complementary residual guarantee: even if a miner--user coalition can profit from deviating, it cannot obtain this benefit at the expense of protected honest participants.

\smallskip\noindent{\bf Comparing CRHP with other coalition-resistant notions.}
In addition to SCP and OCA-proofness, a broad TFM literature studies alternative notions of coalition-resistance and their
compatibility with UIC. As we already shown in \Cref{thm:intro-incomparable}, CRHP is incomparable to SCP.
OCA-proofness is primarily concerned with preventing the miner and \emph{all} users from ``stealing from the protocol'',
but it does not in general rule out deviations by a miner colluding with a subset of users that harm non-colluding users.
To capture such off-chain coordination more directly, \cite{ganesh2025characterizing,GaneshThomasWeinberg-EC24} study \emph{off-chain influence-proofness} (OffC-IP), which rules out deviations where the miner shifts price discovery or bidding off-chain to extract additional profit via influence/threat channels.
A different weakening is \emph{(IC+IR)-collusion resilience}~\cite{ferreira2024incentive}, which restricts attention to
off-chain agreements implementable by side mechanisms that remain incentive-compatible and individually rational for the colluding players.
In contrast to these notions, CRHP approaches coalition-resistance from a different angle: it does not attempt to preclude profitable collusion, but rather, rules out coalitions imposing \emph{costless harm} on
non-colluding users.

\smallskip\noindent{\bf Scope of the posted-price with random selection.}
\Cref{intro-mec:posted-w-random} satisfies UIC, MIC, and all variants of RHP while achieving positive revenue in the MPC-assisted model, but it does \emph{not} satisfy SCP.
\revision{To see this, suppose no honest user has value at least $\res$, and consider a coalition containing $m<M/2$ miners and one user with value $v\in(\res-m\epsilon/M,\res)$.

Under honest behavior, no bid is eligible, so no transaction is confirmed and the coalition obtains utility zero.
If the colluding user instead bids $\res$, its transaction is confirmed and pays $\res$, while the colluding miners receive a total revenue share of $m\epsilon/M$.
The coalition's utility becomes $v-\res+\frac{m\epsilon}{M}>0.$
Thus, the deviation is strictly profitable and the mechanism is not SCP.}

Importantly, this deviation does not harm any protected participant: all honest users remain unconfirmed with utility zero, while miners outside the coalition gain positive revenue.
This illustrates the distinction between SCP and CRHP.
CRHP does not rule out every profitable coalition deviation; rather, it ensures that a rational coalition cannot profit at the expense of protected honest participants.

\subsection{Additional Related Work}
\label{sec:related-work}
\smallskip\noindent\textbf{Transaction fee mechanisms.} 
There is a rich line of literature of TFM focusing on different models considered in this work.
In response to the UIC + SCP impossibility results, \cite{reasonable-tfm, ganesh2025characterizing, chen2025bayesian}
relax ex-post incentive compatibilities and instead study Bayesian notions of incentive compatibility. 

\revision{
Conversely, \cite{crypto-tfm} studies approximate-IC as another way to circumvent impossibility results for exact incentive compatibility.
Approximate IC and RHP constrain different aspects of deviations.
An $\epsilon$-IC guarantees that strategic players gain at most $\eps$ more utility compared to honest behavior by deviating.
But approximate-IC does not bound the harm that such a deviation may impose on honest participants.
RHP instead constrains this external harm whenever the deviation is rational, while allowing the deviator's utility gain itself to be arbitrarily large as long as protected participants are not harmed.
Thus, neither notion substitutes for the other.

Prior work also shows limitations of approximate IC under finite block size.
In the plain model, \cite{crypto-tfm} proves a scalability barrier for mechanisms satisfying $\epsilon$-UIC, $\epsilon$-MIC, and $\epsilon$-SCP.
In the MPC-assisted model, the diluted posted-price mechanism of~\cite{crypto-tfm} satisfies exact UIC and MIC together with $\epsilon$-SCP.
It pads the candidate pool with dummy bids before random selection and may therefore leave block capacity unused even when at least $k$ eligible real bids are present.
Our randomized posted-price mechanism also achieves exact UIC and MIC, but provides $d$-CRHP for $1\le d\le k$ as its coalition-side guarantee.
It confirms exactly $\min\{k,\ell\}$ eligible submitted bids when there are $\ell$ such bids, without padding the candidate pool.}

\revision{
\smallskip\noindent{\bf Randomness in the plain model.}
Our plain model follows the strategy space of~\cite{foundation-tfm, chung2024collusion}: a strategic miner controls which bids enter the block and may choose arbitrary randomness and deviate from any prescribed randomized inclusion rule.
A public randomness beacon alone does not enforce that the miner applies the prescribed random selection to the intended set of candidate bids.
An alternative model, as in~\cite{gafni2024barriers}, allows the miner to omit bids and inject fake bids but evaluates deviations under the prescribed randomized allocation rule on the resulting bid vector.
Unlike our plain model, it does not give the miner direct control over the realized allocation randomness.
Characterizing RHP under this alternative strategy space is an interesting direction for future work.
}

\smallskip\noindent{\bf TFM for other consensus.}
Many papers also study TFM for other types of consensus protocols.
\cite{garimidi2025transaction} and \cite{stouka2025multiple} consider blockchain models in which multiple block producers participate in block construction such as DAG-based consensus (see survey \cite{raikwar2024sok}) and FOCIL~\cite{eip7805}. 
Another line of work extends the classical single-resource TFM model
to settings in which transactions consume multiple scarce blockchain resources.
In these multidimensional fee markets,
distinct resources (e.g., computation, data availability, or storage)
are priced separately, and transactions may require varying quantities of each.
\cite{diamandis2023designing} proposes formal multidimensional fee mechanisms,
while \cite{angeris2024multidimensional, lavee2025does}
analyze their efficiency and optimality properties.
Empirical analyses of Ethereum’s EIP-4844 (blob) fee market
appear in \cite{heimbach2025early}.

\smallskip\noindent\textbf{Robustness to secondary goals, non-bossiness, and bounded maximin fairness.}
There is a rich literature on making mechanisms robust to \emph{secondary goals} beyond direct payoff~\cite{carroll2019robustness,jehiel2005allocative,lopomo2021uncertainty,leme2023nonbossy}.
For example, \cite{jehiel2005allocative} studies auctions with externalities where bidders' utilities depend on others' allocations or information, and recent work formalizes robustness to hidden secondary goals via non-bossy mechanisms~\cite{leme2023nonbossy}.
We refer the reader to Carroll's survey~\cite{carroll2019robustness}.

RHP is also related to the classical notion of \emph{non-bossiness}~\cite{satterthwaite1981strategy,thomson2016non,duque2024local}, which restricts an agent from changing others' outcomes without changing its own.
Most formulations focus on allocations, while TFMs involve both allocation and payments.
Utility-based variants have also been studied~\cite{Klaus2001,saijo2007secure,ritz1983restricted,leme2023nonbossy}, but they typically concern individual agents, restrict deviations that leave the deviator's own outcome unchanged, and do not distinguish whether the induced change to others is harmful or beneficial.
In contrast, RHP operates directly on utility, applies to both individual and coalitional deviations, and specifically protects honest participants against harmful deviations that are no worse for the deviator.

Recently, \cite{cryptoeprint:2025/1086} introduced the notion of \emph{bounded-maximin fairness} in the context of atomic swaps.
This notion ensures honest parties have non-negative utility from any polynomial-time profitable strategy performed by any coalition with arbitrary, but bounded external incentives, such as locked collateral, outside the mechanism's allocations.
In contrast, our notion uses the honest outcome as a baseline for honest parties and does not depend on external incentives.


\revision{
\smallskip\noindent\textbf{Griefing and griefing factors.}
RHP is closely related to the blockchain literature on \emph{griefing}, where a participant takes an action that harms other participants even when doing so may also impose a cost on the attacker.
Buterin introduced \emph{griefing factors} as a way to quantify the harm inflicted on others relative to the attacker's own loss~\cite{buterin2017triangle,buterin2018griefing}.
This perspective has subsequently been studied in blockchain settings including mining games~\cite{leonardos2023griefing}, payment channels~\cite{mazumdar2023strategic}, and cross-chain transactions~\cite{xue2021hedging}; see also the broader analysis in~\cite{george2023griefs}.

RHP addresses a complementary boundary of this concern.
Griefing-factor analyses ask how much harm can be imposed relative to the attacker's cost, including attacks for which the attacker is willing to incur a positive utility loss.
RHP does not rule out all such costly griefing attacks.
Instead, it requires that whenever a deviation is no worse for the deviator according to the modeled utility, including both utility-neutral and strictly profitable deviations, the deviation cannot make a protected honest participant worse off.
Thus, RHP focuses on the particularly conservative case in which harming others requires no sacrifice in the deviator's modeled utility.
}

\smallskip\noindent\textbf{Miner deviations and credible auctions.}
Akbarpour and Li~\cite{credibleauction} introduce \emph{credible auctions}, requiring that the auctioneer (miner) has no incentive to \emph{safely} deviate from the prescribed mechanism, where a deviation is \emph{safe} if it admits a plausible explanation to all users. 
Credibility is related to, but distinct from, MIC. 
As shown in~\cite{foundation-tfm}, a TFM can be MIC yet fail credibility. 
Moreover,~\cite{credibleauction} proves a trilemma: a revenue-optimal auction cannot be simultaneously \emph{credible}, \emph{strategy-proof}, and \emph{static}. 
By comparison, our impossibility results target a different scope: we do not impose revenue optimality as an objective. 
Subsequent work~\cite{credibleauction-comm00,credibleauction-comm01} demonstrates that cryptographic techniques can circumvent the trilemma impossibility, but these constructions do not ensure harm-prevention as provided by RHP or explicitly address the miner-user coalition in TFMs.

\section{Technical Overview} 
\label{sec:overview}
This section highlights the core ideas behind our results. 
The main technical challenge in analyzing RHP is that we must track not only how a deviation changes the deviator’s own utility, but also its effect on every honest participant’s utility. This is different from the analysis of incentive compatibility, which considers only the deviator’s payoff. 
We outline below the key idea under  our impossibility theorems and main feasibility in the MPC-assisted model. 

\smallskip\noindent{\bf Notation.} For a bid vector $\bid = (b_1,...,b_n)$, we use $\max(\bid)$ to denote $\max_{i\in[n]}(b_i)$. 
We use $\bid_{-i}$ to denote $(b_1,\dots,b_{i-1},b_{i+1}\dots,b_n)$, and we use $(\bid_{-i},b_i)$ and $\bid$ interchangeably. 
For any $i\in[n]$, we use $x_i(\bid)$, $p_i(\bid)$, and $\util_i(\bid)$ to denote the confirmation probability, expected payment, and expected utility, respectively, of user $i$'s true value $b_i$, when everyone behaves honestly with an input bid vector $\bid$.
We use $\mu(\bid)$ to denote the expected miner revenue in the honest case.

We focus on bids with \emph{unique} values since we only require weak symmetry for the mechanisms for the impossibility. We call a bid $b_i$ \emph{unique in $\bid$} if $b_j \neq b_i$ for all $j\neq i$. For a bid vector $\bid$ and a unique value $z$ in $\bid$, let $\operatorname{idx}_{\bid}(z)$ be the index $i$ such that $b_i=z$. We write $\widetilde{x}_z(\bid):=x_{\operatorname{idx}_{\bid}(z)}(\bid)$, $\widetilde{p}_z(\bid):=p_{\operatorname{idx}_{\bid}(z)}(\bid)$, and $\widetilde{\util}_z(\bid):=\util_{\operatorname{idx}_{\bid}(z)}(\bid)$. These value-indexed quantities are used only when $z$ is unique in $\bid$.
For two sets $S$ and $T$, we use $S \prec T$ to show that $\sup S < \inf T$.

\subsection{Impossibility of UIC+MIC+CRHP in the Plain Model}
\label{sec:overview-impossibility}
We start by sketching the proof of \cref{thm:intro-impossibility-plain} that any TFM satisfying UIC, MIC, and 1-CRHP in the plain model cannot confirm any bids when the input bid vector $|\bid| > k$. We first explain the proof for deterministic TFMs to illustrate the main idea, and then discuss how to generalize it to randomized TFMs.

\smallskip\noindent
{\bf Impossibility for Deterministic TFM.} 
The proof uses an induction on a carefully constructed sequence of bid vectors. 
We use a simple \emph{toy example} to illustrate the key idea behind this sequence.

\smallskip\noindent{\it Toy Example:} 
Consider a 1-CRHP TFM. Consider two users with true values $\bid = (b_1, b_2)$. We assume that $b_1 < b_2$, making them unique in $\bid$. Furthermore, to isolate the core idea behind the proofs, we assume that in the honest execution user~2’s bid is confirmed ($x_2(\bid)=1$) and pays $p_2(\bid) < b_1$, 
so user~2 gets a positive utility.
Under these conditions, user~1 must also be confirmed in the honest execution. 
If not, consider a coalition of the miner and user~1. In the honest outcome, their joint utility is just the miner’s revenue $\mu(\bid)$. Now the coalition performs this {\bf censor-then-replace} attack:
the miner ignores user~2’s bid $b_2$, and user~1 changes its own bid to $b_2$ while also injecting a fake bid of value $b_1$. The resulting bid vector is still $\bid$ but now both $b_1$ and $b_2$ come from user~1. By weak symmetry, the bid of value $b_2$ (now coming from user~1) would still be confirmed and charged $p_2(\bid)$, while the bid of value $b_1$ is not confirmed. Under this strategy, the coalition’s joint utility strictly increases: the miner’s revenue is unchanged, and user~1 now gets utility $b_1 - p_2(\bid) > 0$, while the honest user~2 is harmed. This contradicts 1-CRHP. Thus $b_1$ must be confirmed in the honest execution.

The above toy example can be generalized to a useful ``above-payment-guarantee'' lemma: \emph{In an honest execution of a (weak) 1-CRHP TFM, if some unique bid $b_i$ earns positive utility, then \emph{any} unique bid in $b$ that is higher than $p_i(\bid)$ must also be confirmed.} Intuitively, if a higher bid is not confirmed, then it can collude with the miner and perform the censor-then-replace attack on $b_i$, which contradicts $1$-CRHP. 
Here, we require uniqueness of the bid to get rid of the potential tie breaking that can depend on the metadata.

\smallskip\noindent{\bf Induction.}
Next, we show how to prove \cref{thm:intro-impossibility-plain} for deterministic TFM.
Assume for sake of contradiction that there exists a deterministic TFM satisfying UIC, MIC, and 1-CRHP in which some bid $v_{i^*}$ is confirmed when the input bid vector is $\val$ with $|\val| >k$. 
We will construct a sequence of bid vectors $\val^{(m)}$ for $m = 1,2,...$, and prove by induction that $m$ bids must be confirmed in $\val^{(m)}$. 
This directly contradicts the block size $k$ when $m = k+1$.
The sequence is constructed as follows:
\begin{align*}
b_1& =  \max(\val) + 1,&  \val^{(1)} &= (\val_{-i^*}, b_1), \\
b_m &= \frac{1}{2}(\max(\val) + b_{m-1}),& \val^{(m)} &= \left(\val^{(m-1)}, b_m\right) \quad \text{for } m> 1.
\end{align*}
The inductive invariant states that 
\begin{center}
    $b_m$ must be confirmed in $\val^{(m)}$ and its payment $\widetilde{p}_{b_m}(\val^{(m)}) \leq \max(\val)$.
\end{center}
Effectively, 
in each step, we add a bid $b_m$ that lies between the payment $\widetilde{p}_{b_{m-1}}\left(\val^{(m-1)}\right)\leq \max(\val)$ and $b_{m-1}$ to $\val^{{(m-1)}}$. 
By the inductive statement, $b_m$ gets a positive utility in $\val^{(m)}$, and all $b_z$ for $z\in[m]$ is higher than $b_m$'s payment. 
Thus, by the above-payment-guarantee lemma, all $b_z$ for $z\in[m]$ must be confirmed. 

The induction proof uses Myerson's lemma (\cref{Myerson's}). Roughly speaking, for a deterministic UIC TFM, for any fixed other users' bids $\bid_{-i}$, there exists a critical value $v^*$ such that any $b_i > v^*$ gets confirmed and pays $v^*$, i.e., $x_i(b_i,\bid_{-i}) = 1$ and $p_i(b_i,\bid_{-i}) = v^*$ for any $b_i > v^*$.

\smallskip\noindent\underline{\it Base case:}
The base case follows directly from UIC and the Myerson's lemma: 
Raising user $i^*$'s bid in $\val$ to $\max(\val)+1$ must maintain its confirmation probability and payment.
Thus, $\widetilde{x}_{b_1}(\val^{(1)}) = 1$ and $\widetilde{p}_{b_1}(\val^{(1)})  = p_{i^*}(\val)\le \max(\val)$ where the inequality follows from individual rationality.

\smallskip\noindent\underline{\it Inductive step:}
Now assume we have $\val^{(m-1)}$ where $b_{m-1}$ is confirmed and pays no more than $\max(\val)$.
By Myerson's Lemma, it suffices to show that for any $b$ such that $\max(\val) < b < b_{m-1}$, 
bid $b$ must be confirmed in $(\val^{(m-1)},b)$.
This implies that $\widetilde{x}_{b_m}(\val^{(m)}) = 1$ and $\widetilde{p}_{b_m}(\val^{(m)}) \le \max(\val)$.

For the sake of contradiction, assume that there exists some $\max(\val) < b^* < b_{m-1}$ such that $b^*$ is not confirmed in $(\val^{(m-1)},b^*)$.
Imagine a world where $\val^{(m-1)}$ is the honest bid vector.
Since $|\val^{(m-1)}|\ge |\val|> k$, some user $j$ must be unconfirmed in the honest case.
Consider a coalition consisting of the miner and user $j$ who injects a fake bid $b^*$.
Since $b^*$ is unconfirmed, the miner revenue must remain unchanged by MIC: $\mu(\val^{(m-1)}) = \mu(\val^{(m-1)},b^*)$.
Therefore, this strategy does not decrease the coalition's utility compared to the honest case, since user $j$'s original bid still gives nonnegative utility by individual rationality.\footnote{This is why this proof does not work for the impossibility w.r.t.  weak $1$-CRHP. \label{weak-challenge}}
That said, by $1$-CRHP, this strategy should harm no honest bid in $\val^{(m-1)}$, including $b_{m-1}$.
Therefore, $b_{m-1}$ should still get a positive utility in $(\val^{(m-1)},b^*)$ and pays no more than $\max(\val)$.
However, by the above-payment-guarantee, since $b^*$ is greater that the payment of $b_{m-1}$ in $(\val^{(m-1)},b^*)$, we know that $b^*$ must be confirmed, which contradicts our assumption. 
This completes the induction proof.


\smallskip\noindent{\bf Generalizing to Randomized TFMs.} 
The above proof idea needs significant adaptation for randomized mechanisms. 
If the TFM uses randomness, then having more than $k$ bids, each with a positive \emph{confirmation probability} is no longer an immediate contradiction. 
In a randomized TFM, different random outcomes might confirm different sets of at most $k$ bids.
Therefore, we derive a stronger condition called the \emph{$k$-winners lemma}: If a TFM is MIC and (weak) $1$-CRHP, then it must be that
\begin{equation}
    \label{eqn:overview-k-winner}
    |W(\bid)|\leq k, \text{ where } W(\bid):= \{i \colon \util_i(\bid) >0\}.
\end{equation}
Here, $W(\bid)$ denotes the set of \emph{winners} who have positive expected utility in the honest case when the input bid vector is $\bid$. 
Note that positive \emph{utility} is a strictly stronger requirement than positive \emph{confirmation probability}.
We prove this lemma formally in \cref{sec:impossibility}. 
For the overview, we assume it is true and show how to prove \cref{thm:intro-impossibility-plain}.

For possibly randomized TFMs, the ``above-payment-guarantee'' can be generalized to ``above-ratio-guarantee'': \emph{In a 1-CRHP TFM , if 
a unique bid $b_i$ has positive expected utility in $\bid$, then any unique bid in $\bid$ above the ratio $p_i(\bid)/x_i(\bid)$ must have a positive confirmation probability}. 
In the deterministic case, this ratio $p_i(\bid)/x_i(\bid)$ is simply the payment itself. We define, for any bid $b_i$ such that $x_i(\bid)>0$, the \emph{ratio} as
\[r_i(\bid):=p_i(\bid)/x_i(\bid), \qquad \text{and if $b_i$ unique, } \widetilde{r}_{b_i}(\bid):=\widetilde{p}_{b_i}(\bid)/\widetilde{x}_{b_i}(\bid).\]
If a unique bid $b > \widetilde{r}_{b}(\bid)$, then $b$ gets a positive utility in $\bid$.

To prove \cref{thm:intro-impossibility-plain}, suppose for the sake of contradiction, that for some UIC, MIC, and $1$-CRHP TFM, 
there exists some $\val$ where $|\val| = n >k$ and $x_{i^*}(\val) > 0$ for some $i^* \in[n]$. As in the deterministic case, we will construct a sequence of bid vectors $\val^{(m)} = (\val^{(m-1)},b_m)$, each adding a new bid $b_m$ to $\val^{(m-1)}$, where $\widetilde{r}_{b_{m-1}}(\val^{(m-1)}) < b_m < b_{m-1}$. 
We then prove that $|W(\val^{(m)})|\geq m$ by induction, which leads to a contradiction of the $k$-winner lemma for $m > k$.

\smallskip\noindent{\bf Challenges.}  
However, simply adding one bid in each step is {\bf not} enough.
Using the above-ratio-guarantee, we can argue similarly as the deterministic case that $b_m$ gets a positive confirmation probability in $(\val^{(m-1)}, b_m)$.
Then raising $b_m$ to some $b'_m$ ends up with a positive \emph{utility} for $b'_m$ in $(\val^{(m-1)}, b'_m)$ by a corollary of Myerson's lemma.

The challenge is that raising $b_m$ to $b'_m$ can alter payments for others: this naive approach might make earlier winners $b_1,...,b_{m-1}$ lose their utility, breaking the inductive invariant. Note that $1$-CRHP does not rule out such strategies since raising $b_m$ to $b'_m$ may harm the coalition themselves in a randomized TFMs. 
We overcome this by carefully choosing a set $S_m$ of two possible values $\alpha^{(m)} < \beta^{(m)}$ that the new bid $b_m$ can take.
Effectively, we build a sequence of \emph{sets} of bid vectors $V^{(m)} = \{\val_{-i^*}\}\times S_1\times\dots\times S_m = \{(\val_{-i^*}, b_1,...,b_m):b_j\in S_j\text{ for }j\in[m]\}$. 
The induction will show that for each $m$, we have some $\bid \in V^{(m)}$ such that $|W(\bid)|\ge m$. 


\smallskip\noindent\underline{\it Base case:} 
Pick $\max(\val) < \alpha^{(1)} < \beta^{(1)}$ to be two values large enough such that $b_1$ has a positive utility in $(\val_{-i^*},b_1)$ for any $b_1\in S_1= \{\alpha^{(1)}, \beta^{(1)}\}$. Such values must exist since $x_{i^*}(\val) > 0$. Specifically, choose $v^*>\max(\val)$, put $\tau=\widetilde r_{v^*}(\val_{-i^*},v^*)<v^*$, and choose $\max\{\max(\val),\tau\}<\alpha^{(1)}<\beta^{(1)}<v^*$. Monotonicity of the ratio then gives the bound $\widetilde r_{b_1}(\val_{-i^*},b_1)\le\tau<\alpha^{(1)}$. Let $V^{(1)} = \{\val_{-i^*}\}\times S_1$. Then for any $\bid\in V^{(1)}$, we have $|W(\bid)|\geq 1$.




\smallskip\noindent\underline{\it Inductive step:} 
We demonstrate the main idea with $m = 2$, and summarize the general construction afterwards.
Let $r^{(1)}:=\underset{\val^{(1)}\in V^{(1)}}{\max} r_{n+1}(\val^{(1)},\ \alpha^{(1)})$ and $\alpha^{(2)}$ and $\beta^{(2)}$ 
be such that $\max\{\max(\val),r^{(1)}\} < \alpha^{(2)} < \beta^{(2)} < \alpha^{(1)}$.
By a similar argument as in the deterministic case, $\alpha^{(1)}$ must have a positive confirmation probability in $(\val^{(1)},\alpha^{(1)})$ for any $\val^{(1)} \in V^{(1)}$, and therefore, $r^{(1)}$ is well-defined.
More precisely, the censor-then-replace argument gives positive confirmation probability to every newly appended bid $t\in(\alpha^{(1)}-\epsilon,\alpha^{(1)})$ for sufficiently small $\epsilon>0$, independent of the choice $\val^{(1)}\in V^{(1)}$ (\Cref{claim:confirmed}). Raising this bid to $\alpha^{(1)}$ at the fixed index $n+1$ gives strictly positive utility, even when it ties the existing bid $b_1$. Thus each ratio $ r_{n+1}(\val^{(1)},\ \alpha^{(1)})$ for $\val^{(1)}\in V^{(1)}$ is strictly below $\alpha^{(1)}$, so $r^{(1)}<\alpha^{(1)}$.
The value $r^{(1)}$ is an upper bound for the ratio $\widetilde{r}_b(\val^{(1)},b)$ for any bid $b < \alpha^{(1)}$ with a positive confirmation probability.
This maximum guarantees that the new bid $b_2\in S_2 = \{\alpha^{(2)}, \beta^{(2)}\}$ \emph{always} lies between $b_1\geq \alpha^{(1)}$ and its corresponding ratio, no matter which value $b_1$ takes on. 
Let $V^{(2)} = \{\val_{-i^*}\}\times S_1 \times S_2$.
We show that 
\begin{align}
    &\text{for any }\val^{(2)} = (\val_{-i^*},b_1,b_2)\in V^{(2)},\  \widetilde{\util}_{b_2} (\val^{(2)}) > 0 \text{ for any } b_2\in S_2;\label{eqn:overview-new-pos}\\
    &\text{for any } i \in [2],\  \widetilde{\util}_{\beta^{(i)}} \left(\val_{-i^*}, \beta^{(1)},\beta^{(2)}\right) > 0, \label{eqn:overview-old-pos}
\end{align}
which completes the induction proof for $m=2$ since $|W(\val_{-i^*}, \beta^{(1)},\beta^{(2)})|\geq 2$.

Property (\ref{eqn:overview-new-pos}) directly follows from UIC, since any bid $b > r^{(1)}$ must have a positive utility in $(\val^{(1)},b)$ for any $\val^{(1)} \in V^{(1)}$. 
Otherwise, when the honest bid vector is $(\val^{(1)},b)$, the user with true value $b$ is incentivized to overbid to $\alpha^{(1)}$ to gain a positive utility.


To see why (\ref{eqn:overview-old-pos}) is true, consider $\bid' =(\val_{-i^*}, \alpha^{(1)},\beta^{(2)})$. Since $\beta^{(2)}$ has a positive utility in $\bid'$ by (\ref{eqn:overview-new-pos}), and $\alpha^{(1)} > \beta^{(2)} > r^{(1)} \geq \widetilde{r}_{\beta^{(2)}}(\bid')$,
we know $\alpha^{(1)}$ must have a positive confirmation probability in $\bid'$by the above-ratio-guarantee. 
Therefore, raising $\alpha^{(1)}$ to $\beta^{(1)}$ gives $\beta^{(1)}$ a positive utility. This completes the proof for $m=2$.


In general, at each step $m$, we define $r^{(m-1)} = \underset{\val^{(m-1)}\in V^{(m-1)}}{\max}\ r_{n+m-1}\left(\val^{(m-1)},\alpha^{(m-1)}\right)$ as an upper bound, over all possible $\val^{(m-1)}\in V^{(m-1)}$, of the ratio that a new bid smaller than $\alpha^{(m-1)}$ could have if it gets a positive confirmation probability. For every $\val^{(m-1)}\in V^{(m-1)}$, the newly appended bidder has strictly positive utility when bidding $\alpha^{(m-1)}$, and hence
\[
r_{n+m-1}\left(\val^{(m-1)},\alpha^{(m-1)}\right)<\alpha^{(m-1)}.
\]
Since $V^{(m-1)}$ is finite, it follows that $r^{(m-1)}<\alpha^{(m-1)}$. We then pick $\max\{\max(\val),r^{(m-1)}\} < \alpha^{(m)} < \beta^{(m)} < \alpha^{(m-1)}$ and define $S_m = \{\alpha^{(m)}, \beta^{(m)}\}$ and $V^{(m)} = V^{(m-1)}\times S_m$.
Roughly speaking, in the $m$-th step, for any $(b_1,...,b_m)\in S_1\times\dots\times S_m$, the new bid $b_m$ must get a positive utility in $(\val_{-i^*}, b_1,...,b_m)$. 
Now consider specifically $\bid' = (\val_{-i^*},\beta^{(1)}, ...,\beta^{(m)})$.
Using the above-ratio-guarantee and our choice of the sets $S_j$, we can conclude that for any $n\leq j\leq n+m-1$, 
we have $x_{j}(\bid_{-j}',\alpha^{(j-n+1)}) > 0$.
Therefore, $\util_{j}(\bid') > 0$ for each $n\leq j\leq n+m-1$.
When $m = k+1$, this directly contradicts the $k$-winners lemma. This completes the impossibility proof for randomized TFMs in the plain model.
We omit the details of the inductive reasoning here.
See \cref{sec:impossibility} for the complete proof.

\subsection{Impossibility under Weak-CRHP}
\label{sec:overview-weak-impossibility}
We first give a high-level overview of the proof techniques as this proof is quite involved. 
Recall that weak $1$-CRHP means that strictly profitable deviations must not harm honest players, but it does not rule out utility-neutral deviations that are harmful to others.

This relaxation complicates the proof as mentioned in \cref{weak-challenge}. In the $1$-CRHP setting, each step of the induction used a contradiction argument to show that a newly added bid $b_m$ must receive a positive confirmation probability: if $b_m$ were not confirmed, a miner–user coalition could instead inject $b_m$ without changing joint utility, and under $1$-CRHP such a deviation cannot harm any honest player. 
Under only weak $1$-CRHP, however, this utility-neutral strategy can harm honest players, so the earlier argument breaks down.

To overcome this, we adopt a slightly different approach. The key observation is that by MIC, if we start with some bid vector $\val$ with a positive miner revenue, then adding any set of additional bids $\bid$ will still produce $\mu(\val,\bid) > 0$. 
By budget feasibility, this implies that some bid must have a positive confirmation probability in $(\val,\bid)$. We use the following lemma to argue that this confirmed bid must be one of the newly added bids.
\begin{lemma}[no-persistent-old-winner, informal]
\label{lem:overview-winning-set}
Fix a TFM in the plain model satisfying UIC, MIC, and weak $1$-CRHP.
Suppose \( \val = (v_1,\dots,v_n) \) and a sequence of sets $S_1,...,S_k\subset \R_{\geq 0}$ such that each $|S_j| \geq 2$, and that $\{\max(\val)\} \prec S_1 \prec \dots \prec S_k$.
\revision{Then there is no $i\in[n]$ such that $x_i(\val,\bid)>0$ for all \( \bid = (b_1,\dots,b_k) \in S_1 \times \dots \times S_k \).}
\end{lemma}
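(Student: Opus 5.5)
The plan is to argue by contradiction and produce $k+1$ winners in a single bid vector, which violates the $k$-winners lemma (\Cref{eqn:overview-k-winner}); the overview states that lemma for MIC together with \emph{weak} $1$-CRHP. Suppose some $i\in[n]$ has $x_i(\val,\bid)>0$ for every $\bid\in S_1\times\dots\times S_k$. For each $j$, fix two elements $\alpha_j<\beta_j$ of $S_j$. The idea is to make user $i$ a winner, and then use $i$ as the anchor of a censor-then-replace argument that forces every new bid to be a winner as well.

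\emph{Step 1 (making the old bid a winner).} Since $\{\max(\val)\}\prec S_1$, there is slack above $v_i$. Choose $\delta>0$ with $v_i+\delta<\inf S_1$, and choose it generically so that $v_i':=v_i+\delta$ is unique in the resulting bid vector. This is possible because only finitely many values must be avoided. Fix $\bid$, and view user $i$ as having true value $v_i'$ in the vector $(\val_{-i},v_i',\bid)$. If it bid $v_i$ instead, by the hypothesis it would obtain utility
\[
x_i(\val,\bid)\,(v_i'-v_i)+\util_i(\val,\bid)\;\ge\;\delta\, x_i(\val,\bid)\;>\;0 ,
\]
using individual rationality for the second term. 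Hence by UIC, $\util_i(\val_{-i},v_i',\bid)>0$, and in particular $r_i(\val_{-i},v_i',\bid)<v_i'$. This holds for every $\bid$ in the product, because the hypothesis is uniform over the product.

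\emph{Step 2 (a weak-CRHP version of the above-ratio guarantee).} I would show that if a unique bid $b_i$ has positive utility in $\bid$, then every unique bid $b_j>r_i(\bid)$ has $x_j(\bid)>0$. Suppose instead $x_j(\bid)=0$. Consider the coalition of the miner and user $j$ performing censor-then-replace: the miner drops $b_i$, and user $j$ submits $b_i$'s value for itself and also injects a fake bid of value $b_j$. The resulting bid vector has the same multiset of values, so by weak symmetry and uniqueness the outcomes transfer, and the miner's revenue is unchanged. User $j$'s own original bid earned nothing. The fake bid of value $b_j$ is unconfirmed (since $x_j(\bid)=0$) and so pays nothing by individual rationality. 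The bid carrying $b_i$'s value now yields user $j$ a utility of $x_i(\bid)\,(b_j-r_i(\bid))>0$. So the deviation is \emph{strictly} profitable, yet the honest user $i$'s utility drops from positive to zero. This contradicts weak $1$-CRHP. The point is that the deviation is strictly rather than weakly profitable, which is exactly why this argument survives the weakening from $1$-CRHP to weak $1$-CRHP.

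\emph{Step 3 (the new bids become winners).} Consider $\bid^\beta:=(\beta_1,\dots,\beta_k)$, and for each $j$ let $\bid^{(j)}$ denote $\bid^\beta$ with its $j$-th coordinate replaced by $\alpha_j$. The vector $\bid^{(j)}$ lies in the product, so by Step 1 user $i$ has positive utility in $(\val_{-i},v_i',\bid^{(j)})$. Moreover $\alpha_j>v_i'>r_i$, and $\alpha_j$ is unique because the sets are strictly ordered. So Step 2 gives $\alpha_j$ a positive confirmation probability in that vector. By the Myerson-type corollary (monotone allocation), the bid $\beta_j$ then gets utility at least $(\beta_j-\alpha_j)\,x>0$ in $(\val_{-i},v_i',\bid^\beta)$. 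This holds for every $j\in[k]$. Together with user $i$, whose utility is positive by Step 1 applied to $\bid^\beta$, we get $|W(\val_{-i},v_i',\bid^\beta)|\ge k+1$, contradicting the $k$-winners lemma.

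The main obstacle I expect is Step 2: checking carefully that the censor-then-replace deviation is strictly profitable and is a legal plain-model strategy for a $1$-coalition. It must keep miner revenue and all other outcomes identical via weak symmetry, which needs all relevant bids to be unique. The generic choice of $\delta$ and the strict ordering $S_1\prec\dots\prec S_k$ are what should guarantee that uniqueness.
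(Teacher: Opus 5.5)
Your proof is correct and is essentially the paper's argument (\cref{claim:rand-i-range}): the paper raises user $i$'s bid into a two-element set $V\subset(v_i,\min S_1)$ and invokes \cref{lem:W-size-set}. That lemma's proof is exactly your Steps 1--3: positive utility via \cref{cor:increasing}, the weak-CRHP above-ratio guarantee \cref{lem:rand-include-above} applied to each lower element $\alpha_j$, raising to $\beta_j$, and a contradiction with \cref{lem:limited-space}. The differences are only cosmetic: you use $v_i$ itself as the lower witness instead of a set $V$, obtain positive utility by a direct UIC comparison rather than Myerson's integral, and reprove the above-ratio lemma inline.
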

Suppose, for the sake of contradiction, that there exists a UIC, MIC, weak $1$-CRHP TFM and an initial bid vector $\val$ with $\mu(\val)>0$. 
We inductively construct a sequence of bid-vector sets $V^{(m)}$ by adding $k$ new sets, instead of one set, at each step, then pruning the old sets and fixing $k-1$ fresh coordinates in a background vector $\val^{(m)}$. Thus $V^{(m)}=\{\val^{(m)}\}\times S_1^{(m)}\times\cdots\times S_m^{(m)}$, where $|\val^{(m)}|=n+(k-1)m$ and $\val\subseteq\val^{(1)}\subseteq\cdots\subseteq\val^{(m)}$.
We show by induction that for each $m$ there exists some $\bid\in V^{(m)}$, such that $\mu(\bid) > 0$ and $|W(\bid)|\geq m$.
For $m = k+1$ this yields a contradiction to the $k$-winner lemma (which remains valid under weak 1-CRHP).
In the induction proof, we will make use of the following dense Cartesian product lemma of Erd\H{o}s:
\begin{lemma}[{\cite[Corollary, p.~188]{Erdos1964}}]\label{lem:overview-G-R}
For all positive integers $p,a$ and $\xi\in(0,1]$, there exists $N$ such that, if
$T \subseteq A_1 \times \cdots \times A_p$, each $|A_i| = N$ for $i\in[p]$, 
and $|T| \ge \xi\cdot N^p$,
then there exist subsets $B_i \subseteq A_i$ with
$|B_i| \ge a$ such that
\[
B_1 \times \cdots \times B_p \subseteq T.
\]
\end{lemma}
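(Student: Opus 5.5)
The approach is induction on $p$, the number of coordinates, with the threshold $N=N(p,a,\xi)$ defined recursively. In each step we pass from a dense subset of a $p$-fold product to a dense subset of a $(p-1)$-fold product by fixing an $a$-subset of the last coordinate, with the density going from $\xi$ to roughly $(\xi/2)^a$.

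\emph{Base case $p=1$.} Here $T\subseteq A_1$ with $|T|\ge \xi N$. Taking $N\ge a/\xi$ gives $|T|\ge a$, so $B_1:=T$ works.

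\emph{Inductive step.} Suppose the statement holds for $p-1$ with every $a$ and every density. Let $T\subseteq A_1\times\cdots\times A_p$ with $|T|\ge \xi N^p$. Write elements as $(x,y)$ with $x\in X:=A_1\times\cdots\times A_{p-1}$ and $y\in A_p$. For $x\in X$ let
\[
\Gamma(x)=\{y\in A_p:(x,y)\in T\},\qquad d(x)=|\Gamma(x)|,
\]
so that $\sum_{x\in X}d(x)=|T|\ge \xi N^{p-1}\cdot N$. We double count pairs $(x,Y)$ with $Y\subseteq\Gamma(x)$ and $|Y|=a$.

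The number of such pairs is $\sum_x\binom{d(x)}{a}$. Since $t\mapsto\binom{t}{a}$ (extended as $t(t-1)\cdots(t-a+1)/a!$ for $t\ge a-1$, and $0$ below) is convex, Jensen's inequality gives
\[
\sum_x\binom{d(x)}{a}\ \ge\ N^{p-1}\binom{\xi N}{a}.
\]
There are only $\binom{N}{a}$ choices of $Y\subseteq A_p$. Hence some $a$-set $Y$ has a common neighbourhood $T_Y:=\{x\in X: Y\subseteq\Gamma(x)\}$ of size at least
\[
N^{p-1}\binom{\xi N}{a}\Big/\binom{N}{a}\ \ge\ N^{p-1}\Big(\frac{\xi N-a}{N}\Big)^a\ \ge\ \Big(\frac{\xi}{2}\Big)^a N^{p-1},
\]
where the last inequality uses $N\ge 2a/\xi$. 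Now apply the induction hypothesis to $T_Y\subseteq A_1\times\cdots\times A_{p-1}$, whose factors still have size exactly $N$, with density $\xi'=(\xi/2)^a\in(0,1]$ and the same $a$. This yields $B_1,\dots,B_{p-1}$ with $|B_i|\ge a$ and $B_1\times\cdots\times B_{p-1}\subseteq T_Y$. Setting $B_p:=Y$, every $(x,y)$ with $x\in B_1\times\cdots\times B_{p-1}$ and $y\in Y$ lies in $T$, by the definition of $T_Y$.

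\emph{Choice of $N$.} It suffices to take
\[
N(p,a,\xi)=\max\Big\{N\big(p-1,a,(\xi/2)^a\big),\ \lceil 2a/\xi\rceil\Big\},
\]
and we must check the induction hypothesis is valid at this $N$. The hypothesis was stated for sets of size exactly $N(p-1,\cdot)$, so one first strengthens the statement to ``for all $N\ge N_0$.'' This is harmless, since the base case and the convexity estimate both hold for every larger $N$.

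\emph{Main obstacle.} The only delicate point is making the convexity/Jensen step rigorous. The natural extension of $\binom{t}{a}$ must be convex on all of $[0,\infty)$, which means taking $0$ for $t<a-1$ and the falling factorial otherwise. One must also confirm that the lower bound $\binom{\xi N}{a}/\binom{N}{a}\ge(\xi/2)^a$ holds uniformly once $N\ge 2a/\xi$. Everything else is bookkeeping.
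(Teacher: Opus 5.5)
Your proof is correct. The convex, nondecreasing extension of $\binom{t}{a}$ (zero below $a-1$, falling factorial above) justifies $\sum_x\binom{d(x)}{a}\ge N^{p-1}\binom{\xi N}{a}$. The termwise bound $\frac{\xi N-i}{N-i}\ge\frac{\xi N-a}{N}$, together with $N\ge 2a/\xi$, gives the common link $T_Y$ density at least $(\xi/2)^a$. Proving the monotone form ``for every $N\ge N_0(p,a,\xi)$'' by the same induction is exactly what lets you apply the hypothesis at the larger $N$.

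The paper gives no proof of its own. It takes the statement as a black box from Erd\H{o}s's 1964 paper, where it is a corollary of an extremal theorem for $r$-uniform hypergraphs. That theorem is proved by the same kind of link-counting induction on the number of classes: convexity and pigeonhole give an $l$-subset of one class with a large common link, and one then recurses. So the two routes share the same idea and differ in packaging.

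The cited result is quantitatively stronger, since Erd\H{o}s's theorem already applies at densities tending to zero. However, it is phrased for general, non-partite hypergraphs. Using it for $T\subseteq A_1\times\cdots\times A_p$ with $B_i\subseteq A_i$ therefore tacitly needs an easy observation: in a complete $p$-partite subhypergraph of a $p$-partite $p$-graph, each class lies inside a single $A_i$. Your self-contained argument yields the partite, qualitative form directly, with an explicit recursive bound $N(p,a,\xi)$. That is all the paper uses when it defines the sequence $N_m$ backwards with fixed $p$, $a$, $\xi$.
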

Looking ahead, in the induction, each newly added set $A_j^{(m)}$ in the $m$-th step has size $N_m$.
The invariant maintains $\{\max(\val^{(m)})\}\prec S_m^{(m)}\prec\cdots\prec S_1^{(m)}$ and a bound $r^{(m)}<\min S_m^{(m)}$ on every ratio $\widetilde r_{b_\ell}(\val^{(\ell)},b_1,\ldots,b_\ell)$, for all $\ell\le m$ and $(b_1,\ldots,b_\ell)\in S_1^{(m)}\times\cdots\times S_\ell^{(m)}$; all these ratios have positive denominators. In particular, $b_m$ has positive utility, so the above-ratio-guarantee gives every larger $b_i$ positive confirmation probability.
The sequence of $N_m$ is defined backwards starting from $N_{k+2}=2$, such that at the end of the $m$-th step, there exist $m$ sets $S_1^{(m)},...,S_m^{(m)}$, each with size $N_{m+1}$, such that any $b_i$ in $\bid = (b_1,...,b_m)\in S_1^{(m)}\times ...\times S_m^{(m)}$ must have a positive confirmation probability in $(\val^{(m)},\bid)$. Therefore, $b_i$ must have a positive utility in all possible $(\val^{(m)},\bid)$ as long as $b_i>\min S_i^{(m)}$.
We show how to choose $N_m$ given $N_{m+1}$ in the base case and inductive step below.

\smallskip\noindent\underline{\it Base case:} Let $\{\max(\val)\} \prec A_1^{(1)} \prec...\prec A_k^{(1)}$ be $k$ sets, each of size $N_1$, where $N_1$ is the $N$ in \cref{lem:overview-G-R} with $p = k$, $\xi = \frac{1}{n+k}$, and $a = N_2$.
By MIC, $\mu(\val, \bid)>0$ for any $\bid\in A_1^{(1)}\times\dots\times A_k^{(1)}$. Therefore, some bid must have a positive confirmation probability in each $(\val, \bid)$.
For any bid vector $\widehat{\bid}$ with at least one bid of positive confirmation probability, let $\phi(\widehat{\bid})$ denote the index of a lowest-valued bid among those with positive confirmation probability, breaking ties by index.
The $n+k$ possible values of $\phi(\val,\bid)$ give a natural partition of $A_1^{(1)}\times\dots\times A_k^{(1)}$:
\[T_i = \{\bid \in A_1^{(1)}\times\dots\times A_k^{(1)}: \phi(\val,\bid)=i\} \text{ for } i\in[n+k].\]
At least one of these sets $|T_{i^*}|\geq \frac{1}{n+k}\cdot N_1^k$ for some $i^*\in[n+k]$. 
By \cref{lem:overview-G-R}, there exists $B_j^{(1)}\subseteq A_j^{(1)}$ for all $1\le j\le k$ such that for each such set, $|B_j^{(1)}|\geq N_2$, such that $B_1^{(1)}\times \dots \times B_k^{(1)} \subseteq T_{i^*}$, i.e., the $i^*$-th bid has positive confirmation probability in $(\val,\bid)$ for any $\bid\in B_1^{(1)}\times \dots \times B_k^{(1)}$.
By \cref{lem:overview-winning-set}, $i^*$ cannot be one of the old bids from $\val$, i.e., $i^* > n$. Let $\ell = i^* - n$. 

Pick an arbitrary $\bid^*\in B_1^{(1)}\times \dots \times B_k^{(1)} \subseteq T_{i^*}$, and let $\val^{(1)} = (\val, \bid^*_{-\ell})$. Choose $v^*>\max A_k^{(1)}$. Raising the selected bid to $v^*$ gives positive utility by Myerson's lemma, so $\tau:=\widetilde r_{v^*}(\val^{(1)},v^*)<v^*$. Let $S_1^{(1)}$ be a set with $N_2$ values such that $\{\max\{\max A_k^{(1)},\tau\}\}\prec S_1^{(1)}\prec\{v^*\}$, and the ratio $r^{(1)}$ (defined analogously as in 
\cref{sec:overview-impossibility})
$r^{(1)}=\widetilde{r}_{\max S_1^{(1)}}(\val^{(1)},\max S_1^{(1)})$.
By Myerson's lemma, $\widetilde{x}_{b_1}(\val^{(1)},b_1) > 0$ for any $b_1\in S_1^{(1)}$.
By monotonicity of the payment-to-confirmation ratio (\cref{cor:p/x-increasing}), $r^{(1)}\le\max\{\max A_k^{(1)},\tau\}<\min S_1^{(1)}$.




\smallskip\noindent\underline{\it Inductive step:} Pick $k$ sets $A_1^{(2)},...,A_k^{(2)}$ such that $\{r^{(1)}, \max(\val^{(1)})\}\prec A_1^{(2)}\prec\dots\prec A_k^{(2)}\prec S_1^{(1)}$ and each $|A_j^{(2)}| = N_2$.
This sequence of sets exist because $\{r^{(1)}, \max(\val^{(1)})\} \prec S_1^{(1)}$ by construction.
Here, $N_2$ is the $N$ in \cref{lem:overview-G-R} with $p = k+1$, $\xi = \frac{1}{n+2k}$, and $a = N_3$.
Still, by MIC, $\mu(\val^{(1)},\bid,b_1)>0$ for any $(\bid, b_1) \in A_1^{(2)}\times\dots\times A_k^{(2)} \times S_1^{(1)}$, meaning that some bid must have a positive confirmation probability. 
The $n+2k$ possible values of $\phi(\val^{(1)},\bid,b_1)$ give a natural partition of $A_1^{(2)}\times\dots\times A_k^{(2)}\times S_1^{(1)}$ for $ i\in[n+2k]$:

\[T_i = \left\{(\bid,b_1) \in A_1^{(2)}\times\dots\times A_k^{(2)}\times S_1^{(1)}: \phi(\val^{(1)},\bid,b_1)=i\right\}.\]
By a similar reasoning as in the base case, there must exist an $i^*\in[n+2k]$, such that there exists $B_j^{(2)}\subseteq A_j^{(2)}$ and $S_1^{(2)}\subseteq S_1^{(1)}$, each of size $N_3$, such that $B_1^{(2)}\times \dots \times B_k^{(2)} \times S_1^{(2)} \subseteq T_{i^*}$, i.e., the $i^*$'th bid has a positive confirmation probability in $(\val^{(1)},\bid, b_1)$ for all $(\bid, b_1)\in B_1^{(2)}\times \dots \times B_k^{(2)}\times S_1^{(2)}$.
The set-to-winners argument underlying \cref{lem:overview-winning-set}, with $b_1$ held fixed, shows that $i^*$ cannot be smaller than $n+k$. 
On the other hand, because $\phi$ selects a lowest-valued bid with positive confirmation probability, the censor-then-replace argument shows that $i^* < n+2k$.
That is, $i^*$ must represent one of the new bids from $ B_1^{(2)}\times\dots\times B_k^{(2)}$. Let $\ell = i^* - |\val^{(1)}|$.

Pick an arbitrary $(\bid^*, b_1)\in B_1^{(2)}\times \dots \times B_k^{(2)}\times S_1^{(2)} \subseteq T_{i^*}$, and let $\val^{(2)} = (\val^{(1)}, \bid^*_{-\ell})$. Put $\gamma:=\min S_1^{(1)}$ and $J:=|\val^{(2)}|+1$. For each $b_1\in S_1^{(2)}$, reinsert the selected fresh bid at index $J$; weak symmetry preserves its positive confirmation probability. Raising it to $\gamma$ at this fixed index gives positive utility by Myerson's lemma, even if $\gamma=b_1$. Since $S_1^{(2)}$ is finite, $\tau_2:=\max_{b_1\in S_1^{(2)}}r_J(\val^{(2)},\gamma,b_1)<\gamma$. Let $S_2^{(2)}$ be a set with $N_3$ values such that $\{\max\{\max A_k^{(2)},r^{(1)},\tau_2\}\}\prec S_2^{(2)}\prec\{\gamma\}$, and set $r^{(2)}:=\max\{\max A_k^{(2)},r^{(1)},\tau_2\}$. Ratio monotonicity preserves the invariant, so $b_2$ has positive utility and the above-ratio-guarantee applies to $b_1$. Then $b_1$ and $b_2$ must have a positive confirmation probability in $(\val^{(2)},b_1,b_2)$ for any $(b_1,b_2)\in S_1^{(2)}\times S_2^{(2)}$.

In general, for all $m \ge 2$, the inductive step proceeds analogously. Define $N_i$ backward as follows: set $N_{k+2}:=2$, and for each $i=k+1,k,\ldots,1$, let $N_i$ be the number $N$ given by \cref{lem:overview-G-R} with parameters
$p = k+i-1$, $\xi = \tfrac{1}{\,n + k i\,}$, and $a = N_{i+1}$.
This choice ensures that at the end of the $(k+1)$-th step we obtain sets $S^{(k+1)}_1,\ldots,S^{(k+1)}_{k+1}$, each of size two, such that for every
$\bid=(b_1,\ldots,b_{k+1}) \in S^{(k+1)}_1 \times \cdots \times S^{(k+1)}_{k+1}$,
each coordinate $b_i$ has a positive confirmation probability in $(\val^{(k+1)},\bid)$.
Consequently, taking $b_i^* := \max S^{(k+1)}_i$ for all $i\in[k+1]$ yields a bid vector
$(\val^{(k+1)}, b_1^*,\ldots,b_{k+1}^*)$ in which all $b_i^*$ enjoy positive utility, by a similar reasoning in \cref{sec:overview-impossibility}, which contradicts the $k$-winner lemma.

\revision{
\subsection{Feasibility in the MPC-Assisted Model}
\label{sec:overview-feasibility}
We next explain why \Cref{intro-mec:posted-w-random} simultaneously achieves all desired properties once inclusion and correct output delivery are enforced by the MPC.
For $N>0$ eligible bids, let $q(N):=\min\{1,k/N\}$ denote the confirmation probability of each eligible bid. Set $q(0):=0$.
Every confirmed bid pays the fixed reserve $\res$, while the total miner revenue is the fixed amount $\eps$ whenever the block is nonempty.
UIC and MIC follows from prior work, so we focus on explaining why it achieves $d$-CRHP.

Fix a coalition containing $1\le m<M/2$ miners and at most $d\le k$ users.
Let $h$ be the number of eligible honest users outside the coalition and $t$ be the number of eligible colluding users.
Let the sum of these eligible colluding users' utility had they be confirmed be $s$. 
Then under any deviation, besides the miner revenue, the total utility contribution from colluding bids is at most $s$.

In the honest case, as long as $h+t > 0$, colluding miners gets revenue $m\eps/M$. 
Each eligible user gets confirmed with probability $q(h+t)$, so colluding users' expected utility is $q(h+t)s$.
Let $N$ be the resulting number of eligible bids under deviation. Then $N\geq h$, and the colluding users' utility is at most $\E[q(N)]s$.

There are two possible cases. If $h > 0$, the block is always non-empty under every deviation, so the revenue of honest miners is always the same. 
If the coalition wants to harm an eligible honest user, the coalition has to decrease its expected confirmation probability, i.e., make $\E[q(N)]<q(h+t)$.
If $s>0$, the bound $\E[q(N)]s$ then makes the coalition strictly worse off. If $s=0$, every colluding user has value at most $\res$; a deviation that does not lower coalition utility must therefore use no eligible fake bids or below-reserve users' eligible bids. Hence $N\le h+t$ almost surely, so no honest user is harmed.
If $h = 0$, i.e., there is no honest eligible user.
Then only honest miners need to be protected. If $t=0$, their honest revenue is zero and cannot decrease, so assume $t>0$. This is where we need $d\leq k$.
In the honest case, since $t\leq d\leq k$, all eligible colluding users are already confirmed in the honest case.
Therefore, any rational deviation must keep the block nonempty with probability one: each nonempty outcome gives conditional coalition utility at most $m\eps/M+s$, its honest utility, while an empty outcome gives zero and $m\eps/M>0$. This guarantees honest miners' revenue.
The condition $d\leq k$ ensures that all eligible colluding users already fit in the block in the honest case, so the coalition cannot profit by dropping some colluding users to increase the confirmation probability of the others.
}

\subsection{Impossibility of UIC+URHP for Deterministic MPC-assisted TFMs}
The proof of \cref{thm:intro-impossibility-mpc} is very similar to the deterministic impossibility in the plain model.
However, we cannot directly use the above-payment-guarantee lemma because it utilizes a censor-then-replace attack, which cannot be performed in the MPC-assisted model.
Instead we prove the above-payment-guarantee for deterministic UIC and URHP mechanisms in the MPC-assisted model.
The crux of the proof is to start with a slightly weaker claim: 
\emph{for a deterministic TFM that satisfies UIC and URHP, if $b_i$ gets a positive utility in $\bid$, then any unique bid higher than $b_i$ must also be confirmed.}
Then we only need to show that any unique $b_j$ between $p_i(\bid)$ and $b_i$ must be confirmed to get the above-payment-guarantee lemma.

Suppose for the sake of contradiction that, in some deterministic UIC and URHP TFM in the MPC-assisted model, a unique bid $b_j>p_i(\bid)$ is not confirmed even though user $i$ has positive utility in $\bid$.
First let user $j$ lower its bid to some $b'_j\in(p_i(\bid),b_j)\setminus\{b_\ell:\ell\in[n]\}$.
By UIC, user $j$ still has zero utility under this deviation, so URHP implies that user $i$ remains unharmed and therefore still has positive utility.
Now choose $v\in(p_i(\bid),b'_j)$ and consider the honest profile $(\bid_{-i},v)$.
Using the weaker claim above and Myerson's lemma, user $j$ has positive utility in this profile.
However, user $i$ can deviate from the truthful bid $v$ to its original bid $b_i$ without changing its own utility, while the resulting bid vector is $\bid$, where user $j$ has zero utility.
This contradicts URHP.
This proves the above-payment-guarantee for UIC and URHP TFMs in the MPC-assisted model.
The rest of the proof follows by a similar reasoning of the deterministic case in \cref{sec:overview-impossibility}.
Here an unconfirmed user alone injects the unconfirmed fake bid, without decreasing its utility by individual rationality. The above-payment-guarantee forces the previous winner either to lose confirmation or to pay at least the fake bid, which exceeds its old payment; this harms that winner and contradicts URHP.
The proof of this impossibility is given in \cref{sec:mpc-det-imp}.

\section{Model and Preliminaries}
\label{sec:model}
\subsection{Notation}

Let $\mathbb{N}$ denote the set of natural numbers and $\mathbb{R}_{\ge0}$ denote the non-negative real numbers. 
Given a bid vector $\bid = (b_1, \ldots,b_n)$ and any $i \in [n]$, we use $\mathbf{b}_{-i}$ to denote $(b_1, \ldots, b_{i-1}, b_{i+1}, \ldots, b_n)$. We use $(\mathbf{b}_{-i}, b'_i)$ to denote the vector obtained by replacing $b_i$ with $b'_i$ in $\mathbf{b}$, and treat this as equivalent to $(b_1, \ldots, b_{i-1},b'_i,b_{i+1}, \ldots, b_n)$. For vectors $\mathbf{a}$ and $\mathbf{b}$, we write $\mathbf{a} \subseteq \mathbf{b}$ if every entry of $\mathbf{a}$ appears in $\mathbf{b}$. We use $|\mathbf{b}|$ to denote the number of elements in vector $\mathbf{b}$. For any vector $\mathbf{v} = (v_1, v_2, \dots, v_n)$, we define 
$\max(\mathbf{v}) \coloneqq \max_{i \in [n]} v_i$.
For a vector $\val$ and $S\subseteq[n]$, let $\val_S:=(v_i)_{i\in S}$.
For a bid vector $\bid$ and a value $z$ that appears exactly once in $\bid$, let $\operatorname{idx}_{\bid}(z)$ be the unique index $i$ such that $b_i=z$. We write $\widetilde{x}_z(\bid):=x_{\operatorname{idx}_{\bid}(z)}(\bid)$, $\widetilde{p}_z(\bid):=p_{\operatorname{idx}_{\bid}(z)}(\bid)$, and $\widetilde{\util}_z(\bid):=\util_{\operatorname{idx}_{\bid}(z)}(\bid)$. These value-indexed quantities are used only when $z$ is unique in $\bid$.
For two sets $S$ and $T$, we use $S \prec T$ to show that $\sup S < \inf T$.

\subsection{Transaction Fee Mechanism}
\label{sec:tfm-model}
We consider a transaction fee mechanism (TFM) where each block has a finite capacity of $k$ transactions. Assume each user has a true value of $v_i \in \mathbb{R}_{\ge0}$ that measures the maximum amount a user is willing to pay to get its transaction confirmed in the block.
Each user $i$ submits their transaction together with a \emph{bid} $b_i$. 
We assume each transaction takes one slot in the block.
In this paper, ``bid'' and ``transaction'' are used interchangeably.
A TFM with capacity $k$ is defined as a tuple $\tfm$ where: 
\begin{itemize}[leftmargin = 5mm]
    \item {\bf Inclusion rule ${\bf I}(\cdot)$}: takes a bid vector $\textbf{b}$ as input, and outputs a \emph{block} $B \subseteq \bid$ of at most $k$ bids to include. 
    \item {\bf Confirmation rule ${\bf C}(\cdot)$}: takes as input a block $B$ of the included bids and chooses a subset of included bids to confirm. Specifically, ${\bf C}(B)$ outputs a vector $(x_1,...,x_{|B|})\in\{0,1\}^{|B|}$, indicating whether each bid is confirmed. 
    \item {\bf Payment rule ${\bf P}(\cdot)$}: takes a block $B$ as input and outputs a vector of $(p_1,...,p_{|B|})\in\R_{\geq 0}^{|B|}$, indicating the price paid by each transaction in $B$. 
    \item {\bf Revenue rule ${\bf R}(\cdot)$}: takes a block $B$ as input, and outputs the miner's revenue $\mu\in \R_{\geq 0}$. 
\end{itemize}
A feasible TFM $\tfm$ should satisfy the following properties: 
\label{def: feasibility}
\begin{itemize}[leftmargin = 5mm]
    \item  \emph{Space feasibility:} the size of the block $|B|\leq k$.
    \item \emph{Individual rationality:} a user's payment shall not exceed the bid amount, i.e., for any $b_i\in B$, its payment $p_i\leq b_i$.
    Unconfirmed bids pay zero.
    \item \emph{budget feasibility}: the miner's revenue cannot exceed the total payment collected from all confirmed bids, i.e., $\mu\leq \sum_{i=1}^{|B|} p_i$.    
    When the miner's revenue is strictly less than the total payments, we say that the difference is \emph{burnt}. 
\end{itemize}
Among the four rules, the confirmation rule, payment rule, and revenue rule are all executed by the blockchain based on the block, and we treat these three rules as always correctly implemented based on the input block $B$.
The inclusion rule is either implemented by the miner in the plain model or implemented by an MPC in the MPC-assisted model. 

\paragraph{Weak Symmetry.} 
Given a bid $\bid= (b_1,...,b_n)$,
let $x_i$ and $p_i$ denote the random variable representing the probability of bid $b_i$ getting confirmed and the payment it needs to pay in an honest execution. 
We say that a TFM satisfies \emph{weak symmetry} iff the joint distribution of $(\{(b_i,x_i,p_i)\}_{i\in[n]},\mu)$ is the same for input bid vector $\pi(\bid)$ for any permutation $\pi$ on $\bid$.

An operational view of weak symmetry assumes the following: Given a bid vector $\bid$ where each bid may carry some metadata such as identity, public keys of the user, or timestamp. The honest mechanism first sorts the bids based on the amount, and can perform arbitrary tie-breaking rule based on the metadata if multiple bids have the same amount. 
After the sorting step, all the four rules $\tfm$ depend only on bid amounts and positions in the sorted vector.
Weak symmetry is a natural requirement in practice and does not require two bids of the same amount to always receive the same treatment. 

\subsection{Game Induced by TFM }
\label{sec:model-tfm-game}
Henceforth, let $\C$ denotes the strategic players.
Specifically, $\C$ can be a strategic user, strategic miner(s) in control of the current block, or a coalition of the miner(s) and one or more users.

\subsubsection{TFM Game in the Plain Model.}
In the plain model, the TFM game is defined as follows:
\begin{enumerate}[leftmargin = 5mm]
    \item Each honest users not in $\C$ submits their bid represented by a single real-value bid. Let $\bid_{-\C}$ denote the bids from these honest users. 
    \item The coalition $\C$ decides their bids $\bid_{\C}$ according to $\bid_{-\C}$.
    \item The miner selects up to $k$ bids from $(\bid_{\C}, \bid_{-\C})$ to form a block $B$.
    \item The blockchain protocol executes the confirmation rule $\mathbf{C}$, payment rule $\mathbf{P}$, and miner revenue rule ${\bf R}$ to the block $B$ created by the miner.
\end{enumerate}
\paragraph{Strategy Space.} 
In this paper, we focus on direct-revelation mechanism, i.e., for an user $i$ with true value $v_i$, the honest strategy $H_i(v_i) = v_i$ is to submit a single bid representing its true value. 
A strategic user may choose to submit a bid vector $\bid^*_i$ that contains zero to multiple bids which do not necessarily reflect their true value. 
We call all the additional bids that the user injects as \emph{fake} bids. 

A miner's honest behavior $H_{\M}$ is to implement the prescribed inclusion rule on the input bid vector $\bid$ without submitting any fake bids, i.e., if users' honest bid vector is $\bid$, honest miner strategy $H_{\M}(\bid)$ outputs a block ${\bf I}(\bid)$.
In the plain model, a strategic miner may choose to deviate from the prescribed inclusion rule arbitrarily by dropping bids, injecting fake bids, and arbitrarily choosing the randomness used in the inclusion rule, i.e., the strategy $S_{\M}(\bid)$ outputs an arbitrary block $B^*$ of size at most $k$.
Players can adopt mixed strategies.
A strategic coalition can adopt a combination of the strategies of its members.

\subsubsection{MPC-Assisted Model}
In the MPC-assisted model, $M$ miners jointly run a multi-party computation (MPC) to realize an ideal functionality that honestly implements the inclusion rule, and the $M$ miners share the total revenue. 
\revision{We assume that the strategic coalition contains $m<M/2$ miners and that the MPC provides guaranteed output delivery.}
Concretely, there is an ideal functionality $\mathcal{F}_{\rm TFM}$ that, on input $\bid$, implements the honest inclusion, confirmation, payment and revenue rule on $\bid$.
In this paper, we analyze incentives in the ``ideal'' world with $\mcal{F}_{\rm TFM}$. 
The real-world MPC-assisted model can be instantiated under standard cryptographic assumptions.
In the MPC-assisted model, the game runs as follows:
\begin{enumerate}[leftmargin = 5mm]
    \item Users not in $\C$ submit bids represented by a single real value to $\mcal{F}_{\rm TFM}$. Let $\bid_{-\mcal{C}}$ denote the bids from users outside $\C$.
    \item $\mcal{F}_{\rm TFM}$ sends $\bid_{-\mcal{C}}$ to $\C$ and receives a bid vector $\bid_{\C}$ from the coalition.
    \item $\mcal{F}_{\rm TFM}$ implements the inclusion, confirmation, payment, and revenue rule on $(\bid_{-\mcal{C}},\bid_{\mcal{C}})$. The outputs include a vector indicating whether each bid is confirmed or not, a payment vector of every bid's payment, and the total miner revenue. 
    \item \revision{Send the outcome to every player.}
\end{enumerate}
\paragraph{Strategy Space.} 
User's strategy space is the same as in the plain model.
An honest miner does not submit any bids, whereas strategic miner(s) may inject one or more fake bids. 
Unlike in the plain model, now strategic miner(s) can no longer drop honest users' bids or arbitrarily choose which subset of bids to include since now the inclusion rule is implemented by the ideal functionality $\mcal{F}_{\rm TFM}$.
A strategic coalition can adopt a combination of the strategies of its members.

\subsection{Utility and Incentive Compatibility}

\paragraph{Utility.} Each user $i$ has a true value $v_i\in\R_{\geq 0}$ if its primary bid representing its transaction gets confirmed. All the fake bids have true value $0$. Let $p_i$ denote the total payment user $i$ needs to pay. Then user $i$'s utility is $v_i-p_i$ if its primary bid gets confirmed and $-p_i$ otherwise. 
A miner's utility is its revenue $\mu - p_{\M}$, where $p_{\M}$ denotes the total payment from the miner if they inject any fake bids. 
A coalition's joint utility is the sum of all coalition members' utilities.

Below, for a strategic player or coalition, denoted as $\C$, we use $H_{\mcal{C}}$ to denote $\C$'s honest strategy of $\mcal{C}$. 
Let $\val$ represent the vector of true values of all users. 
We use $\util_i(\val;S_{\C})$ to denote the expected utility of user $i$ in the following randomized experiment: 
\begin{itemize}[leftmargin = *]
    \item In the mechanism, players in $\C$ adopts (possibly randomized) strategy $S$,
    while all other players act honestly, where all users' true values are represented by $\val$.
    \item Output utility of player $i$.
\end{itemize}
A miner $j$'s expected utility $\mutil_{j}(\val;S_{\C})$ and coalition's joint utility $\util_{\C}(\val;S_{\C})$ are defined analogously w.r.t the above randomized experiment.
\begin{definition}[Incentive compatibility]
\label{defn:ic}
    Given a TFM $\tfm$, we say that the TFM satisfies incentive compatibility (IC) w.r.t. a strategic player or coalition, denoted as $\C$, iff for any true value vector $\val$ of users, for any strategy $S_{\C}$ of coalition $\C$, we have 
    \[\util_{\C}(\val;S_{\C})\leq \util_{\C}(\val;H_{\C}).\]
    Specifically, we say that a TFM satisfies 
    \begin{itemize}[leftmargin = 5mm]
        \item \emph{User incentive compatibility (UIC)} if the above holds when $\C$ contains an individual user, and the miner acts honestly.
        \item \emph{Miner incentive compatibility (MIC)} in the plain model if the above holds when $\C$ only contains the miner. 
        
        In the MPC-assisted model, we say that a TFM satisfies MIC if the above holds when $\C$ contains \revision{$m<M/2$ miners jointly running} the MPC implementing $\mcal{F}_{\rm TFM}$.
        \item $d$-\emph{side-contract-proofness ($d$-SCP)} in the plain model for some integer $d\geq 1$ if the above holds when $\C$ contains the miner and at least one but no more than $d$ number of users.

        In the MPC-assisted model, we say that a TFM satisfies $d$-SCP for some integer $d\geq 1$ if the above holds when $\C$ contains \revision{$m<M/2$ miners} running the MPC and at least one but no more than $d$ number of users.
    \end{itemize}
\end{definition}
In this work, we introduce a new notion which requires that a rational strategic player cannot harm other honest players without harming themselves. 
We formalize this as \emph{Rational-Harm Proofness} (RHP). 

\begin{definition}[Restatement of \cref{defn:intro-rhp}]
\label{defn:rhp}
Fix a coalition of strategic players $\C$ and a protected set of honest players $\mcal H$ such that $\C\cap \mcal H=\emptyset$. We say that a TFM satisfies rational-harm proofness (RHP) against $\C$ for protecting $\mcal H$ if, for any true value vector $\val$ of users and any strategy $S_{\C}$ for coalition $\C$ such that
\[
    \util_{\C}(\val;S_{\C})\geq \util_{\C}(\val;H_{\C}),
\]
it holds that
\[
    \util_i(\val;S_{\C})\geq \util_i(\val;H_{\C})
\]
for every protected honest player $i\in\mcal H$. Here $\util_a$ and $\util_{\C}$ denote expected utilities in the randomized experiment where the users' true value vector is $\val$, players in $\C$ adopt $S_{\C}$, players outside $\C$ behave honestly, and $H_{\C}$ denotes the honest strategy of coalition $\C$.
\end{definition}

Based on the three types of strategic players, we say that a TFM satisfies
    \begin{itemize}[leftmargin = 5mm]
        \item \emph{User rational-harm proofness (URHP)} iff \cref{defn:intro-rhp} holds for any $\C=\{i\}$ containing an individual user $i$ and $\mcal H$ contains all users except $i$. URHP protects honest users from an individual deviating user; it does not protect the miner.

    \item \emph{Miner rational-harm proofness (MRHP)} iff \cref{defn:intro-rhp} holds for the single miner in the plain model, with $\mcal H$ equal to all users, and for any nonempty coalition of \revision{$m<M/2$ miners} in the MPC-assisted model, with $\mcal H$ equal to all users and all miners outside the coalition.

    \item $d$-\emph{coalition rational-harm proofness ($d$-CRHP)} iff \cref{defn:intro-rhp} holds for coalitions containing the miner and at least one but at most $d$ users in the plain model, protecting all users outside the coalition, and for coalitions containing a nonempty coalition of \revision{$m<M/2$ miners} and at least one but at most $d$ users in the MPC-assisted model, protecting all users outside the coalition and all miners outside the coalition. User-only deviations are covered by URHP.
    \end{itemize}
We also define a weaker notion called \emph{weak RHP}: any strategy that \emph{strictly} benefits the coalition must not harm any players outside the coalition.
\begin{definition}[Weak RHP]
\label{def:weak-RHP}
A TFM is \emph{weakly RHP} w.r.t. $\mathcal{C}$ if \cref{defn:rhp} holds with the strict inequality
$u_{\C}(\val;S_{\mathcal{C}}) > u_{\C}(\val;H_{\C})$
in place of $\ge$. 
Equivalently, every strictly profitable deviation for $\mathcal C$ must weakly preserve the utility of every protected honest player.
\end{definition}
\begin{fact}
    If a TFM is incentive compatible w.r.t. some $\C$, then it is weak RHP w.r.t. $\C$.
\end{fact}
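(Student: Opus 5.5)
This follows by unwinding the two definitions. Fix a coalition $\C$ and a protected set $\mcal H$, and assume the TFM is IC with respect to $\C$ (\cref{defn:ic}). To show weak RHP (\cref{def:weak-RHP}), we fix an arbitrary true value vector $\val$ and an arbitrary strategy $S_{\C}$ that satisfies the weak-RHP hypothesis
\[
\util_{\C}(\val;S_{\C}) > \util_{\C}(\val;H_{\C}).
\]
We then need to show that every protected $i\in\mcal H$ satisfies $\util_i(\val;S_{\C})\ge \util_i(\val;H_{\C})$.

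The key step is that the hypothesis is never satisfied. IC gives, for this same $\val$ and $S_{\C}$, the inequality $\util_{\C}(\val;S_{\C})\le \util_{\C}(\val;H_{\C})$. This contradicts the strict inequality above. Hence no strictly profitable deviation exists, and the implication in \cref{def:weak-RHP} holds vacuously for every $\val$ and $S_{\C}$.

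To make the fact valid role by role, I would check that the same pairing of $\C$, strategy space, and honest baseline is used on both sides:
\begin{itemize}
\item UIC with URHP: a single user, with the miner honest.
\item MIC with MRHP: the single miner in the plain model, or $m<M/2$ miners in the MPC-assisted model.
\item $d$-SCP with $d$-CRHP: the miner(s) together with at most $d$ users.
\end{itemize}
In each case the IC condition quantifies over exactly the strategies and value vectors that weak RHP quantifies over, so the vacuity argument applies verbatim. The protected set $\mcal H$ never enters the argument. There is no real obstacle here; the only thing to be careful about is this correspondence of deviation spaces, in particular that both notions use expected utilities under the same randomized experiment.
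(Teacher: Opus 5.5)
Your proof is correct and matches the paper's argument: IC rules out any strictly profitable deviation for $\C$, so the hypothesis of weak RHP is never satisfied and the implication holds vacuously. The role-by-role check of deviation spaces is harmless but not needed, since the fact is stated for a fixed $\C$ and both definitions use the same randomized experiment.
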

\begin{proof}
    Since incentive compatibility rule out the existence of any strictly profitable strategies of $\C$ compared to honest behavior, it automatically achieves weak RHP w.r.t. $\C$.
\end{proof}
\subsection{Myerson's Lemma}
Our impossibility will rely on the famous Myerson's Lemma.
Below, we use $x_i(\bid)$ and $p_i(\bid)$ to denote user $i$'s probability of getting confirmed and its expected payment under input bid vector $\bid$ when everyone behaves honestly. 
We define $\util_i(\bid):= b_i \cdot x_i(\bid)-p_i(\bid)$ as the honest expected utility when $\bid$ is also the true value vector. 
\begin{lemma}[Myerson's Lemma \cite{myerson}]
\label{Myerson's}
If a TFM satisfies UIC, then
\begin{itemize}[leftmargin = 5mm]
    \item {\bf Monotone allocation}: For any user \( i \), any other users' bids \( \mathbf{b}_{-i} \), any \( b'_i > b_i \), it must be \( x_i(\mathbf{b}_{-i}, b'_i) \geq x_i(\mathbf{b}_{-i}, b_i) \).
    
    \item {\bf Unique payment}: For any user \( i \), any other users' bids \( \mathbf{b}_{-i} \), and bid \( b_i \) from user \( i \), user \( i \)'s expected payment can be uniquely determined as
    \[
    p_i(\mathbf{b}_{-i}, b_i) = b_i \cdot x_i(\mathbf{b}_{-i}, b_i) - \int_0^{b_i} x_i(\mathbf{b}_{-i}, t) dt,
    \]
    with respect to the normalization condition: \( p_i(\mathbf{b}_{-i}, 0) = 0 \), i.e., user \( i \)'s payment must be zero when \( b_i = 0 \).
\end{itemize}
When the mechanism is deterministic, the confirmation probability \( x_i \) is either 0 or 1. In this case, user \( i \)'s payment can be simplified as
\[
p_i(\mathbf{b}_{-i}, b_i) =
\begin{cases}
    \inf\{z \in [0, b_i] : x_i(\mathbf{b}_{-i}, z) = 1\}, & \text{if } x_i(\mathbf{b}_{-i}, b_i) = 1; \\
    0, & \text{if } x_i(\mathbf{b}_{-i}, b_i) = 0.
\end{cases}
\]
\end{lemma}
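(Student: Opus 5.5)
The plan is the classical Myerson argument, specialized to the UIC strategy space of the TFM game.

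Fix a user $i$ and the other bids $\bid_{-i}$. Write $x(t):=x_i(\bid_{-i},t)$ and $p(t):=p_i(\bid_{-i},t)$. Under UIC with the miner honest, a user with true value $v$ who misreports a single bid $v'$ receives expected utility $v\cdot x(v')-p(v')$. This deviation lies in the user's strategy space: it submits one bid with no fake bids, and only the primary bid carries value. So UIC gives, for all $v,v'\ge 0$,
\[
v\,x(v)-p(v)\;\ge\; v\,x(v')-p(v').
\]

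\textbf{Monotone allocation.} Write the inequality twice, once for $(v,v')=(b_i,b'_i)$ and once for $(v,v')=(b'_i,b_i)$, and add the two. The payments cancel and leave $(b'_i-b_i)\bigl(x(b'_i)-x(b_i)\bigr)\ge 0$. Hence $x$ is nondecreasing in the user's own bid.

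\textbf{Unique payment (envelope step).} Let $u(v):=v\,x(v)-p(v)$ be the honest utility. The UIC inequality says $u(v)=\sup_{v'}\{v\,x(v')-p(v')\}$, a supremum of affine functions of $v$, so $u$ is convex. Rearranging the two UIC inequalities gives, for $v'>v$,
\[
x(v)\,(v'-v)\;\le\; u(v')-u(v)\;\le\; x(v')\,(v'-v).
\]
Thus $u$ is Lipschitz on bounded intervals, since $x\in[0,1]$. Its derivative equals $x$ wherever $x$ is continuous, which is all but countably many points because $x$ is monotone. Integrating gives $u(b_i)=u(0)+\int_0^{b_i}x(t)\,dt$. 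The normalization $p(0)=0$ gives $u(0)=0$. Substituting $u=b_i x-p$ yields the stated formula for $p_i$.

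The only point needing care is the passage from the sandwich inequality to the integral. I would avoid differentiability entirely. Partition $[0,b_i]$ into small subintervals and telescope the sandwich bounds. The total $u(b_i)-u(0)$ is then trapped between the lower and upper Riemann sums of the monotone function $x$. Monotone functions are Riemann integrable, so both sums converge to $\int_0^{b_i}x$, which gives the equality. I expect this limiting step to be the main (mild) obstacle; the rest is algebra.

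\textbf{Deterministic case.} Here $x$ is monotone and $\{0,1\}$-valued, so it is a threshold function: $x(t)=0$ below $\theta:=\inf\{z\in[0,b_i]:x(z)=1\}$ and $x(t)=1$ above it.
\begin{itemize}
\item If $x(b_i)=1$, then $\int_0^{b_i}x=b_i-\theta$, so $p(b_i)=b_i-(b_i-\theta)=\theta$.
\item If $x(b_i)=0$, the integral vanishes and $p(b_i)=0$.
\end{itemize}
This matches the stated case formula.
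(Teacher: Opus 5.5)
Your proof is correct. It is the standard Myerson argument: pairwise UIC inequalities give monotonicity, the sandwich $x(v)(v'-v)\le u(v')-u(v)\le x(v')(v'-v)$ telescoped into lower and upper Darboux sums gives the payment identity, and in the deterministic case the formula reduces to the threshold. The paper gives no proof of its own and imports the lemma by citation to Myerson, so your write-up supplies the classical proof behind that citation, correctly using only single-bid, no-fake-bid deviations, which lie in the TFM user's strategy space.
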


\section{Characterization in the Plain Model}
\label{sec:plain}
\subsection{Feasibility Results}
\label{sec:plain-ub}
In this section, we present the mechanisms referenced in \cref{thm:intro-ub} and show how each mechanism achieves any three of the four target properties: UIC, MIC, CRHP, and positive miner revenue.
For completeness, we restate each mechanism with respect to the four rules for defining a TFM.
\paragraph{UIC + MIC + CRHP}
\label{sec:aon-zero}
\begin{mdframed}
    \begin{center}
        \textbf{Singleton posted-price} \hfill //Reserve price ${\sf res} \ge 0$.
    \end{center}
    \begin{itemize}[leftmargin = 5mm, label=\textbullet]
        \item \textbf{Inclusion \& Confirmation}: If there is only one bid, and it is at least $\res$, include and confirm that bid. 
        Otherwise, include and confirm no bids. 
        \item \textbf{Payment \& Revenue:} The confirmed bid pays the reserve $\res$, and the miner gets nothing.
    \end{itemize}
\end{mdframed}
\begin{lemma} \label{plain-ub}
The above singleton posted-price satisfies UIC, MIC, URHP, and $d$-CRHP for any $d\geq 1$ in the plain model.
\end{lemma}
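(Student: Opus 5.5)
The key structural fact is that under the singleton posted-price every honest player (every user and the miner) has utility $0$ in every honest execution. The miner never receives revenue. A confirmed bid only arises when there is exactly one bid $v\ge\res$, and in that case the user gets $v-\res\ge 0$. Every other user is unconfirmed and gets $0$. I would therefore split the proof into the IC part (UIC, MIC) and the RHP part (URHP, $d$-CRHP). For the RHP part I would try to show that the honest baseline of every protected user is already the minimum achievable, or else that any harmful deviation is strictly costly.

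\textbf{UIC.} Fix user $i$ with value $v_i$ and the other honest bids $\bid_{-i}$. If $\bid_{-i}$ is nonempty, then any bid by $i$ makes at least two bids, so nothing is confirmed. Not bidding gives utility $0$ (user $i$ unconfirmed) and cannot help $i$, so $i$ gets $0$ regardless and truthful bidding is optimal. If $\bid_{-i}$ is empty, the only way to be confirmed is to submit exactly one bid $\ge\res$ and pay $\res$. This is a posted price, so bidding $v_i$ is optimal; fake bids only destroy confirmation.

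\textbf{MIC.} Miner revenue is identically $0$, so any strategy yields $0$ minus the cost of fake bids, which is at most $0$.

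\textbf{URHP.} For a protected user $j\neq i$, the baseline $\util_j$ is positive only when $j$ is the sole bidder with $v_j>\res$. But then $i$ is not a bidder at all, since $j$ being the unique bidder means $i$ submits nothing honestly. Note, though, that the protected user is $j$ and the deviator $i$ could inject a bid to destroy $j$'s confirmation. This is the one case to check carefully. In the honest execution with $i$ present, $i$'s honest strategy submits $v_i$, so there are at least two bids and $j$'s baseline is $0$. The only remaining possibility is that $i$'s honest strategy submits nothing; I would treat a user with no transaction as not a bidder, so the profile has $j$ alone. Any deviation by $i$ that adds bids then gives $i$ nonpositive value (fake bids carry value $0$). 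If the added bid is confirmed, $i$ pays for it; if not, $j$ is unconfirmed anyway. I expect this to be the main subtlety: pinning down the baseline where the deviator is absent and verifying that injecting a bid to unseat $j$ either costs $i$ nothing while leaving $j$ unharmed, or is irrelevant. A cleaner route is to note that in the model every user has a true value and bids it, so $|\bid|\ge2$ whenever $i\neq j$ both exist. Then every protected user's baseline is $0$ and, by individual rationality (unconfirmed pay $0$, confirmed pay $\res\le$ bid), honest users can never go below $0$. I would adopt this route, with the sole-bidder case being vacuous since $i$ itself is a user.

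\textbf{$d$-CRHP.} The same argument applies. Every protected user outside a coalition containing at least one user coexists with that colluding user, so there are at least two honest-profile bids and the protected user's baseline is $0$. Under any deviation, an honest user's bid is either unconfirmed (utility $0$) or confirmed at price $\res\le v_j$ (utility $\ge0$). So no protected user is harmed, irrespective of the coalition's utility, and this holds for all $d\ge1$.
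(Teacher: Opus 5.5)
Your proof is correct and follows essentially the same route as the paper. UIC comes from the posted-price threshold together with the observation that any other bid blocks confirmation, MIC from zero revenue, and URHP/$d$-CRHP from the fact that the deviating user's own honest bid forces at least two bids, so every protected user's baseline is $0$ and individual rationality keeps it nonnegative. Two slips, neither used in the argument you adopt: your opening claim that everyone has utility $0$ in every honest execution is false for a lone bidder above reserve, and in your aside about a deviator who submits nothing, injecting a bid would be a costless deviation that harms $j$, not one that leaves $j$ unharmed.
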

\begin{proof} 
For UIC, fix a user. If there are no other bids, bidding at least reserve gives utility $v_i-\res$ and bidding below reserve gives zero; this is exactly the posted-price threshold rule. If there is at least one other bid, the user cannot get its own transaction confirmed by adding bids because the honest inclusion rule confirms a bid only when the submitted vector contains exactly one bid, which is eligible. Fake bids only create additional bids and cannot improve the user's utility. Thus truthful bidding with no fake bids is optimal.

\smallskip\noindent\textbf{MIC:} Miner revenue is always zero. Fake bids can only create payment costs for the miner and cannot increase revenue, so honest behavior is weakly optimal.

\smallskip\noindent\textbf{URHP:} A deviating user can harm another user only if that other user has positive honest utility and becomes unconfirmed or pays more. Payments are fixed at reserve. An honest user can have positive utility in the singleton mechanism only when it is the only submitted bid and its value is above reserve. If another strategic user is present, this cannot be the honest baseline. Hence no rational user deviation can harm a protected user.

\smallskip\noindent\textbf{$d$-CRHP:} A protected honest user can have positive honest utility only when it is the unique submitted bid. But a $d$-CRHP coalition contains at least one user in addition to the miner, so under honest play there is more than one submitted bid and no protected user is confirmed. More generally, any deviation that creates a singleton confirmed bid can only benefit coalition users and cannot make an outside user worse off relative to the honest baseline.
\end{proof}

\smallskip\noindent{\bf Not $1$-SCP:} If there are multiple users and a colluding user has value above reserve, the miner can include only the colluding user's bid, giving the colluding user positive utility while miner revenue remains zero.

\smallskip\noindent{\bf Not MRHP:}  If there is a single honest user with value above reserve, the miner can censor the bid, keep revenue zero, and reduce that user's utility to zero.
\paragraph{UIC + CRPH + Positive Revenue}
\label{sec:aon-posted}
\begin{mdframed}
    \begin{center}
        \textbf{All-or-nothing posted-price} \hfill //Reserve price ${\sf res} \ge 0$.
        \\ \hfill //Revenue parameter $0 <\epsilon \le \res$.
    \end{center}
    \begin{itemize}[leftmargin = 5mm, label=\textbullet]
        \item \textbf{Inclusion \& Confirmation}: If there are no more than $k$ bids and all bids are strictly greater than $\res$, include and confirm all bids. Otherwise, include and confirm no bids. The strict inequality is intentional and avoids zero-utility reserve-boundary deviations in the URHP proof. 
        \item \textbf{Payment \& Revenue:} Each confirmed bid pays the reserve price $\res$, and the miner receives a revenue share of $\epsilon$ per confirmed bid.
    \end{itemize}
\end{mdframed}
\begin{lemma} \label{plain-AorN}
The above all-or-nothing posted-price satisfies UIC, $d$-CRHP for any $d\ge 1$ and positive revenue in the plain model.
Additionally, it satisfies URHP and MRHP.
\end{lemma}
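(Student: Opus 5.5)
I will verify each property directly from the rules of the all-or-nothing posted-price mechanism, the same way \cref{plain-ub} was handled. The governing structural fact: under honest behavior, a user has positive utility only if the honest bid vector has at most $k$ bids, all strictly above $\res$. In that event every user is confirmed and pays $\res$. The miner then gets $\epsilon\cdot|\bid|$, which is positive. Budget feasibility holds since $\epsilon\le\res$.

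\textbf{Positive revenue and UIC.} Positive revenue is immediate from the $\epsilon$ per confirmed bid. For UIC I fix user $i$ with value $v_i$ and split on whether the other honest bids (all fewer than $k$, all $>\res$) form an ``all-good'' profile.
\begin{itemize}
\item If they do not, no deviation of $i$ can get anything confirmed, because the rule is all-or-nothing and the bad bids remain. So $i$ gets $0$, matching its honest utility.
\item If they do, $i$'s best outcome is to be confirmed at price $\res$, with utility $v_i-\res$. Bidding truthfully confirms exactly when $v_i>\res$, which is optimal. The boundary case $v_i=\res$ gives $0$ either way.
\item Fake bids only add payments of $\res$ each, or break the all-good condition, so they never help.
\end{itemize}

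\textbf{URHP and $d$-CRHP.} I will argue that any deviation which harms a protected user must strictly hurt the deviator.
\begin{itemize}
\item Harm to honest user $j$ requires $j$ to have positive honest utility. So the honest profile, including the deviators' truthful bids, is all-good with at most $k$ bids, and every honest bid is $>\res$.
\item Payments are fixed at $\res$. Hence harm means $j$ becomes unconfirmed, and by all-or-nothing this means nothing is confirmed.
\item In the honest baseline the deviating user (or each coalition user) is confirmed with value $>\res$, so has strictly positive utility. The miner in the coalition gets $\epsilon$ per confirmed bid, which is positive.
\item Under the harmful deviation all of these drop to zero. Fake bids can only add cost. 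So the deviator's utility strictly decreases, as required by \cref{defn:rhp}.
\end{itemize}
The strict threshold $>\res$ is exactly what rules out the zero-utility boundary case. In that case a user with $v=\res$ could otherwise spoil the block at no cost to itself.

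\textbf{MRHP.} Same argument with the miner as the deviator. Harm again requires emptying the block, or excluding $j$ (which is also empty by all-or-nothing, given that honest users are all confirmed in the honest case). The honest block gives revenue $\epsilon|\bid|>0$. A deviating block confirms only bids present in it, and each fake confirmed bid costs the miner $\res\ge\epsilon$. So the miner's utility is at most $\epsilon\cdot(\#\text{honest confirmed bids})$ minus nonnegative costs. Any deviation that drops an honest user therefore yields strictly less than $\epsilon|\bid|$.

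\textbf{Main obstacle.} The step needing most care is the coalition case where the miner injects fake bids and also censors. I will handle it by the accounting above: each fake bid nets the coalition $\epsilon-\res\le 0$, and any harm forces the all-good honest block to be destroyed. The conclusion is that the coalition's utility strictly falls whenever a protected user is harmed.
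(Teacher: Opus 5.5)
Your overall route is the paper's: check each property directly from the rules, and for the harm properties use the accounting that each real confirmed bid earns the miner $\epsilon$, each confirmed fake bid nets $\epsilon-\res\le 0$, and no colluding user can exceed its honest utility $v_i-\res$. However, one step in your $d$-CRHP and MRHP arguments is false as written.

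In the plain model the strategic miner chooses the block, and the confirmation rule confirms any block of at most $k$ bids that are all strictly above $\res$. So the miner can censor only the honest user $j$ and keep every other bid confirmed; this is exactly the deviation in the paper's ``not MIC'' remark. Several of your claims therefore fail once the miner is in the deviating coalition:
\begin{itemize}
\item that harm ``means nothing is confirmed'';
\item that excluding $j$ ``is also empty by all-or-nothing'';
\item that under a harmful deviation the colluding users' utilities and the miner's revenue ``all drop to zero''.
\end{itemize}
These hold only for URHP, where the honest miner applies the all-or-nothing test to the whole submitted vector. Under the censor-only-$j$ deviation the coalition keeps $v_i-\res$ for each colluding user and $\epsilon$ for every remaining bid. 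It loses only the $\epsilon$ from $j$'s bid, and the paper accordingly treats ``omit that user'' as a separate case from ``make the test fail.''

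The repair is already in your MRHP paragraph: drop the emptiness claim and argue purely by counting. If a protected $j$ is harmed, the honest profile is all-good and fully confirmed, while $j$ is unconfirmed under the deviation. Hence at most $|\bid|-1$ real bids are confirmed, so the coalition's utility is at most $\epsilon(|\bid|-1)+\sum_{i\in\mathcal U}(v_i-\res)$. This is strictly below the honest $\epsilon|\bid|+\sum_{i\in\mathcal U}(v_i-\res)$, whether or not the block is empty.

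You should also add the step for mixed deviations, which the paper makes explicit. In this case every pure deviation weakly lowers the deviator's utility and every harmful one lowers it strictly, so a mixture that does not lower expected utility puts zero probability on harmful pure deviations.

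A minor point in UIC: when the other bids are exactly $k$ bids above reserve, user $i$ abstaining does get them confirmed, so ``no deviation of $i$ can get anything confirmed'' is inaccurate. What you need, and what is true, is that $i$'s own primary bid can never be confirmed in that case.
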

\begin{proof}
UIC follows from the same posted-price threshold argument as in \cref{plain-ub}: a bid strictly above reserve is needed for confirmation, confirmed bids pay the reserve, and fake bids either preserve the all-or-nothing success condition while adding nonpositive fake-bid surplus or make the condition fail. Positive revenue is immediate whenever the mechanism confirms at least one bid.

\smallskip\noindent\textbf{URHP:} Because strict eligibility implies any honestly confirmed strategic user has strictly positive utility, any user-only deviation that makes the all-or-nothing condition fail and harms another confirmed user also makes the deviating user lose strictly positive utility. Fake bids that remain confirmed create non-positive fake-bid surplus and cannot help.

\smallskip\noindent\textbf{MRHP:} In the honest baseline, if any protected honest user has positive utility, then all submitted bids are above reserve, there are at most $k$ bids, all users are confirmed, and miner revenue is $\epsilon$ times the number of confirmed bids. Any miner deviation that makes a protected honest user unconfirmed must either make the all-or-nothing test fail or omit that user. In either case the total number of confirmed real honest bids falls, so the miner's revenue falls by at least $\epsilon$ for each omitted protected confirmed user. Fake bids cannot compensate because each confirmed fake bid costs the miner $\res$ and increases miner revenue by at most $\epsilon\leq\res$.

\smallskip\noindent\textbf{$d$-CRHP:} The same revenue-loss argument applies to miner-user coalitions. Colluding users who were honestly confirmed have strictly positive utility because their values are strictly above reserve. Confirmed fake bids have nonpositive net contribution because they cost $\res$ and increase miner revenue by at most $\epsilon\leq\res$. Therefore a deviation that harms a protected honest user strictly lowers the coalition's joint utility.
These arguments also cover mixed deviations: when any protected user has positive honest utility, every pure deviation weakly lowers the strategic party's utility, and every pure deviation that harms a protected user lowers it strictly. Thus a mixture that preserves the strategic party's expected utility cannot harm a protected user.
\end{proof}

\smallskip\noindent\textbf{Not MIC or $1$-SCP:} Let $S$ be the honest bid vector. If $|S| > k$ and every bid is above $\res$, the miner gets 0 revenue. By ignoring $|S| - k$ bids from $S$, the miner can include exactly $k$ bids and earn $k\eps > 0$ revenue. The above mechanism does not achieve $1$-SCP for the same reason as the miner can collude with any of the included players for the same effect.
\paragraph{UIC + MIC + Positive Revenue}
\label{sec:ppw.random}
\begin{mdframed}
    \begin{center}
        \textbf{Posted-price with random selection} \hfill //Reserve price ${\sf res} > 0$. 
        \\ \hfill //Revenue parameter $0 < \epsilon \le \res$.
    \end{center}
    \begin{itemize}[leftmargin = 5mm, label=\textbullet]
        \item \textbf{Inclusion \& Confirmation}: Let $S$ be the set of bids at least the reserve price ${\sf res}$. If $|S|\le k$, include and confirm all bids in $S$. Otherwise, uniformly at random choose $k$ bids in $S$ to include and confirm.
        \item \textbf{Payment \& Revenue}: Each confirmed bid pays the reserve price $\res$. The miner receives total revenue $\epsilon$ if at least one bid is confirmed, and zero otherwise.
    \end{itemize}
\end{mdframed}
\begin{lemma} \label{plain-ub-pp}
The above posted-price with random selection TFM satisfies UIC, MIC, and positive revenue in the plain model.
Additionally, it achieves URHP.
\end{lemma}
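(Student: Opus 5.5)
We treat the four claims (UIC, MIC, positive revenue, URHP) one at a time, all in the plain model where the miner controls inclusion. Throughout, $q(N):=\min\{1,k/N\}$ is the confirmation probability of each eligible bid when there are $N\ge 1$ eligible bids (bids at least $\res$) in the submitted vector, with the miner honest. Positive revenue is immediate: whenever some bid is at least $\res$, at least one bid is confirmed and the miner earns $\epsilon>0$.

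\textbf{UIC.} Fix user $i$ with value $v_i$, let $h$ be the number of other eligible bids, and assume the miner is honest.
\begin{itemize}[leftmargin=5mm]
\item Truthful bidding yields utility $q(h+1)(v_i-\res)$ if $v_i\ge\res$, and $0$ otherwise.
\item Any deviation consists of a primary bid plus fake bids. The primary bid is confirmed with probability $q(N)$ or $0$, where $N\ge h+1$ if the primary bid is eligible. Confirmed fake bids each pay $\res$ for zero value.
\item The primary bid's surplus is therefore at most $q(N)\max\{v_i-\res,0\}\le q(h+1)\max\{v_i-\res,0\}$, because $q$ is nonincreasing. Fake bids contribute nonpositively.
\end{itemize}
So truthful bidding is optimal.

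\textbf{MIC.} The miner's revenue is at most $\epsilon$ in every outcome. Under honest play it earns exactly $\epsilon$ whenever the honest bid vector has an eligible bid, so no deviation can do better in that case. If there is no eligible honest bid, honest revenue is $0$. The only way to get a nonempty block is to confirm fake bids, each costing $\res\ge\epsilon$, so the net gain is at most $\epsilon-\res\le 0$. Mixed strategies are handled by linearity.

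\textbf{URHP.} This is the main obstacle. We must show that a unilateral user deviation that does not lower the deviator's expected utility cannot lower any honest user's expected utility. The miner stays honest here, so random selection is applied faithfully.
\begin{itemize}[leftmargin=5mm]
\item An honest user $j$ has utility $q(N)(v_j-\res)$, where $N$ is the number of eligible bids. So $j$ is harmed only if $v_j>\res$ and $\E[q(N)]$ falls below its honest value $q(h+1)$ (when $i$ is eligible) or $q(h)$ (otherwise). Since $q$ is nonincreasing, this requires the deviation to raise the number of eligible bids with positive probability, e.g.\ by injecting eligible fake bids or by a below-reserve user bidding up to eligibility.
\item \emph{Case $v_i>\res$.} Honest utility is $q(h+1)(v_i-\res)>0$. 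Under the deviation, the primary surplus is at most $\E[q(N)](v_i-\res)$ and fake bids contribute nonpositively. Any deviation that strictly lowers $\E[q(N)]$ below $q(h+1)$ therefore strictly lowers $i$'s utility.
\item \emph{Case $v_i\le\res$.} Honest utility is $0$. Each confirmed bid of $i$, whether fake or an eligible primary bid, costs $\res$ and yields value at most $v_i\le\res$. The deviation must have expected utility at least $0$, which forces zero net cost. This still permits eligible bids whose confirmation costs exactly $\res$ against value $\res$ (an eligible primary bid when $v_i=\res$). Such bids lower $j$'s confirmation probability at zero cost to $i$.
\end{itemize}

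We therefore expect the delicate point to be the boundary case $v_i=\res$, and also the case where $h+1>k$ so that every eligible bid contributes to congestion. Our first attempt is to argue that when $v_i=\res$, the honest strategy already includes this eligible bid, so $N\le h+1$ matches the honest count. The remaining worry is fake bids. A fake bid has value $0$ and costs $\res>0$ with probability $q(N)>0$ whenever it is eligible, so any eligible fake bid strictly lowers utility. This excludes it and gives $N\le$ the honest count almost surely, hence $\E[q(N)]\ge$ the honest value. Mixed strategies follow because each pure strategy in the support either weakly lowers $i$'s utility without harm or strictly lowers it.
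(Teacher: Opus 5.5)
Your proof is correct and takes essentially the same route as the paper: the Myerson threshold plus the bound $q(N)\max\{v_i-\res,0\}\le q(h+1)\max\{v_i-\res,0\}$ for UIC; the revenue cap together with $\epsilon\le\res$ for MIC; and, for URHP, the observation that harming another user requires raising the number of eligible bids, which strictly costs the deviator except in the boundary case $v_i=\res$, where the honest bid is already eligible. Your split by the deviator's value, using the shared confirmation probability to get $\util_i\le\E[q(N)](v_i-\res)$ when $v_i>\res$, is a slightly tidier organization of the paper's enumeration of deviation types, and your treatment of mixed strategies matches the paper's.
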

\begin{proof}
\smallskip\noindent{\bf UIC:} The allocation and payment rule satisfies Myerson's lemma: for a fixed user and fixed other bids, the allocation probability is zero below reserve and equals the uniform-selection probability at or above reserve, while the expected payment is the reserve times that probability. To also rule out fake bids, let $h$ be the number of eligible outside bids. A user of value $v$ submitting an eligible primary bid and $f$ eligible fake bids has utility $q(v-\res-f\res)$, where $q=\min\{1,k/(h+1+f)\}$. For $v\geq\res$, this is at most the truthful utility $q_0(v-\res)$, where $q_0=\min\{1,k/(h+1)\}$: the bound is immediate if the former utility is negative, and otherwise follows from $q\leq q_0$. For $v<\res$ it is nonpositive. With no eligible primary bid the utility is also nonpositive, and ineligible bids have no effect. These bounds establish UIC, including for mixed deviations.

\smallskip\noindent{\bf MIC:} If there is at least one eligible real bid, honest miner revenue is already $\epsilon$, the maximum possible revenue. Censoring some eligible bids while leaving at least one eligible bid can keep revenue at $\epsilon$ but cannot increase it. Censoring all eligible bids gives zero revenue unless the miner injects a fake eligible bid; an injected fake bid costs at least $\res$ and yields at most $\epsilon$, so it is not profitable since $\epsilon\leq \res$. If there are no eligible real bids, injecting fake eligible bids yields revenue at most $\epsilon$ but costs at least $\res$ per confirmed fake bid, so it cannot strictly improve miner utility.

\smallskip\noindent{\bf URHP:} A user-only deviation can harm another user only by reducing that user's confirmation probability. Under random selection this requires adding eligible competition, either by overbidding with an ineligible real bid or injecting eligible fake bids. Turning a real bid with value below reserve into an eligible bid gives non-positive expected surplus and is strictly negative if it is ever confirmed. Injected fake eligible bids have true value zero and impose expected payment costs. A user with value exactly reserve is already eligible under truthful bidding and has zero surplus, so changing the real bid within the eligible range does not add competition; fake bids still cost money. Therefore any user-harming deviation lowers the deviating user's utility.
Since UIC bounds every pure deviation by truthful utility, a mixed deviation cannot offset such a strict loss with a gain from another pure deviation. Hence the argument also covers mixed deviations.
\end{proof}

\smallskip\noindent{\bf Not $1$-SCP:} The miner can choose the random seed or inclusion subset to favor a colluding eligible user, increasing the coalition's utility while keeping miner revenue unchanged.

\smallskip\noindent{\bf Not MRHP:} The plain miner can manipulate inclusion or randomness, or censor eligible honest users, while keeping fixed revenue $\epsilon$ as long as at least one eligible bid remains.

\smallskip\noindent{\bf Not $1$-CRHP:} The same deviation, with a colluding eligible user kept confirmed or favored by the miner, harms an outside eligible user while preserving the coalition's utility.
\paragraph{MIC + CRHP + Positive Revenue}
\label{sec:fpa}
\begin{mdframed}
\label{FPA}
    \begin{center}
        {\bf First-price auction}
    \end{center}
    \begin{itemize}[label=\textbullet, leftmargin = 5mm]
        \item \textbf{Inclusion \& Confirmation:} Include and confirm the top $k$ bids, breaking ties arbitrarily. 
        \item \textbf{Payment \& Revenue:} All confirmed bids pay their bid price and all payments go to the miner.
    \end{itemize}
\end{mdframed}
\begin{lemma}
The above first-price auction satisfies URHP, MIC, MRHP, $d$-CRHP for any $d\ge 1$, $d$-SCP for any $d\ge1$, and has positive miner revenue in the plain model.
\end{lemma}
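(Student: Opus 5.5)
The whole argument rests on one observation about the first-price auction. Under honest behavior every confirmed user pays exactly its true value, so every user has honest utility exactly zero, whether confirmed or not. I would first record this fact, and then treat each property separately, reducing each to a short comparison with the honest baseline.

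\emph{RHP properties.} For URHP, MRHP and $d$-CRHP, let $\C$ be any strategic set and $i$ any protected honest user. The honest baseline gives $\util_i(\val;H_{\C})=0$. Under any deviation $S_{\C}$, user $i$ still submits the truthful bid $v_i$. If that bid is confirmed it pays $v_i$, and if not it pays nothing, so $\util_i(\val;S_{\C})=0$ in every realization, and hence in expectation. Thus protected users can never be harmed, with no condition on $\C$'s utility. In the plain model there are no other miners to protect: URHP protects only users, and MRHP and CRHP protect users outside the coalition. So URHP, MRHP and $d$-CRHP follow immediately for every $d\ge1$. Mixed strategies are covered because the bound holds pointwise.

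\emph{MIC.} Fix honest bids $\bid$. The honest revenue is the sum of the top $k$ bids, which is the largest possible sum of real bids that fit in a block. Under a deviation the miner includes some block $B$ of real and fake bids, with $|B|\le k$. Its revenue is the total of the confirmed bids in $B$, and all of them are included, so this total is the sum of the real bids in $B$ plus the fake bids in $B$. The miner pays the fake bids in full, so its net utility is the sum of the real bids in $B$. Since $B$ contains at most $k$ real bids, this is at most the top-$k$ sum, which is the honest revenue. Taking expectations handles randomized strategies.

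\emph{$d$-SCP.} Let the coalition contain the miner and users $U$ with $|U|\le d$. Honest coalition utility is the miner's revenue, $\sum_{\text{top }k}v_j$, since the colluding users get zero. Under a deviation, every bid of a colluding user transfers its payment to the miner, which cancels. Fake bids have zero value and also cancel. Each honest outside bid in $B$ contributes its value. Each colluding user's primary bid contributes at most $v_i$ if confirmed. So the joint utility is at most the sum of true values of at most $k$ distinct users, which is at most the top-$k$ sum. \emph{Positive revenue} is immediate whenever some positive bid exists.

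The only subtlety I expect is bookkeeping in the SCP argument: a colluding user may submit several bids, but only its primary bid carries value, so each real user contributes at most once. This is exactly what caps the joint utility at the top-$k$ value sum.
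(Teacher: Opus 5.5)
Your proposal is correct and follows essentially the same route as the paper: honest users always have zero utility, which gives URHP, MRHP, and $d$-CRHP, and the cancellation of payments within the coalition caps the miner's or coalition's utility at the top-$k$ value sum, which gives MIC and $d$-SCP. Your MIC argument is slightly more explicit than the paper's, since you directly account for the miner paying for its own confirmed fake bids.
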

\begin{proof}
    \textbf{MIC:} The miner receives all payments, so they would have to increase payments from users to increase their utility. In the honest inclusion rule, the top bids are confirmed and paid, and there is no way to have payments higher than the bids.
    
    \smallskip\noindent\textbf{URHP, $d$-CRHP \& MRHP:} Under truthful bidding in a first-price auction, every confirmed user pays its value and every unconfirmed user gets zero, so every user's honest utility is zero. No deviation can make an honest user's utility strictly below zero because individual rationality and non-confirmation both give utility at least zero. Hence URHP, MRHP, and $d$-CRHP hold.
    
    \smallskip\noindent\textbf{$d$-SCP:} In the honest case, because all payments go to the miner, the joint utility of the miner and any set of users is equal to the summation of the top $\min\{k,n\}$ valuations. Under any deviation, a confirmed colluding user's payment cancels within the coalition, leaving its true value; a confirmed outside user contributes its truthful bid, also its true value; and a confirmed fake bid contributes zero after its payment cancels. Thus coalition utility is the sum of the true values of at most $k$ confirmed real users, outcome by outcome and hence also in expectation under mixed deviations.
    It is impossible to gain more utility than this for any coalition due to the block size limit, budget feasibility, and individual rationality (See \cref{def: feasibility}).
\end{proof}
\smallskip\noindent{\bf Not UIC:} Since users pay their own bid, a confirmed user may profit by underbidding while still remaining confirmed. This deviation harms miner revenue, not a protected user under URHP.

\subsection{Impossibilities in the Plain Model}
\label{sec:impossibility}
We first introduce the following useful lemmas in our impossibility proofs.
Recall that, for a bid vector $\bid = (b_1,\dots, b_n)$, for $i \in[n]$, $x_i(\bid)$, $p_i(\bid)$ and $\util_i(\bid)$ denote the confirmation probability, expected payment, and expected utility for bid $b_i$ in $\bid$ when the honest inclusion rule is followed. 
For unique bid values we use the value-indexed convention from \cref{sec:model}: $x_z(\bid)$, $p_z(\bid)$, and $\util_z(\bid)$ denote the allocation probability, expected payment, and expected utility of the unique bid with value $z$. We keep the decorated notation $\widetilde{x}_z$, $\widetilde{p}_z$, and $\widetilde{\util}_z$ only as typographic reminders of this value-indexed convention.
\begin{lemma}\label{cor:increasing}
    For any TFM that satisfies UIC, 
    given any $\bid_{-i}$, let $b^* = \inf  \{b:x_i(\bid_{-i}, b) > 0\} < +\infty$. Then for any $v^* > v \ge b^*$, we have $\util_i(\bid_{-i},v^*) > \util_i(\bid_{-i},v)$.
\end{lemma}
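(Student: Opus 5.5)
We will derive the claim directly from the payment identity in Myerson's Lemma (\cref{Myerson's}). Fix $\bid_{-i}$ and write $x(t):=x_i(\bid_{-i},t)$, so that the honest utility is $\util_i(\bid_{-i},b)=b\,x(b)-p_i(\bid_{-i},b)$. By the unique-payment formula,
\[
\util_i(\bid_{-i},b)=\int_0^{b} x(t)\,dt .
\]
Here we use that UIC forces the normalization $p_i(\bid_{-i},0)=0$. This follows from individual rationality: a zero bid pays at most $0$, and payments are nonnegative. Hence, for $v^*>v$,
\[
\util_i(\bid_{-i},v^*)-\util_i(\bid_{-i},v)=\int_v^{v^*} x(t)\,dt ,
\]
and it suffices to show this integral is strictly positive whenever $v\ge b^*$.

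To do so, we use the monotone-allocation part of Myerson's Lemma: $x$ is nondecreasing in $t$. By the definition of $b^*$ as an infimum with $b^*<+\infty$, for any $t>b^*$ there exists $b\in[b^*,t)$ with $x(b)>0$. Monotonicity then gives $x(t)\ge x(b)>0$, so $x(t)>0$ for every $t>b^*$.

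Now take $v\ge b^*$ and $v^*>v$. Pick the midpoint $t_0=(v+v^*)/2$. Since $t_0>v\ge b^*$, we have $x(t_0)>0$, and monotonicity gives $x(t)\ge x(t_0)$ for all $t\in[t_0,v^*]$. Hence
\[
\int_v^{v^*}x(t)\,dt\;\ge\;(v^*-t_0)\,x(t_0)\;>\;0,
\]
which proves the strict inequality.

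The only delicate point is the boundary case $v=b^*$, where $x(b^*)$ itself may be $0$. This is why we bound the integral on a subinterval strictly above $b^*$ rather than using $x(v)$. Otherwise the argument is routine. One should also note that the integral is well defined because $x$ is monotone and bounded in $[0,1]$.
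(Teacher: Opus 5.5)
Your proof is correct and follows essentially the same route as the paper's: Myerson's payment identity reduces the claim to showing $\int_v^{v^*}x_i(\bid_{-i},t)\,dt>0$, and monotonicity gives positive allocation at a point strictly between $v$ and $v^*$. The only difference is that the paper handles the case $x_i(\bid_{-i},v)>0$ and the boundary case $v=b^*$ with $x_i(\bid_{-i},v)=0$ separately, while your midpoint argument covers both at once.
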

\begin{proof}
    Suppose that for some $\mathbf{b}_{-i}$ and $v$, we have $x_i(\mathbf{b}_{-i}, v) > 0$. Then for any $v^*>v$, 
    \begin{align*}
            \util_i(\mathbf{b}_{-i}, v^*) &= v^* \cdot x_i(\mathbf{b}_{-i} , v^*) - p_i(\bid_{-i},v^*) \\
            & =\int^{v^*}_0 x_i(\bid_{-i},t) dt \tag*{by Myerson's Lemma (\cref{Myerson's}).}\\
            & > \int^{v}_0 x_i(\bid_{-i},t) dt = \util_i(\mathbf{b}_{-i} , v) 
    \end{align*}
    The only remaining case is $v=b^*$ with $x_i(\bid_{-i},v)=0$. Choose $w\in(v,v^*)$. By the definition of $b^*$ and monotonicity, $x_i(\bid_{-i},w)>0$. Myerson's payment identity gives
    \[
    \util_i(\bid_{-i},v^*)-\util_i(\bid_{-i},v)
    =\int_v^{v^*}x_i(\bid_{-i},t)\,dt
    \ge (v^*-w)x_i(\bid_{-i},w)>0.
    \]
\end{proof}
\begin{lemma}\label{cor:p/x-increasing}
    For any TFM that satisfies UIC, 
    given any $\bid_{-i}$, let $b^* = \inf\{b:x_i(\bid_{-i}, b) > 0\} < +\infty$. Then for any $v^*\geq v > b^*$, the ratio $\frac{p_i(\mathbf{b}_{-i}, v^*)}{x_i(\mathbf{b}_{-i}, v^*)} \geq \frac{p_i(\mathbf{b}_{-i}, v)}{x_i(\mathbf{b}_{-i}, v)}$.
\end{lemma}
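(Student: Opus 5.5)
Fix $\bid_{-i}$ and write $x(t):=x_i(\bid_{-i},t)$ and $p(t):=p_i(\bid_{-i},t)$. Since $v>b^*$, the definition of $b^*$ together with monotone allocation gives $x(v)>0$ and $x(v^*)\ge x(v)>0$, so both ratios are well defined. The plan is to compare the two ratios through the Myerson payment identity, which expresses the ratio as
\[
\frac{p(t)}{x(t)} = t - \frac{1}{x(t)}\int_0^{t} x(s)\,ds .
\]

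First I would dispose of the case $x(v^*)=x(v)$. Here the ratio difference equals
\[
(v^*-v) - \frac{1}{x(v)}\int_v^{v^*}x(s)\,ds .
\]
Monotonicity gives $x(s)\le x(v^*)=x(v)$ on $[v,v^*]$, so the integral is at most $(v^*-v)x(v)$ and the difference is $\ge 0$.

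The general case is the main step. I would show directly that $p(v^*)\,x(v)\ge p(v)\,x(v^*)$. Expand both sides with the payment identity. The terms $v^*x(v^*)x(v)$ and $v\,x(v)x(v^*)$ give
\[
(v^*-v)\,x(v)x(v^*).
\]
The integral terms give
\[
x(v^*)\int_0^v x(s)\,ds - x(v)\int_0^{v^*}x(s)\,ds .
\]
Split $\int_0^{v^*}=\int_0^v+\int_v^{v^*}$. The integral contribution then becomes
\[
\bigl(x(v^*)-x(v)\bigr)\int_0^v x(s)\,ds \;-\; x(v)\int_v^{v^*}x(s)\,ds .
\]
For the second piece, use $x(s)\le x(v^*)$ on $[v,v^*]$ to get $x(v)\int_v^{v^*}x(s)\,ds\le (v^*-v)\,x(v)x(v^*)$. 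This cancels against the first expanded term. For the first piece, $x(v^*)-x(v)\ge 0$ and $\int_0^v x\ge 0$, so it is nonnegative. Hence $p(v^*)x(v)-p(v)x(v^*)\ge 0$. Dividing by $x(v)x(v^*)>0$ gives the claim.

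The only delicate point is bookkeeping signs in this expansion and making sure the monotone bound on $[v,v^*]$ is used in the right direction; no other earlier result beyond \cref{Myerson's} is needed. The general argument in fact subsumes the equal-allocation case, so I would present just the single computation.
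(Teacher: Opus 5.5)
Your proof is correct and is essentially the paper's argument: the paper writes the difference of the ratios directly as $\bigl(\frac{1}{x(v)}-\frac{1}{x(v^*)}\bigr)\int_0^v x(t)\,dt+(v^*-v)-\frac{1}{x(v^*)}\int_v^{v^*}x(t)\,dt$. This is exactly your expression for $p(v^*)x(v)-p(v)x(v^*)$ divided by $x(v)x(v^*)$, and the paper bounds the two pieces the same way (monotonicity for the first, $x(t)\le x(v^*)$ on $[v,v^*]$ for the second), so, as you note, the separate equal-allocation case is unnecessary.
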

\begin{proof}
For this proof, abbreviate $x(t):=x_i(\bid_{-i},t)$ and $p(t):=p_i(\bid_{-i},t)$.
By Myerson's Lemma,
\[
\frac{p(v)}{x(v)}
=
v-\frac{\int_0^v x(t)\,dt}{x(v)}.
\]
Therefore,
\begin{align*}
\frac{p(v^*)}{x(v^*)}-\frac{p(v)}{x(v)}
&=
\left(\frac{1}{x(v)}-\frac{1}{x(v^*)}\right)\int_0^v x(t)\,dt
+
(v^*-v)
-
\frac{1}{x(v^*)}\int_v^{v^*}x(t)\,dt.
\end{align*}
Both terms are nonnegative.
The first is nonnegative because $x(v^*)\ge x(v)>0$ by monotonicity.
For the second, monotonicity gives $x(t)\le x(v^*)$ for every $t\in[v,v^*]$, and hence
\[
\int_v^{v^*}x(t)\,dt\le (v^*-v)x(v^*).
\]
Thus,
\[
\frac{p(v^*)}{x(v^*)}\ge \frac{p(v)}{x(v)}.
\]
\end{proof}
\begin{lemma}\label{lem:subvec-mr}
Suppose a TFM $({\bf I},{\bf C},{\bf P},{\bf R})$ satisfies MIC in the plain model. For any $\bid \subseteq \mathbf{b'},$ it must be that $\mu({\bf b'}) \ge \mu(\bid)$.
\end{lemma}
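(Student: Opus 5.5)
The argument is a direct simulation: when the honest bid vector is $\bid'$, the miner can reproduce the honest outcome on $\bid$ by deviating. MIC then gives the inequality $\mu(\bid')\ge\mu(\bid)$.

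\emph{Setup.} Fix $\bid\subseteq\bid'$ and treat $\bid'$ as the vector of honest users' bids, with the miner acting alone as $\C$. The honest miner strategy $H_{\M}$ outputs the block ${\bf I}(\bid')$. The miner's expected utility in this case is $\mu(\bid')$, since an honest miner injects no fake bids and pays nothing.

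\emph{Deviation.} Let the strategic miner drop every bid of $\bid'$ that does not belong to $\bid$. It injects no fake bids. It then runs the honest inclusion rule ${\bf I}$ on the remaining vector $\bid$, with fresh honest randomness, and proposes the resulting block. This is allowed in the plain model, because the miner may choose any block of at most $k$ bids drawn from the available bids, and may choose the inclusion randomness as well. The block produced this way has exactly the distribution of ${\bf I}(\bid)$.

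\emph{Comparison.} The confirmation, payment and revenue rules are applied honestly to the block. So the deviating miner's expected revenue equals $\mu(\bid)$, and since it pays for no fake bids, this is also its utility. MIC requires the honest utility to be at least the utility of any deviation, which gives $\mu(\bid')\ge\mu(\bid)$.

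\emph{Main subtlety.} We must check that the block distribution, and hence the revenue, depends only on the bids themselves and not on who submitted them. The metadata and tie-breaking of the surviving bids in $\bid$ are unchanged from the honest $\bid$ scenario. By weak symmetry, the joint distribution of outcomes and $\mu$ is invariant under reordering, so the expected revenue is exactly $\mu(\bid)$. Otherwise the step is routine.
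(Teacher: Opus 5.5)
Your proof is correct and is essentially the paper's argument: when the honest vector is $\bid'$, the miner censors the bids outside $\bid$ and runs the honest inclusion rule on $\bid$, earning $\mu(\bid)$ at no cost, so MIC forces $\mu(\bid')\ge\mu(\bid)$. The paper phrases this as a proof by contradiction rather than directly, and it omits your harmless weak-symmetry remark.
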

\begin{proof}
     For the sake of contradiction, suppose this was not the case, so $\mu({\bf b'}) <\mu({\bf b})$. 
     Then, when the miner sees the honest bid vector  ${\bf b'}$, they censor the bids not in $\bid$ perform the honest inclusion rule as if the input bid vector is ${\bf b}$ to gain more revenue. This contradicts MIC.
\end{proof}
\begin{lemma}\label{lem:miner-revenue-bound}
    Suppose a TFM $({\bf I},{\bf C},{\bf P},{\bf R})$ satisfies MIC in the plain model. Then for any bid vectors $\bid, \bid', \val$, we have $\mu(\bid,\mathbf{v}) \ge \mu(\bid',\mathbf{v}) - \sum^{|\bid'|}_{i = 1} p_i(\bid',\mathbf{v})$. 
\end{lemma}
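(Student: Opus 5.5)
We argue by a single miner deviation, in the same spirit as the proof of \cref{lem:subvec-mr}. Fix bid vectors $\bid,\bid',\val$ and suppose the honest users' bid vector is $(\bid,\val)$. The miner discards the bids $\bid$, injects the bids of $\bid'$ as its own fake bids, and runs the honest inclusion rule ${\bf I}$ on $(\bid',\val)$, using fresh honest randomness. Every step is available to a plain-model miner. It may drop honest bids, inject fake bids, and choose any block of size at most $k$; in particular it may choose the block distributed exactly as ${\bf I}(\bid',\val)$.

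Under this deviation the block has the same distribution as in an honest execution on input $(\bid',\val)$. The confirmation, payment, and revenue rules depend only on the block, so the outcome distribution is identical as well. The miner's expected revenue is therefore $\mu(\bid',\val)$. The miner also owns the fake bids $\bid'$ and pays their total expected payment $\sum_{i=1}^{|\bid'|} p_i(\bid',\val)$, since each $p_i(\bid',\val)$ is already the expectation over the same outcome distribution. Fake bids have true value $0$, so they contribute nothing to the miner's utility beyond this cost. The miner's expected utility is
\[
\mu(\bid',\val)-\sum_{i=1}^{|\bid'|} p_i(\bid',\val).
\]
The honest miner's utility on $(\bid,\val)$ is $\mu(\bid,\val)$, and MIC yields the claimed inequality.

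The step most likely to need care is the identification of the fake bids' costs with $p_i(\bid',\val)$. Weak symmetry permits tie-breaking on metadata, such as identities or timestamps, that could differ once the bids of $\bid'$ come from the miner. We handle this by letting the miner give its injected bids exactly the metadata, or relative ordering, that the corresponding honest bids would carry in the scenario defining $p_i(\bid',\val)$. Alternatively, we note that by weak symmetry the joint distribution of $(\{(b_i,x_i,p_i)\}_i,\mu)$ is invariant under permutation, so the total payment of the sub-multiset $\bid'$ and the revenue $\mu$ do not depend on which positions are fake. Either way the sum of fake payments equals $\sum_i p_i(\bid',\val)$ in expectation, which completes the argument.
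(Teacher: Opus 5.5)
Your proof is correct and is essentially the paper's argument: the miner censors $\bid$, injects $\bid'$ as fake bids, and runs the honest inclusion rule on $(\bid',\val)$. MIC then compares $\mu(\bid',\val)-\sum_i p_i(\bid',\val)$ with the honest utility $\mu(\bid,\val)$; the paper phrases this as a contradiction, while you argue directly, and your metadata-matching remark makes explicit an identification the paper uses tacitly. Your alternative justification via weak symmetry alone does not work when a value in $\bid'$ also appears in $\val$, since the multiset of $(b_i,x_i,p_i)$ tuples cannot tell which tied bid is fake, so rely on the metadata-matching convention instead.
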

\begin{proof}
     For the sake of contradiction, suppose  $\mu(\bid,\mathbf{v}) < \mu(\bid',\mathbf{v}) - \sum^{|\bid'|}_{i = 1} p_i(\bid',\mathbf{v})$. 
    Consider the world where $(\bid,\mathbf{v})$ is the honest bid vector. The miner's honest utility is $\mu(\bid,{\bf v})$.
    Consider the following miner strategy: it deviates from the honest inclusion rule by ignoring the bids in $\bid$ and submitting the fake bids in $\bid'$ to achieve the bid vector $(\bid',\val).$ Inserting the fake bids costs the miner $\sum^{|\bid'|}_{i = 1} p_i(\bid',\mathbf{v})$ so the miner's utility is 
    \begin{align*}
        \mu(\bid',\mathbf{v}) - \sum^{|\bid'|}_{i = 1} p_i(\bid',\mathbf{v}) &> \mu(\bid,\mathbf{v}), \tag*{by assumption}
    \end{align*}
    which contradicts MIC.
\end{proof}
Recall that, for a bid vector $\bid = (b_1,\dots, b_n)$, for $i \in[n]$, $x_i(\bid)$, $p_i(\bid)$ and $\util_i(\bid)$ denote the confirmation probability, expected payment, and expected utility for bid $b_i$ in $\bid$ when the honest inclusion rule is followed. 
Also, $\widetilde{x}_{b_i}(\bid)$, $\widetilde{p}_{b_i}(\bid)$, and $\widetilde{\util}_{b_i}(\bid)$ are defined analogously for a unique bid $b_i$ in bid vector $\bid$. 

In a (possibly randomized) mechanism, the inclusion rule, confirmation rule, payment rule and miner revenue rule can be randomized. 
However, the randomness used in the inclusion rule is freely chosen by the miner, whereas the randomness used in other rules come from the blockchain.
Let $\Omega$ and $\mathcal{D}$ be the sample space and distribution of the randomness specified by the honest inclusion rule. 
For each $r \in \Omega $, define $x_i(\bid \mid r) $, $ p_i(\bid \mid r)$, and $\mu(\bid \mid r) $ as the expected confirmation probability of bid \( b_i \), the expected payment of bid \( b_i \), and the expected miner revenue, respectively, conditioned on the randomness used in the inclusion rule being \( r \). 
The expectation in each case is now taken over the randomness used in the confirmation, payment, and miner revenue rules, respectively.

By definition,
\[x_i(\bid) = \underset{r\leftarrow \mathcal{D}}{\E}[x_i(\bid\mid r)], \qquad p_i(\bid) = \underset{r\leftarrow \mathcal{D}}{\E}[p_i(\bid\mid r)], \qquad \mu(\bid) = \underset{r\leftarrow \mathcal{D}}{\E}[\mu(\bid\mid r)].\]
We define $\util_i(\bid\mid r)$ analogously as user $i$'s honest expected utility conditioned on randomness $r$ used in the inclusion rule when the input bid vector is $\bid$. 
Value-indexed utilities such as $\widetilde{\util}_v(\bid)$ are used only when $v$ is unique in $\bid$. 
Let
\[
W(\bid) \colon= \left\{ i \in [n] : \util_i(\bid) > 0 \right\}
\]
to be the set of users whose honest expected utility is strictly positive under the bid vector $\bid$.
\begin{lemma} \label{lem:rand-include-above}
Suppose a TFM $({\bf I},{\bf C},{\bf P},{\bf R})$ satisfies weak $1$-CRHP in the plain model. For any bid vector $\bid = (b_1,\dots, b_n)$ where there exists a unique $b_i$ in $\bid$ such that $\util_{i}(\bid) >0$, for all unique $b_j$ in $\bid$ where $b_j > \frac{p_i(\bid)}{x_i(\bid)}$, we have $x_{j}(\bid) > 0$. 
\end{lemma}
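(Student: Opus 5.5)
We argue by contradiction using the censor-then-replace attack from the overview. Suppose $\bid$ contains a unique bid $b_i$ with $\util_i(\bid)>0$, and a unique bid $b_j$ with $b_j> r_i(\bid):=p_i(\bid)/x_i(\bid)$, but $x_j(\bid)=0$. Note $x_i(\bid)>0$ because $\util_i(\bid)>0$, so $r_i(\bid)$ is well defined. We treat $\bid$ as the true value vector and form the coalition $\C=\{\text{miner}, j\}$. Since $x_j(\bid)=0$, user $j$ pays nothing and has honest utility $0$, so the coalition's honest utility is $\mu(\bid)$.

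\textbf{The deviation.} The miner ignores user $i$'s bid. User $j$ submits $b_i$ as its primary bid and injects a fake bid of value $b_j$. The miner then runs the honest inclusion rule, with honest randomness, on the resulting bid vector. That vector has the same multiset of values as $\bid$, with the values $b_i$ and $b_j$ now both owned by user $j$. Because both values are unique in $\bid$, weak symmetry says the joint distribution of (value, confirmation, payment) tuples and miner revenue matches the honest execution on $\bid$. Two consequences follow.
\begin{itemize}
\item Miner revenue still has expectation $\mu(\bid)$.
\item The bid of value $b_i$, now user $j$'s primary bid, is confirmed with probability $x_i(\bid)$ and has expected payment $p_i(\bid)$.
\item The fake bid of value $b_j$ is confirmed with probability $x_j(\bid)=0$, so it costs nothing. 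It has zero true value, so it adds nothing either.
\end{itemize}
One point needs care: a fake bid pays only when confirmed, by individual rationality and the rule that unconfirmed bids pay zero. So the fake bid's expected cost is exactly $0$.

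\textbf{Utility comparison.} User $j$'s expected utility becomes
\[
b_j\, x_i(\bid) - p_i(\bid) = x_i(\bid)\bigl(b_j - r_i(\bid)\bigr) > 0.
\]
The coalition's utility is therefore $\mu(\bid)+x_i(\bid)(b_j-r_i(\bid)) > \mu(\bid)$, a strict improvement. Meanwhile user $i$, who is honest and outside the coalition, is censored and gets utility $0 < \util_i(\bid)$. This strictly profitable deviation harms a protected honest user, contradicting weak $1$-CRHP. Hence $x_j(\bid)>0$.

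\textbf{Main obstacle.} The delicate step is justifying the use of weak symmetry. Every other bidder keeps the same value, and the only change is the ownership of the two unique values $b_i$ and $b_j$. Weak symmetry is stated as invariance of the joint distribution under permutations of $\bid$. The point is that ownership and metadata matter only for tie-breaking, and uniqueness of $b_i$ and $b_j$ removes any tie-breaking involving them. I would spell this out using the operational view: sort by amount, and ties are irrelevant for unique values. From this, the per-value quantities $\widetilde{x}$ and $\widetilde{p}$ transfer to whoever submits that value.
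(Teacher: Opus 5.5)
Your proposal is correct and is essentially the paper's own proof. It uses the same miner--user-$j$ coalition and the same censor-then-replace deviation: censor $b_i$, have $j$ bid $b_i$, and inject a fake bid at $b_j$. It relies on the same weak-symmetry and uniqueness argument to transfer $x_i(\bid)$, $p_i(\bid)$, $\mu(\bid)$ and the zero confirmation of the value $b_j$, and reaches the same strict inequality $\mu(\bid)+b_j x_i(\bid)-p_i(\bid)>\mu(\bid)$ contradicting weak $1$-CRHP (a minor slip: you announce ``two consequences'' but list three).
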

In words, if there is a unique bid $b_i$ in $\bid$ with positive utility, all unique bids greater than the ratio $p_i(\bid)/x_i(\bid)$ must have positive confirmation probability as well. 
\begin{proof}
Seeking contradiction, suppose there exists a bid vector $\bid = (b_1,\dots, b_n)$ where $b_i$ is a unique bid in $\bid$ and $\util_{i}(\bid) >0$, but $x_{j}(\bid) = 0$ for some unique $b_j> \frac{p_i(\bid)}{x_i(\bid)}$ in $\bid$. 

Consider a world where $\bid$ is the honest bid vector. Consider a coalition  consisting of user $j$ with true value $b_j$ and the miner, whose joint expected utility in the honest case equals $\mu(\bid)$.
The coalition can perform the following strategy: 
\begin{itemize}
    \item The miner ignores the bid $b_i$ from user $i$;
    \item User $j$ changes its bid to $b_i$ and injects a fake bid of value $b_j$ under an arbitrary fake identity.
    \item The miner performs the honest inclusion rule on the remaining bids.
\end{itemize}
Because the bids $b_i$ and $b_j$ are unique, by weak symmetry, the fake bid at $b_j$ will not be confirmed, while user $j$'s real bid at $b_i$ 
will have positive confirmation probability with expected payment $p_i(\bid)$. 
Under the deviation, the coalition's joint utility equals
\begin{align*}
    \mu(\bid)+b_j\cdot x_i(\bid)-p_i(\bid) &> \mu(\bid) + \frac{p_i(\bid)}{x_i(\bid)}\cdot x_i(\bid)-p_i(\bid) = \mu(\bid), 
\end{align*}
where the rightmost expression is the coalition's expected joint utility in the honest case. 
However, user $i$ in the honest case has positive utility, but is now unconfirmed.
Thus, this strategy increases the coalition's joint utility while harming the honest user $i$, contradicting weak 1-CRHP.
\end{proof}

\begin{lemma} \label{lem:same-mr}
Suppose a TFM \( (\mathbf{I}, \mathbf{C}, \mathbf{P}, {\bf R}) \) satisfies MIC in the plain model. Then, for any $\bid$, we have $\mu(\bid\mid r) = \mu(\bid)$ almost surely:
\[\underset{r\leftarrow\mcal{D}}{\Pr}[\mu(\bid\mid r) = \mu(\bid)]=1.\]
\end{lemma}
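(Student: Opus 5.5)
We argue by contradiction, exploiting the fact that in the plain model the miner controls the randomness of the inclusion rule. Fix a bid vector $\bid$ and let $\mu^* := \operatorname{ess\,sup}_{r\leftarrow\mcal{D}} \mu(\bid\mid r)$, the essential supremum of the conditional revenue over the honest inclusion randomness. Since $\mu(\bid)=\E_{r\leftarrow\mcal{D}}[\mu(\bid\mid r)]$, the claim $\Pr[\mu(\bid\mid r)=\mu(\bid)]=1$ is equivalent to saying that $\mu(\bid\mid r)$ is almost surely constant. So suppose it is not. Then there is some $\delta>0$ for which the set $R_\delta:=\{r\in\Omega : \mu(\bid\mid r) > \mu(\bid)+\delta\}$ has positive probability; in particular $R_\delta$ is nonempty. (If $\mu(\bid\mid r)\le \mu(\bid)$ almost surely, then equality of the expectation forces $\mu(\bid\mid r)=\mu(\bid)$ almost surely, so a nonconstant conditional revenue must exceed its mean by some positive margin on a positive-probability set.)

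Now pick any $r^*\in R_\delta$ and consider the following miner strategy when the honest bid vector is $\bid$. The miner injects no fake bids and censors nobody. It runs the honest inclusion rule on $\bid$, but instead of sampling the inclusion randomness from $\mcal{D}$, it fixes the randomness to $r^*$. This is allowed by the plain-model strategy space in \cref{sec:model-tfm-game}: the miner may output any block of size at most $k$, and in particular the block ${\bf I}(\bid; r^*)$.

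The miner pays no cost, since it injects no fake bids. Its expected utility is therefore $\mu(\bid\mid r^*)$, where the expectation is over the on-chain randomness of ${\bf C},{\bf P},{\bf R}$, which the miner does not control. By the choice of $r^*$, this is greater than $\mu(\bid)+\delta>\mu(\bid)$, which is the miner's honest utility. This contradicts MIC, and hence $\mu(\bid\mid r)=\mu(\bid)$ almost surely.

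The only delicate point is measure-theoretic. We need a concrete realizable $r^*$ with $\mu(\bid\mid r^*)>\mu(\bid)$, rather than merely a positive-probability event, and we need $\mu(\bid\mid r)$ to be a well-defined measurable function of $r$. Both follow directly: a positive-probability event is nonempty, and any $r^*$ in it works, since the miner's deviation may be a pure strategy. No earlier lemma beyond the definition of MIC is needed. Note also that weak symmetry and budget feasibility play no role in this argument.
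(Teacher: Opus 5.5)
Your argument is correct and is essentially the paper's: the paper notes that MIC forces $\mu(\bid\mid r)\le\mu(\bid)$ for every $r\in\Omega$ (otherwise the miner fixes the inclusion randomness to that $r$, which is exactly your deviation), and then equality of expectations gives $\mu(\bid\mid r)=\mu(\bid)$ almost surely. The essential supremum $\mu^*$ and the margin $\delta$ are unnecessary, since any single $r^*$ with $\mu(\bid\mid r^*)>\mu(\bid)$ already yields the contradiction.
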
 

\begin{proof}
By MIC, it must be that $\mu(\bid\mid r)\leq \mu(\bid)$ for any $r\in\Omega$. 
Otherwise, the miner can fix $r$ as the randomness used in the inclusion rule instead of sampling the randomness from $\mcal{D}$ and strictly increases its expected revenue.
Since $\mu(\bid) = \underset{r\leftarrow \mathcal{D}}{\E}[\mu(\bid\mid r)]$, we have $\underset{r\leftarrow\mcal{D}}{\Pr}[\mu(\bid\mid r) = \mu(\bid)]=1$.
\end{proof}


\begin{lemma} \label{lem:limited-space}
Suppose a TFM \( (\mathbf{I}, \mathbf{C}, \mathbf{P}, {\bf R}) \) satisfies MIC and weak \( 1 \)-CRHP, and the block size is bounded by \( k \) in the plain model. Then, for every bid vector \( \bid = (b_1, \dots, b_n) \),
\[
|W(\bid)| \leq k,
\]
i.e., at most \( k \) users can have positive expected utility under the honest mechanism.
\end{lemma}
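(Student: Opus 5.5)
The plan is to argue by contradiction: suppose some bid vector $\bid$ has $|W(\bid)| > k$. We will exhibit a miner--user coalition deviation that is strictly profitable yet harms an honest winner, contradicting weak $1$-CRHP. The only deviation used is the miner fixing the randomness $r$ of the inclusion rule rather than sampling it from $\mcal D$. This lies in the plain-model strategy space. Everything else is executed honestly on the unchanged bid vector $\bid$, so each user's utility conditioned on $r$ is exactly $\util_i(\bid\mid r)$.

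The key facts are as follows. First, by \cref{lem:same-mr} (which uses MIC), there is a full-measure set $\Omega_0\subseteq\Omega$ on which $\mu(\bid\mid r)=\mu(\bid)$. Fixing any $r\in\Omega_0$ therefore leaves the miner's revenue unchanged. Second, for every $r$ the block ${\bf I}(\bid)$ under randomness $r$ contains at most $k$ bids. A bid not included under $r$ is never confirmed and pays nothing, so it has conditional utility $\util_i(\bid\mid r)=0$. Equivalently, $\util_i(\bid\mid r)>0$ forces $b_i$ to be included under $r$.

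Now I split into two cases. In the first case, for every $j\in W(\bid)$ we have $\util_j(\bid\mid r)=\util_j(\bid)$ for $\mcal D$-almost every $r$. Since $W(\bid)$ is finite, some single $r$ has $\util_j(\bid\mid r)=\util_j(\bid)>0$ for all $j\in W(\bid)$ simultaneously. Then all $|W(\bid)|>k$ winners are included under $r$, which violates the block size, so this case is impossible.

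Hence we are in the second case: some $j\in W(\bid)$ has $\util_j(\bid\mid r)\ne\util_j(\bid)$ on a set of positive measure. Because $\util_j(\bid)=\E_r[\util_j(\bid\mid r)]$, the set $\{r:\util_j(\bid\mid r)>\util_j(\bid)\}$ must have positive measure. It therefore meets $\Omega_0$, and we fix $r^*$ in the intersection.

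Consider the coalition of the miner and user $j$, where $j$ bids truthfully and the miner runs the honest inclusion rule with randomness $r^*$. The coalition's utility is $\mu(\bid)+\util_j(\bid\mid r^*)$, which is strictly greater than its honest utility $\mu(\bid)+\util_j(\bid)$. So the deviation is strictly profitable. Moreover $\util_j(\bid\mid r^*)>\util_j(\bid)>0$, so $b_j$ is included under $r^*$. At most $k-1$ other slots remain, while $W(\bid)\setminus\{j\}$ has at least $k$ members. Hence some honest $i\in W(\bid)$, $i\ne j$, is excluded under $r^*$, and its utility drops from $\util_i(\bid)>0$ to $0$. This contradicts weak $1$-CRHP, and hence also $1$-CRHP.

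The main subtlety I expect is the measure-theoretic step: extracting a single $r^*$ that simultaneously preserves revenue and strictly improves $j$'s utility. This requires intersecting the positive-measure improvement set with the full-measure set from \cref{lem:same-mr}, and handling the first case via finiteness of $W(\bid)$.
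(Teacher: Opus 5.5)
Your proof is correct and is essentially the paper's argument: a coalition of the miner and a winner $j$, in which the miner fixes an inclusion randomness $r^*$ that preserves revenue (via \cref{lem:same-mr}) and gives $j$ conditional utility above its mean, so the capacity bound forces some other winner out of the block. The only difference is cosmetic. You find such a $j$ by a case split on whether all winners' conditional utilities are almost surely constant, whereas the paper first pigeonholes over $r$ to find a winner excluded with positive probability, and then pigeonholes again to fix the harmed user — a step your pointwise counting at the single $r^*$ shows is unnecessary.
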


\begin{proof}
Assume, for the sake of contradiction, that there exists a bid vector $\bid$ where \( |W(\bid)| \ge k+1 \). Since the block size is at most \( k \), the inclusion rule cannot include more than \( k \) bids for any randomness \( r \in \Omega \). 
Therefore, there must exists some $i\in W(\bid)$ and a set $\Omega_{i}:=\{r\in\Omega: x_{i}(\bid\mid r) = 0\}$ such that $\Pr_{r\leftarrow \mathcal{D}}[\Omega_{i}]>0$, this implies that
\begin{equation}
    \label{eqn:0-util-prob}
    \Pr_{r\leftarrow \mathcal{D}}[\util_{i}(\bid\mid r) = 0] > 0
\end{equation}
Recall that by definition, $\util_{i}(\bid) = \E_{r \leftarrow \mcal{D}}[\util_{i}(\bid \mid r)] > 0.$
Together with \cref{eqn:0-util-prob}, this means that $\Pr_{r \leftarrow \mcal{D}}[\Omega'_i]>0$, where $\Omega'_i = \{r\in\Omega: \util_{i}(\bid\mid r) > \util_{i}(\bid)\}$.

For every fixed randomness, at most $k$ bids can be included. Thus, for each $r\in \Omega'_i$, there exists some $j\in W(\bid)$ such that $x_j(\bid \mid r) = 0$. 
By the pigeonhole principle, there exists some $j^* \in W(\bid)$ such that 
\[\underset{r\leftarrow \mcal{D}}{\Pr}[\util_{i}(\bid\mid r) > \util_{i}(\bid) \text{ and } x_{j^*} (\bid \mid r) = 0] > 0.\]
Define set $\Omega^* :=\{r\in \Omega: \util_{i}(\bid\mid r) > \util_{i}(\bid) \text{ and } x_{j^*} (\bid \mid r) = 0 \text{ and } \mu(\bid\mid r) = \mu(\bid)\}$. By \cref{lem:same-mr}, we have $\Pr_{r\leftarrow \mcal{D}}[\Omega^*]>0$.

Consider the following strategy by a coalition of the miner and user \( i \) as follows: the miner picks an arbitrary randomness \( r^* \in \Omega^*\) in the inclusion rule. By the choice of $\Omega^*$, the coalition's expected joint utility becomes:
\begin{align*}
\mu(\bid\mid r^*) + \util_{i}(\bid \mid r^*) 
&= \mu(\bid) + \util_{i}(\bid \mid r^*) > \mu(\bid) + \util_{i}(\bid) ,
\end{align*}
which strictly improves the coalition's expected joint utility. 
Meanwhile, user \( j^* \), who, in the honest case, has positive expected utility, is now excluded and receives zero utility, violating weak 1-CRHP. 
\end{proof}

Throughout the proof, we will rely on bids having strictly positive utility. Therefore, in our argument, we often need to remove the infimum value from a set, which might have zero utility. For this purpose, we define an operator $\rminf(\cdot)$. For any nonempty set $S\subseteq \R_{\ge0}$, $\rminf(S) = S\setminus\{\inf S\}$.
Recall that for two sets $S$ and $S'$, we say $S \prec S'$ if $\sup S < \inf S'$.



\begin{lemma} 
\label{lem:W-size-set}
Let \( (\mathbf{I}, \mathbf{C}, \mathbf{P}, {\bf R}) \) be a TFM that satisfies UIC, MIC, and weak \( 1 \)-CRHP. Suppose there exists a bid vector \( \val = (v_1,\dots,v_n) \) and a sequence of sets $S_1,\ldots,S_m\subseteq\R_{\ge0}$ such that, for every \( \bid=(b_1,\ldots,b_m)\in S_1\times\cdots\times S_m \), each $b_\ell$ is unique in $(\val,\bid)$ and $x_{b_1}(\val,\bid)>0$. If $|S_\ell|\ge2$ for all $\ell$ and $S_1\prec\cdots\prec S_m$, then there exists a bid vector \( \bid^* \) with \( |W(\bid^*)|\ge m \).
\end{lemma}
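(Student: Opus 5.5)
The plan is to show that the ``top'' vector $\bid^*:=(\val,\max S_1,\ldots,\max S_m)$ has all $m$ new bids in $W(\bid^*)$. Each $S_\ell$ has at least two elements, so its maximum has a strictly smaller element beneath it. That gap is the room needed to apply \cref{cor:increasing}. (If some $S_\ell$ is infinite without a maximum, I would instead fix any element $s_\ell\in S_\ell$ together with a smaller element $w_\ell\in S_\ell$ and run the same argument with $s_\ell$ in place of $\max S_\ell$.)

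\emph{Step 1: the lowest new bid wins for every choice of the others.} Fix arbitrary $b_2\in S_2,\ldots,b_m\in S_m$, and pick $w_1\in S_1$ with $w_1<\max S_1$. By hypothesis, the bid of value $w_1$ has positive confirmation probability in $(\val,w_1,b_2,\ldots,b_m)$. Hold all other bids fixed at this index and apply \cref{cor:increasing}. Because the infimum of the confirmation region is at most $w_1$, raising this bid to $\max S_1$ gives
\[\util(\max S_1)>\util(w_1)\ge 0,\]
where the last inequality is individual rationality. Weak symmetry together with uniqueness of all the new bids lets me pass between index-based and value-indexed quantities. Hence $\widetilde{\util}_{\max S_1}(\val,\max S_1,b_2,\ldots,b_m)>0$ for every $(b_2,\ldots,b_m)\in S_2\times\cdots\times S_m$.

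\emph{Step 2: the ratio of the winner lies below all higher sets.} Positive utility means $\max S_1\cdot x>p$, so the ratio $\widetilde r_{\max S_1}$ is strictly below $\max S_1$. Since $S_1\prec S_\ell$ for every $\ell\ge2$, every $b_\ell\in S_\ell$ exceeds this ratio, and every $b_\ell$ is unique. \cref{lem:rand-include-above} (which needs only weak $1$-CRHP) then gives $x_{b_\ell}(\val,\max S_1,b_2,\ldots,b_m)>0$ for every $\ell\ge 2$ and every choice of $(b_2,\ldots,b_m)$.

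\emph{Step 3: upgrade positive confirmation to positive utility at the top.} Fix $\ell\ge2$ and set every coordinate $j\neq\ell$ to $\max S_j$. Pick $w_\ell\in S_\ell$ with $w_\ell<\max S_\ell$. By Step 2 applied to the tuple with $w_\ell$ in coordinate $\ell$, the bid $w_\ell$ has positive confirmation probability. Applying \cref{cor:increasing} at coordinate $\ell$ with the others held fixed, raising it to $\max S_\ell$ gives strictly positive utility in $\bid^*$. Together with Step 1 (taking $b_j=\max S_j$), all $m$ new coordinates lie in $W(\bid^*)$, so $|W(\bid^*)|\ge m$.

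The main point needing care is Step 3. Step 2 must be invoked for the non-top tuple containing $w_\ell$, which requires that $\max S_1$ keep positive utility regardless of the other coordinates; Step 1 was stated uniformly in $(b_2,\ldots,b_m)$ precisely for this. A secondary technicality is the bookkeeping between fixed indices (needed for \cref{cor:increasing}) and value-indexed quantities (needed for \cref{lem:rand-include-above}). This is handled by weak symmetry and the uniqueness hypothesis on every $b_\ell$.
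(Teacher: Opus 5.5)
Your proposal is correct and follows essentially the same route as the paper's proof. The paper also (i) uses \cref{cor:increasing} to give the top-of-$S_1$ bid positive utility uniformly in the other coordinates, (ii) applies \cref{lem:rand-include-above} to get positive confirmation for a lower element of each $S_\ell$, and (iii) raises that element with \cref{cor:increasing}; your use of $\max S_\ell$, with the fallback for sets without a maximum, is just a special case of the paper's arbitrary choice from $\rminf(S_\ell)$.
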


\begin{proof}
Pick an arbitrary $\bid=(b_1,\ldots,b_m)\in\rminf(S_1)\times\cdots\times\rminf(S_m)$ and set $\bid^*=(\val,\bid)$.

First consider $b_1$. Choose $\widehat b_1\in S_1$ with $\widehat b_1<b_1$. By assumption,
\[
x_{\widehat b_1}(\val,\widehat b_1,b_2,\ldots,b_m)>0.
\]
Hence, by \cref{cor:increasing}, $\util_{b_1}(\bid^*)>0$.

Now fix any $\ell>1$, and choose $\widehat b_\ell\in S_\ell$ with $\widehat b_\ell<b_\ell$. Let $\widehat{\bid}$ be obtained from $\bid^*$ by replacing $b_\ell$ with $\widehat b_\ell$. Choose $\widehat b_1\in S_1$ with $\widehat b_1<b_1$. By assumption, the bid $\widehat b_1$ has positive confirmation probability when the first and $\ell$-th coordinates are $\widehat b_1$ and $\widehat b_\ell$, respectively. Raising $\widehat b_1$ to $b_1$ and applying \cref{cor:increasing} shows that $b_1$ has positive utility in $\widehat{\bid}$.

Since $S_1\prec S_\ell$, we have
\[
\widehat b_\ell>b_1>\frac{p_{b_1}(\widehat{\bid})}{x_{b_1}(\widehat{\bid})}.
\]
Thus, by \cref{lem:rand-include-above}, $\widehat b_\ell$ has positive confirmation probability in $\widehat{\bid}$. Raising $\widehat b_\ell$ to $b_\ell$ and applying \cref{cor:increasing} gives $\util_{b_\ell}(\bid^*)>0$.

Therefore all $b_1,\ldots,b_m$ have positive utility in $\bid^*$, so $|W(\bid^*)|\ge m$.
\end{proof}

Our impossibility relies on the following key technical lemma. 
\begin{lemma}
\label{lem:rand-sCRHP-ind-sets}
Suppose a TFM $(\mathbf{I}, \mathbf{C}, \mathbf{P}, {\bf R})$ with block size $k$ satisfies UIC, MIC, and $1$-CRHP, and let $\val$ be a bid vector such that $|\val| \ge k+1$ and index $i^*$ where $x_{i^*}(\val )>0$.
Then there exists a bid vector ${\bf b}^*$ such that 
\[|W(\bid^*)| \ge k+1.\]
\end{lemma}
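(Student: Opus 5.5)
We follow the randomized induction sketched in \cref{sec:overview-impossibility}. The plan is to build, for $m=1,\dots,k+1$, two-element sets $S_m=\{\alpha^{(m)},\beta^{(m)}\}$ with
\[
\{\max(\val)\}\prec S_{k+1}\prec S_k\prec\cdots\prec S_1 .
\]
For every $(b_1,\dots,b_m)\in S_1\times\cdots\times S_m$, the newest bid $b_m$ should have positive utility in $(\val_{-i^*},b_1,\dots,b_m)$. Its ratio should also satisfy
\[
\widetilde r_{b_m}(\val_{-i^*},b_1,\dots,b_m)\le r^{(m-1)}<\alpha^{(m)} .
\]
At the end we set $\bid^*=(\val_{-i^*},\beta^{(1)},\dots,\beta^{(k+1)})$ and argue $|W(\bid^*)|\ge k+1$ via the argument of \cref{lem:W-size-set}, with the ordering reversed.

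\textbf{Base case.} Since $x_{i^*}(\val)>0$, monotonicity gives $x_{i^*}(\val_{-i^*},v^*)>0$ for any $v^*>\max(\val)$. By \cref{cor:increasing}, utility is positive there, so $\tau=\widetilde r_{v^*}(\val_{-i^*},v^*)<v^*$. Choose $\max\{\max(\val),\tau\}<\alpha^{(1)}<\beta^{(1)}<v^*$. Then \cref{cor:increasing} gives positive utility to every $b_1\in S_1$, and \cref{cor:p/x-increasing} bounds its ratio by $\tau$.

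\textbf{Inductive step.} The core claim is the following. For each $\val^{(m-1)}\in V^{(m-1)}$, every appended bid $t\in(\max\{\max(\val),r^{(m-2)}\},\alpha^{(m-1)})$ (or at least those $t$ close to $\alpha^{(m-1)}$) has $x_t(\val^{(m-1)},t)>0$. This uses the full (non-weak) $1$-CRHP, as in the deterministic sketch. Suppose instead $x_t=0$. Take $\val^{(m-1)}$ as the honest vector. Since it has more than $k$ bids, \cref{lem:limited-space} and the block bound give some user $j$ with zero utility (for instance one with $x_j<1$ in some realization). Let the coalition of the miner and $j$ inject the fake bid $t$.

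\begin{itemize}
\item The fake bid is unconfirmed, so it costs nothing.
\item By \cref{lem:subvec-mr} miner revenue cannot drop, and by MIC applied in the opposite direction it cannot rise.
\item User $j$ keeps nonnegative utility.
\end{itemize}

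So coalition utility is weakly preserved. $1$-CRHP then says $b_{m-1}$ keeps positive utility and its value-ratio stays below $t$. \cref{lem:rand-include-above} then forces $x_t>0$, a contradiction.

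Next raise $t$ to $\alpha^{(m-1)}$ at the fixed new index. By \cref{cor:increasing} this gives positive utility, so each ratio $r_{n+m-1}(\val^{(m-1)},\alpha^{(m-1)})<\alpha^{(m-1)}$. Since $V^{(m-1)}$ is finite, the maximum $r^{(m-1)}$ is also below $\alpha^{(m-1)}$. Pick $\max\{\max(\val),r^{(m-1)}\}<\alpha^{(m)}<\beta^{(m)}<\alpha^{(m-1)}$.

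UIC then gives positive utility to any $b_m\in S_m$. Otherwise the user with value $b_m$ would profit by overbidding to $\alpha^{(m-1)}$, whose ratio is at most $r^{(m-1)}<b_m$. \cref{cor:p/x-increasing} keeps the ratio of $b_m$ below $r^{(m-1)}$.

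\textbf{Conclusion.} Evaluate at $\bid'=(\val_{-i^*},\beta^{(1)},\dots,\beta^{(k+1)})$. For each $j\le k$, lower coordinate $j$ to $\alpha^{(j)}$ while keeping the others at $\beta$. In that vector the last bid $\beta^{(k+1)}$ has positive utility and ratio below $\alpha^{(k+1)}<\alpha^{(j)}$, so \cref{lem:rand-include-above} gives $\alpha^{(j)}$ positive confirmation. Raising it back to $\beta^{(j)}$ gives positive utility by \cref{cor:increasing}. The last bid $\beta^{(k+1)}$ is already positive by the inductive step. Hence $|W(\bid')|\ge k+1$.

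\textbf{Main obstacle.} The hardest step is the core claim. It requires a zero-utility user $j$ to exist in every $\val^{(m-1)}$, and it requires that injecting the unconfirmed fake bid does not change miner revenue or $j$'s utility. If $j$'s utility rises, the deviation is still weakly profitable, which is fine; the real risk is that it falls. If that cannot be handled directly, I would let the miner alone deviate, with MIC ensuring revenue is unchanged, and combine this with a dummy coalition member whose value is zero.
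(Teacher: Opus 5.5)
Your overall architecture matches the paper's proof. The main pieces are the same: two-element sets placed just below $\alpha^{(m-1)}$; a uniform ratio bound obtained by raising a confirmed appended bid to $\alpha^{(m-1)}$ at a fixed index and maximizing over the finite $V^{(m-1)}$; UIC to give each $b_m\in S_m$ positive utility; \cref{cor:p/x-increasing} for the ratio; and the lower-then-raise argument at $(\val_{-i^*},\beta^{(1)},\dots,\beta^{(k+1)})$. The base case and the conclusion are fine.

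\textbf{The gap.} The problem is the last inference in your core claim. After the miner and the zero-utility user $j$ inject the unconfirmed fake bid $t$, $1$-CRHP only gives $\widetilde{\util}_{b_{m-1}}(\val^{(m-1)},t)\ge\widetilde{\util}_{b_{m-1}}(\val^{(m-1)})>0$. It does \emph{not} give that the ratio $\widetilde p_{b_{m-1}}/\widetilde x_{b_{m-1}}$ in $(\val^{(m-1)},t)$ stays below $t$. The utility equals $x\,(b_{m-1}-r)$, so a larger confirmation probability can offset a ratio far above $t$. For example, $x=0.1,\ r=0$ and $x'=1,\ r'=0.9\,b_{m-1}$ give the same utility. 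Your inference is valid only in the deterministic sketch, where keeping positive utility means staying confirmed with no larger payment. Consequently you cannot invoke \cref{lem:rand-include-above} inside $(\val^{(m-1)},t)$. Its censor-then-replace attack would charge the coalition the \emph{new} ratio, which you do not control.

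\textbf{The repair.} This is what the paper does in \cref{claim:confirmed}: run the censor-then-replace attack so that it lands back on the old vector, where your invariant controls the ratio.
\begin{itemize}
\item Take $(\val^{(m-1)},t)$ as the honest vector. Let the coalition be the miner and the user with value $t$.
\item The miner censors $b_{m-1}$, and this user raises its bid to $b_{m-1}$, injecting no fake bid.
\item The realized vector is $\val^{(m-1)}$. By weak symmetry the coalition obtains $\mu(\val^{(m-1)})+t\,\widetilde x_{b_{m-1}}(\val^{(m-1)})-\widetilde p_{b_{m-1}}(\val^{(m-1)})$.
\item This strictly exceeds $\mu(\val^{(m-1)})$, because the ratio of $b_{m-1}$ in $\val^{(m-1)}$ is at most $r^{(m-2)}<t$.
\item By \cref{lem:miner-revenue-bound}, $\mu(\val^{(m-1)})\ge\mu(\val^{(m-1)},t)$; you already derived this revenue comparison.
\item Since $x_t=0$, the honest coalition utility is exactly $\mu(\val^{(m-1)},t)$, so the deviation is strictly profitable.
\end{itemize}
Your fake-bid step is then used only to show that $b_{m-1}$ has positive utility in $(\val^{(m-1)},t)$. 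That makes the attack genuinely harmful to it, contradicting $1$-CRHP. With this change the rest of your proof goes through.

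Two minor points:
\begin{itemize}
\item The existence of $j$ follows from \cref{lem:limited-space} alone. Having $x_j<1$ in some realization does not give zero utility.
\item The ``main obstacle'' you flag does not arise. User $j$'s honest utility is $0$, and individual rationality keeps it nonnegative.
\end{itemize}
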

Combining \cref{lem:rand-sCRHP-ind-sets} and \cref{lem:limited-space}, we immediately get the following impossibility result:
\begin{theorem} \label{thm:rand-sCRHP-imp-k+1}
    Let $k$ denote the block size.
    If a TFM $({\bf I},{\bf C},{\bf P},\mu)$ satisfies UIC, MIC, and $1$-CRHP,
    then it must be degenerate.
\end{theorem}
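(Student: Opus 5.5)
The statement to prove is \cref{thm:rand-sCRHP-imp-k+1}. It will follow by contradiction from \cref{lem:rand-sCRHP-ind-sets} and \cref{lem:limited-space}.

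Suppose a TFM satisfying UIC, MIC, and $1$-CRHP is not degenerate. Then there is a bid vector $\val$ with $|\val|\ge k+1$ and an index $i^*$ with $x_{i^*}(\val)>0$, which is exactly the hypothesis of \cref{lem:rand-sCRHP-ind-sets}. That lemma yields a bid vector $\bid^*$ with $|W(\bid^*)|\ge k+1$.

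Next, $1$-CRHP implies weak $1$-CRHP. The premise $\util_{\C}(\val;S_{\C})>\util_{\C}(\val;H_{\C})$ implies $\util_{\C}(\val;S_{\C})\ge\util_{\C}(\val;H_{\C})$, so every strictly profitable deviation is already covered by \cref{defn:rhp}. Hence the mechanism is MIC and weak $1$-CRHP with block size $k$, and \cref{lem:limited-space} gives $|W(\bid)|\le k$ for every $\bid$, in particular for $\bid^*$. This contradicts $|W(\bid^*)|\ge k+1$. Therefore no bid has positive confirmation probability whenever there are more than $k$ bids, i.e.\ the mechanism is degenerate.

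If the intended statement also includes zero miner revenue, as in \cref{thm:intro-impossibility-plain}, I would add one step. Suppose $\mu(\bid)>0$ for some $\bid$. Append $k+1$ extra bids to obtain $\bid'$ with $|\bid'|>k$. \cref{lem:subvec-mr} gives $\mu(\bid')\ge\mu(\bid)>0$, and budget feasibility then forces some bid in $\bid'$ to have positive confirmation probability. This contradicts degeneracy.

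The theorem itself is routine given the lemmas. The real obstacle lies in \cref{lem:rand-sCRHP-ind-sets}, the nested two-point-set induction that I would carry out in its own proof.
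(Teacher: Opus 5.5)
Your proposal is correct and is exactly the paper's argument: the theorem follows by combining \cref{lem:rand-sCRHP-ind-sets} with \cref{lem:limited-space}, and your remark that $1$-CRHP implies weak $1$-CRHP (needed because \cref{lem:limited-space} is stated for the weak notion) is a detail the paper leaves implicit. Your optional zero-revenue step also coincides with the paper's \cref{cor:rand-no-rev}.
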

\begin{proof}
By \cref{lem:rand-sCRHP-ind-sets} and \cref{lem:limited-space}.
\end{proof}

\subsubsection[Proof of the plain-model induction lemma]{Proof of \cref{lem:rand-sCRHP-ind-sets}}
Suppose, for the sake of contradiction, that there exists $\val$ where $|\val|\ge k+1$ and $x_{i}(\val) >0$ for an index $i\in[n]$. Here $n=|\val|$. We will prove a stronger result using induction to construct $\bid^*$ such that $|W(\bid^*)|\ge k+1$. 

For any $m\in[k+1]$, our induction constructs a sequence of positive real numbers $\eps^{(1)},...,\eps^{(m)}$, a sequence of sets $S^{(1)},\dots, S^{(m)}\subset \R_{\geq 0}$ 
such that properties \ref{srand:prop1}-\ref{srand:prop3} below hold. 
    \begin{enumerate} [label=(${\sf Prop'\text{-}\arabic*}$), leftmargin = 15mm]
        \item \label{srand:prop1} Each $S^{(\ell)}$ contains two values: $S^{(\ell)} = \{\alpha^{(\ell)}, \beta^{(\ell)}\}$ with $\alpha^{(\ell)} < \beta^{(\ell)}$.

        \item \label{srand:prop2} 
            $\max(\val)+\eps^{(m)} < \alpha^{(m)} < \beta^{(m)}<\alpha^{(m-1)}<\beta^{(m-1)}<\dots < \alpha^{(1)} < \beta^{(1)}$.
        
        This implies that $\set{\max(\val)+\eps^{(m)}} \prec S^{(m)} \prec \dots \prec S^{(1)}$
        \item \label{srand:prop3}  For all \( \bid  = (b_1, \dots, b_m) \in S^{(1)}\times \cdots \times S^{(m)} \), $b_m$ is unique in $(\val_{-i}, \bid)$. Moreover,
        \[\widetilde{x}_{b_m}\left(\val_{-i}, \bid\right) > 0 \quad \text{and}  \quad \frac{\widetilde{p}_{b_m}\left(\val_{-i}, \bid\right)}{\widetilde{x}_{b_m}\left(\val_{-i}, \bid\right)} \le \alpha^{(m)} -\eps^{(m)}.
        \]
        As an implication, $\widetilde{\util}_{b_m}(\val_{-i},\bid) > 0$.
\end{enumerate}

        
        
    We first show how the lemma follows assuming that the above properties holds for any $m\in[k+1]$ and give the induction proof afterwards.
    Let $m= k+1$. Consider $\val_{-i}$ and the sequence of sets $S^{(1)},...,S^{(k+1)}$.
    By \ref{srand:prop1}, $|S^{(\ell)}|\ge2$ for all $\ell \in [k+1]$. From \ref{srand:prop2}, we also know that $S^{(k+1)} \prec \dots \prec S^{(1)}$, and that for $\bid = (b_1,\dots,b_m) \in S^{(1)} \times \dots \times S^{(m)}$, each $b_\ell$ is unique in $(\val_{-i}, \bid)$. 
    Furthermore, $\widetilde{x}_{b_m}(\val_{-i}, \bid)> 0$ by \ref{srand:prop3}.
 Thus, the assumptions of \cref{lem:W-size-set} hold for $\val_{-i}$ and the sets  $S^{(k+1)},...,S^{(1)}$, which means that some $\bid^*$ exists where $|W(\bid^*)| \ge k+1$.
 

    \paragraph{Induction Proof.} In the rest of the proof, we focus on the induction for proving properties \ref{srand:prop1} - \ref{srand:prop3}.
    For clarity, in our induction, we use superscripts to refer to the propositions for a particular $m \in [k+1]$, i.e. \ref{srand:prop1}$^{(m)}$ represents property \ref{srand:prop1} for $m$.

    \smallskip\noindent \underline {\it Base Case: $m=1$.} 
    Let $v^* = \max(\val)+1$.
    Consider a bid vector $\val' = (\val_{-i}, v^*)$. Since $x_i(\val) > 0$, by \cref{cor:increasing}, we know that $\util_{i}(\val_{-i}, v^*)>\util_{i}(\val)\ge 0$, and thus $\tau:=\frac{p_{i}(\val')}{x_{i}(\val')} < v^*$. 
    Define 
    \[\eps^{(1)} :=\frac{1}{3}(v^*-\max  (\val,\tau )) > 0 \quad \text{ and } \quad
    S^{(1)} := \set{\alpha^{(1)} = v^*- \eps^{(1)}, \beta^{(1)}= v^*- \frac{1}{2}\eps^{(1)}}. \]
    We now prove the properties:
    \begin{itemize}
        \item \ref{srand:prop1}$^{(1)}$ and \ref{srand:prop2}$^{(1)}$ are true by construction. In particular, $\alpha^{(1)}-\eps^{(1)}=\max(\val,\tau)+\eps^{(1)}>\max(\val)$.
        \item Pick an arbitrary $b_1 \in S^{(1)}$. By Myerson's Lemma (\cref{Myerson's}), $\widetilde{x}_{b_1}(\val_{-i},b_1)\ge x_{i}(\val)>0.$ Furthermore, we have 
        \begin{align*}
            \frac{\widetilde{p}_{b_1}(\val_{-i},b_1)}{\widetilde{x}_{b_1}(\val_{-i},b_1)} &\le \frac{p_{i}(\val')}{x_{i}(\val')} \tag*{by \cref{cor:p/x-increasing}}\\
            &= \tau \le \max  (\val,\tau ) \leq \alpha^{(1)} - \eps^{(1)}.
        \end{align*}
        Thus, \ref{srand:prop3}$^{(1)}$ is satisfied.
    \end{itemize}
 
 \smallskip\noindent \underline{\it Inductive Steps: $1<m \le k+1$.}
        Suppose  $S^{(1)}, \dots, S^{(m-1)}$ and $\eps^{(m-1)}$ are the sequence of sets and the real number respectively that satisfy \ref{srand:prop1}$^{(m-1)}$-\ref{srand:prop3}$^{(m-1)}$. 
    We now construct $S^{(m)}$ and $\eps^{(m)}$. 
    Recall that $\alpha^{(m-1)} = \min S^{(m-1)}$.    
    \begin{claim} \label{claim:confirmed}
        For all $\bid= (b_1,\dots,b_{m-1}) \in S^{(1)}\times\dots\times S^{(m-1)}$ and any $\tau \in (\alpha^{(m-1)} - \eps^{(m-1)}, \alpha^{(m-1)})$, let $\bid_{\tau} = (\val_{-i},\bid,\tau)$.
        Then $\tau$ is a unique bid in  $\bid_{\tau}$ and $\widetilde{x}_{\tau}(\bid_{\tau}) > 0.$
    \end{claim}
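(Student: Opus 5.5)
Uniqueness is immediate. By \ref{srand:prop2}$^{(m-1)}$, every $b_\ell\ge\alpha^{(m-1)}>\tau$, and $\tau>\alpha^{(m-1)}-\eps^{(m-1)}>\max(\val)$. The lower bound $\alpha^{(m-1)}-\eps^{(m-1)}>\max(\val)$ comes from the base-case remark, and analogously from the inductive construction. Hence $\tau$ differs from every entry of $\val_{-i}$ and from every $b_\ell$. The substance of the claim is positivity of $\widetilde{x}_\tau(\bid_\tau)$, which I would prove by contradiction through the censor-free ``inject an unconfirmed fake bid'' attack sketched in the overview, combined with \cref{lem:rand-include-above}.

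Suppose $\widetilde{x}_\tau(\bid_\tau)=0$, and take $\bid^{\circ}:=(\val_{-i},\bid)$ as the honest bid vector. It has at least $n\ge k+1$ entries. I will use \cref{lem:limited-space} in the following form: fewer than $|\bid^\circ|$ users can have positive probability of inclusion in every realization. More simply, since at most $k$ bids are included per realization, some user $j$ has $x_j(\bid^\circ)<1$. In fact I only need a real user $j$ whose bid is not $b_{m-1}$; any user from $\val_{-i}$ works, because $|\val_{-i}|\ge k\ge1$.

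Form the coalition of the miner and this user $j$. Its strategy is for user $j$ to bid truthfully while injecting a fake bid of value $\tau$, and for the miner to run the honest inclusion rule on $\bid_\tau$. By weak symmetry the fake bid $\tau$ has zero confirmation probability, so it costs nothing. By \cref{lem:subvec-mr}, $\mu(\bid_\tau)\ge\mu(\bid^\circ)$. User $j$'s utility may change, but it stays $\ge 0$.

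This is the main obstacle. We need the coalition's utility not to drop, and user $j$ could in principle lose utility. The fix is to choose $j$ with $\util_j(\bid^\circ)=0$. Such a $j$ exists: by \cref{lem:limited-space}, $|W(\bid^\circ)|\le k<|\bid^\circ|$, so some index has zero honest utility. If that index is $j$, individual rationality gives $\util_j(\bid_\tau)\ge0=\util_j(\bid^\circ)$, so the coalition's utility weakly increases. The zero-utility index is a real user, since all bids in $\bid^\circ$ are honest users' bids. By $1$-CRHP (the non-strict version is exactly what is needed here), no honest user is harmed. In particular the bid $b_{m-1}$ keeps utility at least $\widetilde{\util}_{b_{m-1}}(\bid^\circ)>0$, where positivity comes from \ref{srand:prop3}$^{(m-1)}$.

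Finally, I derive the contradiction with \cref{lem:rand-include-above} applied to $\bid_\tau$ and $i$ taken to be the index of $b_{m-1}$. In $\bid_\tau$, $b_{m-1}$ is unique with positive utility, so every unique bid above its ratio must have positive confirmation probability. The ratio $r'=\widetilde{p}_{b_{m-1}}(\bid_\tau)/\widetilde{x}_{b_{m-1}}(\bid_\tau)$ satisfies $b_{m-1}x'-p'\ge b_{m-1}x-p$, where $x,p$ are the values in $\bid^\circ$. This alone does not give $r'<\tau$, since $b_{m-1}$ may be $\beta^{(m-1)}$.

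To bound $r'$, I would apply the same no-harm argument at the level of a lower bid. By \cref{cor:p/x-increasing} in the $b_{m-1}$ coordinate, together with the Myerson identity, the utility at value $b_{m-1}$ is $\int_0^{b_{m-1}}x(t)\,dt$. I would rerun the attack with $b_{m-1}$ replaced by $\alpha^{(m-1)}$, which is allowed because the claim quantifies over all of $S^{(m-1)}$. In $\bid^\circ$ the ratio at $\alpha^{(m-1)}$ is at most $\alpha^{(m-1)}-\eps^{(m-1)}$, so $\alpha^{(m-1)}$ has utility at least $\eps^{(m-1)}\widetilde{x}$. CRHP preserves this utility in $\bid_\tau$, which forces $x'(\alpha^{(m-1)}-r')\ge \eps^{(m-1)}x$. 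This bound does not yet pin down $r'<\tau$. To close the gap I would add a second harm test: if $r'\ge\tau$, a coalition of the miner and user $j$ could again inject $\tau$ and then additionally perform the censor-then-replace swap. I expect this final bookkeeping, showing that the ratio of $b_{m-1}$ stays below $\tau$ after the fake bid is injected, to be the delicate point. I would try to handle it by choosing $\tau$-independent slack from $\eps^{(m-1)}$ and taking $b_{m-1}=\alpha^{(m-1)}$. Once $r'<\tau$, \cref{lem:rand-include-above} gives $\widetilde{x}_\tau(\bid_\tau)>0$, contradicting the assumption.
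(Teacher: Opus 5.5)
\textbf{There is a genuine gap in your final step.} Your uniqueness argument and your first step match the paper: a zero-utility user $j$ exists by \cref{lem:limited-space}, it colludes with the miner to inject the unconfirmed fake bid $\tau$, and non-strict $1$-CRHP then gives $\widetilde{\util}_{b_{m-1}}(\bid_\tau)\ge\widetilde{\util}_{b_{m-1}}(\val_{-i},\bid)>0$.

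The problem, which you flag yourself, is routing the contradiction through \cref{lem:rand-include-above} applied to $\bid_\tau$. That lemma needs the ratio $r'$ of $b_{m-1}$ \emph{in $\bid_\tau$} to lie below $\tau$, and nothing controls it. Utility preservation only gives $r'\le b_{m-1}-\widetilde{x}_{b_{m-1}}(\val_{-i},\bid)\,(b_{m-1}-r)$, where $r$ is the old ratio. This does not force $r'<\tau$ when the old confirmation probability is below $1$. You also cannot restrict to $b_{m-1}=\alpha^{(m-1)}$: the claim must hold for every $\bid\in S^{(1)}\times\cdots\times S^{(m-1)}$, including $b_{m-1}=\beta^{(m-1)}$. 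Your ``second harm test'' is not carried out, so the proof is incomplete as written.

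\textbf{The missing idea} is a censor-then-replace attack that does \emph{not} leave a fake bid at $\tau$.
\begin{itemize}
\item Take $\bid_\tau$ as the honest vector. Let the miner collude with the user $u'$ whose true value is $\tau$. That user is unconfirmed, so the coalition's honest utility is $\mu(\bid_\tau)$.
\item The miner censors the bid $b_{m-1}$, and $u'$ raises its own bid from $\tau$ to $b_{m-1}$. The realized vector is then exactly $(\val_{-i},\bid)$.
\item In $(\val_{-i},\bid)$, \ref{srand:prop3}$^{(m-1)}$ already bounds the ratio of $b_{m-1}$ by $\alpha^{(m-1)}-\eps^{(m-1)}<\tau$, uniformly over $b_{m-1}\in S^{(m-1)}$. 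Hence $u'$ gets $\tau\,\widetilde{x}_{b_{m-1}}(\val_{-i},\bid)-\widetilde{p}_{b_{m-1}}(\val_{-i},\bid)>0$.
\item On the miner side, \cref{lem:miner-revenue-bound} gives $\mu(\val_{-i},\bid)\ge\mu(\bid_\tau)$, because the removed bid $\tau$ pays zero.
\end{itemize}
So the coalition strictly gains, while the honest user bidding $b_{m-1}$ is censored. That user has positive utility in $\bid_\tau$ by your first step, so this contradicts even weak $1$-CRHP. This also explains the division of labor: the first step certifies the victim's positive utility in $\bid_\tau$, while the ratio bound is taken in the vector without $\tau$, where the induction hypothesis controls it.
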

    \begin{proof}
    By \ref{srand:prop2}$^{(m-1)}$, $\tau$ and $b_{m-1}$ are both unique bids in $\bid_{\tau}$.
    In the rest of this proof, we focus on proving that $\widetilde{x}_{\tau}(\bid_{\tau}) > 0.$  Suppose for the sake of contradiction that $\widetilde{x}_{\tau}(\bid_{\tau}) = 0.$ 
    We first show that
    \begin{equation}
        \label{claim:rand-strong-previous-bid}
        \widetilde{\util}_{b_{m-1}}(\bid_{\tau}) \ge \widetilde{\util}_{b_{m-1}}(\val_{-i},\bid).
    \end{equation}

Suppose for the sake of contradiction that $\widetilde{\util}_{b_{m-1}}(\bid_{\tau}) < \widetilde{\util}_{b_{m-1}}(\val_{-i},\bid)$. 
Consider a world where \( (\val_{-i},\bid) \) is the honest input bid vector. 
Since $|(\val_{-i},\bid)|=n+m-2\ge n>k$, by \cref{lem:limited-space}, there must exist some user $j$ whose expected utility is $0$ in the honest case. By \ref{srand:prop3}$^{(m-1)}$, this user is distinct from the user bidding $b_{m-1}$.
Now consider a coalition made up of the miner and user $j$ that performs the following strategy: they inject a fake bid at $\tau$, faking a world with input bid vector $\bid_{\tau}$, and performs the honest mechanism on $\bid_{\tau}$.
The fake bid $\tau$ must be unconfirmed by assumption.
In the honest case, their expected joint utility is just the expected miner revenue $\mu(\val_{-i},\bid)$.
The coalition's strategic utility is $\mu(\bid_{\tau}) + \util_j(\bid_{\tau}) \geq \mu(\val_{-i},\bid)$
by \cref{lem:subvec-mr} and individual rationality.
However, the user bidding $b_{m-1}$ is harmed by the assumption that $\widetilde{\util}_{b_{m-1}}(\bid_{\tau}) < \widetilde{\util}_{b_{m-1}}(\val_{-i},\bid)$, which contradicts $1$-CRHP.   

Now we are ready to prove the claim statement.
Consider a world where $\bid_{\tau}$ is the honest bid vector. Let \( u \) denote the user whose true value is \( b_{m-1} \) and let \( u' \) denote the user who's valuation is \( \tau \). 
Then user $u$'s expected utility in the honest case is
\begin{align*}
    \widetilde{\util}_{b_{m-1}}(\bid_{\tau}) &\ge \widetilde{\util}_{b_{m-1}}(\val_{-i},\bid) &\text{by \cref{claim:rand-strong-previous-bid}}
    & > 0 &\text{by \ref{srand:prop3}}^{(m-1)}
\end{align*}
Consider a coalition made of the user \( u'\) and the miner. 
Since $\widetilde{x}_{\tau}(\bid_{\tau})=0$ by the assumption, the coalition's expected joint utility in the honest case is simply the expected miner revenue $\mu(\bid_{\tau})$. 
Consider the following strategy of the coalition:
\begin{itemize}
    \item The miner ignores the bid $b_{m-1}$ from user $u$;
    \item User $u'$ raises its bid to $b_{m-1}$. The miner then performs the honest inclusion rule on $(\val_{-i},\bid)$ where the bid at $b_{m-1}$ now comes from the coalition.
\end{itemize}
Under this strategy, the coalition's expected joint utility becomes
\begin{align*}
&\mu\left(\val_{-i}, \bid\right) + \left( \tau \cdot \widetilde{x}_{b_{m-1}}\left(\val_{-i}, \bid\right) - \widetilde{p}_{b_{m-1}}\left(\val_{-i}, \bid\right) \right) \\
>\; & \mu\left(\val_{-i}, \bid\right) +  \left( \alpha^{(m-1)} - \eps^{(m-1)} \right) \cdot \widetilde{x}_{b_{m-1}}\left(\val_{-i}, \bid\right) - \widetilde{p}_{b_{m-1}}\left(\val_{-i}, \bid\right) &\text{by choice of }\tau\\
\ge\; & \mu\left(\val_{-i}, \bid\right) +  \left( \frac{\widetilde{p}_{b_{m-1}}\left(\val_{-i}, \bid\right)}{\widetilde{x}_{b_{m-1}}\left(\val_{-i}, \bid\right)} \right) \cdot \widetilde{x}_{b_{m-1}}\left(\val_{-i}, \bid\right) - \widetilde{p}_{b_{m-1}}\left(\val_{-i}, \bid\right) &\text{by \ref{srand:prop3}$^{(m-1)}$}\\
= \; & \mu\left(\val_{-i}, \bid\right) \geq \mu(\bid_{\tau}) &\text{by \cref{lem:miner-revenue-bound}}
\end{align*}
The last step comes from the assumption that  $\widetilde{x}_{\tau}(\bid_{\tau})=0$.
This strategy strictly benefits the coalition while harming honest user $u$, which contradicts $1$-CRHP.
\end{proof}   
Let $q=n+m-1$ denote the final bidder in the concatenated vector $(\val_{-i},\bid,t)$; keep this bidder's identity fixed as its bid $t$ varies.
Consider the value
\begin{equation}\label{eq:srand-delta*}
  \delta:= \min_{\bid \in S^{(1)}\times\dots\times S^{(m-1)}} \left(\alpha^{(m-1)} - \frac{p_{q}(\val_{-i},\bid,\alpha^{(m-1)})}{x_{q}(\val_{-i},\bid,\alpha^{(m-1)})}\right)   .
\end{equation}
By \cref{claim:confirmed} and \cref{cor:increasing}, $\frac{p_{q}(\val_{-i},\bid,\alpha^{(m-1)})}{x_{q}(\val_{-i},\bid,\alpha^{(m-1)})}  < \alpha^{(m-1)}$ for all $\bid \in S^{(1)} \times \dots \times S^{(m-1)}$. Indeed, raising this bidder's bid from any $t\in(\alpha^{(m-1)}-\eps^{(m-1)},\alpha^{(m-1)})$ to $\alpha^{(m-1)}$ strictly increases its nonnegative utility. This uses the fixed index $q$ even when $\alpha^{(m-1)}=b_{m-1}$, so it does not assume equal treatment of tied bids. 
Since there is only a finite number of possible $\bid$ by \ref{srand:prop1}$^{(m-1)},$ we have $\delta> 0.$  
Define 
    \[\eps^{(m)} := \frac{1}{2}\min \left( \delta,\eps^{(m-1)}\right) \quad \text{ and } \quad
    S^{(m)} :=\left\{ \alpha^{(m)} = \alpha^{(m-1)} - \eps^{(m)} , \beta^{(m)} = \alpha^{(m-1)}-\frac{1}{2} \eps^{(m)}\right\} \]
\begin{itemize}
    \item \ref{srand:prop1}$^{(m)}$ follows by construction.
    \item By construction, $\beta^{(m)} < \alpha^{(m-1)}$. Moreover, 
    \begin{align*}
        \alpha^{(m)}  = \alpha^{(m-1)} -\eps^{(m)} &> \max(\val) + \eps^{(m-1)} -\eps^{(m)}  &\text{by \ref{srand:prop2}$^{(m-1)}$}\\
        &\ge \max(\val) + \eps^{(m)}. &\text{by choice of $\eps^{(m)}$}
    \end{align*}
    \ref{srand:prop2}$^{(m)}$ thus follows by combining $\max(\val) + \eps^{(m)}< \alpha^{(m)}<\beta^{(m)}<\alpha^{(m-1)}$ and \ref{srand:prop2}$^{(m-1)}$. 
    \item Let $\bid = (b_1,\dots,b_m) \in S^{(1)} \times\dots\times S^{(m)}$. By \cref{claim:confirmed}, $\widetilde{x}_{b_m}(\val_{-i},\bid)>0.$ Furthermore, we have 
        \begin{align*}
        \frac{\widetilde{p}_{b_m}(\val_{-i}, \bid)}{\widetilde{x}_{b_m}(\val_{-i}, \bid)}
        &  \le  \frac{p_{q}(\val_{-i}, \bid_{-m},\alpha^{(m-1)})}{x_{q}(\val_{-i}, \bid_{-m},\alpha^{(m-1)})} \tag*{by \cref{cor:p/x-increasing}}\\
        &\le \alpha^{(m-1)}- \delta \tag*{by \cref{eq:srand-delta*}}\\
        &\le \alpha^{(m-1)}-2\eps^{(m)} = \alpha^{(m)}- \eps^{(m)}. \tag*{by choice of \text{$\eps^{(m)}$}}
        \end{align*}
        Thus, \ref{srand:prop3}$^{(m)}$ is satisfied.
\end{itemize}

\begin{corollary} \label{cor:rand-no-rev}
    Let $k$ denote the block size.
    If a TFM $({\bf I},{\bf C},{\bf P},\mu)$ satisfies UIC, MIC, and $1$-CRHP, then the miner revenue must be zero, i.e., $\mu(\bid) = 0$ for any $\bid$.
\end{corollary}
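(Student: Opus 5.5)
The plan is to combine \cref{thm:rand-sCRHP-imp-k+1} with MIC through \cref{lem:subvec-mr}. Budget feasibility then turns any positive revenue into a positive confirmation probability on a bid vector with more than $k$ bids.

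\textbf{Step 1.} Suppose for contradiction that some bid vector $\bid$ has $\mu(\bid)>0$. Let $\bid'$ extend $\bid$ by appending $\max(k+1-|\bid|,0)$ additional bids; zeros will do, or any values. Then $|\bid'|\ge k+1$. Since $\bid\subseteq\bid'$, \cref{lem:subvec-mr} gives $\mu(\bid')\ge\mu(\bid)>0$.

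\textbf{Step 2.} Budget feasibility holds pointwise for every realization of the randomness, so $\mu(\bid')\le\sum_i p_i(\bid')$. Hence $p_j(\bid')>0$ for some $j$. Individual rationality says unconfirmed bids pay zero, so a positive expected payment forces $x_j(\bid')>0$.

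\textbf{Step 3.} The vector $\bid'$ has more than $k$ bids and a bid with positive confirmation probability. This contradicts \cref{thm:rand-sCRHP-imp-k+1}, which states that a UIC, MIC, $1$-CRHP TFM is degenerate and confirms no bid when there are more than $k$ bids. Equivalently, $\bid'$ satisfies the hypothesis of \cref{lem:rand-sCRHP-ind-sets}, which produces $\bid^*$ with $|W(\bid^*)|\ge k+1$, contradicting \cref{lem:limited-space}. Therefore $\mu\equiv 0$.

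The only delicate point is Step 2. There I must make sure the per-outcome budget constraint and the rule ``unconfirmed bids pay zero'' pass to expectations. This follows by linearity, since $x_j(\bid')=0$ means bid $j$ is almost surely unconfirmed and therefore almost surely pays zero. The rest is a direct chaining of earlier results, and I expect no real obstacle.
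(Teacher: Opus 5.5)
Your proposal is correct and follows the paper's proof: the paper pads $\bid$ with $k+1$ zero bids, uses \cref{lem:subvec-mr} to keep the revenue positive, applies budget feasibility to obtain a bid with positive confirmation probability, and derives a contradiction with \cref{thm:rand-sCRHP-imp-k+1}. Your explicit argument that a positive expected payment forces a positive confirmation probability, because unconfirmed bids pay zero, supplies a step the paper leaves implicit.
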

\begin{proof}
    Seeking contradiction, suppose that there was a transaction fee mechanism $({\bf I},{\bf C},{\bf P},{\bf R})$ that satisfies UIC, MIC, and $1$-CRHP, and has finite block size $k$, and for a bid vector $\bid,$ $\mu(\bid) >0$. Now consider $\mathbf{b'} = (\bid, \underbrace{0, \dots, 0}_{k+1}) = (b'_1,\dots,b'_n).$ By \cref{lem:subvec-mr}, $\mu(\bid')\ge \mu(\bid)>0.$ Thus, by budget feasibility (see \cref{sec:tfm-model}), there exists $i \in [n]$ where $x_i(\mathbf{b'}) > 0$. Additionally, $n>k,$ so by \cref{thm:rand-sCRHP-imp-k+1}, this is impossible, leading to a contradiction.
\end{proof}

\subsection{Impossibility Under Weak-CRHP}
\label{sec:weak-imp}



In this section, we present the proof of the stronger impossibility result under weak-CRHP. The proof relies on the following technical lemma. 
\begin{lemma}
\label{lem:rand-CRHP-ind-sets}
Suppose a TFM $(\mathbf{I}, \mathbf{C}, \mathbf{P}, \mathbf{R})$ satisfies UIC, MIC, and weak $1$-CRHP, and let $\val$ be a bid vector such that $\mu(\val) > 0$. Then, there exists a bid vector ${\bf b}^*$ such that 
\[|W(\bid^*)| > k.\]
\end{lemma}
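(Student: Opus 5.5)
The plan is to follow the blueprint in \cref{sec:overview-weak-impossibility}. I will build, by induction on $m\in[k+1]$, a background vector $\val^{(m)}\supseteq\val$ with $|\val^{(m)}|=n+(k-1)m$, finite sets $S^{(m)}_1,\dots,S^{(m)}_m$ of size $N_{m+1}$, and a threshold $r^{(m)}$. The invariant has three parts. First, $\{\max(\val^{(m)}),r^{(m)}\}\prec S^{(m)}_m\prec\cdots\prec S^{(m)}_1$. Second, for every $(b_1,\dots,b_m)\in S^{(m)}_1\times\cdots\times S^{(m)}_m$, each $b_\ell$ is unique in $(\val^{(m)},\bid)$. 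Third, the newest bid $b_m$ has positive confirmation probability, and its ratio $\widetilde r_{b_m}$ is at most $r^{(m)}$. The sizes are fixed backwards: $N_{k+2}=2$, and $N_i$ is the constant of \cref{lem:overview-G-R} with $p=k+i-1$, $\xi=1/(n+ki)$, $a=N_{i+1}$. At $m=k+1$ I apply \cref{lem:W-size-set} to $\val^{(k+1)}$ and the reversed sets. All sets have size at least $2$, and $b_{k+1}$ has positive confirmation probability. This gives $|W(\bid^*)|\ge k+1$, which is the claim.

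The engine of each step is the following. Because $\mu(\val)>0$ and $\val\subseteq\val^{(m-1)}$, \cref{lem:subvec-mr} gives $\mu>0$ on every vector obtained by adding $k$ fresh sets $A^{(m)}_1\prec\cdots\prec A^{(m)}_k$, each of size $N_m$. These fresh sets are placed strictly between $\{\max(\val^{(m-1)}),r^{(m-1)}\}$ and $S^{(m-1)}_{m-1}$, and are appended together with the surviving old sets $S^{(m-1)}_1,\dots,S^{(m-1)}_{m-1}$. Budget feasibility then forces some bid to have positive confirmation probability.
\begin{itemize}
\item Let $\phi$ be the index of the lowest-valued such bid. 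It partitions the product, and by pigeonhole one class has density at least $1/(n+km)$.
\item \cref{lem:overview-G-R} then yields sub-products of size $N_{m+1}$ on which $\phi\equiv i^*$.
\item I must show that $i^*$ is one of the fresh coordinates. This splits into two cases, below.
\end{itemize}

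The first case rules out the background indices (and, for $m=1$, the old ones). I use the no-persistent-old-winner statement (\cref{lem:overview-winning-set}), proved from \cref{lem:rand-include-above}, \cref{lem:limited-space}, and \cref{cor:increasing}. The idea is this: if some old index stays confirmed over $k$ increasing sets of size at least $2$, then for the top choices the $k$ new bids, together with the old winner, would all have positive utility. That contradicts \cref{lem:limited-space}. I will prove this as a separate lemma first, with the surviving $S$-coordinates held fixed.

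The second case rules out the surviving $S^{(m-1)}_\ell$ coordinates. Since $\phi$ picks the lowest confirmed bid, the chosen bid would lie above the fresh bids. Suppose it did, and the fresh bids were all unconfirmed. A censor-then-replace deviation, in the style of \cref{lem:rand-include-above}, using a fresh bidder whose value exceeds the ratio bound $r^{(m-1)}$, would then be strictly profitable and harmful. Strictness is available, which is exactly what weak $1$-CRHP needs. I expect this dichotomy to be the main obstacle. In particular, making the strict-profit step work uniformly across the Ramsey-selected product is where care is needed.

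Once $i^*$ is known to be fresh coordinate $\ell$, I fix the other $k-1$ fresh coordinates into $\val^{(m)}$. I then reinsert the chosen bidder at a fixed index $J$ and raise it to $\gamma=\min S^{(m-1)}_{m-1}$, or to some $v^*$ above everything when $m=1$. By \cref{cor:increasing} its utility becomes positive, so $\tau_m:=\max r_J<\gamma$, the maximum being over the finitely many old coordinates. I then set
\[
r^{(m)}:=\max\{\max A^{(m)}_k,\ r^{(m-1)},\ \tau_m\}
\]
and choose $S^{(m)}_m$ with $N_{m+1}$ points in $(r^{(m)},\gamma)$. By monotonicity (\cref{Myerson's}) and ratio monotonicity (\cref{cor:p/x-increasing}), the new bid has positive confirmation probability with ratio at most $r^{(m)}$, which restores the invariant. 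The old sets are pruned to the Ramsey-selected subsets, and uniqueness holds because all sets are disjoint intervals above $\max(\val^{(m)})$.
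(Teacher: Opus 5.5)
Your architecture is the paper's: the same backward choice of $N_m$, the same lowest-confirmed-bid map $\phi$ with pigeonhole and Erd\H{o}s, the same two claims locating $i^*$ among the fresh coordinates, and the same fixed-index raising to $\gamma$ to build $S^{(m)}_m$. The gap is in your second case, which must rule out that $\phi$ selects a surviving coordinate $b_{j^*}\in S^{(m-1)}_{j^*}$.

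Suppose the swap is performed as in \cref{lem:rand-include-above}, i.e., on the current honest vector $(\val^{(m-1)},\val')$. Then it provably cannot be strictly profitable. In that vector $b_{j^*}$ has positive utility, and the top fresh bid $a_k$ has zero confirmation probability. So \cref{lem:rand-include-above} itself forces $\widetilde r_{b_{j^*}}(\val^{(m-1)},\val')\ge a_k$. A fresh user of value $a\le a_k$ taking over $b_{j^*}$'s slot therefore gains $\widetilde x_{b_{j^*}}\,(a-\widetilde r_{b_{j^*}})\le 0$, which weak $1$-CRHP does not forbid. Your invariant cannot help here. It bounds the ratio of $b_{j^*}$ only in $(\val^{(j^*)},b_1,\ldots,b_{j^*})$, the vector where that bid was introduced, and not in the enlarged current vector.

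The missing idea is to let the miner also censor every bid strictly below $b_{j^*}$. These are the other fresh bids, $b_{j^*+1},\ldots,b_{m-1}$, and all of $\val^{(m-1)}\setminus\val^{(j^*)}$. Meanwhile the user holding $a_k$ rebids $b_{j^*}$, so the included vector becomes exactly $(\val^{(j^*)},b_1,\ldots,b_{j^*})$.
\begin{itemize}
\item Since $\phi$ selects the lowest-valued bid with positive confirmation probability, all censored bids are unconfirmed and pay zero. Hence \cref{lem:miner-revenue-bound} shows miner revenue does not drop.
\item The user then gains $a_k\widetilde x_{b_{j^*}}-\widetilde p_{b_{j^*}}>0$ in the reduced vector, because $a_k>r^{(m-1)}\ge r^{(j^*)}$.
\item This step needs the invariant stated for every earlier level $\ell$ in $(\val^{(\ell)},b_1,\ldots,b_\ell)$, as in the paper's \ref{rand:prop4}. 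It persists because the sets only shrink and $r^{(m)}$ only grows, but you must carry it explicitly.
\item You also need the victim's honest utility to be strictly positive, not merely its confirmation probability. Pick all coordinates from $\rminf$ of the Erd\H{o}s-selected sets, note that lowering $b_{j^*}$ to the minimum of its set stays inside $T_{i^*}$, and apply \cref{cor:increasing}.
\end{itemize}
With this, a single profile gives the contradiction, so uniformity over the Ramsey product is not the real obstacle.

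Your first case needs a similar adjustment. The old winner has only positive confirmation probability and may be tied, so raise it into a two-point set between its value and the fresh sets before invoking \cref{lem:W-size-set}.
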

Combining \cref{lem:rand-CRHP-ind-sets} and \cref{lem:limited-space}, we immediately get the following impossibility result:
\begin{theorem}[Restatement of \cref{intro:impossibility-weak-cRHP}]
\label{thm:rand-CRHP-imp-k+1}
    Let $k$ denote the block size.
    If a TFM $({\bf I},{\bf C},{\bf P},{\bf R})$ satisfies UIC, MIC, and weak $1$-CRHP, then miner revenue must be zero, i.e. $\mu(\bid) = 0$ for any $\bid$.
\end{theorem}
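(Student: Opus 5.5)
The theorem follows from \cref{lem:rand-CRHP-ind-sets} and \cref{lem:limited-space}. Suppose some UIC, MIC, weak $1$-CRHP TFM has a bid vector $\val$ with $\mu(\val)>0$. Then \cref{lem:rand-CRHP-ind-sets} produces $\bid^*$ with $|W(\bid^*)|>k$. But \cref{lem:limited-space} uses only MIC and weak $1$-CRHP, and it gives $|W(\bid)|\le k$ for every $\bid$. This is a contradiction. The real work is therefore in proving \cref{lem:rand-CRHP-ind-sets}, and I would follow the overview in \cref{sec:overview-weak-impossibility}.

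\textbf{Step 1: the no-persistent-old-winner lemma.} This is the formal version of \cref{lem:overview-winning-set}. Suppose, for contradiction, that an old index $i\le n$ keeps $x_i(\val,\bid)>0$ for every $\bid\in S_1\times\cdots\times S_k$. Applying \cref{lem:W-size-set}-style reasoning on the $k$ new layers, I would show that all $k$ new bids acquire positive utility; each new bid lies above $\max(\val)$ and hence above the old winner's ratio. Together with the old winner this gives $k+1$ winners, contradicting \cref{lem:limited-space}. All deviations used here (censor-then-replace in \cref{lem:rand-include-above}, and fixing randomness) are strictly profitable, so only weak $1$-CRHP is needed.

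\textbf{Step 2: the inductive construction.} I would build sets $V^{(m)}=\{\val^{(m)}\}\times S_1^{(m)}\times\cdots\times S_m^{(m)}$, with sizes $N_m$ chosen backwards from the Erd\H{o}s lemma (\cref{lem:overview-G-R}). The invariant is that every $b_\ell$ is confirmed with positive probability and that there is a ratio bound $r^{(m)}<\min S_m^{(m)}$.

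At step $m$, I append $k$ fresh sets $A_j^{(m)}$ lying strictly between $\max\{r^{(m-1)},\max(\val^{(m-1)})\}$ and $S_{m-1}^{(m-1)}$. By \cref{lem:subvec-mr}, revenue stays positive, so budget feasibility forces some bid to be confirmed. I then partition the product by $\phi$, the lowest-valued confirmed index, and apply Erd\H{o}s to extract a sub-box on which $\phi$ is a constant $i^*$.

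\textbf{Step 3: showing $i^*$ is a fresh coordinate.} The bound $i^*$ is not an old index in $\val^{(m-1)}$ comes from Step 1, with the earlier $b_\ell$ held fixed. The bound $i^*$ is not one of the earlier layers $b_1,\dots,b_{m-1}$ comes from the censor-then-replace argument: a lower fresh bid with zero confirmation could profitably collude with the miner against the minimal winner.

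I then freeze the other $k-1$ fresh coordinates into $\val^{(m)}$ and reinsert the selected bid at a fixed index $J$. Raising it to $\gamma=\min S_{m-1}$ gives positive utility by \cref{cor:increasing}. Finiteness of the sets gives $\tau_m<\gamma$, and I place $S_m^{(m)}$ in the interval $(\max\{\max A_k^{(m)},r^{(m-1)},\tau_m\},\gamma)$. \cref{cor:p/x-increasing} preserves the ratio bound, and \cref{lem:rand-include-above} propagates positive confirmation to the larger earlier layers.

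At $m=k+1$, taking the maximum element of each size-two set and applying \cref{lem:W-size-set} gives $k+1$ winners.

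The main obstacle is Step 3, excluding an old or earlier-layer index as $i^*$ using only strictly profitable deviations. I would first try to pair each zero-utility situation with a nearby strictly larger bid, via the strict inequality in \cref{cor:increasing}, so that every deviation used is strictly profitable.
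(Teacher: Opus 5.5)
Your proposal is correct and follows the paper's route. The theorem is exactly \cref{lem:rand-CRHP-ind-sets} combined with \cref{lem:limited-space} (which uses only MIC and weak $1$-CRHP), and your sketch of the lemma matches the paper's Erd\H{o}s-based induction: you exclude an old index via \cref{lem:W-size-set} after lifting the old winner slightly (the paper uses a two-point set $V$ above it), and you exclude an earlier layer by censor-then-replace against the $\phi$-minimal winner. The one detail to make explicit is in that censor-then-replace step: the deviation must reduce the bid vector all the way to $(\val^{(j^*)},b_1,\dots,b_{j^*})$, with the colluding user now supplying $b_{j^*}$, rather than merely swapping the owner of $b_{j^*}$, because the inductive ratio bound for $b_{j^*}$ is only known in that vector. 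This reduction is costless: every other removed bid lies below $b_{j^*}$ and is therefore unconfirmed by the minimality of $\phi$, so revenue is preserved by \cref{lem:miner-revenue-bound} and \cref{lem:subvec-mr}.
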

\begin{proof}
Note that \cref{lem:limited-space} holds for weak $1$-CRHP TFMs. The theorem thus follows from \cref{lem:rand-CRHP-ind-sets} and \cref{lem:limited-space}. 
\end{proof}
\paragraph{Proof of \cref{lem:rand-CRHP-ind-sets}}
As with previous proofs, we start with a bid vector $\val = (v_1,\dots,v_n)$ such that $\mu(\val) > 0$ and construct inductively a sequence of sets of vectors, such that in the $m$-th step, we can obtain a vector $\bid^{(m)}$ such that $|W(\bid^{(m)})|\geq m$.
Then when $m = k+1$, this contradicts \cref{lem:limited-space}.
To aide in our proof, we utilize the following implication of Erd\H{o}s's corollary~\cite[Corollary, p.~188]{Erdos1964} to determine the size of our sets.
\begin{lemma}{(from \cite[Corollary, p.~188]{Erdos1964})}\label{thm:G-R}
For all positive integers $p,a$ and all $\xi\in(0,1]$ there exists $N$ such that, if 
$G \subseteq A_1 \times \cdots \times A_p$, 
$|A_i| = N$ for each $i$, and if 
$|G| \ge \xi N^p$, 
then there exist subsets $B_i \subseteq A_i$ with 
$|B_i| \ge a$ such that 
\[
B_1 \times \cdots \times B_p \subseteq G.
\]
\end{lemma}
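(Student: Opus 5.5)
We prove the statement by induction on $p$, in the strengthened form that a single threshold $N_0(p,a,\xi)$ works for \emph{every} $N\ge N_0$. The uniformity in $N$ is what lets the induction go through, because the reduced instance has the same side length $N$ but a different density parameter. This is the classical Kővári--Sós--Turán-type double-counting argument behind Erd\H{o}s's corollary.

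\emph{Base case $p=1$.} Take $N_0(1,a,\xi):=\lceil a/\xi\rceil$. If $G\subseteq A_1$ and $|G|\ge \xi N\ge a$, then $B_1:=G$ works.

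\emph{Inductive step, $p\ge 2$.} We view $G$ as a bipartite relation between $Y:=A_1\times\cdots\times A_{p-1}$, which has $N^{p-1}$ elements, and $A_p$. For $y\in Y$ let $d(y):=|\{z\in A_p:(y,z)\in G\}|$, so that $\sum_y d(y)=|G|\ge \xi N^p$ and the average degree is at least $\xi N$. We count pairs $(y,Z)$ in which $Z\subseteq A_p$ is an $a$-subset contained in the neighbourhood of $y$. This count equals $\sum_{y}\binom{d(y)}{a}$. The function $f(x)=\binom{x}{a}$ for $x\ge a-1$ and $f(x)=0$ otherwise is convex and nondecreasing. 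Jensen therefore gives
\[
\sum_{y}\binom{d(y)}{a}\ \ge\ N^{p-1}\binom{\xi N}{a}.
\]
There are only $\binom{N}{a}$ choices of $Z$, so by pigeonhole some $a$-set $Z^*\subseteq A_p$ is contained in the neighbourhoods of at least
\[
N^{p-1}\binom{\xi N}{a}\Big/\binom{N}{a}\ \ge\ N^{p-1}\Bigl(\frac{\xi N-a}{N}\Bigr)^{a}\ \ge\ \Bigl(\frac{\xi}{2}\Bigr)^{a}N^{p-1}
\]
tuples $y$, provided $N\ge 2a/\xi$. Let $G'\subseteq Y$ be the set of these tuples. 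Then $G'$ has density at least $\xi':=(\xi/2)^a\in(0,1]$ in a $(p-1)$-fold product of sets of size $N$.

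We now set $N_0(p,a,\xi):=\max\{\lceil 2a/\xi\rceil,\ N_0(p-1,a,\xi')\}$. By the inductive hypothesis, applicable because $N\ge N_0(p-1,a,\xi')$, there are $B_i\subseteq A_i$ with $|B_i|\ge a$ for $i<p$ such that $B_1\times\cdots\times B_{p-1}\subseteq G'$. Setting $B_p:=Z^*$ then gives $B_1\times\cdots\times B_p\subseteq G$, since every $y\in G'$ is adjacent to all of $Z^*$.

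The main technical points are the two estimates in the middle step. The first is justifying Jensen with the truncated binomial; note that $\xi N$ may be non-integral, which is why we use the real-variable $\binom{x}{a}=x(x-1)\cdots(x-a+1)/a!$. The second is the ratio bound $\binom{\xi N}{a}/\binom{N}{a}\ge((\xi N-a)/N)^a$, which follows by comparing the factors termwise. Everything else is bookkeeping. The stated lemma is then the special case $N=N_0$.
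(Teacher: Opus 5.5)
Your proof is correct. It differs from the paper mainly in that the paper gives no proof: it imports the statement as an implication of Erd\H{o}s's corollary, which is phrased for $r$-uniform hypergraphs on $n$ vertices with at least $cn^r$ edges, and asserts that such a hypergraph contains a complete $r$-partite $K^{(r)}(\ell,\dots,\ell)$. Using that citation requires a reduction the paper leaves implicit. One views $G$ as a $p$-partite $p$-graph on the disjoint union of the $A_i$, with $pN$ vertices and at least $(\xi/p^p)(pN)^p$ edges. One then observes that any copy of $K^{(p)}(a,\dots,a)$ in it has its classes aligned with the $A_i$: fix one transversal edge $\{c_1,\dots,c_p\}$; any other vertex $u$ in the class of $c_j$ forms an edge with the remaining $c_\ell$, so $u$ lies in the same part as $c_j$. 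Your argument is essentially the induction-on-uniformity plus K\H{o}v\'ari--S\'os--Tur\'an double counting that underlies Erd\H{o}s's theorem, but run directly in the product setting. It therefore needs no such reduction, is self-contained, and gives an explicit threshold, with the density degrading as $\xi\mapsto(\xi/2)^a$ over the $p-1$ steps; the citation buys only brevity. The technical points you flag do go through. With $P(x)=\prod_{j=0}^{a-1}(x-j)$ and $a\ge 2$, Rolle's theorem puts all roots of $P'$ and $P''$ in $(0,a-1)$, so $P/a!$ is increasing and convex on $[a-1,\infty)$. The truncation at $a-1$ is a convex kink, from slope $0$ to slope $1/a$. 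The Jensen step needs $\xi N\ge a-1$, which your choice $N\ge 2a/\xi$ guarantees. Your uniform-in-$N$ strengthening is what closes the induction. The paper itself only needs the ``there exists $N$'' form, since it applies the lemma with $|A_i|$ equal to the specific $N_m$.
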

Like in the proof of \cref{lem:rand-sCRHP-ind-sets}, in each inductive step, we build a sequence of sets $S_1^{(m)},...,S_m^{(m)}$, each of size at least two.
To ensure that we can pick sets of size at least two in step $k+1$, we define the following sequence  $N_m$ for $m\in[k+2]$ backwards:
\begin{equation} \label{eqn:def-N}
\begin{split}
    &N_{k+2} := 2, \\
    &N_m := N\ \text{in }\text{\cref{thm:G-R} with }(p=k+m-1,\, 
    \xi{=}\tfrac{1}{n{+}m\cdot k},\,a{=}N_{m+1}), \text{ for }1\le\!m\!\le\!k+1.
\end{split}
\end{equation}
Specifically, for any \( m \in [k+1] \), we construct a sequences of sets \( S^{(m)}_1, \dots, S^{(m)}_m \), bid vectors $\val^{(m)}$,
and real numbers $\eps^{(m)}$ such that they satisfy the properties listed in \ref{rand:prop1}–\ref{rand:prop4} below.
    \begin{enumerate} [label=(${\sf Prop^*\text{-}\arabic*}$),leftmargin = 18mm]
        \item \label{rand:prop1} For all \( \ell \in [m] \), 
        $|S^{(m)}_{\ell}| =  N_{m+1}$.
        \item \label{rand:prop2_val} $\val \subseteq \val^{(1)}
        \subseteq \dots\subseteq\val^{(m)}$, and $|\val^{(m)}| = n + (k-1)m$. 
        \item \label{rand:prop2}For $m\geq 2$ and $\ell \in [m-1]$, 
        \[S_{\ell}^{(m)}\subseteq S_{\ell}^{(m-1)}\subseteq...\subseteq S_{\ell}^{(\ell)}.\]
        \item 
        \label{rand:prop3}
        $S^{(m)}_m \prec \cdots \prec S^{(m)}_1.$
        Furthermore,
        \[\max(\val^{(m)}) \le \min S^{(m)}_m - \eps^{(m)}\]
\item \label{rand:prop4} 
Fix any \( \ell \in [m] \) and let
$\bid = (b_1, \dots, b_\ell) \in S^{(m)}_1 \times \cdots \times S^{(m)}_\ell.$
Then, the following hold:
\[
\widetilde{x}_{b_\ell}\left(\val^{(\ell)},\bid\right) > 0,
\quad \text{and} \quad 
\frac{\widetilde{p}_{b_\ell}\left(\val^{(\ell)},\bid\right)}{\widetilde{x}_{b_\ell}\left(\val^{(\ell)},\bid\right)} \le \min S^{(m)}_m - \eps^{(m)}.
\]
\end{enumerate}
    We first show how the lemma follows assuming that the above properties holds for any $m\in[k+1]$ and give the induction proof afterwards.
    Let $m= k+1$. Consider $\val^{(k+1)}$ and the sequence of sets $S^{(k+1)}_1,...,S^{(k+1)}_{k+1}$.
    By \ref{rand:prop1} and \cref{eqn:def-N}, $|S^{(k+1)}_\ell|\ge2$ for all $\ell \in [k+1]$. 
    From \ref{rand:prop3}, we also know that $S^{(k+1)}_{k+1} \prec\cdots\prec S^{(k+1)}_1$, 
    and that for $\bid = (b_1,\dots,b_m) \in S^{(k+1)}_1\times\cdots\times S^{(k+1)}_{k+1}$, each $b_\ell$ is unique in $(\val^{(k+1)}, \bid)$. 
    Furthermore, $\widetilde{x}_{b_m}(\val^{(k+1)}, \bid)> 0$ by \ref{rand:prop4} with $\ell=m=k+1$.
 By weak symmetry, we may reorder these unique bids in increasing order of their sets. Thus, the assumptions of \cref{lem:W-size-set} hold for $\val^{(k+1)}$ and the sets  $S^{(k+1)}_{k+1},...,S^{(k+1)}_1$, which means that some $\bid^*$ exists where $|W(\bid^*)| \ge k+1$.
 
\paragraph{Induction Proof.} In the rest of the proof, we focus on the induction for proving properties \ref{rand:prop1} - \ref{rand:prop4}.
    For clarity, in our induction, we use superscripts to refer to the propositions for a particular $m \in [k+1]$, i.e. \ref{rand:prop1}$^{(m)}$ represents property \ref{rand:prop1} for $m$.
    
\smallskip\noindent \underline {\it Base Case: $m=1$.} Define a sequence of sets $A_z$ for $z\in[k]$ such that 
$$\set{\max(\val) +z+1} \prec A_z \prec \set{\max(\val) + z+2}$$
     and $|A_z | = N_1$ for all $z \in [k].$ 
     By construction $A_1\prec \dots\prec A_k$.
    We define $\phi(\cdot)$ to select a lowest-valued bid among those with positive confirmation probability, breaking ties by the index:
\begin{equation}
\label{eqn:defn-phi-B-rand}
\phi\left({\bf \widehat{b}}\right)
:=
\min\left\{
i:
x_i({\bf \widehat{b}})>0
\text{ and }
\widehat b_i
=
\min_{j:x_j({\bf \widehat{b}})>0}\widehat b_j
\right\}.
\end{equation}
    Consider the bid vector $(\val,\val')$ for $\val' \in A_1\times\dots\times A_{k}.$ 
    By \cref{lem:subvec-mr}, $\mu(\val,\val')\ge\mu(\val)>0$, and therefore, by budget feasibility, there exists an index $i$ such that $x_i(\val,\val')>0$. 
    Thus, $\phi(\val,\val')$ is well defined. 
    The $n+k$ possible values of $\phi(\val,\val')$ naturally give the following partition of $A_1\times\dots\times A_{k}:$
    \begin{align*}
        T_i &:= \left\{ \val' \in A_1 \times \dots \times A_{k} : \phi(\val,\val') = i\right\}
    \end{align*}
    for $i\in[ n+k]$. 

    Since this is a partition of  $k$ sets $A_1,\dots,A_k$ all of size $N_1$, there must exist a smallest index $i^*$ such that $|T_{i^*}| \ge \frac{1}{n + k} \cdot N_1^{ k}$.
    By \cref{thm:G-R} and the definition of $N_1$ and $N_2$, we know that there exists subsets $B_z \subseteq A_z$ for all $z \in[k]$ with $|B_z|= N_2$ such that $B_1 \times\dots\times B_k \subseteq T_{i^*}$.
    \begin{claim}\label{claim:rand-i-range}
        $i^* > n$.
    \end{claim}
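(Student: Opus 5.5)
Suppose for contradiction that $i^*\le n$, so the old bid $v_{i^*}$ has positive confirmation probability in $(\val,\val')$ for every $\val'\in B_1\times\cdots\times B_k$. This is exactly the forbidden situation in the no-persistent-old-winner lemma (\cref{lem:overview-winning-set}). We have $\{\max(\val)\}\prec B_1\prec\cdots\prec B_k$ and $|B_z|=N_2\ge 2$. The plan is therefore to prove that lemma directly in this instance: we derive $|W(\bid^*)|\ge k+1$ for some $\bid^*$, which contradicts \cref{lem:limited-space}.

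The strategy is to build $k+1$ winners. One is the old bid $v_{i^*}$, raised as needed. The other $k$ are the new bids $b_1,\dots,b_k$, each chosen in $\rminf(B_z)$.

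Step 1 uses the fact that $\phi$ picks a lowest-valued bid with positive confirmation probability. Every new bid $b_z\in B_z$ exceeds $\max(\val)\ge v_{i^*}$, so $v_{i^*}$ is confirmed with positive probability. I would first show that $v_{i^*}$ has positive utility in a nearby profile. For this I would use the strict gap between $v_{i^*}$ and the new bids, together with \cref{cor:increasing}, applied by lowering $v_{i^*}$ slightly or using another element of the relevant set.

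Step 2 shows that each new bid $b_z$ has positive confirmation probability. If $v_{i^*}$ (suitably perturbed) has positive utility, its ratio $p/x$ is below its own value, which in turn is below every $b_z$. Then \cref{lem:rand-include-above} applies and gives each unique $b_z$ positive confirmation probability in that profile. This argument does not need strong CRHP, only weak $1$-CRHP.

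Step 3 upgrades positive confirmation probability to positive utility. Take $\widehat b_z<b_z$ in $B_z$, obtain positive confirmation probability at $\widehat b_z$ by Step 2, and raise it to $b_z$ via \cref{cor:increasing}, one coordinate at a time as in \cref{lem:W-size-set}. This makes all $k$ new bids winners. Together with $v_{i^*}$, that gives $k+1$ winners, contradicting \cref{lem:limited-space}.

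The main obstacle is Step 1. The old bid $v_{i^*}$ has fixed value, so we cannot raise it within a set of size at least $2$ the way the new bids are handled. Moreover, positive confirmation probability does not by itself give positive utility: its ratio could equal $v_{i^*}$ exactly.

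My first attempt at this step is to use the freedom in $\val'$ instead. Raising a new coordinate $b_z$ between two values of $B_z$ while keeping $v_{i^*}$ confirmed should let a censor-then-replace deviation, in the style of \cref{lem:rand-include-above}, extract utility. In that deviation, a user with value $b_z$ colludes with the miner and takes over the position of $v_{i^*}$, paying at most $v_{i^*}<b_z$ per unit of confirmation probability. This strictly benefits the coalition.

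We need this deviation to harm someone in order to contradict weak $1$-CRHP. If no one is harmed, I would instead use it to show that $b_z$ must itself have positive utility. Either outcome yields $k$ new-bid winners. Then $|\val|+k$ bids, at least $k$ of them winners, plus the minimality of $\phi$, forces a contradiction via \cref{lem:limited-space} applied to a profile with one more new winner.
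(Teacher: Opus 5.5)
There is a genuine gap at Step 1, and the obstacle you name there does not actually exist. Your Steps 2--3 are the argument of \cref{lem:W-size-set}, which is also how the paper finishes. What is missing is the observation that the value $v_{i^*}$ is not ``fixed'': the goal is only to exhibit \emph{some} bid vector with $k+1$ winners, so you may replace $v_{i^*}$ by a higher value.

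The paper picks a two-element set $V\subset(v_{i^*},\min B_1)\setminus\{v_1,\dots,v_n\}$, which exists because $\min B_1>\max(\val)$. For every $v^*\in V$ and every $\val'\in B_1\times\cdots\times B_k$, Myerson monotonicity gives $x_{i^*}((\val_{-i^*},v^*),\val')\ge x_{i^*}(\val,\val')>0$. By \cref{cor:increasing}, the raised bid then has utility strictly larger than at $v_{i^*}$, hence strictly positive, so the ``ratio could equal $v_{i^*}$'' worry disappears. Choosing $V$ to avoid the old values also makes the raised bid unique. \cref{lem:rand-include-above} requires this uniqueness, and $v_{i^*}$ itself need not have it. Then \cref{lem:W-size-set}, applied to $\val_{-i^*}$ and the sets $V,B_1,\dots,B_k$, gives $|W(\bid^*)|\ge k+1$, contradicting \cref{lem:limited-space}.

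Note also that ``lowering $v_{i^*}$ slightly'' goes the wrong way. Lowering can destroy the positive confirmation probability, and \cref{cor:increasing} only gives a strict gain when moving upward from a point with positive allocation.

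Your fallback does not repair this, for three reasons.
\begin{itemize}
\item The censor-then-replace deviation against $v_{i^*}$ contradicts weak $1$-CRHP only if $v_{i^*}$ has strictly positive honest utility. That is exactly what Step 1 was supposed to supply, so this branch is circular.
\item In the ``no one is harmed'' branch, nothing forces $b_z$ to have positive utility. A zero-utility old bid together with an unconfirmed $b_z$ is perfectly consistent with weak $1$-CRHP.
\item Ending with ``at least $k$ winners'' yields no contradiction, since \cref{lem:limited-space} only rules out $k+1$. The appeal to the minimality of $\phi$ to produce ``one more new winner'' is not an argument.
\end{itemize}
The $(k+1)$-st winner has to be the raised old bid, which is precisely the piece your proposal leaves out.
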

    \begin{proof}
       Suppose for the sake of contradiction that $i^* \le n.$ 
       Choose a two-element set $V \subset (v_{i^*},\min B_1)\setminus\{v_1,\ldots,v_n\}$. This is possible because $\min B_1>\max(\val)$, and gives $V\prec B_1$.
       From Myerson's Lemma (\cref{Myerson's}), for $\val' \in B_1 \times \dots \times B_k$ and  $v^* \in V$, $x_{i^*}((\val_{-i^*},v^*), \val') \ge x_{i^*}(\val,\val') > 0$ where the last step comes from the definition of $i^*$.
       Thus, by construction, the hypothesis of \cref{lem:W-size-set} is satisfied with respect to $\val_{-i^*}$ and the sets $V,B_1,\dots,B_k$, and we can conclude that there exists $\bid^*$ such that $|W(\bid^*)|\ge k+1,$ which contradicts \cref{lem:limited-space}. 
    \end{proof} 
    Thus, we know that for every bid vector $(\val,\widehat{\val})$ where $\widehat{\val} \in B_1\times\cdots\times B_k\subseteq T_{i^*}$, we have $x_{i^*}(\val,\widehat{\val}) > 0$, so the bid falls in $B_{i^*-n}\subseteq A_{i^* - n}$ has a positive confirmation probability.
    Pick an arbitrary $\widehat{\val}\in B_1\times\cdots\times B_k\subseteq T_{i^*}$.
    Let $v^* = \max(\val) + k+ 3 > \widehat{v}_{i^*-n}$. 
    Consider a bid vector $\val^* = ((\val,\widehat{\val})_{-i^*}, v^*)$.
    Since $x_{i^*}(\val,\widehat{\val}) > 0$, by \cref{cor:increasing}, we know that $\util_{i^*}(\val^*)>\util_{i^*}(\val,\widehat{\val})\ge 0$, and thus $\tau:=\frac{p_{i^*}(\val^*)}{x_{i^*}(\val^*)} < v^*$. 
We define the following:
\begin{itemize}[leftmargin = 5mm]
    \item $\val^{(1)} := (\val,\widehat{\val})_{-i^*}$;
    \item $\eps^{(1)} :=\frac{1}{2}(v^*-\max\{\max A_k,\tau\}) > 0$;
    \item \( S^{(1)}_1  \text{ s.t. } \set{v^* - \eps^{(1)}}\prec S^{(1)}_1\prec \set{v^*} \text{ and } |S^{(1)}_1| = N_2\).
\end{itemize}
    We now prove the properties:
\begin{itemize}[leftmargin = 5mm]
    \item \ref{rand:prop1}$^{(1)}$ and \ref{rand:prop2_val}$^{(1)}$ follow by construction.
    \item \ref{rand:prop2}$^{(1)}$ is trivially satisfied.
    \item  \ref{rand:prop3}$^{(1)}$ follows from construction and the fact that 
    \begin{align*}
        \min S^{(1)}_{1} - \eps^{(1)} &> v^*-2\eps^{(1)} \\
        &= \max\{\max A_k,\tau\}\\
        &\ge \max(\val^{(1)}). \\
    \end{align*}
    \item \ref{rand:prop4}$^{(1)}$, note that for any $b_1 \in S^{(1)}_1$, and $b'\in B_{i^*-n}$, we know that $b_1>v^*-\eps^{(1)}>\max A_k\ge b'$. Reinserting $b'$ into $\val^{(1)}$ gives, up to a permutation, a vector in $\{\val\}\times B_1\times\cdots\times B_k$, so weak symmetry and the definition of $T_{i^*}$ imply that this bid has positive confirmation probability. Thus,
    \begin{align*}
        \widetilde{x}_{b_1}(\val^{(1)},b_1) &\ge \widetilde{x}_{b'}(\val^{(1)},b') \tag*{by Myerson's Lemma (\cref{Myerson's})}\\
        &> 0. \tag*{by definition of \text{$i^*$}}
    \end{align*}
    Furthermore,
        \begin{align*}
        \frac{\widetilde{p}_{b_1}(\val^{(1)},b_1)}{\widetilde{x}_{b_1}(\val^{(1)},b_1)}
        &\le \frac{\widetilde{p}_{v^*}(\val^{(1)},v^*)}{\widetilde{x}_{v^*}(\val^{(1)},v^*)} \tag*{by \cref{cor:p/x-increasing}}\\
        & = \tau \\
        & \le \max\{\max A_k,\tau\}\\
        &\le \min S^{(1)}_1 - \epsilon^{(1)}.
        \end{align*}
        Thus, \ref{rand:prop4}$^{(1)}$ is satisfied.
\end{itemize}
\smallskip\noindent \underline {\it Inductive Step: $m>1$.} Suppose $S^{(m-1)}_1, \dots, S^{(m-1)}_{m-1}$, $\val^{(1)}, \dots, \val^{(m-1)}$, and $\eps^{(m-1)}$ are the sequences of sets, bid vectors, and the real number respectively that satisfy \ref{rand:prop1}$^{(m-1)}$-\ref{rand:prop4}$^{(m-1)}$. 
For simplicity, let $\gamma := \min S^{(m-1)}_{m-1}$, and 
\begin{equation}\label{def:alpha-delta}
    \alpha := \gamma -\eps^{(m-1)}, \quad\Delta := \frac{\eps^{(m-1)}}{k+2}.
\end{equation}
Then for \( z \in [k] \), redefine $A_z$ such that 
\begin{equation}\label{eq:A-Z}
   \set{\alpha+ z\cdot \Delta } \prec A_z \prec \set{\alpha + (z+1)\cdot \Delta } 
\end{equation}
and $|A_z| = N_m$ for $z \in [k]$. By construction, we have $A_1\prec\dots\prec A_k$.

    Consider the bid vector $\val^* = (\val^{(m-1)},\val')$ where $\val' \in \left(\prod^{k}_{z=1}A_z\right) \times \left(\prod^{m-1}_{\ell=1}S^{(m-1)}_\ell\right).$ 
    Then $|\val^*| = n+k\cdot m$.
    Recall the definition of $\phi(\cdot)$ from \cref{eqn:defn-phi-B-rand}.
    Again, by \cref{lem:subvec-mr} and \ref{rand:prop2_val}$^{(m-1)}$, $\mu(\val^*) \ge \mu(\val) > 0$. 
    Therefore, by budget feasibility, there exists $i \in [n + k\cdot m]$ such that $x_i(\val^*)>0$, and thus, $\phi(\val^*)$ is well defined.
    The $n+ m \cdot k$ possible values of $\phi(\val^*)$ naturally give the following partition of $\left(\prod^{k}_{z=1}A_z\right) \times \left(\prod^{m-1}_{\ell=1}S^{(m-1)}_\ell\right)$.
\begin{align*}
    T_i &:= \left\{ 
    \val' \in 
    \left(\prod^{k}_{z=1}A_z\right) \times \left(\prod^{m-1}_{\ell=1}S^{(m-1)}_\ell\right) 
    : \phi( \val^{(m-1)},\val') = i 
    \right\} 
\end{align*}
    for $i\in[n+k\cdot m]$. 
    The sequence $\{T_i\}_{i\in[n+k\cdot m]}$ is a partition of the product of $k +m -1$ sets $A_1,\dots,A_k,S_1,\dots S_{m-1}$, each of size $N_m$ by \ref{rand:prop1}$^{(m-1)}$.
    Thus, by the pigeon hole principle, let $i^*$ be the smallest index such that $|T_{i^*}| \ge \frac{1}{n + m\cdot k} \cdot N_m^{ k+m-1}$. 
   By \cref{thm:G-R} and the definitions of $N_m$ and $N_{m+1}$, there exist subsets
\[
B_z \subseteq A_z \ \text{for all } z \in [k], 
\quad\text{and}\quad 
B_z \subseteq S^{(m-1)}_{z-k} \ \text{for all } z \in \{k+1, \dots, k+m-1\},
\]
each satisfying $|B_z| = N_{m+1}$, such that
\[
B_1 \times \cdots \times B_{k+m-1} \subseteq T_{i^*}.
\]
    
    Note that in any vector $\val^* = (\val^{(m-1)},\val')$ where $\val' \in \left(\prod^{k}_{z=1}A_z\right) \times \left(\prod^{m-1}_{\ell=1}S^{(m-1)}_\ell\right)$, the indices can be understood as follows: 
    \begin{itemize}[leftmargin = 5mm]
        \item The bids $\val^*_i$ for $1\leq i\leq n+(k-1)(m-1)$ come from $\val^{(m-1)}$: $\val^*_i = \val^{(m-1)}_i$.
        \item The bids $\val^*_i$ for $n+ (k-1)\cdot (m-1) + 1\leq i\leq n+ (k-1)\cdot (m-1) +k$ come from $A_1\times\dots\times A_k$: $\val^*_i \in A_{i-n-(k-1)(m-1)}$. 
        \item The bids $\val^*_i$ for $n+ (k-1)\cdot (m-1) + k+1\leq i\leq n+ k\cdot m$ come from $S_1^{(m-1)}\times\dots\times S^{(m-1)}_{m-1}$: $\val^*_i \in S^{(m-1)}_{i-n-(k-1)(m-1)-k}$. 
    \end{itemize}
    Let $L := n+ (k-1)\cdot (m-1)$ and $R := n+ (k-1)\cdot (m-1) +k$ be the left and right boundary indices of bids that come from $A_1$ through $A_k$.
    \begin{claim}\label{claim:ind-rand-i-range-down}
        $L <i^*$.
    \end{claim}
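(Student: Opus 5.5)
The plan is to mirror the proof of \cref{claim:rand-i-range}. We argue by contradiction: if the lowest positively-confirmed bid sits among the old coordinates, we can build $k+1$ ordered sets of new bids that, together with one raised old bid, satisfy the hypotheses of \cref{lem:W-size-set}. That contradicts \cref{lem:limited-space}.

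Suppose $i^*\le L$, so the bid at index $i^*$ is a coordinate of $\val^{(m-1)}$. By the choice of $B_1\times\cdots\times B_{k+m-1}\subseteq T_{i^*}$, we have $x_{i^*}(\val^{(m-1)},\val')>0$ for every $\val'$ in this product. First, fix the coordinates coming from the old sets. Pick an arbitrary $\mathbf{s}=(s_{k+1},\dots,s_{k+m-1})\in B_{k+1}\times\cdots\times B_{k+m-1}$ and define the background vector
\[
\mathbf{w}:=\left(\val^{(m-1)}_{-i^*},\mathbf{s}\right).
\]
Then $x_{i^*}$ remains positive when the $i^*$-th bid $v^{(m-1)}_{i^*}$ is placed back with any $(b_1,\dots,b_k)\in B_1\times\cdots\times B_k$.

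Next, choose a two-element set $V$. By \ref{rand:prop3}$^{(m-1)}$, $v^{(m-1)}_{i^*}\le\max(\val^{(m-1)})\le \gamma-\eps^{(m-1)}=\alpha$, and by \cref{eq:A-Z}, $\min B_1>\alpha+\Delta$. So we take a two-element $V\subset(\alpha,\alpha+\Delta)$ avoiding all values in $\val^{(m-1)}$; this gives $\{v^{(m-1)}_{i^*}\}\prec V\prec B_1$. Raising the fixed bidder $i^*$ from $v^{(m-1)}_{i^*}$ to any $v\in V$ keeps its confirmation probability positive by monotone allocation in Myerson's Lemma (\cref{Myerson's}). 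Index-based monotonicity is used here, so ties of $v^{(m-1)}_{i^*}$ with other old bids do not matter.

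Now check uniqueness and ordering for \cref{lem:W-size-set}. Values in $V$ and in $B_1,\dots,B_k$ all lie strictly between $\max(\val^{(m-1)})$ and $\alpha+(k+1)\Delta<\gamma\le\min S^{(m-1)}_{m-1}$, which is below every $s_\ell$. Hence every such bid is unique in $(\mathbf{w},v,b_1,\dots,b_k)$. Using weak symmetry to reorder, the sets $V,B_1,\dots,B_k$ then satisfy the hypotheses of \cref{lem:W-size-set} relative to $\mathbf{w}$. Indeed, there are $k+1$ sets, each of size at least two (since $N_{m+1}\ge2$), with $V\prec B_1\prec\cdots\prec B_k$, each coordinate unique, and the first one having positive confirmation probability. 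The lemma yields $\bid^*$ with $|W(\bid^*)|\ge k+1$, contradicting \cref{lem:limited-space}, which holds under weak $1$-CRHP. Hence $i^*>L$.

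The main obstacle is the bookkeeping: confirming that the interval $(\alpha,\alpha+\Delta)$ really lies strictly above every old bid and below all of $B_1$, and that the frozen coordinates $\mathbf{s}$ sit above all the new ones so uniqueness is preserved. Both facts follow from \ref{rand:prop3}$^{(m-1)}$ and the placement \cref{eq:A-Z}.
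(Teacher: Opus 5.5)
Your proof is correct and follows the paper's argument. You assume $i^*\le L$, fix the coordinates coming from the old sets, raise the bid at index $i^*$ into a two-element set $V$ lying below $B_1$ (monotone allocation keeps its confirmation probability positive), and apply \cref{lem:W-size-set} to $V,B_1,\dots,B_k$ to contradict \cref{lem:limited-space}. Your choice $V\subset(\alpha,\alpha+\Delta)$ differs only cosmetically from the paper's choice of $V$ inside $(\val^{(m-1)}_{i^*},\min B_1)$ with the existing bid values removed, and it makes the uniqueness and ordering checks slightly more transparent.
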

    \begin{proof}
    Suppose for the sake of contradiction that $i^* \le L$. 
    Let $\bid = (b_1 ,\dots,b_{m-1} ) \in B_{k+1} \times \dots \times B_{k+m-1}$, and
       choose a two-element set $V\subset(\val^{(m-1)}_{i^*},\min B_1)\setminus\{\val^{(m-1)}_1,\ldots,\val^{(m-1)}_L,b_1,\ldots,b_{m-1}\}$. This is possible because $\min B_1>\max(\val^{(m-1)})$, and gives $V\prec B_1$. Raising bid $i^*$ to any value in $V$ preserves its positive confirmation probability by Myerson's Lemma.
    Similarly to the base case, by construction, the hypothesis of \cref{lem:W-size-set} is satisfied w.r.t $(\val^{(m-1)}_{-i^*}, b_1,\dots,b_{m-1})$ and the sets $V,B_1,\dots,B_k$. 
     Thus, we can conclude that there exists $\bid^*$ such that $|W(\bid^*)|>k,$ which contradicts \cref{lem:limited-space}. 
    \end{proof} 
    \begin{claim}\label{claim:ind-rand-i-range-up}
        $i^* \le R$.
    \end{claim}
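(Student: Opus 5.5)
The plan is a contradiction argument by censor-then-replace, in the spirit of \cref{lem:rand-include-above}. Suppose $i^*>R$. Then for every $\val'=(a_1,\ldots,a_k,b_1,\ldots,b_{m-1})\in B_1\times\cdots\times B_{k+m-1}\subseteq T_{i^*}$, the lowest-valued bid with positive confirmation probability in $\val^*=(\val^{(m-1)},\val')$ is some $b_\ell$ with $\ell=i^*-R\in[m-1]$.

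By \cref{eq:A-Z}, every $a_z$ satisfies $a_z<\alpha+(k+1)\Delta<\gamma=\min S^{(m-1)}_{m-1}\le b_\ell$. By \ref{rand:prop3}$^{(m-1)}$ and the construction of the $A_z$, all bids in the $A$- and $S$-blocks are unique in $\val^*$. Minimality of $b_\ell$ under $\phi$ then forces $x_j(\val^*)=0$ for every index $j\in(L,R]$ coming from the $A$-block. So every user bidding in the $A$-block has zero honest utility.

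The first step is to show that user $i^*$ has strictly positive utility in a suitable $\val^*$. I will take $b_\ell\in\rminf(B_{k+\ell})$ and $b'_\ell\in B_{k+\ell}$ with $b'_\ell<b_\ell$. Since the vector with $b'_\ell$ in coordinate $i^*$ also lies in $T_{i^*}$, we get $x_{i^*}>0$ there. Raising that bid to $b_\ell$ at the fixed index $i^*$ then gives $\util_{i^*}(\val^*)>0$ by \cref{cor:increasing}.

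The second step, which I expect to be the main obstacle, is a ratio bound
\[
\frac{p_{i^*}(\val^*)}{x_{i^*}(\val^*)}<a_1,
\]
so that some unconfirmed $A$-bid lies strictly above the ratio of the positive-utility bid $b_\ell$. That directly contradicts \cref{lem:rand-include-above}: the $A$-bid $a_1$ is unique, above the ratio, yet has zero confirmation probability. Unwinding the lemma, this is the coalition of the miner and the $a_1$-user. The miner ignores $b_\ell$, and the $a_1$-user bids $b_\ell$ and injects a fake $a_1$. The resulting vector is identical to $\val^*$, so the coalition strictly gains while user $i^*$ is harmed, violating weak $1$-CRHP.

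To get the ratio bound, I would first try to transport \ref{rand:prop4}$^{(m-1)}$, which bounds the ratio of $b_\ell$ by $\min S^{(m-1)}_{m-1}-\eps^{(m-1)}=\alpha<a_1$, but in the smaller vector $(\val^{(\ell)},b_1,\ldots,b_\ell)$. Going from there to $\val^*$ means adding bids, namely the other coordinates of $\val^{(m-1)}$, the $A$-block and $b_{\ell+1},\ldots,b_{m-1}$. I would argue that adding these bids cannot strictly lower $b_\ell$'s utility through a strictly profitable coalition move; otherwise that move, or its reverse via censorship, violates weak $1$-CRHP combined with \cref{lem:subvec-mr} and \cref{lem:miner-revenue-bound}. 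If that transport fails, the fallback is purely Myersonian. Since every $A$-bid has zero confirmation probability while $b_\ell$ is the lowest confirmed bid, I would try to show that the confirmation threshold of index $i^*$ given $\val^*_{-i^*}$ lies at or above $\gamma$. Combined with \cref{cor:p/x-increasing} and the choice of $b_\ell$ within the tiny window $S^{(m-1)}_\ell$, this should pin the ratio near the threshold. I would then shrink the relevant windows, exploiting that $\Delta$ and $\eps^{(m-1)}$ may be taken small, to place some $A$-bid above the ratio.

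Once either route yields the ratio bound, the contradiction via \cref{lem:rand-include-above} completes the proof that $i^*\le R$.
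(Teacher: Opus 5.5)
You have a genuine gap at the ratio bound. Your setup is fine: the $A$-bids lie below $\gamma\le b_\ell$, so minimality of $\phi$ gives them zero confirmation probability, and lowering $b_\ell$ within $B_{k+\ell}$ shows that user $i^*$ has strictly positive utility. This matches the paper.

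\textbf{Where it breaks.} To invoke \cref{lem:rand-include-above} you need $p_{i^*}(\val^*)/x_{i^*}(\val^*)<a_1$ in the \emph{full} vector $\val^*$. But \ref{rand:prop4}$^{(m-1)}$ only bounds the ratio of $b_\ell$ in the reduced vector $(\val^{(\ell)},b_1,\ldots,b_\ell)$, and neither of your routes transfers that bound.

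Your first route fails for two reasons.
\begin{itemize}
\item The added bids lie below $b_\ell$, so they have zero confirmation probability. Injecting them is only weakly profitable for a coalition: miner revenue weakly rises by \cref{lem:subvec-mr}, and unconfirmed fake bids cost nothing. Weak $1$-CRHP says nothing about weakly profitable moves; this is exactly the obstruction the paper flags for the weak setting.
\item Even a utility lower bound in $\val^*$ would not bound the ratio. Since $p/x=b_\ell-\util/x$, the ratio can exceed $a_1$ if the confirmation probability grows while utility stays fixed.
\end{itemize}

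Your fallback route points the wrong way. If the confirmation threshold $\theta$ of index $i^*$ were at least $\gamma$, Myerson's formula gives
$p/x=b_\ell-\frac{1}{x(b_\ell)}\int_\theta^{b_\ell}x(t)\,dt\ge\theta\ge\gamma>a_1$.
That places the ratio \emph{above} every $A$-bid, the opposite of what you need. Moreover, $\Delta$ and $\eps^{(m-1)}$ are fixed by earlier inductive steps and cannot be shrunk inside this claim.

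\textbf{The missing idea.} You never need the ratio in $\val^*$. Take $(\val^{(m-1)},\val')$ as the honest vector, and form the coalition of the miner with an $A$-user, e.g.\ the one bidding $a_k$. This user has zero honest utility, so the coalition's honest utility is just the miner revenue. The deviation is:
\begin{itemize}
\item the miner censors $b_\ell$;
\item the miner also censors every bid outside $(\val^{(\ell)},b_1,\ldots,b_\ell)$, namely $a_1,\ldots,a_{k-1}$, $b_{\ell+1},\ldots,b_{m-1}$, and $\val^{(m-1)}\setminus\val^{(\ell)}$;
\item the $A$-user rebids $b_\ell$.
\end{itemize}

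Every censored bid other than $b_\ell$ lies strictly below $b_\ell$. By the choice of $\phi$, each has zero confirmation probability and hence zero payment. So \cref{lem:miner-revenue-bound} and \cref{lem:subvec-mr} together show that miner revenue is exactly preserved.

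In the reduced vector, \ref{rand:prop4}$^{(m-1)}$ bounds the ratio of $b_\ell$ by $\gamma-\eps^{(m-1)}=\alpha<a_k$. Hence the $A$-user now gets strictly positive utility, and the coalition strictly gains. Meanwhile user $i^*$, who had positive honest utility, is censored, contradicting weak $1$-CRHP.

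This is the paper's proof. It is a variant of \cref{lem:rand-include-above} in which the extra censorship of zero-payment bids moves you to a vector where the ratio bound is already known.
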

    \begin{proof}
Suppose, for the sake of contradiction, that $i^*>R$.
Let $j^*:=i^*-R$, so the bid selected by $\phi(\cdot)$ comes from $S_{j^*}^{(m-1)}$.
Pick
\[
\val'=(a_1,\dots,a_k,b_1,\dots,b_{m-1})
\in
\prod_{z=1}^{k+m-1}\rminf(B_z).
\]
In particular, $b_{j^*}\in\rminf(B_{k+j^*})$.
Since $B_1\times\cdots\times B_{k+m-1}\subseteq T_{i^*},$
the bid $b_{j^*}$ has positive confirmation probability for every choice of the coordinates from the corresponding $B_z$'s.
If we replace $b_{j^*}$ by $\min B_{k+j^*}$ while keeping all other bids fixed, the resulting bid vector is still in $T_{i^*}$, so this lower bid also has positive confirmation probability.
Therefore, by \cref{cor:increasing}, user $i^*$ has strictly positive utility at bid $b_{j^*}$.

Consider the world where $(\val^{(m-1)},\val')$ is the honest bid vector.
Let $u$ be the user with bid $a_k$.
By \cref{eq:A-Z} and \ref{rand:prop3}$^{(m-1)}$, we have $a_k<b_{j^*}.$

Since $\phi(\val^{(m-1)},\val')=i^*$ selects a lowest-valued bid with positive confirmation probability, user $u$ must have zero confirmation probability in the honest execution.
Thus, the coalition consisting of the miner and user $u$ has honest joint utility
\[
\mu(\val^{(m-1)},\val').
\]

Now consider the following deviation.
The miner censors:
\begin{itemize}[leftmargin = 5mm]
    \item the bid $b_{j^*}$ from user $i^*$;
    \item the bids $a_1,\dots,a_{k-1}$;
    \item all bids in $\val^{(m-1)}$ that are not in $\val^{(j^*)}$;
    \item all bids $b_\ell$ for $\ell\in\{j^*+1,\dots,m-1\}$.
\end{itemize}
User $u$ changes its bid from $a_k$ to $b_{j^*}$, and the miner then performs the honest inclusion rule on the resulting bid vector
\[
\val^*=(\val^{(j^*)},b_1,\dots,b_{j^*}).
\]

We first compare miner revenue.
Except for the bid $b_{j^*}$, which is replaced by user $u$'s bid of the same amount, every bid removed by the deviation has value strictly smaller than $b_{j^*}$.
For the bids $a_1,\dots,a_{k-1}$ this follows from \cref{eq:A-Z}; for $b_{j^*+1},\dots,b_{m-1}$ it follows from \ref{rand:prop3}$^{(m-1)}$; and for the bids in $\val^{(m-1)}\setminus\val^{(j^*)}$ it follows from \ref{rand:prop2_val} and \ref{rand:prop3} applied to the subsequent inductive steps.

Since $\phi(\val^{(m-1)},\val')=i^*$ selects a lowest-valued bid with positive confirmation probability, all these strictly lower bids have zero confirmation probability, and hence zero payment.
Replacing the original bid $b_{j^*}$ by user $u$'s bid of the same amount does not change the mechanism outcome by weak symmetry.
Therefore, by \cref{lem:miner-revenue-bound},
\[
\mu(\val^*)\ge\mu(\val^{(m-1)},\val').
\]
Conversely, $\val^*$ is a subvector of $(\val^{(m-1)},\val')$ in bid amounts, so \cref{lem:subvec-mr} gives
\[
\mu(\val^*)\le\mu(\val^{(m-1)},\val').
\]
Hence $\mu(\val^*)=\mu(\val^{(m-1)},\val').$

It remains to consider user $u$'s utility.
Since $(b_1,\dots,b_{j^*})
\in
S_1^{(m-1)}\times\cdots\times S_{j^*}^{(m-1)},$
\ref{rand:prop4}$^{(m-1)}$ gives
\[
\frac{\widetilde p_{b_{j^*}}(\val^*)}
{\widetilde x_{b_{j^*}}(\val^*)}
\le
\gamma-\eps^{(m-1)}.
\]
Moreover, by \cref{eq:A-Z}, we have $a_k>\gamma-\eps^{(m-1)}.$
Therefore,
\begin{align*}
a_k\widetilde x_{b_{j^*}}(\val^*)
-\widetilde p_{b_{j^*}}(\val^*)
>
(\gamma-\eps^{(m-1)})
\widetilde x_{b_{j^*}}(\val^*)
-\widetilde p_{b_{j^*}}(\val^*)\ge 0.
\end{align*}
Thus user $u$ obtains strictly positive utility under the deviation, while the miner revenue is unchanged.
The coalition therefore strictly increases its joint utility.

At the same time, user $i^*$ had strictly positive utility in the honest execution and its bid is censored by the deviation.
This contradicts weak $1$-CRHP.
\end{proof}
    By \cref{claim:ind-rand-i-range-down} and \cref{claim:ind-rand-i-range-up}, for every $(\val^{(m-1)},\val')$ where $\val' \in B_1\times\cdots\times B_{k+m-1}\subseteq T_{i^*}$, we know that a bid with positive confirmation probability exists within the set $B_{i^* - L} \subseteq A_{i^* - L}$.
    Pick an arbitrary $\widehat{\val}\in B_1\times \dots\times B_k$, and define $\val^{(m)} = \left(\val^{(m-1)},\widehat{\val}\right)_{-i^*}$.
  
Recall that $\gamma := \min S^{(m-1)}_{m-1}$.
Consider the value
\begin{equation}\label{eq:rand-delta}
\delta:=
\min_{\bid \in B_{k+1}\times\cdots\times B_{k+m-1}}
\left(
\gamma-
\frac{p_R(\val^{(m)},\gamma,\bid)}
{x_R(\val^{(m)},\gamma,\bid)}
\right).
\end{equation}

We show that $\delta>0$.
Fix any $\bid\in B_{k+1}\times\cdots\times B_{k+m-1}.$
Choose $\gamma'$ such that $\max A_k<\gamma'<\gamma.$
Since $B_1\times\cdots\times B_{k+m-1}\subseteq T_{i^*},$
the bid at index $i^*$ has positive confirmation probability in $(\val^{(m-1)},\widehat{\val},\bid).$
This bid comes from one of $A_1,\dots,A_k$, so its value is smaller than $\gamma'$.
By \cref{cor:increasing}, raising this bid to $\gamma'$ gives it strictly positive utility.
The resulting bid vector is a permutation of $(\val^{(m)},\gamma',\bid),$
so by weak symmetry, the bid $\gamma'$ at index $R$ has strictly positive utility in $(\val^{(m)},\gamma',\bid)$.

Since $\gamma>\gamma'$, \cref{cor:increasing} further implies that the bid $\gamma$ at index $R$ has strictly positive utility in $(\val^{(m)},\gamma,\bid)$.
Therefore,
\[
\frac{p_R(\val^{(m)},\gamma,\bid)}
{x_R(\val^{(m)},\gamma,\bid)}
<\gamma.
\]
This holds for every
$\bid\in B_{k+1}\times\cdots\times B_{k+m-1}$.
Since this set is finite, we have $\delta>0$.
Thus, we can define the following:
\begin{itemize}[leftmargin = 5mm]
\item $\val^{(m)} = \left(\val^{(m-1)},\widehat{\val}\right)_{-i^*}$;
    \item $S^{(m)}_1,\dots, S^{(m)}_{m-1}= B_{k+1}, \dots,B_{k+m-1}$;
    \item \( \eps^{(m)} := \frac{1}{2} \min\left(\delta,\Delta\right)\);
    \item \( S^{(m)}_m \text{ s.t. } \set{\gamma - \eps^{(m)}}\prec S^{(m)}_m\prec \set{\gamma} \text{ and } |S^{(m)}_m| = N_{m+1}\).
\end{itemize}
    We now prove the properties:
\begin{itemize}[leftmargin = 5mm]
    \item \ref{rand:prop1}$^{(m)}$,  \ref{rand:prop2_val}$^{(m)}$, and \ref{rand:prop2}$^{(m)}$ are satisfied by construction.
    
    \item \ref{rand:prop3}$^{(m)}$ follows from construction.
Since $S^{(m)}_\ell=B_{k+\ell}\subseteq S^{(m-1)}_\ell$ for every $\ell\in[m-1]$ and $S^{(m)}_m\prec\{\gamma\}\preceq S^{(m)}_{m-1}$, we have $S^{(m)}_m\prec\cdots\prec S^{(m)}_1.$
Moreover, by \ref{rand:prop3}$^{(m-1)}$ and \cref{eq:A-Z},
\[
\max(\val^{(m)})\le \max A_k<\gamma-\Delta.
\]
Since $2\eps^{(m)}\le\Delta$ and $\min S^{(m)}_m>\gamma-\eps^{(m)}$, we have
\[
\max(\val^{(m)})
<
\gamma-\Delta
\le
\gamma-2\eps^{(m)}
<
\min S^{(m)}_m-\eps^{(m)}.
\]
Thus, \ref{rand:prop3}$^{(m)}$ holds.
    

\item We first consider $\ell\in[m-1]$.
Since $S^{(m)}_z\subseteq S^{(m-1)}_z$ for every $z\in[m-1]$, the positive-confirmation guarantee in \ref{rand:prop4}$^{(m-1)}$ continues to hold.
Moreover,
\[
\min S^{(m)}_m-\eps^{(m)}
>
\gamma-2\eps^{(m)}
\ge
\gamma-\Delta
>
\gamma-\eps^{(m-1)}
=
\min S^{(m-1)}_{m-1}-\eps^{(m-1)}.
\]
Therefore, the payment-ratio bound in \ref{rand:prop4}$^{(m-1)}$ also implies the required bound in \ref{rand:prop4}$^{(m)}$ for every $\ell\in[m-1]$.

It remains to prove \ref{rand:prop4}$^{(m)}$ for $\ell=m$.
Let $\bid=(b_1,\dots,b_m)\in S^{(m)}_1\times\cdots\times S^{(m)}_m$
and let $\ell^*:=i^*-L$.
By construction,
\[
(b_1,\dots,b_{m-1})\in B_{k+1}\times\cdots\times B_{k+m-1}.
\]
Fix any $b'\in B_{\ell^*}$.
Together with the fixed bids in $\val^{(m)}$ and $(b_1,\dots,b_{m-1})$, reinserting $b'$ gives, up to a permutation, a bid vector whose coordinates outside $\val^{(m-1)}$ lie in $B_1\times\cdots\times B_{k+m-1}\subseteq T_{i^*}.$
Hence, by weak symmetry and the definition of $T_{i^*}$,
\[
\widetilde{x}_{b'}(\val^{(m)},\bid_{-m},b')>0.
\]
On the other hand, by \cref{eq:A-Z} and the choice of $\eps^{(m)}$,
\[
b_m>\gamma-\eps^{(m)}>\gamma-\Delta>\max A_k\ge b'.
\]
Therefore, Myerson's Lemma gives
\[
\widetilde{x}_{b_m}(\val^{(m)},\bid)>0.
\]

It remains to prove the payment-ratio bound.
Since $|\val^{(m)}|=R-1$ and $b_m$ is unique, weak symmetry gives
\[
\frac{\widetilde{p}_{b_m}(\val^{(m)},\bid)}
{\widetilde{x}_{b_m}(\val^{(m)},\bid)}
=
\frac{p_R(\val^{(m)},b_m,\bid_{-m})}
{x_R(\val^{(m)},b_m,\bid_{-m})}.
\]
Since $b_m<\gamma$, \cref{cor:p/x-increasing} and the definition of $\delta$ imply
\begin{align*}
\frac{\widetilde{p}_{b_m}(\val^{(m)},\bid)}
{\widetilde{x}_{b_m}(\val^{(m)},\bid)}
&\le
\frac{p_R(\val^{(m)},\gamma,\bid_{-m})}
{x_R(\val^{(m)},\gamma,\bid_{-m})}\\
&\le
\gamma-\delta\\
&\le
\gamma-2\eps^{(m)}\\
&<
\min S^{(m)}_m-\eps^{(m)}.
\end{align*}
Thus, \ref{rand:prop4}$^{(m)}$ holds.
\end{itemize}

\begin{corollary} \label{cor:weak-trivial}
    Only trivial TFMs, where no bids are ever confirmed, satisfy UIC, MIC, MRHP, and weak $1$-CRHP in the plain model.
\end{corollary}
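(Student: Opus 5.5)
By \cref{thm:rand-CRHP-imp-k+1}, any TFM satisfying UIC, MIC, and weak $1$-CRHP has $\mu(\bid)=0$ for every bid vector $\bid$. Since MRHP is not needed for this step, the miner's honest utility is identically zero. It therefore suffices to show that, under zero revenue, MRHP forces every confirmation probability to be zero.

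Suppose for contradiction that some $\bid=(b_1,\dots,b_n)$ and some index $i$ satisfy $x_i(\bid)>0$. I first want to reduce to a user with strictly positive honest utility. Choose any $b_i'>b_i$. By \cref{cor:increasing}, applied with $b^*=\inf\{b:x_i(\bid_{-i},b)>0\}\le b_i$, we get $\util_i(\bid_{-i},b_i')>\util_i(\bid)\ge 0$. Write $\bid'=(\bid_{-i},b_i')$; user $i$ has strictly positive honest utility in $\bid'$.

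Next consider the world in which $\bid'$ is the honest true-value vector, and let the miner deviate by censoring every bid and proposing the empty block. Revenue is a deterministic function of the block, and an empty block collects no payments, so budget feasibility forces revenue zero. No fake bids are injected, so the miner pays nothing. The miner's utility under this deviation is therefore $0=\mu(\bid')$, which is exactly its honest utility. The deviation thus satisfies the premise $\util_{\C}(\val;S_{\C})\ge\util_{\C}(\val;H_{\C})$ of \cref{defn:rhp}. User $i$, however, goes from strictly positive honest utility to zero, which violates MRHP in the plain model. Hence $x_i(\bid)=0$ for every $\bid$ and every $i$, and the mechanism never confirms any bid.

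The only substantive input is \cref{thm:rand-CRHP-imp-k+1}, which is assumed here. The one point to check is that the empty block genuinely gives zero revenue, and this follows from budget feasibility alone. As a side remark, for the deterministic or degenerate cases one could also cite \cref{thm:rand-sCRHP-imp-k+1}, but that is unnecessary: weak $1$-CRHP already yields zero revenue, and MRHP then does the rest. No step here appears to be a real obstacle.
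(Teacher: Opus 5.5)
Your proof is correct and follows essentially the same route as the paper's. Both use \cref{thm:rand-CRHP-imp-k+1} to get zero revenue, then \cref{cor:increasing} to obtain a profile where user $i$ has strictly positive honest utility, and then the costless censor-everything (empty-block) deviation to contradict MRHP; your explicit use of budget feasibility to show the empty block yields zero revenue is a detail the paper leaves implicit.
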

\begin{proof}
    By \cref{thm:rand-CRHP-imp-k+1}, a TFM that satisfies UIC, MIC, and weak $1$-CRHP in the plain model must have zero miner revenue.
    Suppose for the sake of contradiction that for some bid vector $\bid$,  $x_i(\bid)>0$. By \cref{cor:increasing}, this implies a bid vector $\bid'$ exists where $\util_i(\bid')>\util_i(\bid)\ge 0$.
    Since the miner revenue is always $0$, in a world where $\bid'$ is the honest bid vector, the miner can ignore all bids to harm user $i$ at no cost, contradicting MRHP.
\end{proof}

\section{Characterizations in the MPC-Assisted Model}
\label{sec:mpc}
In this section, we present the feasibility and impossibility results in the MPC-assisted model.
Recall that in the MPC-assisted model, a committee of $M$ miners jointly run a multi-party computation (MPC) to implement the TFM.

\subsection{Feasibility in the MPC-assisted Model}
\label{sec:mpc-feasibility}
In this section, we present a mechanism that achieves all desired properties, UIC, MIC, all RHP variants, and positive revenue in the MPC-assisted model.

\begin{mdframed}
\begin{center}
        \textbf{MPC-assisted, posted price with random selection}
    \end{center}
    
    \noindent //Reserve price ${\sf res} > 0$, revenue parameter $0 < \epsilon \leq \res$.
    \begin{itemize}[leftmargin = 5mm, label=\textbullet]
        \item \textbf{Inclusion \& Confirmation}: Let $E$ be the set of bids at least the reserve price ${\sf res}$. If $|E|\le k$, include and confirm all bids in $E$. Otherwise, uniformly at random choose $k$ bids in $E$ to include and confirm.
        \item \textbf{Payment \& Revenue}: Each confirmed bid pays the reserve price $\res$.
        If at least one bid is confirmed, the total miner revenue is $\epsilon$, which is divided equally among the $M$ miners; otherwise, the total miner revenue is zero.
    \end{itemize}
\end{mdframed}

\revision{
\begin{theorem}
\label{thm:ppw-rand-mpc}
In the MPC-assisted model, posted price with uniform random selection satisfies UIC, MIC, URHP, MRHP, and $d$-CRHP for every $1\le d\le k$.
Total miner revenue is positive when $|E|\geq 1$.
\end{theorem}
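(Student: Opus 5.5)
Let $q(N):=\min\{1,k/N\}$ for $N\ge1$ and $q(0):=0$, where $N$ is the number of eligible bids (at least $\res$) delivered to $\mathcal F_{\rm TFM}$. Every eligible bid is confirmed with probability $q(N)$ and pays $\res$. The total miner revenue is $\eps\cdot\mathbf 1[N\ge1]$, split equally among the $M$ miners. The plan is to prove each property by comparing pure deviations with the honest baseline, outcome by outcome. Mixed deviations are then handled by the argument already used for \cref{plain-ub-pp}: every pure deviation weakly lowers the strategic party's utility, and every harmful pure deviation lowers it strictly. Hence a mixture that keeps the party's expected utility at its honest level puts no weight on harmful pure deviations.

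\textbf{UIC and URHP.} The game for an individual user coincides with the plain-model game when the miner is honest. The UIC and URHP arguments of \cref{plain-ub-pp} therefore apply verbatim. An honest user can be harmed only by lowering its $q$, i.e.\ by adding eligible competition. Such competition costs the deviator: fake eligible bids cost $\res$ each, and turning a below-reserve value into an eligible bid gives negative surplus whenever that bid is confirmed.

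\textbf{MIC and MRHP.} A coalition of $m<M/2$ miners can only inject fake bids; it cannot censor or bias the randomness. Its utility is $\frac{m}{M}\eps\,\mathbf 1[N\ge1]$ minus $\res$ per confirmed fake bid. If there is an eligible honest bid, revenue is already $\eps$, so any confirmed fake bid strictly lowers utility. If there is none, a fake bid can gain at most $m\eps/M<\res$ but costs $\res$ whenever it is confirmed, and it is confirmed with probability $1$ when $N\le k$. Thus honesty is optimal. For MRHP, a deviation that does not lower the miners' utility injects no eligible fake bids. Ineligible fake bids change nothing, so outside users and outside miners are untouched.

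\textbf{$d$-CRHP.} This is the main step. I follow the case split of \cref{sec:overview-feasibility}. Let $h$ be the number of eligible honest users, $t$ the number of eligible colluding users, and $s=\sum(v_j-\res)$ over the colluding users with $v_j\ge\res$. Under any deviation, the colluding users' total contribution is at most $\E[q(N)]\,s$. This holds because confirmed fake bids and confirmed bids of users with value below $\res$ contribute nonpositive surplus, and each colluding user has at most one real bid.

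\emph{Case $h>0$.} Then $N\ge h\ge1$ always, so outside miners' revenue is fixed and the coalition's miners always receive $m\eps/M$. An honest eligible user is harmed only if $\E[q(N)]<q(h+t)$. If $s>0$, the coalition's user utility drops below its honest value $q(h+t)s$, so the deviation is strictly unprofitable. If $s=0$, any confirmed eligible fake bid or below-reserve bid is strictly costly. A deviation that does not lower utility therefore has $N\le h+t$ almost surely, so $q$ does not decrease.

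\emph{Case $h=0$.} Only outside miners need protection, and we may assume $t>0$. Since $t\le d\le k$, honestly every eligible colluding user is confirmed and the coalition gets $m\eps/M+s$. Conditional on a nonempty outcome, the coalition's utility is at most $m\eps/M+s$; an empty outcome gives $0<m\eps/M+s$. So a weakly profitable deviation must keep the block nonempty almost surely, which preserves the outside miners' revenue.

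The delicate point I expect is keeping the $s=0$ subcase and the nonpositive-surplus bookkeeping for colluding users valid under mixed strategies and multiple real bids. I would first attempt the pure-strategy comparison and then invoke the mixture argument above. Positive revenue is immediate: $\eps>0$ whenever $|E|\ge1$.
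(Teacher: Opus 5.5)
Your case split ($h>0$ with $s>0$ or $s=0$; $h=0$ with $t>0$) and the inequalities inside it match the paper's proof. The problem is the device you announce for mixed strategies and say you would invoke for $d$-CRHP: it rests on a false premise. For miner--user coalitions it is not true that every pure deviation weakly lowers the coalition's utility, because this mechanism is not SCP. Two examples show this.
\begin{itemize}
\item With $h=t=0$, a colluding user of value $v\in(\res-m\eps/M,\res)$ who bids $\res$ gains $m\eps/M-(\res-v)>0$. This is the paper's own Not-$1$-SCP example.
\item With $h>0$ and $d\ge2$, dropping an eligible colluding user of small surplus raises $q(N)$ for the other colluding users. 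Take $k=2$, one honest eligible user, and colluding values $\res+10$ and $\res+\delta$. Dropping the second user raises the colluding users' surplus from $\tfrac23(10+\delta)$ to $10$.
\end{itemize}
Now mix this drop with an extra eligible fake bid, which hurts both the coalition and the honest user. For suitable weights the mixture is even strictly profitable, yet it puts positive weight on a harmful pure deviation. So in the subcase $h>0$, $s>0$ the conclusion ``a utility-preserving mixture puts no weight on harmful pure deviations'' is false, and the pure-then-mix route cannot establish CRHP there.

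What does work is the expectation-level bound you already wrote down, and this is how the paper argues. Let $z$ be the realized sum of $v_i-\res$ over eligible colluding primary bids, minus $\res$ per eligible fake bid, and let $a=m\eps/M$. Then $z\le s$ holds realization by realization, so $\util_{\C}(\val;S_{\C})=a+\E[q(N)z]\le a+s\,\E[q(N)]$. An eligible honest user is harmed only if $\E[q(N)]<q(h+t)$, so for $s>0$ harm forces a strict loss, directly for randomized $S_{\C}$.

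In the remaining subcases your reduction is valid, because there every pure deviation is weakly unprofitable:
\begin{itemize}
\item for $h>0$, $s=0$, each realization gives $a+q(N)z\le a$;
\item for $h=0<t$, each realization gives at most $a+s$.
\end{itemize}
So the subcase you flag as delicate ($s=0$) is the safe one, and the one that needs care is $s>0$. For UIC, URHP, MIC and MRHP the reduction is fine, since the strategic party has no strictly profitable pure deviation. For $d$-CRHP, replace it by the expectation argument, and the proof is complete.
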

}
\begin{proof}
Let $E$ be the set of eligible submitted bids at least reserve, and let $N=|E|$. If $N=0$, no bid is confirmed and total miner revenue is zero. If $N>0$, each eligible bid is confirmed with probability $q(N):=\min\{1,k/N\}$ under uniform random selection, pays $\res$ conditional on confirmation, and the total miner revenue is $\epsilon$.

\smallskip\noindent\textbf{UIC:} For a single real bid and fixed other bids, the user's allocation probability is zero below reserve and equal to a constant $q$ at or above reserve; the expected payment is $\res\cdot q$. This is the Myerson threshold payment. Fake bids cannot help: every eligible fake bid has true value zero and, whenever confirmed, pays reserve; it also weakly lowers the real bid's confirmation probability. If the user's true value is below reserve, every strategy gives nonpositive utility. Otherwise, fake eligible bids cannot increase the confirmation probability of its real bid. Omitting the real bid also gives nonpositive utility. Therefore truthful bidding with no fake bids is optimal.

\smallskip\noindent\textbf{MIC:} A miner coalition of size $m<M/2$ can inject fake bids. It cannot censor honest bids or alter the random selection, and the MPC provides guaranteed output delivery. If $N>0$, honest total miner revenue is already $\epsilon$, the maximum possible. Fake bids cannot increase it and may require fake payments. If $N=0$, fake eligible bids can create revenue $\epsilon$, but the coalition receives only $m\epsilon/M$ while paying at least $\res$ for each confirmed fake bid. Since $m<M/2$ and $\epsilon\le \res$, this is not profitable. Thus MIC holds.

\smallskip\noindent\textbf{URHP:} A user-only deviation can harm another user only by reducing that user's confirmation probability. This requires adding eligible competition. Adding eligible fake bids imposes strictly positive expected reserve payments on fake bids and weakly lowers the deviator's own real-bid confirmation probability. Turning a below-reserve real bid into an eligible bid gives strictly negative expected surplus. Changing an already eligible real bid has no effect on the allocation except through fake bids. Therefore any user-harming deterministic bidding strategy strictly lowers the deviating user's expected utility. 
By UIC, no deterministic bidding strategy improves on the deviating user's honest expected utility. Thus a randomized deviation that does not lower the deviator's expected utility uses deterministic bidding strategies that preserve this expected utility with probability one, and each such strategy weakly preserves every honest user's expected utility. Hence URHP holds.

\smallskip\noindent\textbf{MRHP:}
A miner-only coalition can harm an honest user only by adding eligible fake bids that dilute the user's confirmation probability.
Adding eligible fake bids cannot increase the total miner revenue beyond the fixed $\epsilon$ and imposes payments whenever a fake bid is confirmed.
If there is an eligible honest bid, the block remains nonempty and every honest miner outside the coalition continues to receive $\epsilon/M$; if there is no eligible honest bid, every honest miner's baseline revenue is zero.
Hence any deterministic bidding strategy that harms a protected honest user makes the deviating miner coalition strictly worse off, while honest miners outside the coalition cannot be harmed. By MIC, no deterministic strategy improves on honest utility. Thus a randomized deviation that does not lower expected utility uses utility-neutral deterministic strategies with probability one, each preserving every protected user's and miner's utility.

\revision{
\smallskip\noindent\textbf{$d$-CRHP:}
Let $\C$ be a coalition containing $1\le m<M/2$ miners and a set $\mathcal U$ of at most $d\le k$ users.
We show that any deviation that does not decrease the coalition's expected utility cannot harm any user or miner outside $\C$.

Let $h$ be the number of eligible honest users outside $\C$ and $t$ be the number of users in $\mathcal U$ whose true values are at least $\res$, and define
\[
s:=\sum_{i\in\mathcal U}\max\{v_i-\res,0\}
\text{ and }
a:=\frac{m\epsilon}{M}.
\]
Recall that $q(N):=\min\{1,k/N\}$ for $N>0$, and set $q(0):=0$.
Under honest behavior, there are $h+t$ eligible bids, so the coalition's expected utility is
\[
\util_{\C}(\val;H_{\C})
=
a\cdot \mathbf 1_{\{h+t>0\}}+q(h+t)\cdot s.
\]

Now consider an arbitrary (possibly randomized) strategy $S_{\C}$ by $\C$.
Let $T\subseteq\mathcal U$ be the set of colluding users whose primary bids are eligible under $S_{\C}$, and let $f$ be the number of eligible fake bids.
Then under strategy $S_{\C}$, we have $N = h+|T|+f$. 
Let $z = \sum_{i\in T}(v_i-\res) - f\cdot \res$.
Then $z\leq s$.

Conditioned on an arbitrary realization of input bid vector under $S_{\C}$, every eligible bid is confirmed with probability $q(N)$.
Hence the coalition's expected utility, conditioned on this realization of bid vector, is $a\cdot \mathbf 1_{\{N>0\}}+q(N)\cdot z.$
We split into two cases.

\noindent{\bf Case 1: $h>0$}
Then $N>0$ under every deviation, so every honest miner outside $\C$ continues to receive revenue $\epsilon/M$.
An eligible honest user $j\notin\C$ has expected utility $(v_j-\res)\E[q(N)]$
under the deviation, where the expectation is taken over randomness used in $S_{\C}$. 
Compared with $j$'s utility $(v_j-\res)q(h+t)$ under honest behavior, it can be harmed only if
\[
\E[q(N)]<q(h+t).
\]

Suppose first that $s>0$.
If an honest user is harmed, the coalition's expected utility under the deviation satisfies
\[
\util_{\C}(\val;S_{\C})
=
a+\E[q(N)z]
\le
a+s\E[q(N)] < a + s\cdot q(h+t)=
\util_{\C}(\val;H_{\C}).
\]
Therefore, the coalition must also be harmed.

It remains to consider $s=0$.
In this case, $z\le0$ for every realized bidding strategy, while the coalition's honest utility is $a$.
Since $q(N)>0$ when $h>0$, a deviation with expected utility at least $a$ must satisfy $z=0$ with probability one.
Therefore, it must be that $f = 0$ and for every $i\in T$, we have $v_i = \res$.
Thus under the strategy $S_{\C}$, every eligible bid from the coalition must come from one of the $t$ users whose true value is exactly $\res$.
Hence, $N\le h+t$ with probability one.
Thus $q(N)\ge q(h+t)$ with probability one, so no honest user is harmed.

\noindent{\bf Case 2: $h = 0$.}
All honest users outside $\C$ are below the reserve and therefore have utility zero under both honest behavior and any deviation, so only honest miners outside $\C$ need to be considered.
If $t=0$, the honest total miner revenue is zero, so no honest miner outside $\C$ can be harmed.
Therefore, we focus on the case where $t>0$.
Since $t\le|\mathcal U|\le d\le k$, all $t$ honestly eligible colluding users are confirmed under honest behavior, and
\[
\util_{\C}(\val;H_{\C})=a+s.
\]
For every realized bid vector where $N > 0$, $q(N)z\le s,$ because $z\le s$ and $q(N)\le1$.
Hence conditioned on any realized bid vector such that $N > 0$, coalition's conditional utility is at most $a+s$; while under any bid vector such that $N = 0$, coalition's utility is zero.
If $\rho:=\Pr[N=0]$, then
\[
\util_{\C}(\val;S_{\C})
\le
(1-\rho)(a+s).
\]
Since $a>0$, any deviation satisfying
\[
\util_{\C}(\val;S_{\C})
\ge
\util_{\C}(\val;H_{\C})=a+s
\]
must have $\rho=0$.
Therefore the block remains nonempty with probability one, and every honest miner outside $\C$ continues to receive $\epsilon/M$.

Thus every deviation that weakly increases the coalition's expected utility weakly preserves the utility of every honest user and miner outside the coalition.
The mechanism therefore satisfies $d$-CRHP for every $1\le d\le k$.}
\end{proof}

\smallskip\noindent{\bf Not $1$-SCP:} Assume $M\ge3$ and choose $1\le m<M/2$. Suppose no honest bid is eligible and a colluding user has true value $v\in(\res-m\epsilon/M,\res)$, where $m$ is the number of colluding miners. If the user bids at least reserve, it is confirmed, pays reserve, and the miner coalition obtains revenue share $m\epsilon/M$. The coalition's gain is $m\epsilon/M-(\res-v)>0$. This deviation does not harm any honest user, consistent with CRHP.

\subsection{Characterization of Deterministic Mechanisms}
\subsubsection{Deterministic Feasibility in the MPC-Assisted Model}
While the previous mechanism achieves all desired properties, it relies on randomness to choose the set of $k$ bids to confirm.
In this section, we show that the all-or-nothing posted price is a deterministic mechanism satisfies all desired properties in the MPC-assisted model.
\begin{mdframed}
\begin{center}
    {\bf MPC-assisted, all or nothing posted price auction}. 
\end{center}    
\noindent //Reserve price ${\sf res} > 0$, revenue parameter $0 < \epsilon \leq \res$.
       
    \noindent \textbf{Mechanism:}
    \begin{itemize}[label=\textbullet]
        \item \textbf{Inclusion Rule \& Confirmation Rule:} If there are $\le k$ bids and each bid is strictly greater than ${\sf res}$, include and confirm all bids. Otherwise, include and confirm no bids. The strict threshold is intentional for the URHP boundary case. 
        \item \textbf{Payment Rule \& Miner Revenue Rule:} Each confirmed bid pays the reserve price $\res$.
        The total miner revenue is $\epsilon$ per confirmed bid and is divided equally among the $M$ miners.
    \end{itemize}
\end{mdframed}
\begin{lemma} \label{cr-ub}
The above TFM satisfies UIC, MIC, URHP, MRHP, $d$-CRHP for any $d\ge 1$, and has positive miner revenue in the MPC-assisted model.
\end{lemma}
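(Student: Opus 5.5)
The plan is to verify each property directly from the all-or-nothing structure, reusing the arguments of \cref{plain-AorN} wherever they carry over. Those arguments get easier here because miners in the MPC-assisted model can only inject fake bids; they cannot censor. Throughout, let $\mathcal S$ denote the success event that the submitted vector has at most $k$ bids, all strictly above $\res$. On $\mathcal S$, every real confirmed user $i$ gets $v_i-\res$, every fake bid nets its injector $-\res$, and total miner revenue is $\epsilon$ times the number of bids, split as $1/M$ per miner. Off $\mathcal S$, everyone gets zero.

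\textbf{UIC and positive revenue.} These follow exactly as in \cref{plain-AorN}. Positive revenue is immediate whenever $\mathcal S$ occurs with at least one bid.

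\textbf{MIC.} Fix a coalition of $m<M/2$ miners. Each injected fake bid either destroys $\mathcal S$, giving revenue $0$, or is confirmed. In the latter case it costs $\res$ and raises the coalition's revenue share by $m\epsilon/M<\epsilon\le\res$. So fake bids never help, and honest behavior is optimal.

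\textbf{URHP.} Suppose an honest user $j$ is strictly harmed. Then in the honest baseline $\mathcal S$ held and $v_j>\res$, so every real user, including the deviator $i$, has value strictly above $\res$ and is confirmed with utility $v_i-\res>0$. Any pure deviation that harms $j$ must make $\mathcal S$ fail; raising $i$'s bid or adding no bids keeps $\mathcal S$ intact and leaves $j$ unharmed. If $\mathcal S$ fails, $i$ loses $v_i-\res>0$ (or pays for fake bids that cannot be confirmed). Deviations that keep $\mathcal S$ only add confirmed fake bids of negative surplus. So every pure deviation is weakly worse for $i$, and every harming one is strictly worse. Mixed deviations are then handled by the same averaging remark used in \cref{plain-AorN}: if a mixture preserves the deviator's expected utility, it must put probability one on utility-neutral pure deviations, and none of those harm anyone.

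\textbf{MRHP and $d$-CRHP.} Apply the same template to a coalition of $m<M/2$ miners plus at most $d$ users, protecting outside users and outside miners. An outside party has positive honest utility only if $\mathcal S$ holds in the honest baseline and the honest block is nonempty; outside miners then receive $\epsilon/M$ per bid. Since the coalition cannot censor, it can harm an outside party only in one of two ways:
\begin{itemize}
\item making $\mathcal S$ fail, by adding bids or by bidding a colluding user at or below $\res$;
\item reducing the number of confirmed bids, which is possible only by dropping colluding users' own bids.
\end{itemize}
In either case the coalition loses its revenue share $m\epsilon/M\cdot(\text{lost bids})>0$ and any positive surplus of its confirmed users. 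Adding confirmed fake bids keeps $\mathcal S$ and only increases outside miners' revenue, while netting the coalition $m\epsilon/M-\res<0$. So every harming pure deviation strictly lowers coalition utility, and mixtures are handled as before.

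The main obstacle is the bookkeeping in this coalition case. One subcase needs care: the colluding users were not eligible honestly, so $\mathcal S$ fails in the baseline and outside parties get zero utility there, leaving nothing to protect. A second subcase is the removal of a colluding user whose value is exactly at a boundary. Here the strict inequality $v>\res$ needs checking: a user at value exactly $\res$ already breaks $\mathcal S$ honestly, so its removal cannot harm anyone relative to that baseline. Once these two subcases are confirmed, every harmful deviation costs at least $m\epsilon/M>0$.
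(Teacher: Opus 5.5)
Your proposal is correct and follows essentially the same route as the paper. Both inherit UIC, URHP, and the user-protection part from the plain-model all-or-nothing argument, then establish MIC and honest-miner protection from three facts: each confirmed fake bid nets the coalition $m\epsilon/M-\res<0$, each dropped colluding bid costs $(v_i-\res)+m\epsilon/M>0$, and destroying the success condition forfeits the coalition's positive revenue share. The only case you leave implicit is the MIC case where the honest vector already fails the success condition; it is immediate, because injected bids cannot remove a bid at or below reserve or reduce the bid count, so revenue and fake-bid payments both stay zero.
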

\begin{proof}
The strict-threshold all-or-nothing argument from \cref{plain-AorN} gives UIC and URHP, and proves the user-protection requirements of MRHP and $d$-CRHP. In the MPC-assisted model, MRHP and $d$-CRHP additionally protect honest miners outside the strategic coalition. We verify this additional requirement and MIC below.

\smallskip\noindent\textbf{MIC:}
A miner coalition of size $1\le m<M/2$ cannot censor honest bids and can only inject fake bids.
The MPC provides guaranteed output delivery.
An input bid vector fails only if it contains more than $k$ bids or a bid at or below reserve; injecting fake bids cannot fix either condition.
Since $m<M/2$ and $\epsilon\le\res$, let $a:=m\epsilon/M<\res$ denote the coalition's share of the miner revenue generated by each confirmed bid. Suppose first that the honest bid vector contains $N\le k$ bids, all strictly above $\res$, so that all $N$ bids are confirmed and the coalition's honest utility is $Na$. If the coalition injects $f$ fake bids and the resulting bid vector still satisfies the confirmation condition, its utility becomes
\[
(N+f)a-f\res=Na+f(a-\res)\le Na.
\]
If instead the injection causes the mechanism to confirm no bids, the coalition obtains utility $0\le Na$. Thus injecting fake bids cannot increase the coalition's utility.
Therefore MIC holds, including for randomized strategies by averaging.

It remains to prove that honest miners outside the coalition are protected against miner coalition or miner-user coalition.
Fix a coalition $\C$ containing $1\le m<M/2$ miners and possibly some users.

If the honest execution confirms no bid, then every honest miner outside $\C$ receives zero revenue and therefore cannot be harmed.
Suppose instead that the honest execution confirms $N>0$ bids.
Then $N\le k$, every bid is strictly greater than $\res$, and every miner receives revenue $N\epsilon/M$.

Fix an arbitrary strategy of $\C$.
If the deviation ends up with no bid being confirmed, then the coalition loses its positive honest miner revenue, so the deviation strictly decreases the coalition's utility.

It remains to consider a deviation under which the mechanism confirms a positive number of bids.
Changing a colluding user's bid while keeping it strictly above $\res$ does not affect either the allocation or payment.
Dropping a colluding user's bid with value $v_i>\res$ decreases the coalition's utility by $(v_i-\res)+\frac{m\epsilon}{M}>0,$
because the coalition loses both the user's utility and its share of the miner revenue generated by this bid.
Similarly, adding a fake bid decreases the coalition's utility by $\res-\frac{m\epsilon}{M}>0$, 
because the fake bid pays $\res$, while the colluding miners receives only an $m/M$ fraction of the additional revenue $\epsilon$.

Therefore, every deviation that either drops a colluding user's bid or adds a fake bid strictly decreases the coalition's utility.
Any deviation that does not decrease the coalition's utility must therefore preserve the same number of confirmed bids as under honest behavior.
Hence every honest miner outside $\C$ continues to receive revenue $N\epsilon/M$.

The same argument applies to randomized strategies, since every realization gives the coalition utility at most its honest utility, a randomized deviation can have expected utility at least the honest utility only if the realized bidding strategy preserves the same number of confirmed bids and miner revenue with probability one.
Thus MRHP and $d$-CRHP also protect honest miners outside the coalition.

\end{proof}
\smallskip\noindent{\bf Not $1$-SCP:} 
Assume $M\ge3$, so a coalition with one miner is allowed. Imagine a world with $k$ users, where $k-1$ of them have true value above $\res$, but one has a true value $(\res -\frac{\eps}{2M})$.
Consider the coalition consisting of one miner and this user with true value below $\res$.
In the honest case, no one gets confirmed, so this coalition gets a joint utility of $0$.
However, if the user raises their bid to $\res + 1$, the joint utility now becomes $k\cdot\frac{\eps}{M} + (\res -\frac{\eps}{2M}) - \res >0$.

\subsubsection{Deterministic Impossibility in the MPC-Assisted Model} \label{sec:mpc-det-imp}
The above all-or-nothing posted price is degenerate in the sense that it confirms no bid when there are more than $k$ users. 
Unfortunately, this is necessary for deterministic mechanisms, even in the MPC-assisted model.
\begin{lemma} \label{lem:cr-include-above}
   Suppose a deterministic TFM $({\bf I},{\bf C},{\bf P},{\bf R})$ satisfies UIC and URHP in the MPC-assisted model. For any bid vector $\bid = (b_1,\dots, b_n)$, if there exists some $b_i \in \bid$ such that $\util_{i}(\bid) >0$, then for every $j$ where $b_j$ is unique in $\bid$ and $b_j > b_i$, it must hold that $x_{j}(\bid) = 1$.
\end{lemma}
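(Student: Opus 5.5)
The plan is a proof by contradiction that uses only utility-neutral deviations of user $i$ itself. User $j$ cannot censor anyone in the MPC-assisted model, so URHP together with the Myerson characterization of deterministic UIC mechanisms (\cref{Myerson's}) has to do all the work.

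Suppose $\util_i(\bid)>0$, $b_j$ is unique in $\bid$, $b_j>b_i$, and $x_j(\bid)=0$. Since the mechanism is deterministic and UIC, \cref{Myerson's} gives a critical value $c:=p_i(\bid)<b_i$ for user $i$ against $\bid_{-i}$. For every $t>c$ we then have $x_i(\bid_{-i},t)=1$ and $p_i(\bid_{-i},t)=c$. Consequently, when user $i$ has true value $b_i$, any report $t>c$ gives it exactly the same utility $b_i-c$ as truth-telling. Symmetrically, if its true value is some $t>c$, reporting $b_i$ instead is utility-neutral for it. These neutral moves let user $i$ shuttle between $\bid$ and $(\bid_{-i},t)$. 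URHP then says that in each such world, no protected user (in particular $j$) may have strictly lower utility after the move than in the honest baseline.

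\textbf{Step 1.} Fix $t>b_j$ with $t$ distinct from all entries of $\bid$, and take the honest world $\bid^t:=(\bid_{-i},t)$ with user $i$'s true value $t$. User $i$ can deviate to report $b_i$. This leaves its utility at $t-c$, and the realized bid vector becomes $\bid$, where $j$ has utility $0$. URHP therefore forces $\util_j(\bid^t)\le 0$, hence $\util_j(\bid^t)=0$. So it suffices to show that $j$ has strictly positive utility in $\bid^t$.

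\textbf{Step 2 (the main obstacle).} To obtain $\util_j(\bid^t)>0$, I would exploit weak symmetry by relabeling. In $\bid^t$ the two bids $t>b_j$ both exceed $c$, and user $i$'s bid $t$ is confirmed at price $c$. First try a second URHP application in which user $j$ is the deviator: in the honest world $\bid^t$, user $j$ (true value $b_j$) reports a fresh unique value $s\in(c,b_j)$ not in $\bid$. I would then use Myerson monotonicity for $j$ to track whether $j$'s allocation is pinned down, and compare via weak symmetry with the permuted vector in which the bids at values $b_i$ and $s$ exchange owners. The aim is to show that if $j$ were unconfirmed at $b_j$ in $\bid^t$, then some user with a value strictly between $c$ and the top bid could be flipped by a neutral deviation into the world $\bid$ or back, producing a harmed confirmed user and contradicting URHP.

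If this direct route fails, the fallback is to prove the lemma by induction on the number of unique bids above $b_i$. Bids are processed from the largest downward, and at each stage \cref{Myerson's} is used to move the top bid above $b_j$ neutrally. This should ultimately show that the highest unique bid is always confirmed whenever some lower bid earns positive utility.

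\textbf{Step 3.} Once $\util_j(\bid^t)>0$ is established, Step 1 gives the contradiction, so $x_j(\bid)=1$. Uniqueness of $b_j$ is used throughout, since it makes the value-indexed quantities well defined under weak symmetry and rules out metadata tie-breaking.

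I expect Step 2, forcing $j$ to have positive utility in the raised profile, to be the only real difficulty. Everything else is a direct combination of \cref{Myerson's} with URHP applied to user $i$'s utility-neutral moves.
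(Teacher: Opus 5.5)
Your Step 1 is sound. With $c=p_i(\bid)$ the Myerson threshold, reporting $b_i$ instead of a true value $t>c$ is utility-neutral for user $i$, so URHP yields $\util_j(\bid_{-i},t)=0$.

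Step 2, however, is not a reduction. Given Step 1, proving $\util_j(\bid_{-i},t)>0$ under your contradiction hypothesis is logically equivalent to refuting that hypothesis, which is the lemma itself. What you offer for it is a plan, not an argument: either a fresh report $s\in(c,b_j)$ by user $j$ in the world $(\bid_{-i},t)$, or an unspecified induction whose target restates the lemma. The choice $t>b_j$ also makes things harder. User $j$ now sits below the positive-utility bidder, so you need the content of the paper's subsequent ``include-below'' lemma, and even more, since you need positive utility rather than mere confirmation. In the paper, that lemma is derived from the statement you are proving. Your Step 1 is precisely the closing move of that later proof, where the positive utility of $j$ is supplied by this lemma. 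Choosing $t\in(b_i,b_j)$ instead just reproduces the original configuration.

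The missing idea is to have user $j$, not user $i$, make the neutral move. It should happen in the original world $\bid$, and user $j$ should underbid to exactly $b_i$ rather than to a fresh value.
\begin{itemize}
\item Since $x_j(\bid)=0$, monotonicity gives $x_j(\bid_{-j},b_i)=0$, so user $j$'s utility stays $0$.
\item URHP then gives $\util_i(\bid_{-j},b_i)\ge\util_i(\bid)>0$, hence $x_i(\bid_{-j},b_i)=1$ by determinism.
\item Now raise user $i$'s bid, at its fixed index, from $b_i$ to $b_j$; the fixed index makes the tie with $j$ harmless. By monotonicity user $i$ is confirmed in the resulting vector, in which $i$ and $j$ have swapped bids.
\item That vector is a permutation of $\bid$ and $b_j$ is unique, so weak symmetry forces the bid at value $b_j$ to be unconfirmed. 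This is a contradiction.
\end{itemize}
The tie at exactly $b_i$ is what makes the final vector a permutation of $\bid$. With your fresh $s$, that link back to $\bid$ is lost.
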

\begin{proof}
    For the sake of contradiction, suppose that there exists $\bid$ where $\util_{i}(\bid) >0$ and $x_{j}(\bid) = 0$ for some unique $b_j>b_i$. Consider the bid vector $\bid'$ where the user $i$ and user $j$ swap bid values: $b'_i=b_j$, $b'_j=b_i$, and $b'_\ell=b_\ell$ for $\ell\notin\{i,j\}$. By weak symmetry, the bid at $b_j$ must now be unconfirmed, i.e. $x_i(\bid')= 0$, since $\bid'$ and $\bid$ only differ by metadata and the bid at $b_j$ is unique.
    
    However,  consider the world where $\bid$ is the honest bid vector and the intermediate bid vector $\bid^* = (\bid_{-j}, b_i)$ is achieved by the strategy where user $j$ underbids to $b_i$.
    By Myerson's Lemma (\cref{Myerson's}), the confirmation function $x_j(\bid_{-j}, v)$ is weakly increasing in v, and thus $0 \le x_j(\bid_{-j}, b_i) \le x_j(\bid_{-j},b_j) = 0$.
    Thus, the underbidding strategy cannot harm user $j$.
    Consequently, by URHP, $\util_i(\bid^*) \ge \util_i(\bid)>0$ which implies $x_i(\bid^*) = 1$ in a deterministic TFM. 
    Now, by Myerson's Lemma (\cref{Myerson's}), when user $i$ raises their bid to $b_j$ to transform $\bid^*$ into $\bid'$, $x_i(\bid') = 1$, which is a contradiction. 
\end{proof}
\begin{lemma}\label{lem: include below}
   Suppose a deterministic TFM $({\bf I},{\bf C},{\bf P},{\bf R})$ satisfies UIC and URHP in the MPC-assisted model. 
   For any bid vector $\bid = (b_1,\dots, b_n)$, if there exists some $b_i \in \bid$ such that $\util_{i}(\bid) >0$, then for every $j$ where $b_j$ is unique and $b_j > p_i(\bid)$, it must hold that $x_{j}(\bid) = 1$.
\end{lemma}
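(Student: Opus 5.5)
The plan is to reduce to the range not covered by \cref{lem:cr-include-above}, and then derive a contradiction with URHP by a two-step deviation, as sketched in the technical overview. Since $b_j$ is unique, $b_j\neq b_i$. If $b_j>b_i$, \cref{lem:cr-include-above} already gives $x_j(\bid)=1$. So I will assume, for contradiction, that $p_i(\bid)<b_j<b_i$ and $x_j(\bid)=0$. Because the TFM is deterministic and $\util_i(\bid)>0$, user $i$ is confirmed. By the deterministic form of Myerson's Lemma (\cref{Myerson's}), $p_i(\bid)$ is then the critical value of user $i$ against $\bid_{-i}$.

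\textbf{Step 1 (lowering $j$'s bid keeps $i$ happy).} Pick $b'_j\in(p_i(\bid),b_j)$ distinct from all other entries of $\bid$, and let $\bid'=(\bid_{-j},b'_j)$.
\begin{itemize}
\item By monotonicity in Myerson's Lemma, $x_j(\bid')\le x_j(\bid)=0$. Hence user $j$ (true value $b_j$) still has utility $0$, which equals its honest utility. The deviation is therefore rational.
\item By URHP, $\util_i(\bid')\ge\util_i(\bid)=b_i-p_i(\bid)>0$.
\item Hence $i$ is confirmed in $\bid'$ and $p_i(\bid')\le p_i(\bid)$. So the critical value of $i$ against $\bid'_{-i}$ is at most $p_i(\bid)$.
\end{itemize}

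\textbf{Step 2 (user $j$ wins in the profile $(\bid_{-i},v)$).} Pick $v\in(p_i(\bid),b'_j)$ distinct from all entries.
\begin{itemize}
\item In $(\bid'_{-i},v)$, user $i$ is confirmed and pays at most $p_i(\bid)<v$, so it has positive utility.
\item Since $b'_j>v$ is unique, \cref{lem:cr-include-above} gives that $b'_j$ is confirmed in $(\bid'_{-i},v)$.
\item By Myerson's Lemma, the critical value of $j$ against $(\bid_{-i,-j},v)$ is then at most $b'_j<b_j$. So in the profile $\bid''=(\bid_{-i},v)$, user $j$ is confirmed and has utility at least $b_j-b'_j>0$.
\item Also, in $\bid''$, user $i$ bidding $v>p_i(\bid)$ is confirmed and pays exactly the critical value $p_i(\bid)$ against $\bid_{-i}$.
\end{itemize}

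\textbf{Step 3 (the URHP violation).} Take $\bid''$ as the honest profile, with $i$'s true value equal to $v$. Let user $i$ deviate by bidding $b_i$; the resulting bid vector is exactly $\bid$.
\begin{itemize}
\item User $i$ is still confirmed at the same critical price $p_i(\bid)$. Its utility is $v-p_i(\bid)$, unchanged from truthful bidding, so the deviation is weakly rational.
\item User $j$ drops from positive utility in $\bid''$ to $x_j(\bid)=0$, that is, to zero utility.
\end{itemize}
This contradicts URHP, and proves the lemma.

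The main obstacle I expect is Step 2. Establishing that $j$ has strictly positive utility in $(\bid_{-i},v)$ needs an intermediate vector in which $i$'s positive utility is certified, so that \cref{lem:cr-include-above} can be applied. That is why Step 1 must first transfer $i$'s positive utility and its payment bound to $\bid'$ via URHP. Care is also needed in choosing $b'_j$ and $v$ distinct from the other bids, so that the uniqueness hypotheses of \cref{lem:cr-include-above} and weak symmetry apply.
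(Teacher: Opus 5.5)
Your proposal is correct and follows essentially the same argument as the paper. You lower $j$'s bid to $b'_j$ and use URHP to get $p_i(\bid')\le p_i(\bid)$, apply \cref{lem:cr-include-above} in $(\bid'_{-i},v)$ to give $j$ positive utility in $(\bid_{-i},v)$, and then let user $i$ make the utility-neutral deviation back to $b_i$. Your preliminary reduction to $b_j<b_i$ is valid but unnecessary, since the main argument only uses $b_j>p_i(\bid)$.
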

\begin{proof}
    For the sake of contradiction, suppose that there exists $\bid$ where $\util_{i}(\bid) >0$ and $x_{j}(\bid) = 0$ for some unique $b_j>p_i(\bid)$.
    We then define $\bid' = (\bid_{-j}, b'_j)$ where $b_j' \in (p_i(\bid),b_j) \setminus \set{b_z: z \in [n]}$.
    Note that $b_j'$ is unique in $\bid'$.
   By Myerson's Lemma (\cref{Myerson's}), the confirmation function $x_j(\bid_{-j}, v)$ is weakly increasing in v, and thus $0 \le x_j(\bid_{-j}, b'_j) \le x_j(\bid_{-j},b_j) = 0$.
   This means the expected utility of user $j$ is still $0$ under the underbidding strategy.
    In a world where $\bid$ is the honest bid vector, user $j$ can adopt the strategy to underbid to $b'_j$ to achieve the realized bid vector $\bid'$. 
    Thus, by URHP, $\util_i(\bid') \ge \util_i(\bid)>0$ which implies $p_i(\bid')\le p_i(\bid)$ in a deterministic TFM. 
    \begin{claim}
        For $v \in (p_i(\bid),b_j')$, $\util_j(\bid_{-i},v)>0$.
    \end{claim}
    \begin{proof}
        It is sufficient to prove that $x_j(\bid'_{-i},v) =1$ as \cref{cor:increasing} would imply that $\util_j(\bid_{-i},v) >\util_j(\bid'_{-i},v)  \ge 0$.
        By Myerson's Lemma (\cref{Myerson's}), $x_i(\bid'_{-i},v) = 1$ since $v > p_i(\bid) \ge p_i(\bid')$.
        Furthermore, $v< b'_j,$ so by \cref{lem:cr-include-above}, $x_j(\bid'_{-i},v) =1$.
    \end{proof}
    Consider any $v\in(p_i(\bid),b_j')$ and the true-value profile $\widehat{\val}:=(\bid_{-i},v).$
    By the above claim, user $j$ has strictly positive utility when everyone behaves honestly under $\widehat{\val}$.

Now consider the deviation in which user $i$ reports $b_i$, producing the submitted bid vector $\bid$.
Since $\util_i(\bid)>0$, user $i$ is confirmed when bidding $b_i$, and in a deterministic UIC mechanism $p_i(\bid)$ is the threshold payment for user $i$ when the other bids are fixed to $\bid_{-i}$.
Because $v>p_i(\bid)$, honest bid $v$ is also confirmed and pays the same threshold $p_i(\bid)$.
Therefore, with user $i$'s true value fixed to $v$, both truthful reporting and the deviation to report $b_i$ give user $i$ utility $v-p_i(\bid)>0.$
Thus,
\[
\util_i(\widehat{\val};S_i)
=
\util_i(\widehat{\val};H_i),
\]
where $S_i$ denotes the strategy that reports $b_i$.

Under this deviation, the resulting bid vector is $\bid$, where $\util_j(\bid)=0$ by assumption.
Hence user $i$ has a utility-neutral deviation that reduces user $j$'s utility from a strictly positive value to zero, contradicting URHP.
\end{proof}
\begin{theorem} \label{thm:cr-ind-k+1}
    Let $k$ denote the block size.
    If a deterministic TFM $({\bf I},{\bf C},{\bf P},{\bf R})$ satisfies UIC and URHP in the MPC-assisted model, then no bid can be confirmed whenever the number of bids exceeds $k$.
\end{theorem}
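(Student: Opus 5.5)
We argue by contradiction, following the deterministic induction sketched in \cref{sec:overview-impossibility} but replacing the plain-model censor-then-replace tool with \cref{lem: include below}. Suppose a deterministic UIC and URHP TFM confirms some bid $v_{i^*}$ in a vector $\val$ with $|\val|=n>k$. We build vectors $\val^{(m)}$ for $m=1,\dots,k+1$ by appending fresh bids, and prove inductively the invariant that in $\val^{(m)}$ the appended bids $b_1>b_2>\dots>b_m>\max(\val)$ are all unique, all confirmed, and the most recently added one has payment at most $\max(\val)$. At $m=k+1$ this gives $k+1$ confirmed bids in a single deterministic block, contradicting space feasibility.

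\emph{Base case.} Set $b_1=\max(\val)+1$ and $\val^{(1)}=(\val_{-i^*},b_1)$. By Myerson's lemma (\cref{Myerson's}), raising $v_{i^*}$ to $b_1$ keeps it confirmed with the same threshold payment $p_{i^*}(\val)\le v_{i^*}\le\max(\val)$. Hence $b_1$ has utility at least $1>0$.

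\emph{Inductive step.} Assume $b_{m-1}$ is confirmed in $\val^{(m-1)}$ with payment at most $\max(\val)$. By Myerson's lemma it suffices to show that every appended bid $b\in(\max(\val),b_{m-1})$, avoiding existing values, is confirmed in $(\val^{(m-1)},b)$. Then we can take $b_m$ to be the midpoint, with threshold payment at most $\max(\val)$ and positive utility. Suppose instead that some such $b^*$ is unconfirmed.
\begin{enumerate}
\item Consider the honest vector $\val^{(m-1)}$. Since it has more than $k$ bids, some user $j$ is unconfirmed and has utility $0$. Let $j$ also inject the fake bid $b^*$. The fake bid is unconfirmed, so it costs nothing, and $j$'s real bid still yields nonnegative utility, i.e., $0$. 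This is therefore a utility-neutral user-only deviation.
\item By URHP, user $u$ bidding $b_{m-1}$ is not harmed. So in $(\val^{(m-1)},b^*)$ it stays confirmed with payment $p'\le p_u(\val^{(m-1)})\le\max(\val)<b^*$, and in particular has positive utility.
\item Apply \cref{lem: include below} to $(\val^{(m-1)},b^*)$ with $i=u$. The unique bid $b^*>p'$ must then be confirmed, contradicting the assumption that $b^*$ is unconfirmed.
\end{enumerate}
Uniqueness of every appended value is ensured by choosing values off the finite set of existing bids; halving toward $\max(\val)$ keeps them distinct.

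The main obstacle is step 2. URHP alone only says $u$'s utility does not drop; we need to convert this into a statement about confirmation and payment. Because the mechanism is deterministic, $u$'s utility is either $b_{m-1}-p'$ (if confirmed) or $0$ (if not). Preserving a strictly positive utility therefore forces confirmation with $p'\le p_u$. A secondary subtlety is the deviator $j$: it must be a real unconfirmed user (one exists since the block holds at most $k$ bids), and its fake bid at $b^*$ must be treated as the appended bid in $(\val^{(m-1)},b^*)$. Weak symmetry handles the fact that the bid comes from a different identity.
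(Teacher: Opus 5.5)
Your proposal is correct and matches the paper's proof essentially step for step: the same sequence $b_m$, the same invariant on the newest bid, and the same contradiction in the inductive step. There, an unconfirmed user injects the unconfirmed fake bid $b^*$ at no cost, URHP keeps $b_{m-1}$ confirmed at payment at most $\max(\val)$, and \cref{lem: include below} then forces $b^*$ to be confirmed. The one thing you assert without argument is that $b_1,\dots,b_k$ are still confirmed in $\val^{(k+1)}$, since your inductive step only establishes confirmation for the newest bid. This follows in one line, as in the paper, by applying \cref{lem: include below} (or \cref{lem:cr-include-above}) to $b_{k+1}$, which has positive utility and payment at most $\max(\val)<b_\ell$ for every $\ell\le k$.
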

\begin{proof}
Suppose, for the sake of contradiction, that there is a bid vector $\val=(v_1,\dots,v_n)$ with $n>k$ and some confirmed bid. Choose a confirmed bid $i^*$ and define $b_1=\max(\val)+1$ and $\val^{(1)}=(\val_{-i^*},b_1)$. For $m>1$, choose $b_m=\frac12(b_{m-1}+\max(\val))$ and set $\val^{(m)}=(\val^{(m-1)},b_m)$. 

We use the following claim, proved below by induction:
\begin{equation}
\label{eqn:deterministic-induction}
    \widetilde{x}_{b_m}(\val^{(m)})=1
\qquad\text{and}\qquad
\widetilde{p}_{b_m}(\val^{(m)})\le \max(\val)
\quad\text{for every }m\ge1.
\end{equation}
Applying to $m=k+1$, the bid $b_{k+1}$ is confirmed and pays at most $\max(\val)$, so it has strictly positive utility.
Since $b_1>\cdots>b_{k+1}>\max(\val)$ and all these bids are unique, \cref{lem:cr-include-above} implies that $b_1,\ldots,b_k$ are also confirmed in $\val^{(k+1)}$.
Hence $\val^{(k+1)}$ contains $k+1$ confirmed bids, contradicting block capacity $k$.
\end{proof}
\begin{proof}[of Claim \ref{eqn:deterministic-induction}]
The proof is by induction. The base case follows from UIC and Myerson's threshold-payment formula because $b_1$ replaces a confirmed bid by a higher value, so $\widetilde{x}_{b_1}(\val^{(1)})=1$ and $\widetilde{p}_{b_1}(\val^{(1)})\le \max(\val)$. For the inductive step, fix $b'\in(\max(\val),b_{m-1})$, keeping the appended bid's index and metadata fixed as its amount varies. If the unique fake bid $b'$ were unconfirmed in $(\val^{(m-1)},b')$, an unconfirmed user (which exists since $|\val^{(m-1)}|>k$) could inject that unconfirmed fake bid without lowering its zero honest utility: the fake bid pays zero, and individual rationality gives nonnegative utility from its unchanged truthful primary bid. By \cref{lem: include below}, the bid $b_{m-1}$ must then become unconfirmed or pay at least $b'$: if it remained confirmed and paid less than $b'<b_{m-1}$, it would have positive utility and the lemma would force $b'$ to be confirmed. Since $b'>\max(\val)\ge\widetilde{p}_{b_{m-1}}(\val^{(m-1)})$, in either case the utility of the bid $b_{m-1}$ would be strictly lower than in the original vector $\val^{(m-1)}$, violating URHP. Hence every $b'\in(\max(\val),b_{m-1})$ is confirmed. Myerson's lemma then gives the payment bound for $b_m$: its threshold is at most the infimum of this interval, namely $\max(\val)$, and the induction follows.
\end{proof}

\section{IC and RHP are Incomparable}
\label{sec:compare}
This section presents the mechanisms witness to show that IC and RHP are incomparable with respect to all three types of strategic players and in both the plain and the MPC-assisted models.
We summarize the mechanisms in \cref{tab:compare-plain} for the plain model and \cref{tab:compare-mpc} for the MPC-assisted model. 

\subsection{Comparison in the Plain Model}
\newcolumntype{L}{>{\raggedright\arraybackslash}X}
\newcolumntype{J}{>{\raggedright\arraybackslash}m{32mm}}
\newcolumntype{U}{>{\centering\arraybackslash}m{28mm}}

\begin{table}
\caption{Mechanisms for the incomparability of IC and RHP in the plain model.}
\label{tab:compare-plain}
\centering
\small
\setlength{\tabcolsep}{4pt}
\renewcommand{\arraystretch}{1.2}
\begin{tabularx}{\linewidth}{@{} J | L U U @{}}
\toprule
\textbf{Strategic Player} & \textbf{Mechanism} & \textbf{IC} & \textbf{RHP} \\
\midrule
\multirow{2}{*}{User} 
  & Burning first-price auction (\cref{sec:user-compare-plain}) & \xmark~UIC  & \cmark~URHP \\
  & Second price auction (\cref{sec:user-compare-plain}) & \cmark~UIC  & \xmark~URHP \\
\midrule
\multirow{2}{*}{Miner} 
  & All-or-nothing posted price (\cref{sec:aon-posted}) & \xmark~MIC  & \cmark~MRHP \\
  & 2-winner second-price (\cref{sec:miner-compare-plain}) & \cmark~MIC  & \xmark~MRHP \\
\midrule
\multirow{2}{*}{Miner-user coalition} 
  & All-or-nothing posted price (\cref{sec:aon-posted}) & \xmark~$1$-SCP  & \cmark~$1$-CRHP \\
  & 2-winner second-price (\cref{sec:miner-compare-plain}) & \cmark~$1$-SCP  & \xmark~$1$-CRHP \\
\bottomrule
\end{tabularx}
\end{table}

\subsubsection{UIC vs. URHP in the Plain Model}
\label{sec:user-compare-plain}
\begin{mdframed}
    \begin{center}
        {\bf Second-price auction}
    \end{center}
    \begin{itemize}[label=\textbullet, leftmargin = 5mm]
        \item \textbf{Inclusion \& Confirmation:} Include highest two bids, and confirm the top bid. Break ties arbitrarily. 
        \item \textbf{Payment \& Revenue:} All confirmed bids pay the price of the lowest included bid (if only one bid is included, is pays $0$) and all payments go to the miner.
    \end{itemize}
\end{mdframed}
\begin{lemma}[UIC does not imply URHP in the plain model]
\label{lem:uic-not-rhp}
The above second-price auction satisfies UIC but not URHP in the plain model.
\end{lemma}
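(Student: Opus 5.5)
The proof has two parts: verify UIC, then give an explicit counterexample to URHP.

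\textbf{UIC.} Under honest inclusion, the mechanism is the classical Vickrey auction for one item. The top bid is confirmed and pays the second-highest bid, or $0$ if it is the only bid. For a fixed user $i$ and fixed other bids $\bid_{-i}$, the allocation rule is a threshold rule. User $i$ wins iff $b_i$ exceeds $\max(\bid_{-i})$ (ties broken arbitrarily), and a winner pays exactly that threshold. This is the deterministic form of Myerson's lemma (\cref{Myerson's}), so truthful bidding with a single bid is optimal among single-bid deviations.

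It then remains to rule out fake bids. The key observation is that injecting fake bids can only raise the highest competing bid that user $i$'s real bid faces, and hence only raise its threshold. It can never help the real bid win or pay less. If a fake bid itself wins, the user obtains utility $-p\le 0$, since the fake bid has true value $0$. The user's real bid is then unconfirmed, which is no better than truthful bidding, whose utility is nonnegative by individual rationality. Refusing to bid gives $0$, again no better. Mixed strategies follow by averaging.

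\textbf{Not URHP.} Take two users with true values $v_1 > v_2 \ge 0$, say $v_1=2$ and $v_2=1$. Under honest behavior, user $1$ is confirmed and pays $1$, so $\util_1 = 1$; user $2$ is unconfirmed, so $\util_2 = 0$. Now let user $2$ deviate by bidding $b_2 = 3/2 < v_1$. User $2$ is still the lower bid, so it remains unconfirmed with utility $0$, and the deviation is therefore utility-neutral for the deviator. User $1$, however, now pays $3/2$, and its utility drops to $1/2 < 1$. This violates \Cref{defn:rhp} for $\C=\{2\}$ with user $1$ protected.

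The only step needing mild care is the fake-bid case in UIC. There one must argue that the user's real bid faces a weakly higher threshold, and that any confirmed fake bid contributes nonpositive utility. Both facts follow directly from the top-bid rule.
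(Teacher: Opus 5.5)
Your proposal is correct and follows essentially the same route as the paper. For UIC, both use the Myerson threshold rule and the observation that fake bids only raise the real bid's threshold while confirmed fake bids yield nonpositive utility; for the URHP failure, both have a losing bidder raise its bid, still below the winner's, to increase the winner's payment at no cost to itself (the paper uses values $(0,v_2)$ with a deviation to $v_2/2$, you use $(2,1)$ with a deviation to $3/2$, which is equally valid).
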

\begin{proof}
\textbf{UIC:} For deviations without fake bids, UIC follows from the fact that for any fixed user $i$, for fixed other users' bids $\bid_{-i}$, user $i$'s allocation is monotone and the price is exactly as defined in \cref{Myerson's}. To include fake bids, let $r$ be the highest outside bid, or zero if none exists. Truthful utility is $\max\{v_i-r,0\}$. Under any deviation, a confirmed primary bid pays at least $r$, while confirmed fake bids only add payment costs; without a confirmed primary bid, utility is nonpositive. Thus every deviation is bounded by truthful utility, also after averaging over mixed strategies.\\
\textbf{Not URHP:} Suppose the honest bid vector was $(0,v_2)$ where $0<v_2$. Then, in the honest case, user $1$ has utility $0$ and user $2$ has utility $v_2$.
When user $1$ raises their bid to $\frac{v_2}{2}$, their utility remains the same.
However, user $2$'s payment rises, resulting in half the utility $v_2-\frac{v_2}{2} = \frac{v_2}{2}$.
\end{proof}

\begin{mdframed}
    \begin{center}
        {\bf Burning first-price auction}
    \end{center}
    \begin{itemize}[label=\textbullet, leftmargin = 5mm]
        \item \textbf{Inclusion \& Confirmation:} Include and confirm the top $k$ bids, breaking ties arbitrarily. 
        \item \textbf{Payment \& Revenue:} All confirmed bid price pay their bids, and all payments are burned.
    \end{itemize}
\end{mdframed}
\begin{lemma}[URHP does not imply UIC in the plain model]
\label{lem:urhp-not-uic}
The above burning first-price auction satisfies URHP but not UIC in the plain model. 
\end{lemma}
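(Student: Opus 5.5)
The statement has two parts, and I would prove each one directly.

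\textbf{Not UIC.} Take $k=1$ and two users with true values $v_1=2$ and $v_2=1$. Under truthful bidding, user~1 is confirmed and pays its bid $2$, so its utility is $0$. If user~1 instead bids $1.5$, it is still the top bid and is confirmed, paying $1.5$, for utility $0.5>0$. The same underbidding works for any $k$: take $k+1$ users and let a winner shade its bid down to just above the $(k+1)$-th highest value. This is a strictly profitable deviation, so the mechanism is not UIC.

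\textbf{URHP.} The plan is to show that every honest user's baseline utility is zero, and that no deviation can push any honest user's utility below zero.

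First, fix any true value vector $\val$. Under honest play, each confirmed user pays its own bid, which equals its true value, so its utility is $v_i-v_i=0$. Each unconfirmed user has utility $0$. Hence $\util_j(\val;H_{\{i\}})=0$ for every protected user $j\neq i$.

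Second, fix any (possibly randomized) strategy $S_i$ of a deviating user $i$. Every honest user $j$ still bids $v_j$ truthfully. In each realization, either $j$ is confirmed and pays exactly its bid $v_j$, giving utility $0$, or $j$ is unconfirmed, giving utility $0$. Individual rationality guarantees the payment never exceeds the bid, so $j$'s utility is never negative. Taking expectations, $\util_j(\val;S_i)=0=\util_j(\val;H_{\{i\}})$.

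So the URHP conclusion holds for every deviation, and the hypothesis on the deviator's utility is not even needed.

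The main point to check is that burning the payments does not affect the argument. URHP protects only users, not the miner, and the plain-model miner is honest in a user-only deviation. Since the inclusion rule always confirms the top $k$ bids, the deviator can only change which honest users are confirmed, never what a confirmed honest user pays relative to its value. I do not expect any real obstacle beyond stating this case split cleanly.
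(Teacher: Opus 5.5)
Your proof is correct and follows the same route as the paper. Every protected honest user gets utility exactly $0$ in every realization, since a confirmed truthful bid pays its value and an unconfirmed bid pays nothing, so URHP holds regardless of the deviator's utility; a winner underbidding while staying confirmed then witnesses the failure of UIC, and your explicit $k=1$ example just makes the paper's one-line underbidding argument concrete.
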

\begin{proof}
    \textbf{URHP:} Every protected honest user has utility exactly $0$ under every deviation: a confirmed truthful bid pays its value, and an unconfirmed bid pays nothing. Therefore no deviation, including a mixed strategy, can harm an honest user. URHP does not protect miner revenue.

    \textbf{Not UIC:} There is an incentive for strategic underbidding, where a user can bid below their true valuation to increase utility while still being confirmed. 
\end{proof}

\subsubsection{MIC vs. MRHP in the Plain Model}
\label{sec:miner-compare-plain}
\begin{mdframed}
    \begin{center}
        \textbf{2-winner second-price auction} \hfill //Block size $k \ge 2$.
    \end{center}
    \begin{itemize}[label=\textbullet]
        \item \textbf{Inclusion Rule \& Confirmation Rule:} The honest inclusion rule includes the two highest bids, or all bids if there is only one submitted bid. 
        The confirmation rule confirms the top two included bids, or the single bid if only one is included. Break ties deterministically based on metadata.
        \item \textbf{Payment Rule:} If two bids are confirmed, both pay the lower confirmed bid. If only one bid is confirmed, it pays $0$.
        \item \textbf{Miner Revenue Rule:} The miner receives the common payment of the confirmed bids as revenue, while the remaining payments are burned. 
    \end{itemize}
\end{mdframed}
\begin{lemma}[MIC does not imply MRHP in the plain model]
\label{lem:2-winners}
    The above 2-winner second price auction is MIC but not MRHP.
\end{lemma}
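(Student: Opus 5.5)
We prove the two halves of \cref{lem:2-winners} separately: MIC by a revenue upper bound, and the failure of MRHP by an explicit censoring example.

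\textbf{MIC.} Fix the honest bid vector $\bid$ and consider any miner strategy. The strategy produces a block $B$ of at most $k$ bids, some of them real and some fake. By the confirmation rule, at most two bids in $B$ are confirmed, and the miner's revenue equals the common payment, which is the lower of the two confirmed bids (or $0$ if only one bid is confirmed). We split into cases according to the lower confirmed bid $\ell$.
\begin{itemize}
\item If $\ell$ is a fake bid, the miner pays $\ell$ for it and receives revenue $\ell$. Its net utility is therefore at most $0$, after also subtracting any payment for a second fake bid.
\item If both confirmed bids are real, then $\ell$ is at most the second-highest real bid $b_{(2)}$. Honest revenue is exactly $b_{(2)}$ when there are at least two real bids.
\item If exactly one confirmed bid is real and the other (higher) one is fake, the miner pays that fake bid's amount, which is at least $\ell$. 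Net utility is again at most $0$.
\end{itemize}
Hence every pure deviation gives utility at most $\max\{b_{(2)},0\}$, which is the honest utility. When there is at most one real bid, honest revenue is $0$ and every deviation is bounded by $0$ by the same cases. Averaging over pure strategies gives MIC for mixed strategies as well. One point to double-check is the tie case $\ell=b_{(2)}$ together with the requirement that the miner must include the top two bids. This only affects which outcomes are possible, not the bound.

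\textbf{Not MRHP.} We exhibit a utility-neutral censoring deviation. Take $k\ge 3$ honest bids $(v_1,v_2,v_3)$ with $v_1>v_2=v_3>0$, and let the tie be broken so that user $2$ is confirmed. Honest revenue is $v_2$, and user $2$ has utility $0$.

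A cleaner choice avoids ties: take $(v_1,v_2,v_3)$ with $v_1>v_2>v_3$. The miner drops user $2$ and includes $v_1,v_3$, which gives revenue $v_3<v_2$. That deviation lowers revenue, so it does not contradict RHP. Instead we use a zero-revenue situation. With a single honest bid $v_1>0$, the user is confirmed, pays $0$, and has utility $v_1$, while the miner's revenue is $0$. The miner censors this bid and includes nothing, keeping revenue $0$ and therefore its utility unchanged. The user's utility drops from $v_1$ to $0$, which violates MRHP.

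The main obstacle is the fake-bid accounting in the MIC case analysis, specifically making sure that a fake bid at the higher confirmed position cannot be profitable. This holds because the fake bid's payment equals the lower bid, so the miner pays exactly what it collects from that slot.
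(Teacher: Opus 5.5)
Your proof is correct. The MIC half is essentially the paper's argument. When both confirmed bids are real, revenue is at most the second-highest honest bid. Any confirmed fake bid costs the miner as much as the common price it collects. There is one small slip: in your third bullet, the higher fake bid pays the common price $\ell$, not its own amount. Your closing remark has this right, and the bound $\le 0$ is unaffected.

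For the failure of MRHP you use a different witness than the paper.
\begin{itemize}
\item The paper takes the honest vector $(1,1,3)$. The top bidder has utility $2$. Censoring it leaves the tied pair $(1,1)$, which still yields revenue $1$. So the miner is indifferent while the top bidder drops to $0$.
\item You use a lone bidder instead. It is confirmed at price $0$, so honest revenue is $0$. Censoring it, by proposing an empty block (which the plain-model miner may do), is therefore costless.
\end{itemize}

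Your example is simpler and needs no tie. The paper's example shows that the failure is not an artifact of a zero-revenue baseline. Even when the miner earns positive revenue, it can harm the winner whenever a duplicate of the price-setting bid is present.

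Your first abandoned attempt, with $v_1>v_2=v_3$, was one step from the paper's example. The step was to censor user~$1$, whose utility is $v_1-v_2>0$, rather than examine user~$2$. Both abandoned attempts should be removed from the final write-up.
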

\textbf{MIC:} Let $h_2$ denote the second-highest honest bid, taking $h_2=0$ if fewer than two honest bids are submitted. Under honest behavior, the miner's utility is $h_2$.

Consider any miner deviation. If fewer than two bids are confirmed, the miner revenue is $0$. Suppose two bids are confirmed and let $p$ be the lower confirmed bid, so the miner revenue is $p$. If both confirmed bids are real user bids, then $p\leq h_2$. If at least one confirmed bid is fake bid from the miner, then every confirmed fake bid pays $p$, while the miner receives total revenue only $p$; hence the miner's utility is at most $0\leq h_2$. Thus no deviation gives the miner utility greater than the honest utility, and MIC holds.

\textbf{Not MRHP:} Consider the honest bid vector $(1,1,3)$. The lowest confirmed bid is $1$, so the miner revenue is $1$ and user $3$ has utility $2$. The miner can censor bid $3$. The two remaining bids are $(1,1)$, so the miner revenue remains $1$, while user $3$ becomes unconfirmed and its utility falls to $0$. Hence MRHP fails.
\begin{lemma}[MRHP does not imply MIC in the plain model]
    All-or-nothing posted-price auction (\cref{sec:aon-posted}) with $\eps > 0$ is MRHP but not MIC.
\end{lemma}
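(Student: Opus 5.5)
The statement has two halves, and I would handle them separately: MRHP holds, and MIC fails. The MRHP half is essentially already done. The MRHP paragraph in the proof of \cref{plain-AorN} shows that all-or-nothing posted-price with $\eps>0$ satisfies MRHP in the plain model, so I would restate and tighten that argument. The MIC half needs an explicit profitable deviation, which is also noted after \cref{plain-AorN}.

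\textbf{Not MIC.} Take an honest bid vector with $k+1$ bids, all strictly above $\res$. The mechanism sees more than $k$ bids, so it confirms nothing and the honest miner revenue is $0$. Instead, the miner drops one bid and includes the remaining $k$ bids as the block. The all-or-nothing test now passes, all $k$ bids are confirmed, and the miner earns $k\eps>0$ without injecting any fake bid. This strictly beats honest behavior, so MIC fails.

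\textbf{MRHP.} Fix true values $\val$, and recall that the protected set is all users. I would split on whether the honest outcome confirms any bid.
\begin{itemize}
\item If the honest outcome confirms no bid, every user has honest utility $0$. By individual rationality no user can end up below $0$, so no one can be harmed.
\item Otherwise there are $N\le k$ bids, all strictly above $\res$. Every user is confirmed with utility $v_i-\res>0$, and the miner's honest utility is $N\eps$.
\end{itemize}

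In the second case, consider a pure miner strategy that outputs a block $B$, made of some subset of the honest bids together with $f$ fake bids. It harms a protected user only if that user ends up unconfirmed, since payments are fixed at $\res$. That happens exactly when $B$ omits the user or $B$ fails the test.
\begin{itemize}
\item If $B$ fails the test, the miner gets $0<N\eps$.
\item If $B$ passes the test with $N'$ real bids and $f$ fake bids, the miner's utility is $N'\eps+f(\eps-\res)\le N'\eps$, using $\eps\le\res$.
\end{itemize}
So the miner's utility is at most $N\eps$ always, and at most $(N-1)\eps$ whenever some real user is omitted. Hence every harmful pure strategy is strictly worse for the miner than honest behavior.

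The main step needing care is mixed strategies. I would use the observation above that every pure strategy gives the miner at most its honest utility $N\eps$. A mixed strategy therefore reaches expected utility at least $N\eps$ only if, with probability one, it plays pure strategies that attain $N\eps$. Each such strategy includes all $N$ real bids and passes the test, so it harms no user. Therefore the expected utility of every protected user is preserved, and MRHP holds.
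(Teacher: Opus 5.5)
Your proposal is correct and follows essentially the same route as the paper, which proves this lemma simply by citing \cref{plain-AorN}. That lemma's MRHP paragraph is your revenue-loss argument (omitting a user or failing the test costs at least $\eps$, and each confirmed fake bid nets $\eps-\res\le 0$) plus the same pure-to-mixed averaging step, and the remark after it gives your drop-to-$k$-bids deviation showing that MIC fails.
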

\begin{proof}
    By \cref{plain-AorN}.
\end{proof}
\subsubsection{1-SCP vs. 1-CRHP in the Plain Model}
\begin{lemma}[$1$-SCP  does not imply $1$-CRHP in the plain model]
     The $2$-winner second-price auction is $1$-SCP but not $1$-CRHP in the plain model.
\end{lemma}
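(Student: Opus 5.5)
We need to prove two things about the 2-winner second-price auction in the plain model: that it is $1$-SCP, and that it fails $1$-CRHP. The plan is to handle $1$-SCP by a case analysis on the coalition's honest utility, mirroring the MIC argument of \cref{lem:2-winners}, and to refute $1$-CRHP by adapting the explicit censoring example used there.

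\textbf{$1$-SCP.} Fix a coalition of the miner and one user $u$ with true value $v_u$, and let the honest values be given. Write $h_1\ge h_2\ge h_3$ for the top three values among all real users, including $u$, with missing values taken as $0$. Under honest behavior the coalition obtains the miner revenue $h_2$ (or $0$ if there is a single user), plus $v_u-h_2$ if $u$ is among the top two.

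\emph{Honest baseline.} If $u$ is among the top two, the coalition's honest utility is $h_2+(v_u-h_2)=v_u$. Otherwise it is $h_2$, and in that case $v_u\le h_2$.

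\emph{Deviations.} Consider any block the coalition constructs. If fewer than two bids are confirmed, the revenue is $0$, and if $u$ alone is confirmed it pays $0$, so the coalition gets at most $v_u$. If two bids are confirmed with common price $p$, the miner gains $p$. Each confirmed coalition bid pays $p$, so a confirmed fake bid contributes net $-p$, and $u$ contributes $v_u-p$ when confirmed. Hence the coalition's utility is at most $\max\{v_u,\,p\}$. When $p$ is realized as the lower of two real outside bids, $p\le h_2$ if $u$ is not among the top two. When $u$ is among the top two, $p$ is at most the second-highest outside value, which is $\le h_2$, and then $\max\{v_u,p\}\le v_u$ since $v_u\ge h_2$.

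\emph{Conclusion.} In every case the deviation's utility is at most the honest utility. Mixed strategies are then handled by averaging over realizations.

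The main obstacle is the bookkeeping for the mixed configurations, where one confirmed bid comes from the coalition and the other from an outside user. This is resolved by the observation that payments among coalition members cancel. The coalition's total utility is therefore the miner revenue from outsiders plus $u$'s value if $u$ is confirmed; payments by fake bids and by $u$ are internal transfers, so the net is exactly that. That total is bounded by either $v_u$ or an outside second value.

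\textbf{Not $1$-CRHP.} Take honest values $(1,1,3)$ and a coalition of the miner with one of the users of value $1$. Honestly, the users valued $3$ and $1$ are confirmed at price $1$, the miner earns $1$, and the colluding user, whichever of the tied bids it holds, gets $0$ utility. The coalition's honest utility is therefore $1$.

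The deviation is for the miner to censor the bid $3$ and include the two bids of value $1$. The revenue remains $1$ and the colluding user still gets utility $0$, so the coalition's utility is unchanged at $1$. Meanwhile the honest user with value $3$ drops from utility $2$ to $0$. This utility-neutral but harmful deviation violates $1$-CRHP.
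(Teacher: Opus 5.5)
Your proposal is correct and follows essentially the paper's route. For $1$-SCP the paper likewise bounds the coalition's utility by $v_i$ when the colluding user is confirmed, and otherwise by the honest price $p_H$ (or by $0$ once a fake bid is confirmed). Your $(1,1,3)$ witness, which reuses the MRHP counterexample with a zero-utility colluding user, is as valid as the paper's $(1,2)$ example, where censoring shifts the miner's revenue to the colluding user.

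One small correction: coalition payments are not pure internal transfers in this mechanism. The miner receives the common price $p$ only once and the rest is burned, so ``the net is exactly that'' should read ``at most that''. For example, $u$ confirmed together with a fake bid yields $v_u-p$. The inequality is all your bound uses.
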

\begin{proof}
\textbf{$1$-SCP:} Suppose, for the sake of contradiction, that $\val$ is the true value vector and there exists a coalition 
$\mathcal{C}$ consisting of the miner and a user~$i$ with valuation~$v_i$, 
and a joint strategy $S_{\mathcal{C}}$ such that $\util_\C(\val;S_\C) > \util_\C(\val;H_\C)$.
Let $p$ denote the payment associated with each confirmed bid under $S_{\mathcal{C}}$ and let $p_H$ denote the payment associated with each confirmed bid in the honest outcome.
We bound utility for each realized deviation; averaging these bounds then covers mixed strategies.

\paragraph{Case 1: $i$ is confirmed under $S_{\mathcal{C}}$.} 
Here, $\mathcal{C}$'s joint utility is at most the user’s valuation minus the payment, 
plus the payment received by the miner:
\[
 v_i - p + p = v_i.
\]
The coalition’s utility could be strictly less than $v_i$ if the deviation
involves fake bids that incur additional payments.  
If $i$ were also confirmed under the honest strategy, the coalition’s 
total utility would again be $v_i$; thus the deviation yields no gain.
Therefore, we can assume that $i$ is unconfirmed under the honest strategy.  
The joint utility in the honest case comes solely from the miner revenue which equals $p_H.$ 
Since $i$ is unconfirmed, we know $v_i \le p_H$. This shows $S_{\mathcal{C}}$ does not benefit the coalition.
\paragraph{Case 2: $i$ is unconfirmed under $S_{\mathcal{C}}$.} 
Here, $\mathcal{C}$'s joint utility comes solely from the miner revenue, $p$, but could be lower due to confirmed fake bids.
In the honest case, the joint utility is at least $p_H$, the miner revenue.
If $p \le p_H$, the deviation does not benefit $\mathcal{C}$. 
If $p > p_H$, then there must be at least one confirmed fake bid injected at or above $p$, meaning the coalition's joint utility under $S_{\mathcal{C}}$ would be at most $0 \le p_H$.

In both cases, $S_{\mathcal{C}}$ does not produce a joint utility higher than the honest joint utility for the coalition. 
This is a contradiction.\\
\textbf{Not $1$-CRHP:} 
Consider the honest value vector $(1,2)$ and the coalition $\mathcal{C}$ consisting of the miner and user $1$. 
Under the honest strategies, $\mathcal{C}$'s joint utility is $1$ while user $2$'s utility is $1$.
However, if the miner censors user $2$, $\mathcal{C}$'s joint utility is still $1$ while user $2$'s utility is $0$. Thus, this strategy harms user $2$ without harming the coalition.
\end{proof}

\begin{lemma}[$1$-CRHP  does not imply $1$-SCP in the plain model]
    All-or-nothing posted-price auction (\cref{sec:aon-posted}) with $\eps > 0$ is $d$-CRHP for any $d\geq 1$ but not $1$-SCP.
\end{lemma}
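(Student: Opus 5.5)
The lemma has two parts, and I would prove them separately. The $d$-CRHP part (for every $d\ge 1$) is already established in \cref{plain-AorN}, so I simply cite it. What remains is a concrete valuation profile and a coalition of the miner and one user that has a strictly profitable deviation, which shows the mechanism is not $1$-SCP.

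For the violation I would use the over-capacity instance from the discussion after \cref{plain-AorN}. Take $k+1$ users whose true values are all strictly greater than $\res$, say $\res+1$ each. Under honest behavior there are more than $k$ bids, so the all-or-nothing test fails and nothing is confirmed. Every user's honest utility is $0$, and so is the miner's revenue. The coalition is the miner together with one of these users, call it user $u$, so its honest joint utility is $0$.

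The deviation is that the miner censors one bid from a user other than $u$ and includes the remaining $k$ bids, $u$'s among them. All included bids are strictly above $\res$ and there are at most $k$ of them, so the mechanism confirms all of them. User $u$ pays $\res$ and gets utility $1>0$. The miner receives $k\epsilon>0$ and injects no fake bids, so it pays nothing. The coalition's utility is therefore $1+k\epsilon>0$, a strict gain over honest play, and $1$-SCP fails.

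No step here is a real obstacle. The only points to check are that the deviation lies in the plain-model strategy space (the miner may drop honest bids and choose any block of at most $k$ bids), and that the strict eligibility threshold is satisfied because every value is strictly above $\res$. It is also worth noting, for consistency with CRHP, that the censored user had honest utility $0$ and still has $0$, so nobody is harmed.
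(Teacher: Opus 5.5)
Your proposal is correct and takes the same approach as the paper. The paper's proof cites \cref{plain-AorN} for $d$-CRHP and, for the failure of $1$-SCP, relies on the remark after that lemma: the miner censors $|S|-k$ of more than $k$ bids above reserve, earns $k\epsilon>0$, and colludes with an included user. Your $k+1$ users of value $\res+1$ are an explicit instance of that deviation.
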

\begin{proof}
    By \cref{plain-AorN}.
\end{proof}

\subsection{Comparison in the MPC-Assisted Model}
\begin{table}
\caption{Mechanisms for the incomparability of IC and RHP in the MPC-assisted model.}
\label{tab:compare-mpc}
\centering
\small
\setlength{\tabcolsep}{4pt}
\renewcommand{\arraystretch}{1.2}
\begin{tabularx}{\linewidth}{@{} J | L U U @{}}
\toprule
\textbf{Strategic Player} & \textbf{Mechanism} & \textbf{IC} & \textbf{RHP} \\
\midrule
\multirow{2}{*}{User} 
  & Burning first-price auction (\cref{sec:user-compare-plain}) & \xmark~UIC  & \cmark~URHP \\
  & Second price auction (\cref{sec:user-compare-plain}) & \cmark~UIC  & \xmark~URHP \\
\midrule
\multirow{2}{*}{Miner} 
  & Revenue-capped first-price (\cref{sec:miner-compare-mpc}) & \xmark~MIC  & \cmark~MRHP \\
  & Trigger-posted-price (\cref{sec:miner-compare-mpc}) & \cmark~MIC  & \xmark~MRHP \\
\midrule
\multirow{2}{*}{Miner-user coalition} 
  & Burning first-price (\cref{sec:user-compare-plain}) & \xmark~$1$-SCP  & \cmark~$1$-CRHP \\
    & Trigger-posted-price (\cref{sec:miner-compare-mpc}) & \cmark~$1$-SCP & \xmark~$1$-CRHP \\
\bottomrule
\end{tabularx}
\end{table}
\subsubsection{UIC vs. URHP in the MPC-Assisted Model}
\cref{lem:uic-not-rhp} and \cref{lem:urhp-not-uic} still hold in the MPC-assisted model.
\subsubsection{MIC vs. MRHP in the MPC-Assisted Model}
\label{sec:miner-compare-mpc}
\begin{mdframed}
    \begin{center}
        \textbf{MPC-assisted, trigger-posted-price auction} \hfill //Reserve price ${\sf res} > 0$.
    \end{center}
    \begin{itemize}[label=\textbullet]
        \item \textbf{Inclusion Rule \& Confirmation Rule:} If there is exactly one bid strictly above reserve, include and confirm that bid. Otherwise, no bids are included or confirmed.
        \item \textbf{Payment \& Miner Revenue Rule:} All confirmed bids pay the reserve price, and all payments are burned.
    \end{itemize}
\end{mdframed}
\begin{lemma}[MIC does not imply MRHP in the MPC-assisted model.] 
\label{tpp-mpc}
    The above MPC assisted, trigger posted price auction satisfies MIC but not MRHP.
\end{lemma}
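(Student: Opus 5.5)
We must show that the trigger-posted-price mechanism satisfies MIC and fails MRHP in the MPC-assisted model. The plan treats the two halves separately. Both rest on the observation that all payments are burned, so every miner's revenue is identically zero under every bid vector.

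\textbf{MIC.} Fix a coalition of $1\le m<M/2$ miners. In the MPC-assisted model its only available deviation is to inject fake bids; it cannot censor honest bids or alter the outcome computed by $\mathcal F_{\rm TFM}$. Its honest utility is $0$, since revenue is always zero. Under any realized deviation, its utility equals zero revenue minus the total payment of its confirmed fake bids. Every confirmed bid pays $\res>0$, so this utility is at most $0$. Averaging over the coalition's randomness handles mixed strategies. Hence no deviation strictly improves on honest behavior, and MIC holds.

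\textbf{Not MRHP.} We exhibit a utility-neutral miner deviation that harms an honest user. Take a single honest user with true value $v>\res$. Under honest behavior, exactly one bid is strictly above reserve, so the user is confirmed and pays $\res$. Its utility is therefore $v-\res>0$, and the coalition's utility is $0$.

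Now let one miner, which is allowed since $m=1<M/2$ when $M\ge3$, inject a single fake bid strictly above $\res$. There are now two bids strictly above reserve, so the mechanism includes and confirms nothing. The fake bid pays nothing, and revenue remains $0$. The coalition's utility is thus still $0$, so the deviation satisfies $\util_{\C}(\val;S_{\C})\ge\util_{\C}(\val;H_{\C})$. Meanwhile the honest user's utility drops from $v-\res$ to $0$, which violates MRHP.

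Nothing here is a real obstacle. The only point requiring care is to confirm that the injected fake bid is itself unconfirmed, so that it costs the coalition nothing. This is exactly what the trigger condition ``exactly one bid'' guarantees.
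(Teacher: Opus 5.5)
Your proof is correct and follows essentially the paper's argument. MIC holds because miner revenue is identically zero, and you additionally note that confirmed fake bids can only cost the coalition. MRHP fails because injecting an eligible fake bid destroys the ``exactly one bid strictly above reserve'' trigger at zero cost to the coalition; the paper's witness uses an extra unconfirmed honest user and two injected bids, whereas your single-user, single-fake-bid instance is an equally valid and slightly simpler counterexample.
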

\begin{proof}
All confirmed bids pay the same amount, so we will only be referring to the confirmation probability in the proof below.\\
\textbf{MIC: } The miner receives no revenue so MIC, is trivially satisfied.\\
\textbf{Not MRHP:} 
Consider when there is an unconfirmed user $1$ and a confirmed user $2$ with positive utility in the honest case.
Then the miner coalition has expected honest utility $0$, and user $2$ has positive expected honest utility. If the miner coalition injects two bids above reserve, its expected utility is still $0$ while user $2$'s bid is now unconfirmed, leading to zero utility. This witnesses failure of MRHP without relying on a user-only deviation.
\end{proof}
\begin{mdframed}
    \begin{center}
        \textbf{Revenue-capped first-price auction} \hfill //Block size $k\ge 2$
    \end{center}

    \begin{itemize}
    \item \textbf{Inclusion Rule \& Confirmation Rule:} Include the top $2$ bids but confirm only the highest bid, breaking ties arbitrarily.
    \item \textbf{Payment Rule:} The confirmed bid pays its bid.
    \item \textbf{Miner Revenue Rule:} The total miner revenue is the highest non-confirmed included bid, or $0$ if there is no non-confirmed included bid. 
    Total revenue is divided equally among the $M$ miners.
    \end{itemize}
\end{mdframed}
\begin{lemma}[MRHP does not imply MIC in the MPC-assisted model]
   The above mechanism satisfies MRHP but not MIC in the MPC-assisted model. 
\end{lemma}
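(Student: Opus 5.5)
Two things must be shown for the revenue-capped first-price auction in the MPC-assisted model: a miner coalition of $m<M/2$ miners has a strictly profitable deviation, so MIC fails, and every deviation that does not lower the coalition's utility leaves all honest users and honest miners at least as well off, so MRHP holds.

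\textbf{Not MIC.} Take honest bids $(v_1,v_2)$ with $v_1>v_2$. The honest total revenue is $v_2$, so the coalition's utility is $mv_2/M$. Let the coalition inject one fake bid $b$ with $v_2<b<v_1$. The included bids become $v_1$ and $b$, and only $v_1$ is confirmed. The fake bid is unconfirmed and pays nothing, yet it raises total revenue to $b$. The coalition's utility becomes $mb/M>mv_2/M$. If one also wants an instance with a single honest bid $v_1>0$, injecting the same $b<v_1$ raises revenue from $0$ to $b$ at no cost.

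\textbf{MRHP for honest users.} The confirmed user always pays its own bid. Since honest users bid truthfully, every honest user has utility exactly $0$ under the honest outcome and under any deviation: a confirmed honest user pays its value and gets $0$, and an unconfirmed one pays nothing. So no honest user can be harmed, exactly as in \cref{lem:urhp-not-uic}.

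\textbf{MRHP for honest miners.} This is the only nontrivial part. Each honest miner's revenue is $1/M$ of the total revenue, and the coalition's revenue share is $m/M$ of that same total. Let $R$ be the realized total revenue and $c\ge 0$ the payment on confirmed fake bids. The coalition's utility is then $mR/M-c$, and each honest miner receives $R/M$. The honest baseline has $c=0$, so coalition utility $mR_H/M$ and honest miner revenue $R_H/M$. Suppose a deviation, possibly randomized, satisfies $\E[mR/M-c]\ge mR_H/M$. Since $c\ge0$, this gives $\E[R]\ge R_H$, hence $\E[R/M]\ge R_H/M$ for every honest miner.

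The main care point is the randomized case, and it is handled by linearity of expectation, because the argument only uses the aligned incentive between the coalition and honest miners through the shared total $R$. If honest miners' utilities were to include something other than revenue, I would double-check the model. In the MPC-assisted model honest miners submit no bids, so their utility is revenue only, and the argument is complete.
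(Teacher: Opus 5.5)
Your proof is correct and matches the paper's argument essentially step for step. Honest users always have utility $0$ because the confirmed bid pays its own truthful bid. For honest miners, $\frac{m}{M}\E[R]-\E[c]\ge\frac{m}{M}R_H$ with $c\ge 0$ forces $\E[R]\ge R_H$, exactly as the paper argues with its $R'$ and $F$. Your MIC counterexample, an unconfirmed fake bid $b\in(v_2,v_1)$ that raises the highest non-confirmed included bid, mildly generalizes the paper's instance $(1,0)$ with fake bid $1/2$.
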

\begin{proof}
\textbf{MRHP:} Fix a coalition $\C$ of $1\leq m<M/2$ miners. Every honest user always has utility $0$: if confirmed, the user pays its truthful bid, and if unconfirmed, it pays nothing. Hence a miner deviation cannot harm an honest user.

Let $R_H$ denote the total miner revenue under honest behavior and let $R'$ denote the expected total miner revenue under an arbitrary, possibly randomized, deviation. Let $F\geq 0$ denote the expected total payment incurred by confirmed fake bids injected by $\C$. Since revenue is divided equally, the coalition's honest utility is $\frac{m}{M}R_H$, while its expected utility under the deviation is
\[
\frac{m}{M}R'-F.
\]
Therefore, if the deviation does not decrease the coalition's utility, then
\[
\frac{m}{M}R'-F\geq \frac{m}{M}R_H,
\]
which implies $R'\geq R_H$. Hence every honest miner outside $\C$ receives expected revenue at least $R_H/M$, so no honest miner is harmed. Thus the mechanism satisfies MRHP.

\textbf{Not MIC:} Consider the honest bid vector $(1,0)$. The honest total miner revenue is $0$. A nonempty coalition of $m<M/2$ miners can inject a fake bid of value $1/2$. The real bid $1$ remains the unique confirmed bid, while the fake bid $1/2$ is the highest non-confirmed bid and therefore raises the total miner revenue to $1/2$. The fake bid is unconfirmed and pays nothing, so the coalition's utility increases by $m/(2M)>0$. Hence MIC fails.
\end{proof}
\subsubsection{1-SCP vs. 1-CRHP in the MPC-Assisted Model}
\begin{lemma}[$1$-SCP does not imply $1$-CRHP in the MPC-assisted model]
    The trigger-posted-price satisfies $1$-SCP but not $1$-CRHP in the MPC-assisted model.
\end{lemma}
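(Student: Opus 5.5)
The statement has two parts: the trigger-posted-price auction is $1$-SCP in the MPC-assisted model, and it is not $1$-CRHP. I will handle them separately.

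\textbf{$1$-SCP.} The key observation is that miner revenue is identically zero, since all payments are burned. So for a coalition $\C$ of $1\le m<M/2$ miners and one user $i$ with true value $v_i$, the joint utility equals user $i$'s utility minus the reserve paid by any confirmed fake bids. The miners can only inject fake bids; they cannot censor or change the selection.

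I will bound each realized pure deviation by the honest coalition utility $\max\{v_i-\res,0\}\cdot\mathbf 1[\text{$i$ confirmed honestly}]$ and then average over mixed strategies.

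\begin{itemize}
\item If user $i$'s primary bid is confirmed under the deviation, it must be the unique bid strictly above reserve. Its contribution is then $v_i-\res$, and no fake bid can be confirmed, since at most one bid is ever confirmed.
\item In that case the honest outside bids contain no bid strictly above reserve. So if $v_i>\res$, honest play already confirms $i$ and yields the same $v_i-\res$. If $v_i\le\res$, the deviation gives at most $0$, which is the honest utility.
\item If $i$ is not confirmed, the coalition gets $0$ or less, because a confirmed fake bid costs $\res$. Honest utility is always nonnegative, so this case is also no gain.
\end{itemize}

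Hence no pure deviation is strictly profitable, and averaging extends this to mixed strategies.

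\textbf{Not $1$-CRHP.} I will mirror the MRHP counterexample in \cref{tpp-mpc}, adding a colluding user whose utility is zero. Take true values $(v_1,v_2)$ with $v_1<\res<v_2$, and let the coalition be one miner plus user $1$.

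\begin{itemize}
\item Under honest play, only user $2$ is strictly above reserve. So user $2$ is confirmed with utility $v_2-\res>0$, and the coalition's utility is $0$ (no revenue, user $1$ unconfirmed).
\item Under the deviation, the miner injects one fake bid strictly above reserve (or user $1$ raises its bid above reserve). Now two bids are strictly above reserve, so nothing is confirmed.
\item The coalition still has utility $0$: the fake bid and user $1$ are unconfirmed and pay nothing. User $2$'s utility, however, drops to $0$.
\end{itemize}

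This is a utility-neutral deviation that harms a protected honest user, which violates $1$-CRHP.

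The only delicate step is the first case of the SCP bound. It relies on the rule that confirmation requires exactly one bid strictly above reserve. From this, a confirmed colluding bid rules out any confirmed fake bids and forces user $i$ to be confirmed honestly whenever $v_i>\res$. With that observation the argument is routine.
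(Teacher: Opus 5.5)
Your proof is correct and follows essentially the same route as the paper. For $1$-SCP, miner revenue is identically zero and eligible outside bids cannot be censored, so the single colluding user cannot profit. For the failure of $1$-CRHP, injecting a second above-reserve bid removes the unique honest winner at zero cost to the coalition; your per-realization case split with explicit averaging and your generic choice $v_1 < \res < v_2$ are a slightly more careful packaging of the paper's argument and its concrete values $0$ and $2\res$.
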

\begin{proof}
\textbf{$1$-SCP:} Fix a coalition $\C$ consisting of $1\leq m<M/2$ miners and one user $i$. The miners revenue is always zero. Moreover, injecting an eligible fake bid cannot help user $i$: a confirmed fake bid incurs a payment of $\res$, while an additional eligible fake bid can only change an instance with exactly one eligible bid into an instance with at least two eligible bids, in which no bid is confirmed.

It therefore suffices to consider deviations of user $i$ without eligible fake bids. If there is no eligible honest bid outside $\C$, then user $i$ faces a posted price $\res$: bidding above reserve gives utility $v_i-\res$, while bidding at or below reserve gives utility $0$, so truthful bidding is optimal. If there is an eligible honest bid outside $\C$, the coalition cannot censor it, and user $i$ cannot make its own bid the unique eligible bid. Hence the coalition has no strictly profitable deviation, and the mechanism is $1$-SCP.

\textbf{Not $1$-CRHP:} Consider a coalition consisting of $1\leq m<M/2$ miners and a user with value $0$, together with an honest user of value $2\res$. Under honest behavior, the honest user is the unique eligible bidder and obtains utility $\res$, while the coalition has utility $0$. The miner coalition can inject one fake bid strictly above $\res$. There are then at least two eligible bids, so no bid is confirmed. The fake bid is unconfirmed and pays nothing, so the coalition's utility remains $0$, while the honest user's utility falls from $\res$ to $0$. Thus $1$-CRHP fails.
\end{proof}
\begin{lemma}[$1$-CRHP does not imply $1$-SCP in the MPC-assisted model]
    The burning first-price auction satisfies $d$-CRHP for every $d\geq 1$ but not $1$-SCP in the MPC-assisted model.
\end{lemma}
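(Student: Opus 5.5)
Both claims are checked directly against the burning first-price auction: include and confirm the top $k$ bids, each confirmed bid pays its own bid, and all payments are burned, so miner revenue is identically zero. In the MPC-assisted model the inclusion rule is run by the ideal functionality, so a coalition of $m<M/2$ miners and at most $d$ users can only submit bids: the colluding users' primary bids and any number of fake bids. It cannot censor honest bids.

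For $d$-CRHP, the plan mirrors \cref{lem:urhp-not-uic}. First, every protected honest user has utility $0$ under honest behavior and under any deviation. An honest user bids its true value, so if confirmed it pays exactly that value, and if unconfirmed it pays nothing; this holds realization by realization and hence in expectation for mixed strategies. Second, every honest miner outside the coalition has revenue $0$ under every outcome, because all payments are burned. Therefore no protected participant can end up strictly below its honest baseline, whatever the coalition does. RHP then holds trivially for every $d\ge 1$, with no use of the coalition-utility condition.

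For failure of $1$-SCP, I would exhibit a profitable deviation by a coalition of one miner (allowed since $M\ge3$) and one user. Take $k\ge1$ and a single user with true value $v>0$ and no other bids. Honestly, the user is confirmed, pays $v$, and the coalition's utility is $0$, since miner revenue is $0$. If the user instead bids $v/2$, it is still the top bid, so it is confirmed and pays $v/2$. The coalition's utility becomes $v/2>0$, which is strictly profitable, so $1$-SCP fails.

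There is no real obstacle. The only points to verify are that the failure example uses a deviation legal in the MPC-assisted model (a user changing its own bid, which is always allowed) and that the CRHP argument also covers honest miners, which is immediate from burning.
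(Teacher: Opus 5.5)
Your proposal is correct and follows the paper's proof essentially verbatim: $d$-CRHP holds trivially because every honest user's utility is identically $0$ (a confirmed truthful bid pays its value, and an unconfirmed bid pays nothing) and every miner's revenue is identically $0$ since all payments are burned. The $1$-SCP failure uses the same example of a lone user underbidding to some $b\in(0,v)$, which keeps it confirmed and yields coalition utility $v-b>0$.
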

\begin{proof}
\textbf{$d$-CRHP:} Every honest user always has utility $0$: if confirmed, the user pays its truthful bid, and if unconfirmed, it pays nothing. Moreover, all payments are burned, so every miner receives zero revenue under every outcome. Consequently, no deviation by a miner-user coalition can make any protected honest user or honest miner strictly worse off. Thus the mechanism satisfies $d$-CRHP for every $d\geq 1$.

\textbf{Not $1$-SCP:} Consider a single user with value $v>0$ together with any nonempty allowed miner coalition. Under truthful bidding, the user is confirmed, pays $v$, and obtains utility $0$. If the user instead bids some $b\in(0,v)$ while the miners follow the prescribed mechanism, the user remains confirmed and obtains utility $v-b>0$. Since miner revenue remains zero, this strictly increases the coalition's joint utility. Hence the mechanism is not $1$-SCP.
\end{proof}

    
    

\paragraph{Acknowledgements.}
This work is supported by the 2024 \href{https://research.stellar.org/research-grants}{Stellar Academic Research Grant}.

\begingroup
\phantomsection
\addcontentsline{toc}{section}{References}
\let\url\nolinkurl
\providecommand{\doi}[1]{doi:\nolinkurl{#1}}
\bibliographystyle{alpha}
\bibliography{ref-arxiv}

@article{Erdos1964,
  author  = {Erd{\H{o}}s, P.},
  title   = {On extremal problems of graphs and generalized graphs},
  journal = {Israel Journal of Mathematics},
  volume  = {2},
  number  = {3},
  pages   = {183--190},
  year    = {1964},
  url     = {https://www.renyi.hu/~p_erdos/1964-13.pdf}
}

@inproceedings{GaneshThomasWeinberg-EC24,
  author    = {Aadityan Ganesh and Clayton Thomas and S. Matthew Weinberg},
  title     = {Revisiting the Primitives of Transaction Fee Mechanism Design},
  booktitle = {Proceedings of the 25th ACM Conference on Economics and Computation (EC)},
  year      = {2024}
}

@article{Klaus2001,
  author  = {Klaus, Bettina},
  title   = {Coalitional Strategy-Proofness in Economies with Single-Dipped Preferences and the Assignment of an Indivisible Object},
  journal = {Games and Economic Behavior},
  year    = {2001},
  volume  = {34},
  number  = {1},
  pages   = {64--82},
  doi     = {10.1006/game.1999.0789},
  issn    = {0899-8256},
  url     = {https://www.sciencedirect.com/science/article/pii/S0899825699907893}
}

@article{angeris2024multidimensional,
  title={Multidimensional blockchain fees are (essentially) optimal},
  author={Angeris, Guillermo and Diamandis, Theo and Moallemi, Ciamac},
  journal={arXiv preprint arXiv:2402.08661},
  year={2024}
}

@inproceedings{aumayr2023domino,
  author = {Aumayr, Lukas and Moreno-Sanchez, Pedro and Kate, Aniket and Maffei, Matteo},
  title = {Breaking and Fixing Virtual Channels: Domino Attack and Donner},
  booktitle = {Network and Distributed System Security Symposium (NDSS)},
  pages = {1--18},
  year = {2023},
  publisher = {The Internet Society},
  doi = {10.14722/ndss.2023.24370}
}

@misc{buterin2017triangle,
author = {Buterin, Vitalik},
title = {The Triangle of Harm},
year = {2017},
howpublished = {\url{https://vitalik.ca/general/2017/07/16/triangle_of_harm.html}}
}

@misc{buterin2018griefing,
author = {Buterin, Vitalik},
title = {A Griefing Factor Analysis Model},
year = {2018},
howpublished = {\url{https://ethresear.ch/t/a-griefing-factor-analysis-model/2338}}
}

@article{carroll2019robustness,
  title={Robustness in mechanism design and contracting},
  author={Carroll, Gabriel},
  journal={Annual Review of Economics},
  volume={11},
  number={1},
  pages={139--166},
  year={2019},
  publisher={Annual Reviews}
}

@article{chen2025bayesian,
  title={Bayesian mechanism design for blockchain transaction fee allocation},
  author={Chen, Xi and Simchi-Levi, David and Zhao, Zishuo and Zhou, Yuan},
  journal={Operations Research},
  year={2025},
  publisher={INFORMS}
}

@inproceedings{chung2024collusion,
  title={Collusion-resilience in transaction fee mechanism design},
  author={Chung, Hao and Roughgarden, Tim and Shi, Elaine},
  booktitle={Proceedings of the 25th ACM Conference on Economics and Computation},
  pages={1045--1073},
  year={2024}
}

@article{credibleauction,
 author = {Mohammad Akbarpour and Shengwu Li},
  year = 2020,  
  title = {Credible Auctions: A Trilemma}, 
  journal = {Econometrica, Econometric Society}, 
}

@inproceedings{credibleauction-comm00,
  author    = {Matheus V. X. Ferreira and
               S. Matthew Weinberg},
  editor    = {P{\'{e}}ter Bir{\'{o}} and
               Jason D. Hartline and
               Michael Ostrovsky and
               Ariel D. Procaccia},
  title     = {Credible, Truthful, and Two-Round (Optimal) Auctions via Cryptographic
               Commitments},
  booktitle = {{EC} '20: The 21st {ACM} Conference on Economics and Computation,
               Virtual Event, Hungary, July 13-17, 2020},
  pages     = {683--712},
  publisher = {{ACM}},
  year      = {2020},
}

@inproceedings{credibleauction-comm01,
  author    = {Meryem Essaidi and
               Matheus V. X. Ferreira and
               S. Matthew Weinberg},
  editor    = {Mark Braverman},
  title     = {Credible, Strategyproof, Optimal, and Bounded Expected-Round Single-Item
               Auctions for All Distributions},
  booktitle = {13th Innovations in Theoretical Computer Science Conference, {ITCS}
               2022, January 31 - February 3, 2022, Berkeley, CA, {USA}},
  series    = {LIPIcs},
  volume    = {215},
  pages     = {66:1--66:19},
  year      = {2022},
}

@InProceedings{crypto-tfm,
  author =	{Shi, Elaine and Chung, Hao and Wu, Ke},
  title =	{{What Can Cryptography Do for Decentralized Mechanism Design?}},
  booktitle =	{ITCS 2023},
  pages =	{97:1--97:22},
  series =	{LIPIcs},
  ISBN =	{978-3-95977-263-1},
  ISSN =	{1868-8969},
  year =	{2023},
  volume =	{251},
  editor =	{Tauman Kalai, Yael},
  publisher =	{Schloss Dagstuhl -- Leibniz-Zentrum f{\"u}r Informatik},
  address =	{Dagstuhl, Germany},
  URL =		{https://drops.dagstuhl.de/entities/document/10.4230/LIPIcs.ITCS.2023.97},
  URN =		{urn:nbn:de:0030-drops-176005},
  doi =		{10.4230/LIPIcs.ITCS.2023.97}
}

@misc{cryptoeprint:2025/1086,
      author = {Hao Chung and Elisaweta Masserova and Elaine Shi and Sri AravindaKrishnan Thyagarajan},
      title = {Fairness in the Wild: Secure Atomic Swap with External Incentives},
      howpublished = {Cryptology {ePrint} Archive, Paper 2025/1086},
      year = {2025},
      url = {https://eprint.iacr.org/2025/1086}
}

@inproceedings{diamandis2023designing,
  title={Designing multidimensional blockchain fee markets},
  author={Diamandis, Theo and Evans, Alex and Chitra, Tarun and Angeris, Guillermo},
  booktitle={5th Conference on Advances in Financial Technologies (AFT 2023)},
  pages={4:1--4:23},
  volume={282},
  series={Leibniz International Proceedings in Informatics (LIPIcs)},
  doi={10.4230/LIPIcs.AFT.2023.4},
  year={2023},
  organization={Schloss Dagstuhl--Leibniz-Zentrum f{\"u}r Informatik}
}

@book{duque2024local,
  title={Local Non-Bossiness and Preferences Over Colleagues},
  author={Duque, Eduardo and Pereyra, Juan and Torres-Mart{\'\i}nez, Juan Pablo},
  year={2024},
  publisher={Universidad de Chile, Departamento de Econom{\'\i}a}
}

@inproceedings{dynamicpostedprice,
  title={Dynamic posted-price mechanisms for the blockchain transaction-fee market},
  author={Ferreira, Matheus VX and Moroz, Daniel J and Parkes, David C and Stern, Mitchell},
  booktitle={Proceedings of the 3rd ACM Conference on Advances in Financial Technologies},
  pages={86--99},
  year={2021}
}

@misc{eip1559,
  author={Vitalik Buterin and Eric Conner and Rick Dudley and Matthew Slipper and Ian Norden and Abdelhamid Bakhta},
  title={{EIP-1559}: Fee Market Change for {ETH} 1.0 Chain},
  howpublished={Ethereum Improvement Proposals, no. 1559, \url{https://eips.ethereum.org/EIPS/eip-1559}},
  month=apr,
  year={2019}
}

@misc{eip7805,
  author       = {Thomas Thiery and Francesco {D'Amato} and Julian Ma and Barnab\'e Monnot and Terence Tsao and Jacob Kaufmann and Jihoon Song},
  title        = {{EIP-7805}: Fork-choice enforced Inclusion Lists (FOCIL)},
  howpublished = {Ethereum Improvement Proposals},
  year         = {2024},
  month        = nov,
  note         = {Created: 2024-11-01. Accessed: 2026-02-08},
  url          = {https://eips.ethereum.org/EIPS/eip-7805}
}

@misc{ethereum_randao_docs,
  title        = {Block proposal},
  author       = {{Ethereum.org}},
  howpublished = {\url{https://ethereum.org/developers/docs/consensus-mechanisms/pos/block-proposal/}},
  year         = {2026},
  note         = {Ethereum documentation, accessed March 29, 2026}
}

@article{ferreira2024incentive,
  title={Incentive-Compatible Collusion-Resistance via Posted Prices},
  author={Ferreira, Matheus VX and Gafni, Yotam and Resnick, Max},
  journal={arXiv preprint arXiv:2412.20853},
  year={2024}
}

@inproceedings{foundation-tfm,
  title={Foundations of transaction fee mechanism design},
  author={Chung, Hao and Shi, Elaine},
  booktitle={Proceedings of the 2023 Annual ACM-SIAM Symposium on Discrete Algorithms (SODA)},
  pages={3856--3899},
  year={2023},
  organization={SIAM}
}

@article{functional-fee-market,
  author    = {Soumya Basu and
               David A. Easley and
               Maureen O'Hara and
               Emin G{\"{u}}n Sirer},
  title     = {Towards a Functional Fee Market for Cryptocurrencies},
  journal   = {CoRR},
  volume    = {abs/1901.06830},
  year      = {2019},
  url       = {http://arxiv.org/abs/1901.06830},
}

@inproceedings{gafni2024barriers,
  title={Barriers to collusion-resistant transaction fee mechanisms},
  author={Gafni, Yotam and Yaish, Aviv},
  booktitle={Proceedings of the 25th ACM Conference on Economics and Computation},
  pages={1074--1096},
  year={2024}
}

@article{ganesh2025characterizing,
  title={Characterizing Off-Chain Influence Proof Transaction Fee Mechanisms},
  author={Ganesh, Aadityan and Thomas, Clayton and Weinberg, S Matthew},
  journal={arXiv preprint arXiv:2512.02354},
  year={2025}
}

@inproceedings{garimidi2025transaction,
  title={Transaction Fee Mechanism Design for Leaderless Blockchain Protocols},
  author={Garimidi, Pranav and Heimbach, Lioba and Roughgarden, Tim},
  booktitle={International Conference on Financial Cryptography and Data Security},
  pages={20--35},
  year={2025},
  organization={Springer}
}

@article{george2023griefs,
author = {George, William},
title = {An Analysis of Griefs and Griefing Factors},
journal = {Frontiers in Blockchain},
volume = {6},
pages = {1137155},
year = {2023},
publisher = {Frontiers},
doi = {10.3389/fbloc.2023.1137155}
}

@inproceedings{gmw87,
 author = {Goldreich, O. and Micali, S. and Wigderson, A.},
 title = {How to play ANY mental game},
 booktitle = {ACM symposium on Theory of computing (STOC)},
 year = {1987},
}

@inproceedings{heimbach2025early,
  title={The early days of the ethereum blob fee market and lessons learnt},
  author={Heimbach, Lioba and Milionis, Jason},
  booktitle={International Conference on Financial Cryptography and Data Security},
  pages={53--71},
  year={2025},
  organization={Springer}
}

@inproceedings{herlihy2018atomic,
author = {Herlihy, Maurice},
title = {Atomic Cross-Chain Swaps},
booktitle = {PODC},
pages = {245--254},
year = {2018},
publisher = {ACM},
doi = {10.1145/3212734.3212736}
}

@incollection{jehiel2005allocative,
  title={Allocative and Informational Externalities in Auctions and Related Mechanisms},
  author={Jehiel, Philippe and Moldovanu, Benny},
  editor={Blundell, Richard and Newey, Whitney K. and Persson, Torsten},
  booktitle={Advances in Economics and Econometrics: Theory and Applications, Ninth World Congress},
  volume={1},
  chapter={3},
  pages={102--135},
  publisher={Cambridge University Press},
  year={2006},
  doi={10.1017/CBO9781139052269.005}
}

@article{lavee2025does,
  title={Does Your Blockchain Need Multidimensional Transaction Fees?},
  author={Lavee, Nir and Nisan, Noam and Pai, Mallesh and Resnick, Max},
  journal={arXiv preprint arXiv:2504.15438},
  year={2025}
}

@article{leme2023nonbossy,
  title={Nonbossy mechanisms: Mechanism design robust to secondary goals},
  author={Leme, Renato Paes and Schneider, Jon and Zhang, Hanrui},
  journal={arXiv preprint arXiv:2307.11967},
  year={2023}
}

@incollection{leonardos2023griefing,
author = {Leonardos, Stefanos and Sridhar, Shyam and Cheung, Yun Kuen and Piliouras, Georgios},
title = {Griefing Factors and Evolutionary In-Stabilities in Blockchain Mining Games},
booktitle = {Mathematical Research for Blockchain Economy},
editor = {Pardalos, Panos M. and Kotsireas, Ilias S. and Guo, Yike and Knottenbelt, William J.},
pages = {75--94},
year = {2023},
publisher = {Springer},
doi = {10.1007/978-3-031-18679-0_5}
}

@article{lopomo2021uncertainty,
  title={Uncertainty in mechanism design},
  author={Lopomo, Giuseppe and Rigotti, Luca and Shannon, Chris},
  journal={arXiv preprint arXiv:2108.12633},
  year={2021}
}

@misc{lopp2017feeestimation,
  author       = {Lopp, Jameson},
  title        = {The Challenges of {Bitcoin} Transaction Fee Estimation},
  howpublished = {BitGo blog post},
  year         = {2017},
  month        = may,
  url          = {https://blog.bitgo.com/the-challenges-of-bitcoin-transaction-feeestimation-e47a64a61c72},
  note         = {Accessed: 2026-02-08}
}

@article{mazumdar2023strategic,
author = {Mazumdar, Subhra and Banerjee, Prabal and Sinha, Abhinandan and Ruj, Sushmita and Roy, Bimal Kumar},
title = {Strategic Analysis of Griefing Attack in Lightning Network},
journal = {IEEE Transactions on Network and Service Management},
volume = {20},
number = {2},
pages = {1790--1803},
year = {2023},
publisher = {IEEE},
doi = {10.1109/TNSM.2022.3230768}
}

@article{myerson,
author = {Myerson, Roger B.},
title = {Optimal Auction Design},
year = {1981},
issue_date = {February 1981},
publisher = {INFORMS},
volume = {6},
number = {1},
issn = {0364-765X},
journal = {Math. Oper. Res.},
}

@inproceedings{raikwar2024sok,
  title={SoK: Dag-based consensus protocols},
  author={Raikwar, Mayank and Polyanskii, Nikita and M{\"u}ller, Sebastian},
  booktitle={2024 IEEE International Conference on Blockchain and Cryptocurrency (ICBC)},
  pages={1--18},
  year={2024},
  organization={IEEE}
}

@InProceedings{reasonable-tfm,
  author =	{Wu, Ke and Shi, Elaine and Chung, Hao},
  title =	{{Maximizing Miner Revenue in Transaction Fee Mechanism Design}},
  booktitle =	{ITCS 2024},
  pages =	{98:1--98:23},
  series =	{LIPIcs},
  ISBN =	{978-3-95977-309-6},
  ISSN =	{1868-8969},
  year =	{2024},
  volume =	{287},
  editor =	{Guruswami, Venkatesan},
  publisher =	{Schloss Dagstuhl -- Leibniz-Zentrum f{\"u}r Informatik},
  address =	{Dagstuhl, Germany},
  URN =		{urn:nbn:de:0030-drops-196266},
  doi =		{10.4230/LIPIcs.ITCS.2024.98}
}

@article{ritz1983restricted,
  title={Restricted domains, arrow-social welfare functions and noncorruptible and non-manipulable social choice correspondences: the case of private alternatives},
  author={Ritz, Zvi},
  journal={Mathematical Social Sciences},
  volume={4},
  number={2},
  pages={155--179},
  year={1983},
  publisher={Elsevier}
}

@article{roughgarden2021transaction,
  title={Transaction fee mechanism design},
  author={Roughgarden, Tim},
  journal={ACM SIGecom Exchanges},
  volume={19},
  number={1},
  pages={52--55},
  year={2021},
  publisher={ACM New York, NY, USA}
}

@article{saijo2007secure,
  title={Secure implementation},
  author={Saijo, Tatsuyoshi and Sjostrom, Tomas and Yamato, Takehiko},
  journal={Theoretical Economics},
  volume={2},
  number={3},
  pages={203--229},
  year={2007}
}

@article{satterthwaite1981strategy,
  title={Strategy-proof allocation mechanisms at differentiable points},
  author={Satterthwaite, Mark A and Sonnenschein, Hugo},
  journal={The Review of Economic Studies},
  volume={48},
  number={4},
  pages={587--597},
  year={1981},
  publisher={Wiley-Blackwell}
}

@article{stouka2025multiple,
  title={Multiple Proposer Transaction Fee Mechanism Design: Robust Incentives Against Censorship and Bribery},
  author={Stouka, Aikaterini-Panagiota and Ma, Julian and Thiery, Thomas},
  journal={arXiv preprint arXiv:2505.13751},
  year={2025}
}

@article{thomson2016non,
  title={Non-bossiness},
  author={Thomson, William},
  journal={Social Choice and Welfare},
  volume={47},
  number={3},
  pages={665--696},
  year={2016},
  publisher={Springer}
}

@inproceedings{tsabary2021madhtlc,
  author = {Tsabary, Itay and Yechieli, Matan and Manuskin, Alex and Eyal, Ittay},
  title = {{MAD-HTLC}: Because {HTLC} is Crazy-Cheap to Attack},
  booktitle = {2021 IEEE Symposium on Security and Privacy (S\&P)},
  pages = {1230--1248},
  year = {2021},
  publisher = {IEEE},
  doi = {10.1109/SP40001.2021.00080}
}

@inproceedings{uc,
 author = {Canetti, R.},
 title = {Universally Composable Security: A New Paradigm for Cryptographic Protocols},
 booktitle = {FOCS},
 year = {2001},
}

@article{vickrey1961counterspeculation,
  title={Counterspeculation, auctions, and competitive sealed tenders},
  author={Vickrey, William},
  journal={The Journal of finance},
  volume={16},
  number={1},
  pages={8--37},
  year={1961},
  publisher={JSTOR}
}

@inproceedings{xue2021hedging,
author = {Xue, Yingjie and Herlihy, Maurice},
title = {Hedging Against Sore Loser Attacks in Cross-Chain Transactions},
booktitle = {PODC},
pages = {155--164},
year = {2021},
publisher = {ACM},
doi = {10.1145/3465084.3467904}
}

@inproceedings{yaofeemech,
  author={Yao, Andrew Chi-Chih},
  title={{An Incentive Analysis of Some Bitcoin Fee Designs (Invited Talk)}},
  booktitle={47th International Colloquium on Automata, Languages, and Programming (ICALP 2020)},
  pages={1:1--1:12},
  series={Leibniz International Proceedings in Informatics (LIPIcs)},
  volume={168},
  year={2020},
  publisher={Schloss Dagstuhl -- Leibniz-Zentrum f{\"u}r Informatik},
  url={https://drops.dagstuhl.de/entities/document/10.4230/LIPIcs.ICALP.2020.1},
  doi={10.4230/LIPIcs.ICALP.2020.1}
}

@inproceedings{zoharfeemech,
  author    = {Ron Lavi and
               Or Sattath and
               Aviv Zohar},
  title     = {Redesigning Bitcoin's fee market},
  booktitle = {The World Wide Web Conference, {WWW} 2019},
  pages     = {2950--2956},
  year      = {2019},
}
\endgroup
\end{document}